\title[Canonical bases of a coideal algebra]
{Canonical bases of a coideal subalgebra in $U_q(\mathfrak{sl}_2)$}
\author[K.~Shigechi]{Keiichi~Shigechi}
\email{k1.shigechi at gmail.com}
\date{\today}
\newtheorem{theorem}{Theorem}[section]
\newtheorem{example}[theorem]{Example}
\newtheorem{lemma}[theorem]{Lemma}
\newtheorem{defn}[theorem]{Definition}
\newtheorem{prop}[theorem]{Proposition}
\newtheorem{cor}[theorem]{Corollary}
\newtheorem{conj}[theorem]{Conjecture}
\newtheorem{remark}[theorem]{Remark}
\begin{document}
\begin{abstract}
We consider tensor products of finite-dimensional representations 
of a coideal subalgebra in $U_{q}(\mathfrak{sl}_2)$.
We present an explicit expression for the dual of the canonical bases 
through a diagrammatic presentation.
We show that the decomposition of tensor products of dual canonical bases 
and the action of the coideal subalgebra have integral and positive properties.
As an application, we consider the eigensystem of the generator of the coideal 
subalgebra on the dual canonical bases. 
We provide all the eigenvalues and obtain an explicit expression of the 
eigenfunction for the largest eigenvalue.
The sum of the components of this eigenfunction is conjectured to be 
equal to the total number of arrangements of bishops with a certain symmetry.
\end{abstract}

\maketitle

\section{Introduction}

In~\cite{Lus90-1}, Lusztig introduced the notion of the canonical basis of 
the $q$-analogue of enveloping algebras $U_q(\mathfrak{g})$ 
associated with a simple finite-dimensional Lie algebra $\mathfrak{g}$. 
This basis is characterized by three conditions: the basis is integral,
bar-invariant and spans a $\mathbb{Z}[q^{-1}]$-lattice $\mathcal{L}$ with a 
certain image in the quotient $\mathcal{L}/q^{-1}\mathcal{L}$.
In~\cite{Kas90,Kas91,Kas93}, Kashiwara introduced the notion of (global) 
crystal bases and showed its existence and uniqueness. 
The coincidence of two concepts, the canonical basis and the global 
crystal basis, was shown in ~\cite{Lus90-2,GroLus92}. 
Lusztig constructed a canonical basis in the tensor product and proved its 
associativity in~\cite{Lus92}.
In the case of $\mathfrak{g}=\mathfrak{sl}_2$, Frenkel and Khovanov provided a 
diagrammatic depiction of the dual of the canonical basis in the tensor 
products of finite-dimensional irreducible representations and gave 
the action of the quantum group on these bases~\cite{FK97}.
This diagrammatic method together with the positive integral property
led to the categorification of $U_{q}(\mathfrak{sl}_2)$~\cite{BFK99,FKS06}.

Let $\theta$ be an involution of $\mathfrak{g}$ and $\mathfrak{g}^{\theta}$ be 
the fixed Lie subalgebra. 
The symmetric pair $(\mathfrak{g},\mathfrak{g}^{\theta})$ forms symmetric 
spaces in the classical case. 
In \cite{Nou96,NouDijSug97,NouSug95}, Noumi, Sugitani and Dijkhuizen constructed 
quantum symmetric spaces by using the solution of the reflection equation. 
In ~\cite{Let99,Let02,Let03}, Letzter constructed quantum symmetric spaces by 
using the involution $\theta$ and described the generators of the $q$-analogue
of $U(\mathfrak{g}^{\theta})$ which is a coideal subalgebra of $U_q(\mathfrak{g})$. 
These two approaches produce the same coideal subalgebras~\cite{Let99}.
A general theory for quantum symmetric spaces in the case of symmetrizable 
Kac--Moody algebras was developed in \cite{Kol14}.
In the study of Kazhdan--Lusztig theory of type B, Bao and Wang introduced the notion 
of the quasi-$R$-matrix and canonical bases ($\iota$-canonical bases 
in~\cite{BaoWang13}) for the quantum symmetric pair in the case of 
$\mathfrak{g}=\mathfrak{sl}_n$ \cite{BaoWang13}.	

In this paper, we consider tensor products of finite-dimensional representations 
of a coideal subalgebra $U$ in $U_{q}(\mathfrak{sl}_2)$.
We present an explicit expression for the dual of the canonical bases 
and provide the action of the coideal subalgebra on these bases.
The diagrammatic presentation of the dual bases is also provided. 
For this presentation, we make use of a diagrammatic presentation for Kazhdan--Lusztig 
bases of Hecke algebra of type B studied in~\cite{Shi14}.
We show that the expansion coefficients of a (dual) canonical basis in terms 
of standard bases are written in terms of Kazhdan--Lusztig polynomials. 
We also show that the decomposition of tensor products of dual canonical bases 
and the action of the coideal subalgebra have integral and positive properties.
As an application, we consider the eigensystem of the generator of the coideal 
subalgebra on the dual canonical bases. 
In quantum integrable systems, a coideal subalgebra is the symmetry of the system 
with a boundary, that is, the generators of a coideal subalgebra commute with the 
Hamiltonian.
Therefore, a knowledge of the eigensystem of the generators of $U$ is 
important to study the eigensystem of the Hamiltonian.
We provide all the eigenvalues of the generator of the coideal subalgebra on 
the dual canonical bases. 
An explicit expression of the eigenfunction $\Psi$ for the largest (at $q=1$) 
eigenvalue is obtained (see Theorem~\ref{thm-generic-psi}).
We show that this eigenfunction $\Psi$ has a positive integral property, {\it i.e.}, 
$\Psi\in\mathbb{N}[q,q^{-1}]$.
From this observation, we have a conjecture that the sum of components of $\Psi$
at $q=1$ is equal to the total number of arrangements of bishops with 
a symmetry.

The paper is organized as follows. 
In section \ref{Sec:Hecke}, we briefly recall the parabolic Kazhdan--Lusztig 
polynomials in the case of the Hecke algebra of type B. 
In section \ref{Sec:QG}, we review the definitions and results about the quantum 
group $U_q(\mathfrak{sl}_2)$ and its coideal subalgebra $U$.
In Section \ref{Sec:CB}, we introduce the notion of canonical bases for both 
$U_{q}(\mathfrak{sl}_2)$ and $U$.
The graphical depiction of the dual canonical basis is presented.
In Section \ref{Sec:Comb}, we introduce another graphical method to connect 
canonical bases with Kazhdan--Lusztig polynomials. 
We extend the diagrammatic rules in ~\cite[Section 3]{Shi14} and provide a 
new inversion formula regarding Kazhdan--Lusztig polynomials. 
Section \ref{Sec:IS} is devoted to an analysis of integral and positive 
properties of dual canonical bases. 
We study the eigensystem of the generator of $U$ in details and 
obtain an explicit formula for the eigenvector $\Psi$.
A conjecture on this eigenvector is presented.
In Appendix \ref{Sec:app}, we collect two technical lemmas used in 
Section \ref{Sec:IS}.

\section{\texorpdfstring{Hecke algebra of type $B$ and Kazhdan--Lusztig 
polynomials}{Hecke algebra of type B and Kazhdan--Lusztig polynomials}}
\label{Sec:Hecke}

Let $\mathcal{S}_N$ be the symmetric group and 
$\mathcal{S}_N^C$ be the Weyl group associated with the Dynkin diagram 
of type $C$.
We use a partial order in $\mathcal{S}_{N}^C$, the (strong) {\it Bruhat order}.
We write $w'\le w$ if and only if $w'$ can be obtained as a subexpression
of a reduced expression of $w$.
The {\em Hecke algebra}\/ $\mathcal{H}_N$ of type $B$ is the unital,  
associative algebra over the ring $R:=\mathbb{Z}[t,t^{-1}]$ 
with generators $T_i$, $i=1,\ldots,N$, and relations
\begin{align*}
 (T_i-t)(T_i+t^{-1})&=0 && 1\le i\le N, \\
 T_iT_{i+1}T_i&=T_{i+1}T_iT_{i+1} && 1\le i<N-1,  \\
 T_{N-1}T_NT_{N-1}T_N&=T_NT_{N-1}T_NT_{N-1}, \\
 T_iT_j&=T_jT_i && |i-j|>1.
\end{align*}
The Hecke algebra $\mathcal{H}_N$  has standard basis 
$(T_w)_{w\in \mathcal{S}_{N}^C}$ where 
$T_w=T_{i_1}T_{i_2}\cdots T_{i_r}$ for a reduced word 
$w=s_{i_1}s_{i_2}\cdots s_{i_r}$ written in terms of elementary transpositions $s_i$. 
The involutive ring automorphism of $\mathcal{H}_N$, $a\rightarrow \bar{a}$, 
is defined by $T_i\rightarrow T^{-1}_i$ and $t\rightarrow t^{-1}$. 
Then 
\begin{theorem}[Kazhdan and Lusztig~\cite{KL79}]
There exists a unique basis $C_{w}$ such that $\overline{C_w}=C_{w}$ and 
\begin{eqnarray*}
C_w=\sum_{v\le w}P_{v,w}(t^{-1})T_v,
\end{eqnarray*}
where $P_{v,w}(t^{-1})\in t^{-1}\mathbb{Z}[t^{-1}]$ and $P_{v,v}=1$.
\end{theorem}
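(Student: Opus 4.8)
The plan is to follow the classical argument of Kazhdan and Lusztig, the only type-$B$ input being the specific quadratic relation $(T_i-t)(T_i+t^{-1})=0$, equivalently $T_i^{-1}=T_i+(t^{-1}-t)$. First I would record the triangularity of the bar involution on the standard basis. Since $\overline{T_i}=T_i+(t^{-1}-t)T_e$ has leading term $T_i$ with coefficient $1$, an induction on length shows
\[ \overline{T_w}=\sum_{v\le w}R_{v,w}\,T_v, \qquad R_{w,w}=1,\ R_{v,w}\in R, \]
for Laurent polynomials $R_{v,w}$ supported on the Bruhat interval below $w$. Applying the bar map twice and comparing coefficients gives the fundamental relation $\sum_{u\le v\le w}R_{u,v}\,\overline{R_{v,w}}=\delta_{u,w}$, which is the engine of the whole argument.

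For uniqueness, suppose $C_w$ and $C_w'$ both satisfy the stated conditions, and set $D=C_w-C_w'=\sum_{v<w}d_v\,T_v$ with every $d_v\in t^{-1}\mathbb{Z}[t^{-1}]$ (the $v=w$ terms cancel since both have $P_{w,w}=1$). Because $D$ is bar-invariant, expanding $\overline{D}=\sum_v\overline{d_v}\,\overline{T_v}$ and comparing the coefficient of $T_x$ at a maximal $x$ in the support of $D$ yields $d_x=\overline{d_x}$, using $R_{x,x}=1$ and triangularity. A Laurent polynomial that lies in $t^{-1}\mathbb{Z}[t^{-1}]$ and is fixed by $t\mapsto t^{-1}$ must vanish, so $d_x=0$, contradicting maximality unless $D=0$.

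For existence I would construct the coefficients $P_{v,w}$ directly. Writing $C_w=\sum_{v\le w}P_{v,w}\,T_v$, bar-invariance is equivalent, after expanding $\overline{C_w}$ via the $R_{v,w}$ and comparing coefficients of each $T_u$, to the system
\[ P_{u,w}-\overline{P_{u,w}}=\sum_{u<v\le w}R_{u,v}\,\overline{P_{v,w}}. \]
Starting from $P_{w,w}=1$ and proceeding by descending induction on $u$ in the Bruhat order, the right-hand side is already determined. The key point is that it is bar-\emph{anti}invariant; this follows from the relation $\sum R_{u,v}\overline{R_{v,w}}=\delta_{u,w}$ together with the inductive hypothesis. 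Granting this, one uses the elementary fact that every $f\in R$ with $\overline{f}=-f$ has a unique expression $f=p-\overline{p}$ with $p\in t^{-1}\mathbb{Z}[t^{-1}]$; setting $P_{u,w}:=p$ solves the equation and automatically places $P_{u,w}$ in $t^{-1}\mathbb{Z}[t^{-1}]$. The resulting $C_w$ is then bar-invariant with the required triangularity.

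The main obstacle is the verification that the right-hand side of the displayed recursion is bar-antiinvariant at each stage, so that the splitting $f=p-\overline{p}$ applies; this is exactly where the involutivity relation among the $R$-polynomials, and hence the precise form of the type-$B$ quadratic relation, is used, and it is the technical heart of the Kazhdan--Lusztig construction.
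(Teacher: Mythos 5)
Your argument is correct and is precisely the classical Kazhdan--Lusztig construction (triangularity of the bar involution via the $R$-polynomials, uniqueness from the fact that a bar-invariant element of $t^{-1}\mathbb{Z}[t^{-1}]$ vanishes, existence by descending induction using the splitting $f=p-\overline{p}$ of bar-antiinvariant Laurent polynomials). The paper does not reprove this statement but simply cites~\cite{KL79}, whose proof is the one you have reproduced, so there is nothing to reconcile.
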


Let $\epsilon\in\{+,-\}$. 
We have simple bijections among the following three sets~\cite[Section 2]{Shi14}. 
\begin{enumerate}
\item An element of $\mathcal{S}_N^C/\mathcal{S}_N$.  
\item A binary string in $\{+,-\}^N$. 
\item A path from $(0,0)$ to $(N,n)$ with $|n|\le N$ and 
$N-n\in2\mathbb{Z}$ where each step is in the direction $(1,\pm1)$.
\end{enumerate}
Bijections are realized by the natural action of $\mathcal{S}_N^C$ on 
$\{+,-\}^N$ with representative 
$(+\ldots+)$ for $\epsilon=+$ and $(-\ldots-)$ for $\epsilon=-$.
A sequence $\mathbf{v}=(v_1,\ldots v_N)\in\{+,-\}^N$ is identified with the path with 
the $i$-th step $(1,v_i)$.
We denote by $\mathcal{P}_{N}$ 
the set of paths from $(0,0)$ to $(N,n)$.

\begin{defn}
Let $\alpha,\beta$ be two paths in $\mathcal{P}_{N}$. 
Then, $\alpha\le \beta$ if and only if $\alpha$ is below $\beta$ for $\epsilon=-$, 
above $\beta$ for $\epsilon=+$.
\end{defn}
Note that this definition is compatible with the induced Bruhat order 
in $\mathcal{S}^C_N/\mathcal{S}_{N}$.

We define a free $R$-module $\mathcal{M}_{N}$ with a basis 
indexed by $\mathcal{P}_{N}$, that is,   
$\mathcal{M}_{N}:=\langle m_v: v\in \mathcal{P}_{N}\rangle$.
We have two modules $\mathcal{M}_{N}^\epsilon$, $\epsilon=\pm$, 
corresponding to two natural projection maps from 
$\mathbb{C}[\mathcal{S}_N^C]$ to 
$\mathbb{C}[\mathcal{S}_N^C/\mathcal{S}_N]$~\cite{Deo87}.
The action of $\mathcal{H}_N$ on the modules $\mathcal{M}_{N}^{\epsilon}$
is as follows:
\begin{eqnarray*}
\epsilon=+:\quad 
&&T_im_{\ldots\alpha\alpha\ldots}
= t m_{\ldots\alpha\alpha\ldots}, \quad \alpha=\pm,\quad 1\le i\le N-1 \\
&&T_im_{\ldots+-\ldots}=m_{\ldots-+\ldots}, \quad 1\le i\le N-1\\
&&T_im_{\ldots-+\ldots}=(t-t^{-1})m_{\ldots-+\ldots}+m_{\ldots+-\ldots}, \quad 1\le i\le N-1\\
&&T_Nm_{\ldots+}=m_{\ldots-}, \\
&&T_Nm_{\ldots-}=m_{\ldots+}+(t-t^{-1})m_{\ldots-}, \\ 
\epsilon=-:\quad
&&T_im_{\ldots\alpha\alpha\ldots}
=-t^{-1}m_{\ldots\alpha\alpha\ldots}, \quad \alpha=\pm, \quad 1\le i\le N-1\\
&&T_im_{\ldots-+\ldots}=m_{\ldots+-\ldots}, \quad 1\le i\le N-1\\
&&T_im_{\ldots+-\ldots}
=(t-t^{-1})m_{\ldots+-\ldots}+m_{\ldots-+\ldots}.\quad 1\le i\le N-1 \\
&&T_Nm_{\ldots-}=m_{\ldots+}, \\
&&T_Nm_{\ldots+}=m_{\ldots-}+(t-t^{-1})m_{\ldots+}, \\ 
\end{eqnarray*}
Note that the module $\mathcal{M}^{+}_{N}$ (resp. $\mathcal{M}^{-}_{N}$) 
has a generating vector $m_{+\ldots+}$ (resp. $m_{-\ldots-}$).

We introduce the parabolic analogue of the Kazhdan--Lusztig bases and 
polynomials.
We are interested in the maximal parabolically induced module 
$\mathcal{M}^{\epsilon}_{N}$.
\begin{theorem}[Deodhar~\cite{Deo87}]
There exists a unique basis 
$(C_\alpha^\pm)_{\alpha\in\mathcal{P}_{N}}$
of $\mathcal{M}_{N}^{\pm}$ such that $\overline{C_{\alpha}^\pm}=C_{\alpha}^\pm$ and  
\begin{eqnarray*}
C_\beta^{\pm}=\sum_{\alpha\le\beta}P^{\pm}_{\alpha,\beta}(t^{-1})m_{\alpha}
\end{eqnarray*}
where $\alpha\le\beta$ is in the order of paths associated with the sign 
$\epsilon=\pm$ and the polynomials 
$P^{\pm}_{\alpha,\beta}(t^{-1})\in t^{-1}\mathbb{Z}[t^{-1}]$ 
if $\alpha<\beta$ and $P^{\pm}_{\alpha,\alpha}(t^{-1})=1$.
\end{theorem}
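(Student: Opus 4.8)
The plan is to follow the standard Kazhdan--Lusztig existence-and-uniqueness scheme of \cite{KL79}, adapted to the parabolic module $\mathcal{M}_N^{\pm}$ as in \cite{Deo87}. The three ingredients I need are: a bar involution on the module compatible with the one on $\mathcal{H}_N$; unitriangularity of this involution with respect to the path order; and the elementary fact that a Laurent polynomial fixed by $X\mapsto -\overline{X}$ has a unique ``positive-part'' antiderivative in $t^{-1}\mathbb{Z}[t^{-1}]$.

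First I would define the bar involution on $\mathcal{M}_N^{+}$ (the case $\mathcal{M}_N^{-}$ being identical). Since $\mathcal{M}_N^{+}$ is the cyclic $\mathcal{H}_N$-module generated by $m_{+\cdots+}$, I declare $\overline{m_{+\cdots+}}:=m_{+\cdots+}$ and extend by $\overline{h\cdot x}:=\bar h\cdot\bar x$. Well-definedness amounts to checking that the bar image of the annihilator of $m_{+\cdots+}$ is again annihilating. That annihilator is the left ideal generated by $T_i-t$ for $1\le i\le N-1$, and since $T_im_{+\cdots+}=t\,m_{+\cdots+}$ forces $T_i^{-1}m_{+\cdots+}=t^{-1}m_{+\cdots+}$, we get $\overline{T_i-t}\cdot m_{+\cdots+}=(T_i^{-1}-t^{-1})m_{+\cdots+}=0$. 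Hence the bar map is a well-defined semilinear involution with $\overline{\overline{x}}=x$.

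Next I would establish that the involution is unitriangular for the path order, that is $\overline{m_\beta}=m_\beta+\sum_{\alpha<\beta}r_{\alpha,\beta}\,m_\alpha$ with $r_{\alpha,\beta}\in R$. Writing $m_\beta=T_w m_{+\cdots+}$ for a minimal-length coset representative $w$ and using $T_i^{-1}=T_i-(t-t^{-1})$, one reduces this to the analogous unitriangularity of $\overline{T_w}$ in $\mathcal{H}_N$ together with the fact that the action formulas mix a given path only with strictly lower paths (an inversion at an adjacent $+-$/$-+$ pair, or the boundary $T_N$-moves at the last step). The involution property $\overline{\overline{m_\beta}}=m_\beta$ then yields the consistency relation $\sum_{\alpha\le\gamma\le\beta}\overline{r_{\gamma,\beta}}\,r_{\alpha,\gamma}=\delta_{\alpha,\beta}$, which is precisely what drives the recursion below.

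Finally I would solve for the coefficients $P_{\alpha,\beta}^{\pm}$ by downward induction on $\alpha$ within the interval $[\alpha,\beta]$. Setting $P_{\beta,\beta}^{\pm}=1$, the bar-invariance condition $\overline{C_\beta^{\pm}}=C_\beta^{\pm}$ collapses, at each $\alpha<\beta$, to an equation of the form $P_{\alpha,\beta}^{\pm}-\overline{P_{\alpha,\beta}^{\pm}}=X_{\alpha,\beta}$, where $X_{\alpha,\beta}:=\sum_{\alpha<\gamma\le\beta}\overline{P_{\gamma,\beta}^{\pm}}\,r_{\alpha,\gamma}$ is already determined by the inductive hypothesis. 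The consistency relation from the previous step forces $\overline{X_{\alpha,\beta}}=-X_{\alpha,\beta}$, and the elementary lemma then produces a unique admissible $P_{\alpha,\beta}^{\pm}\in t^{-1}\mathbb{Z}[t^{-1}]$, simultaneously proving existence and uniqueness. The main obstacle I anticipate is the bookkeeping in the unitriangularity step: one must verify that \emph{every} generator move, including the boundary moves $T_N$ and the two different $\epsilon=\pm$ conventions, respects the path order in the correct direction, so that the linear system is genuinely triangular; the algebraic heart of the argument is otherwise the routine Kazhdan--Lusztig recursion.
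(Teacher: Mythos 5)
The paper does not prove this statement at all: it is quoted as a known theorem of Deodhar~\cite{Deo87}, so there is no in-paper argument to compare against. Your proposal is the standard Kazhdan--Lusztig existence-and-uniqueness scheme in the parabolic setting, and it is essentially correct: the bar involution on $\mathcal{M}_N^{\pm}$ is well defined because the annihilator of the generating vector is exactly the left ideal generated by $T_i-t$ (resp.\ $T_i+t^{-1}$) for $1\le i\le N-1$, which is bar-stable; unitriangularity holds because for a minimal-length coset representative $w_\beta$ one has $T_{w_\beta}m_{+\cdots+}=m_\beta$ on the nose (each generator in a reduced word acts by a clean move $m_{\ldots+-\ldots}\mapsto m_{\ldots-+\ldots}$ or $m_{\ldots+}\mapsto m_{\ldots-}$ with no correction terms), so $\overline{m_\beta}=\overline{T_{w_\beta}}\,m_{+\cdots+}$ inherits unitriangularity from $\overline{T_{w_\beta}}=T_{w_\beta^{-1}}^{-1}$; and your recursion $P_{\alpha,\beta}-\overline{P_{\alpha,\beta}}=X_{\alpha,\beta}$ with $\overline{X_{\alpha,\beta}}=-X_{\alpha,\beta}$ (forced by the identity $\sum_{\alpha\le\gamma\le\beta}\overline{r_{\gamma,\beta}}\,r_{\alpha,\gamma}=\delta_{\alpha,\beta}$) does determine a unique solution in $t^{-1}\mathbb{Z}[t^{-1}]$. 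The only point I would insist you spell out is the well-definedness step: showing that the bar images of the ideal generators annihilate $m_{+\cdots+}$ only suffices once you know the annihilator is \emph{no larger} than that left ideal, which follows from the standard rank count for parabolically induced modules (or, equivalently, you can sidestep the issue by defining $\overline{m_\beta}:=\overline{T_{w_\beta}}\,m_{+\cdots+}$ directly and verifying it is an involution).
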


Our definition of $P^{\pm}_{\alpha,\beta}$ differs from the original parabolic 
Kazhdan--Lusztig polynomials by the factor $t^{-d}$ for some $d\in\mathbb{N}$.
The polynomial $P^{-}(t^{-1})$ is a monomial of $t^{-1}$~\cite{Bre09}.  
The algorithm to compute $P^{+}_{\alpha,\beta}(t^{-1})$ was found by 
Boe~\cite{Boe}.
See~\cite{Shi14} for a unified treatment of $P^{\pm}_{\alpha,\beta}$ in 
terms of paths.

\section{\texorpdfstring{Quantum group $U_q(\mathfrak{sl}_2)$}{Quantum group Uq(sl2)}}
\label{Sec:QG}
In this section, we briefly summarize the quantum group $U_q(\mathfrak{sl}_2)$
and a coideal subalgebra in $U_q(\mathfrak{sl}_2)$. 
We follow the notation used in~\cite{BaoWang13,FK97}.

\subsection{\texorpdfstring{Quantum Group $U_q(\mathfrak{sl}_2)$}{Quantum group Uq(sl2)}}
\label{section-QG}
Let $\mathbb{C}(q)$ be the field of rational functions in an indeterminate $q$.
We denote by $\ \bar{}:\mathbb{C}(q)\rightarrow\mathbb{C}(q)$ the $\mathbb{C}$-algebra
involution such that $q^n\mapsto q^{-n}$ for all $n$.
The quantum group $U_q(\mathfrak{sl}_2)$ is an associative algebra 
over $\mathbb{C}(q)$ with generators $K^{\pm1},E,F$ and relations 
\begin{eqnarray*}
&&KK^{-1}=K^{-1}K=1, \\
&&KEK^{-1}=q^2E, \\
&&KFK^{-1}=q^{-2}F, \\
&&\left[E,F\right]=\frac{K-K^{-1}}{q-q^{-1}}.
\end{eqnarray*}
We introduce the quantum integer $[n]:=(q^n-q^{-n})/(q-q^{-1})$,  
the quantum factorial $[n]!:=\prod_{k=1}^n[k]$ 
and the $q$-analogue for the binomial coefficient 
\begin{eqnarray*}
\genfrac{[}{]}{0pt}{}{n}{m}:=\frac{[n]!}{[m]![n-m]!}.
\end{eqnarray*}
We set $E^{(n)}:=E^n/[n]!$ and $F^{(n)}:=F^n/[n]!$.

We define the two involutions. 
One is {\em the Cartan involution}\/ denoted by $\omega$:
\begin{eqnarray*}
&&\omega(E)=F,\quad \omega(F)=E,\quad 
\omega(K^{\pm1})=K^{\pm1}, \quad \omega(q^{\pm1})=q^{\pm1}, \\
&&\omega(xy)=\omega(y)\omega(x),\quad x,y\in U_q(\mathfrak{sl}_2).
\end{eqnarray*}
The other involution is {\em the bar involution}\/  $\psi$ and defined by
\begin{eqnarray*}
&&\psi(E)=E,\quad \psi(F)=F,\quad 
\psi(K^{\pm1})=K^{\mp1},\quad \psi(q^{\pm1})=q^{\mp1}, \\
&&\psi(xy)=\psi(x)\psi(y),\quad x,y\in U_q(\mathfrak{sl}_2).
\end{eqnarray*}

The irreducible $(n+1)$-dimensional representations $V_n$, $n\ge1$, has a basis 
$\{v_m| -n\le m\le n, m\equiv n \pmod{2}\}$.
The action of $U_q(\mathfrak{sl}_2)$ is 
\begin{eqnarray*}
&&K^{\pm1}v_m=q^{\pm m}v_m, \\
&&Ev_m=\left[\frac{n+m}{2}+1\right]v_{m+2}, \\
&&Fv_m=\left[\frac{n-m}{2}+1\right]v_{m-2}.
\end{eqnarray*}
Note that the bases $\{v_m\}$ are canonical bases in the 
sense of \cite{Lus90-1}.
All $U_q(\mathfrak{sl}_2)$-modules in this paper will be finite-dimensional 
representations of type I. 

We define a bilinear symmetric pairing in $V_n$ by 
$\langle xu,v\rangle=\langle u,\omega(x)v\rangle$ and 
$\langle v_n,v_n\rangle=1$ where $u,v\in V_n$ and 
$x\in U_q(\mathfrak{sl}_2)$.
Let $\{v^m| -n\le m\le n, n\equiv m \pmod{2}\}$ be the 
dual bases of $\{v_m\}$ with respect to $\langle,\rangle$.
The action of $U_q(\mathfrak{sl}_2)$ on the dual basis is given 
explicitly in~\cite{FK97}.

For $\kappa=(\kappa_1,\ldots,\kappa_n)$ with $\kappa_i=\pm1, 1\le i\le n$, 
we define
\begin{eqnarray*}
|\kappa|:=\sum_{i=1}^n \kappa_i, \qquad
||\kappa||_-=\sum_{i<j}\theta(\kappa_i<\kappa_j),
\end{eqnarray*}
where $\theta(P)=1$ if $P$ is true and zero otherwise.
We define the projection $\pi_n: V_1^{\otimes n}\mapsto V_n$ by 
\begin{eqnarray}
\label{projection-dual}
\pi_n(v^{\kappa_1}\otimes\ldots\otimes v^{\kappa_n})
=q^{-||\kappa||_{-}}v^{|\kappa|}.
\end{eqnarray}

The quantum group $U_q(\mathfrak{sl}_2)$ has a Hopf algebra structure with 
comultiplication.
We have two different comultiplications $\Delta_{\pm}$:
\begin{eqnarray*}
\Delta_+(K^{\pm1})&=&K^{\pm1}\otimes K^{\pm1}, \\
\Delta_+(E)&=&E\otimes 1 + K\otimes E, \\
\Delta_+(F)&=&F\otimes K^{-1} + 1\otimes F.
\end{eqnarray*}
and 
\begin{eqnarray*}
\Delta_-(K^{\pm1})&=&K^{\pm1}\otimes K^{\pm1}, \\
\Delta_-(E)&=&E\otimes K^{-1} + 1\otimes E, \\
\Delta_-(F)&=&F\otimes 1 + K\otimes F.
\end{eqnarray*}
We define another comultiplications as 
$\overline{\Delta}_{\pm}(x):=
(\psi\otimes\psi)\Delta_{\pm}(\psi(x)), x\in U_q(\mathfrak{sl}_2)$.
We also have counit and antipode, but we do not need them in this paper.
 
Following \cite{Lus92,Lus93}, we define the quasi-$R$-matrix $\Theta_{\pm}$
associated with $\Delta_{\pm}$:
\begin{eqnarray*}
\Theta_{+}
:=
\sum_{k\ge0}(-1)^{k}q^{-k(k-1)/2}\frac{(q-q^{-1})^k}{[k]!}F^k\otimes E^k, 
\end{eqnarray*}
and $\Theta_-:=\sigma\overline{\Theta}_+$ where $\sigma$ is the permutation of 
the tensor factors.
For finite-dimensional representations $M$ and $N$, all but finitely many 
terms of $\Theta_{\pm}$ act as zero on any given vector $m\otimes n\in M\otimes N$.
The quasi-$R$-matrix has the property such that 
$\Theta_{\pm}\overline{\Delta}_\pm(x)=\Delta_\pm(x)\Theta_{\pm}$
and $\Theta_{\pm}\overline{\Theta}_{\pm}=\overline{\Theta}_{\pm}\Theta_{\pm}=1$.
We also define 
$\Theta^{(3)}=(1\otimes\Delta)\Theta\cdot\Theta_{23}$ and in general
\begin{eqnarray*}
\Theta^{(n)}:=(1\otimes\Delta^{n-2})\Theta\cdot\Theta^{(n-1)}_{2,\ldots,n},
\end{eqnarray*}
where $\Theta=\Theta_{\pm}$ and $\Delta=\Delta_{\pm}$.

Let $M$ be a finite-dimensional $U_q(\mathfrak{sl}_2)$-module of type I, 
$B$ be a $\mathbb{C}(q)$-basis of $M$ and the pair $(M,B)$ be a based module 
as in~\cite[Section 27]{Lus93}.
We define an involution $\psi: M\rightarrow M$ by 
$\psi(ab)=\bar{a}b$ for all $a\in\mathbb{C}(q)$ and $b\in B$.
This involution is compatible with the involution $\psi$ on $U_q(\mathfrak{sl}_2)$
in the sense that $\psi(um)=\psi(u)\psi(m)$ for all $u\in U_q(\mathfrak{sl}_2)$
and $m\in M$.
Suppose $M$ and $N$ are finite-dimensional $U_q(\mathfrak{sl}_2)$-modules of 
type I with the involution $\psi$.
Following \cite[Section 27.3]{Lus93}, we define an involution 
$\psi_{\pm}$ on the tensor product $M\otimes N$: 
\begin{eqnarray*}
\psi_{\pm}(m\otimes n):=\Theta_{\pm}(\psi(m)\otimes\psi(n)),\quad m\in M, n\in N.
\end{eqnarray*}
In general, let $M_i$, $1\le i\le r$, be involutive $U_q(\mathfrak{sl}_2)$-modules.
Then the involution 
$\psi_{\pm}:M_{1}\otimes\ldots\otimes M_{r}\rightarrow M_{1}\otimes\ldots\otimes M_{r}$
is recursively given by 
\begin{eqnarray*}
\psi_{\pm}(m_1\otimes\ldots\otimes m_r)
=
\Theta_{\pm}(\psi_{\pm}(m_1\otimes\ldots\otimes m_{p})
\otimes\psi_{\pm}(m_{p+1}\otimes\ldots\otimes m_{r}))
\end{eqnarray*}
for $1\le p\le r-1$ and $m_i\in M_i$.

\subsection{Coideal subalgebra}
We consider the Dynkin diagram of type $A_1$ and the identity 
involution.
By a general theory of quantum symmetric pairs~\cite{Kol14,Let99}, 
we have coideal subalgebras of $U_q(\mathfrak{sl}_2)$.
A coideal subalgebra $U$ is defined as a polynomial algebra in $X$, 
namely, $U:=\mathbb{C}(q)[X]$.
The pair $(U_q(\mathfrak{sl}_2),U)$ is a quantum symmetric pair.
The coideal subalgebra $U$ has an antilinear bar involution 
$\psi^{\iota}$ such that $\psi^{\iota}(X)=X$ and 
$\psi^{\iota}(q)=q^{-1}$.
There exists an injective $\mathbb{C}(q)$-algebra homomorphism
$\iota:U\rightarrow U_q(\mathfrak{sl}_2)$, $X\mapsto E+qFK^{-1}+K^{-1}$. 
In the dual picture, we consider the generator 
$Y:=\psi(\omega(X))=F+q^{-1}KE+K$. 
The comultiplication $\Delta: U\rightarrow U_q(\mathfrak{sl}_2)\times U$ 
is given by 
\begin{eqnarray}
\Delta(X)&=&K^{-1}\otimes X+qFK^{-1}\otimes 1+ E\otimes1, \\
\label{coproductY}
\Delta(Y)&=&K\otimes Y+q^{-1}KE\otimes 1+F\otimes1.
\end{eqnarray}
Note that $U$ is left coideal since 
$\Delta(U)\subset U_q(\mathfrak{sl}_2)\times U$.

A general theory of constructing the quasi-$R$-matrix for a quantum
symmetric pair was developed in~\cite{BaoWang13}.
We collect the facts about the quasi-$R$-matrix for the quantum 
symmetric pair $(U_q(\mathfrak{sl}_2),U)$ in this subsection. 
See \cite{BaoWang13} for a detailed exposition.

The {\it intertwiner} $\Upsilon_{\pm}$ for the quantum symmetric pair 
$(U_q(\mathfrak{sl}_2),U)$  
satisfy 
\begin{eqnarray}
\label{intertwiner}
\iota(\psi^{\iota}(u))\Upsilon_{+}
=
\Upsilon_{+}\psi(\iota(u)), \qquad u\in U.
\end{eqnarray}
The solution of Eqn.(\ref{intertwiner}) is explicitly given in 
\cite[Section 4]{BaoWang13}.
We have $\Upsilon_+=\sum_{n\ge0}\Upsilon_{+,n}$ with 
$\Upsilon_{+,n}=c_n F^{(n)}$.
The coefficients $c_n$ satisfy the recurrence relation
\begin{eqnarray*}
c_n=-q^{-(n-1)}(q-q^{-1})(q[n-1]c_{n-2}+c_{n-1}).
\end{eqnarray*}
with $c_0=0$ and $c_1=1$.
Similarly, we define $\Upsilon_-=\sum_{n\ge0}\Upsilon_{-,n}$ with 
$\Upsilon_{-,n}=\overline{c_n} E^{(n)}$.
We have normalized $\Upsilon_{\pm}$ such that $\Upsilon_{\pm,0}=1$.

The quasi-$R$-matrix $\Theta^{\iota}_{\pm}$ is defined by 
(see~\cite[Section 3]{BaoWang13})
\begin{eqnarray*}
\Theta^{\iota}_{\pm}
:=
\Delta_{\pm}(\Upsilon_{\pm})\Theta_{\pm}(1\otimes\Upsilon_{\pm}^{-1}).
\end{eqnarray*}
The quasi-$R$-matrix satisfies 
$\Theta^{\iota}_{\pm}\overline{\Theta^{\iota}}_{\pm}=1$ 
and 
$\Delta_{\pm}(u)\Theta^{\iota}_{\pm}
=\Theta^{\iota}_{\pm}\overline{\Delta}_{\pm}(u)$ for 
$u\in U$.

Let $(M,B)$ be a based module of $U_q(\mathfrak{sl}_2)$ as in 
Section~\ref{section-QG}.
We regard $M$ as a $U$-module. 
We define an involution  $\psi^{\iota}_{\pm}: M\rightarrow M$
by $\psi^{\iota}_{\pm}:=\Upsilon_{\pm}\circ\psi$.
The involution $\psi^{\iota}_{\pm}$ is compatible with 
$\psi^{\iota}$ on $U$ in the sense that 
$\psi^{\iota}_{\pm}(um)=\psi^{\iota}(u)\psi^{\iota}_{\pm}(m)$ for 
all $u\in U$ and $m\in M$.
Suppose $M$ is a $U_q(\mathfrak{sl}_2)$-module equipped with $\psi$ 
and $N$ is a $U$-module equipped with $\psi^{\iota}_{\pm}$.
We regard $M\otimes N$ as a $U$-module.
Following \cite[Section 3.4]{BaoWang13}, we define the involution 
$\psi^\iota_{\pm}$ on the tensor product $M\otimes N$:
\begin{eqnarray*}
\psi^\iota_{\pm}(m\otimes n)
:=\Theta^{\iota}_{\pm}(\psi(m)\otimes\psi^\iota_{\pm}(n)), 
\qquad m\in M, n\in N.
\end{eqnarray*}
In general, let $M_i, 1\le i\le r$, be $U_q(\mathfrak{sl}_2)$-modules
and $N$ be a $U$-module.
The involution $\psi_{\iota}^{\pm}$ is recursively given by 
\begin{eqnarray*}
\psi^{\iota}_{\pm}(m_1\otimes\ldots\otimes m_r\otimes n)
=
\Theta^{\iota}_{\pm}(\psi_{\pm}(m_1\otimes\ldots\otimes m_p)
\otimes\psi_{\pm}^{\iota}(m_{p+1}\otimes\ldots\otimes m_r\otimes n)),
\end{eqnarray*}
where $m_i\in M_i$, $1\le p\le r$, and $n\in N$.

\section{\texorpdfstring{Canonical bases of $U_q(\mathfrak{sl}_2)$ and $U$}
{Canonical basis of Uq(sl2) and U}}
\label{Sec:CB}

Let $\mathbf{k}=(k_1,\ldots,k_n)$ and $\mathbf{l}=(l_1,\ldots,l_n)$.
When $\sum_{i=1}^{m}k_i\le\sum_{i=1}^{m}l_i$ for all $1\le m\le n$,
we denote it by $\mathbf{k}\le_{+}\mathbf{l}$ or 
$\mathbf{l}\le_{-}\mathbf{k}$. 
For $\mathbf{m}:=(m_1,\ldots,m_n)\in\mathbb{N}_{+}^{n}$, we define 
\begin{eqnarray*}
I_{\mathbf{m}}:=
\left\{k_i, 1\le i\le n| -m_i\le k_i\le m_i,  k_i\equiv m_{i} (\mathrm{mod}\ 2)\right\}.
\end{eqnarray*}

In this section, we follow the notation and the convention used in \cite{BaoWang14,Lus93}.

\subsection{Canonical bases}
Let $(M,B)$ and $(M',B')$ be based modules. 
The tensor product of two based modules $M\otimes M'$ has a basis 
$B\otimes B'$.
This basis is not compatible with the involution $\psi$ in general.
We introduce a modified basis $B\diamondsuit B'$ in the tensor product 
following~\cite{Lus92}.
We call the basis $B\diamondsuit B'$ a {\it canonical basis}.
More in general, we obtain canonical bases 
$\{v_{k_1}\diamondsuit\cdots\diamondsuit v_{k_n}\}_{\mathbf{k}\in I_{\mathbf{m}}}$ 
in the tensor product $V_{m_1}\otimes\cdots\otimes V_{m_n}$.
Note that we have associativity of tensor products.
The canonical basis is characterized as follows.
\begin{theorem}[Lusztig~\cite{Lus92}]
\leavevmode
\label{theorem-canonical1}
\begin{enumerate}
\item 
There exists a unique element 
$v_{k_1}\diamondsuit\cdots\diamondsuit v_{k_n}\in V_{m_1}\otimes\cdots\otimes V_{m_n}$
such that
\begin{eqnarray*}
&&\psi_{+}(v_{k_1}\diamondsuit\cdots\diamondsuit v_{k_n})
=v_{k_1}\diamondsuit\cdots\diamondsuit v_{k_n}, \\
&&v_{k_1}\diamondsuit\cdots\diamondsuit v_{k_n}
-v_{k_1}\otimes\cdots\otimes v_{k_n}
\in q^{-1}\cdot_{\mathbb{Z}[q^{-1}]}V_{m_1}\otimes\cdots\otimes V_{m_n}.
\end{eqnarray*}
\item
The vector 
$v_{k_1}\diamondsuit\ldots\diamondsuit v_{k_n}-v_{k_1}\otimes\ldots\otimes v_{k_n}$
is a linear combination of $v_{l_1}\otimes\ldots\otimes v_{l_n}$ with 
$\mathbf{l}\neq\mathbf{k}$, $\sum_{i=1}^{n}k_i=\sum_{i=1}^{n}l_i$ and  
$\mathbf{l}\le_{+}\mathbf{k}$. 
The coefficients are in $q^{-1}\mathbb{Z}[q^{-1}]$.
\item 
The elements 
$v_{k_1}\diamondsuit\cdots\diamondsuit v_{k_n}$ 
form a $\mathbb{C}(q)$-basis of $V_{m_1}\otimes\cdots\otimes V_{m_n}$, 
a $\mathbb{Z}[q,q^{-1}]$-basis of 
$_{\mathbb{Z}[q,q^{-1}]}V_{m_1}\otimes\cdots\otimes V_{m_n}$, and 
a $\mathbb{Z}[q^{-1}]$-basis of 
$_{\mathbb{Z}[q^{-1}]}V_{m_1}\otimes\cdots\otimes V_{m_n}$.  
\end{enumerate}
\end{theorem}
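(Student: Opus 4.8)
The plan is to reduce everything to the abstract form of Lusztig's lemma—the construction of a unique bar-invariant basis out of a unitriangular antilinear involution—and to feed that lemma the explicit action of $\psi_{+}$ on the standard basis $\{v_{k_1}\otimes\cdots\otimes v_{k_n}\}$. First I would record that $\psi_{+}$ is an antilinear involution of $V_{m_1}\otimes\cdots\otimes V_{m_n}$. Antilinearity is immediate from $\psi_{+}(m\otimes n)=\Theta_{+}(\psi(m)\otimes\psi(n))$ and the antilinearity of each factor's $\psi$. For $\psi_{+}^2=\mathrm{id}$, conjugate $\Theta_{+}$ through $\psi\otimes\psi$: since $\psi$ is antilinear and commutes with $E$ and $F$, one obtains $(\psi\otimes\psi)\,\Theta_{+}\,(\psi\otimes\psi)=\overline{\Theta}_{+}$ as operators, whence $\psi_{+}^2=\Theta_{+}\overline{\Theta}_{+}=1$ by the stated property of the quasi-$R$-matrix. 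For $n>2$ the same computation is carried through the recursive definition, or equivalently by replacing $\Theta_{+}$ with the iterated quasi-$R$-matrix $\Theta^{(n)}$, so that $\psi_{+}(m_1\otimes\cdots\otimes m_n)=\Theta^{(n)}(\psi(m_1)\otimes\cdots\otimes\psi(m_n))$.

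The structural heart is the expansion of $\psi_{+}(v_{k_1}\otimes\cdots\otimes v_{k_n})$ in the standard basis. For two factors, $\Theta_{+}=\sum_{k\ge0}(-1)^kq^{-k(k-1)/2}\frac{(q-q^{-1})^k}{[k]!}F^k\otimes E^k$ sends $v_{k_1}\otimes v_{k_2}$ to a sum of terms proportional to $v_{k_1-2k}\otimes v_{k_2+2k}$; the $k=0$ term reproduces $v_{k_1}\otimes v_{k_2}$, while each $k\ge1$ term has the same total weight but strictly smaller first partial sum, hence is strictly $\le_{+}$-smaller. Rewriting the $k$-th term through divided powers as $(-1)^kq^{-k(k-1)/2}(q-q^{-1})^k[k]!\,F^{(k)}\otimes E^{(k)}$ shows the coefficients lie in $\mathbb{Z}[q,q^{-1}]$, since $F^{(k)}$ and $E^{(k)}$ act on $\{v_m\}$ with coefficients in $\mathbb{Z}[q,q^{-1}]$. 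For general $n$ I would run the same argument through $\Theta^{(n)}$: each elementary factor $F^k\otimes E^k$ transports weight rightward across a single cut, which preserves the total weight and can only decrease partial sums, so
\[
\psi_{+}(v_{k_1}\otimes\cdots\otimes v_{k_n})
= v_{k_1}\otimes\cdots\otimes v_{k_n}
+ \sum_{\mathbf{l}}r_{\mathbf{l},\mathbf{k}}\,v_{l_1}\otimes\cdots\otimes v_{l_n},
\]
the sum being over $\mathbf{l}\le_{+}\mathbf{k}$ with $\mathbf{l}\neq\mathbf{k}$ and $\sum_i l_i=\sum_i k_i$, and $r_{\mathbf{l},\mathbf{k}}\in\mathbb{Z}[q,q^{-1}]$.

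With this weight-preserving, integral, unitriangular expansion in hand, parts (1) and (2) follow from Lusztig's lemma applied inside each (finite) weight space of the poset $(I_{\mathbf{m}},\le_{+})$. One constructs $v_{k_1}\diamondsuit\cdots\diamondsuit v_{k_n}=v_{k_1}\otimes\cdots\otimes v_{k_n}+\sum_{\mathbf{l}}p_{\mathbf{l},\mathbf{k}}\,v_{l_1}\otimes\cdots\otimes v_{l_n}$ by descending induction on the poset, solving at each step $p_{\mathbf{l},\mathbf{k}}-\overline{p_{\mathbf{l},\mathbf{k}}}=h_{\mathbf{l},\mathbf{k}}$, where $h_{\mathbf{l},\mathbf{k}}$ is the bar-anti-invariant Laurent polynomial assembled from the already-constructed higher terms and the $r_{\mathbf{l},\mathbf{k}}$; such an equation has a unique solution with $p_{\mathbf{l},\mathbf{k}}\in q^{-1}\mathbb{Z}[q^{-1}]$. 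Part (3) is then formal: the transition matrix between the standard basis and $\{v_{k_1}\diamondsuit\cdots\diamondsuit v_{k_n}\}$ is unitriangular with off-diagonal entries in $q^{-1}\mathbb{Z}[q^{-1}]$, so it is invertible with inverse of the same shape, and the diamond vectors therefore form a basis simultaneously over $\mathbb{C}(q)$, over $\mathbb{Z}[q,q^{-1}]$, and over $\mathbb{Z}[q^{-1}]$.

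I expect the main obstacle to be the bookkeeping in the multi-factor triangularity claim: one must check that the iterated quasi-$R$-matrix $\Theta^{(n)}$, built from repeated insertions of $F^k\otimes E^k$ across different cuts together with the comultiplication $\Delta_{+}$ of $E$ and $F$, never violates $\le_{+}$ and keeps all coefficients in $\mathbb{Z}[q,q^{-1}]$. Verifying that each transport of weight across a cut is monotone for $\le_{+}$, and that the divided-power normalization keeps everything integral, is the technical core; the abstract lemma and the basis statement of part (3) are then routine.
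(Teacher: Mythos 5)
The paper does not prove this theorem --- it is quoted from Lusztig~\cite{Lus92} without proof --- so the only meaningful comparison is with the standard argument in the literature, and your proposal is essentially that argument: establish that $\psi_{+}$ is an antilinear involution via $\Theta_{+}\overline{\Theta}_{+}=1$, verify the weight-preserving, $\le_{+}$-unitriangular, $\mathbb{Z}[q,q^{-1}]$-integral expansion of $\psi_{+}$ on the standard basis (using the divided-power form of the quasi-$R$-matrix), and then invoke the uniqueness lemma for bar-invariant bases together with unitriangular inversion for part (3). Your reconstruction is correct and matches the approach the paper defers to by citation.
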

We define the bilinear pairing of 
$V_{m_1}\otimes\ldots\otimes V_{m_n}$ and 
$V_{m_1}\otimes\ldots\otimes V_{m_n}$ by
\begin{eqnarray*}
\langle v_{k_1}\otimes\ldots\otimes v_{k_n},
v^{k'_1}\otimes\ldots\otimes v^{k'_n}\rangle
=\delta_{k_1}^{k'_1}\ldots\delta_{k_n}^{k'_n}.
\end{eqnarray*}
We define the dual canonical basis $v^{k_1}\heartsuit\cdots\heartsuit v^{k_n}$
with respect to the bilinear pairing:
\begin{eqnarray*}
\langle v_{k_1}\diamondsuit\cdots\diamondsuit v_{k_n},
v^{l_1}\heartsuit\cdots\heartsuit v^{l_n} \rangle
=\delta_{k_1}^{l_1}\cdots\delta_{k_n}^{l_n}.
\end{eqnarray*}
The dual statement of Theorem ~\ref{theorem-canonical1} is 
\begin{theorem}[Frenkel and Khovanov~{\cite[Theorem 1.8]{FK97}}]
\leavevmode
\begin{enumerate}
\item There exists a unique element  
$v^{k_1}\heartsuit\cdots\heartsuit v^{k_n}\in V_{m_1}\otimes\cdots\otimes V_{m_n}$
such that 
\begin{eqnarray*}
&&\psi_{-}(v^{k_1}\heartsuit\cdots\heartsuit v^{k_n})
=v^{k_1}\heartsuit\cdots\heartsuit v^{k_n}, \\
&& 
v^{k_1}\heartsuit\cdots\heartsuit v^{k_n}
-v^{k_1}\otimes\cdots\otimes v^{k_n}
\in q^{-1}\cdot_{\mathbb{Z}[q^{-1}]}
V_{m_1}\otimes\cdots\otimes V_{m_n}
\end{eqnarray*}
\item 
The vector 
$v^{k_1}\heartsuit\cdots\heartsuit v^{k_n}-v^{k_1}\otimes\cdots\otimes v^{k_n}$
is a linear combination of $v^{l_1}\otimes\cdots\otimes v^{l_n}$ with 
$\mathbf{l}\neq\mathbf{k}$, $\sum_{i=1}^{n}k_i=\sum_{i=1}^{n}l_i$ and 
$\mathbf{k}<_{+}\mathbf{l}$. 
The coefficients are in $q^{-1}\mathbb{Z}[q^{-1}]$.

\item
The elements 
$v_{k_1}\heartsuit\cdots\heartsuit v_{k_n}$ 
form a $\mathbb{C}(q)$-basis of $V_{m_1}\otimes\cdots\otimes V_{m_n}$, 
a $\mathbb{Z}[q,q^{-1}]$-basis of 
$_{\mathbb{Z}[q,q^{-1}]}V_{m_1}\otimes\cdots\otimes V_{m_n}$, and 
a $\mathbb{Z}[q^{-1}]$-basis of 
$_{\mathbb{Z}[q^{-1}]}V_{m_1}\otimes\cdots\otimes V_{m_n}$.  
\end{enumerate}
\end{theorem}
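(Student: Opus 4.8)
The plan is to deduce this statement from Lusztig's Theorem~\ref{theorem-canonical1} by a duality argument carried out through the bilinear pairing $\langle\,,\,\rangle$ on $V_{m_1}\otimes\cdots\otimes V_{m_n}$. By definition $v^{k_1}\heartsuit\cdots\heartsuit v^{k_n}$ is the basis dual to $\{v_{l_1}\diamondsuit\cdots\diamondsuit v_{l_n}\}$, so the only content to be verified is that this dual basis inherits the three listed properties; I would obtain each of them by transporting the corresponding property of the canonical basis across the pairing.

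First I would establish the key adjunction relating $\psi_+$ and $\psi_-$ with respect to the pairing, namely an identity of the form
\[
\langle \psi_{+}(u),\,v\rangle=\overline{\langle u,\,\psi_{-}(v)\rangle},\qquad u,v\in V_{m_1}\otimes\cdots\otimes V_{m_n}.
\]
In the one-factor case this reduces to the defining property $\langle xu,v\rangle=\langle u,\omega(x)v\rangle$ of the form on $V_n$ together with $\psi(q)=q^{-1}$; in the tensor case it follows by combining this with the relation $\Theta_-=\sigma\overline{\Theta}_+$ between the two quasi-$R$-matrices and the recursive definition of $\psi_\pm$, checking that the adjoint of $\Theta_+$ under the pairing is supplied by $\Theta_-$ on the dual side. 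Granting the adjunction, let $w$ be the dual basis element of the $\psi_+$-invariant vector $v_{k_1}\diamondsuit\cdots\diamondsuit v_{k_n}$. For every $\mathbf{l}$ one computes $\langle v_{l_1}\diamondsuit\cdots,\psi_-(w)\rangle=\overline{\langle\psi_+(v_{l_1}\diamondsuit\cdots),w\rangle}=\overline{\langle v_{l_1}\diamondsuit\cdots,w\rangle}$, and since these numbers are $0$ or $1$ they equal their own bar; by uniqueness of the dual basis this forces $\psi_-(w)=w$, giving the first condition of part~(1).

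For the triangularity in part~(2) I would argue at the level of transition matrices. Write $C$ for the unitriangular matrix expressing $\{v_{k_1}\diamondsuit\cdots\}$ in the standard basis $\{v_{\mathbf{l}}\}$; by Theorem~\ref{theorem-canonical1}(2) its off-diagonal entries lie in $q^{-1}\mathbb{Z}[q^{-1}]$ and vanish unless $\mathbf{l}\le_{+}\mathbf{k}$. A direct computation shows the dual basis is expressed by $(C^{-1})^{\mathsf{T}}$, and inverting a unitriangular matrix whose off-diagonal entries lie in $q^{-1}\mathbb{Z}[q^{-1}]$ keeps all off-diagonal entries in that ideal (the series $I-N+N^2-\cdots$ stays in $q^{-1}\mathbb{Z}[q^{-1}]$); transposition then reverses the order, producing exactly the condition $\mathbf{k}<_{+}\mathbf{l}$ with coefficients in $q^{-1}\mathbb{Z}[q^{-1}]$. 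In particular $v^{k_1}\heartsuit\cdots-v^{k_1}\otimes\cdots$ lies in $q^{-1}$ times the $\mathbb{Z}[q^{-1}]$-lattice, which is the second condition of part~(1); uniqueness follows from the standard fact that a $\psi_-$-invariant vector lying in $q^{-1}$ times the $\mathbb{Z}[q^{-1}]$-lattice must vanish. Finally part~(3) is immediate from duality: the pairing is perfect over $\mathbb{C}(q)$, over $\mathbb{Z}[q,q^{-1}]$, and—because $C$ is unitriangular over $\mathbb{Z}[q^{-1}]$—over $\mathbb{Z}[q^{-1}]$, so the basis dual to a basis of each lattice is again a basis.

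I expect the adjunction lemma to be the main obstacle, since it is the only step that is not formal bookkeeping: it requires tracking how the Cartan involution $\omega$ defining the pairing, the bar involution $\psi$, and the two quasi-$R$-matrices $\Theta_\pm$ interact through the recursive definition of $\psi_\pm$ on several tensor factors. Everything else—the inverse-transpose description of the dual basis, the integrality of the inverse of a unitriangular matrix, and the perfectness of the pairing over the three rings—is routine once that compatibility is in hand.
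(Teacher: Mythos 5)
The paper does not prove this statement at all: it is imported verbatim as Theorem~1.8 of Frenkel--Khovanov \cite{FK97}, so there is no in-paper argument to measure yours against. Your duality derivation from Theorem~\ref{theorem-canonical1} is correct, and it is essentially the standard route (the one taken in \cite{FK97} and in Lusztig's treatment of based modules). The adjunction you flag as the main obstacle does go through: on a single factor $\langle\psi(m),m'\rangle=\overline{\langle m,\psi(m')\rangle}$ because $\psi$ fixes both $\{v_k\}$ and $\{v^k\}$ and the pairing of basis vectors is $0$ or $1$; on two factors the computation reduces to $(\omega\otimes\omega)(\Theta_{+})=\sum_k c_k\,E^k\otimes F^k=\overline{\Theta_{-}}$ (bar applied to coefficients only, using $\Theta_-=\sigma\overline{\Theta}_+$) together with $E^k\psi(m')=\psi(E^k m')$; and the $r$-factor case follows by running the recursive definition of $\psi_{\pm}$ with the same splitting index $p$ on both sides. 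Your transition-matrix bookkeeping is also right, including the reversal from $\mathbf{l}\le_{+}\mathbf{k}$ (canonical basis) to $\mathbf{k}<_{+}\mathbf{l}$ (dual basis) under inverse-transpose. Two points you should make explicit in a write-up: the ``standard fact'' used for uniqueness (a $\psi_{-}$-fixed vector with all standard-basis coefficients in $q^{-1}\mathbb{Z}[q^{-1}]$ vanishes) itself relies on $\psi_{-}$ being unitriangular with respect to $\le_{+}$ on the standard dual basis, which must be read off from the explicit form of $\Theta_{-}$ and a minimal-support-element argument; and in part~(3) you should specify that the $\mathbb{Z}[q^{-1}]$- and $\mathbb{Z}[q,q^{-1}]$-lattices are the ones spanned by the standard dual basis $v^{k_1}\otimes\cdots\otimes v^{k_n}$, since the lattices spanned by $\{v_{\mathbf{k}}\}$ and $\{v^{\mathbf{k}}\}$ differ by non-unit quantum binomial factors (the paper is silent on this distinction).
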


We regard a based module $(M,B)$ as a finite-dimensional $U$-module. 
When $b\in B$ is a lowest weight vector, $\psi^{\iota}_{+}(b)=b$.
Similarly, when $b\in B$ is a highest weight vector, $\psi^{\iota}_{-}(b)=b$.
For any other $b\in B$, we have $\psi^{\iota}_{\pm}(b)\neq b$, that is, 
the basis $B$ is not compatible with $\psi^{\iota}_{\pm}$.
We can introduce a modified basis $B'$ with the property $\psi^{\iota}(b')=b'$ 
for $b'\in B'$.
By abuse of notation, we call the pair $(M,B')$ a based module of $U$ and 
the basis $B'$ a canonical basis of $U$.

Let $(M,B)$ and $(M',B')$ be based modules of $U_q(\mathfrak{sl}_2)$ and 
$U$ respectively.
By a similar argument to canonical bases of $U_q(\mathfrak{sl}_2)$, 
we can introduce a modified basis $B\varclubsuit B'$ in the
tensor product of two based modules $M\otimes M'$.
In general, we obtain bases 
$\{v^{k_1}\varclubsuit\cdots\varclubsuit v^{k_n}\}_{\mathbf{k}\in I_{\mathbf{m}}}$ 
in the tensor product $V_{m_1}\otimes\cdots\otimes V_{m_n}$.
We call $v^{k_1}\varclubsuit\cdots\varclubsuit v^{k_n}$, 
$\mathbf{k}\in I_{\mathbf{m}}$, a canonical basis of $U$.
The canonical bases of $U$ are characterized as follows.
The proofs are similar to the one for Theorem~\ref{theorem-canonical1}.
\begin{theorem}
\leavevmode
\label{theorem-canonical3}
\begin{enumerate}
\item
There exists a unique element 
$v_{k_1}\varclubsuit\ldots\varclubsuit v_{k_n}\in V_{m_1}\otimes\ldots\otimes V_{m_n}$
such that 
\begin{eqnarray*}
&&\psi^{\iota}_{+}(v_{k_1}\varclubsuit\ldots\varclubsuit v_{k_n})
=v_{k_1}\varclubsuit\ldots\varclubsuit v_{k_n}, \\
&&v_{k_1}\varclubsuit\ldots\varclubsuit v_{k_n}
-v_{k_1}\otimes\ldots\otimes v_{k_n}
\in q^{-1}\cdot_{\mathbb{Z}[q^{-1}]}V_{m_1}\otimes\ldots\otimes V_{m_n}.
\end{eqnarray*}
\item 
The vector 
$v_{k_1}\varclubsuit\ldots\varclubsuit v_{k_n}-v_{k_1}\otimes\ldots\otimes v_{k_n}$
is a linear combination of $v_{l_1}\otimes\ldots\otimes v_{l_n}$ with 
$\mathbf{l}\le_{+}\mathbf{k}$. 
The coefficients are in $q^{-1}\mathbb{Z}[q^{-1}]$.

\item 
The elements 
$v_{k_1}\varclubsuit\cdots\varclubsuit v_{k_n}$ 
form a $\mathbb{C}(q)$-basis of $V_{m_1}\otimes\cdots\otimes V_{m_n}$, 
a $\mathbb{Z}[q,q^{-1}]$-basis of 
$_{\mathbb{Z}[q,q^{-1}]}V_{m_1}\otimes\cdots\otimes V_{m_n}$, and 
a $\mathbb{Z}[q^{-1}]$-basis of 
$_{\mathbb{Z}[q^{-1}]}V_{m_1}\otimes\cdots\otimes V_{m_n}$.  
\end{enumerate}
\end{theorem}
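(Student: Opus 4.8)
The plan is to reduce the statement to Lusztig's abstract construction of bar-invariant bases, exactly as in the proof of Theorem~\ref{theorem-canonical1}, the only genuinely new feature being that the involution $\psi^{\iota}_{+}$ no longer preserves the total weight. Concretely, I would verify two properties of $\psi^{\iota}_{+}$ acting on $V_{m_1}\otimes\cdots\otimes V_{m_n}$: that it is a $\mathbb{C}$-antilinear involution, and that it is lower-unitriangular with respect to the order $\le_{+}$, with leading coefficient $1$ and off-diagonal coefficients in $\mathbb{Z}[q,q^{-1}]$. Once these are in place, the existence and uniqueness of part~(1) together with the shape in part~(2) follow verbatim from the standard bar-involution lemma of \cite[Section~24]{Lus93}: for each $\mathbf{k}\in I_{\mathbf{m}}$ there is a unique $\psi^{\iota}_{+}$-invariant vector of the form $v_{k_1}\otimes\cdots\otimes v_{k_n}+\sum_{\mathbf{l}<_{+}\mathbf{k}}p_{\mathbf{l},\mathbf{k}}\,v_{l_1}\otimes\cdots\otimes v_{l_n}$ with $p_{\mathbf{l},\mathbf{k}}\in q^{-1}\mathbb{Z}[q^{-1}]$, and this vector is by definition $v_{k_1}\varclubsuit\cdots\varclubsuit v_{k_n}$.

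For the involution property I would argue that the single-factor map $\psi^{\iota}_{+}=\Upsilon_{+}\circ\psi$ is an involution by the properties of the intertwiner $\Upsilon_{+}$ recorded in Section~\ref{Sec:QG} (solving \eqref{intertwiner}), and that on a tensor product the identity $(\psi^{\iota}_{+})^2=\mathrm{id}$ reduces, via the recursive definition, to $\Theta^{\iota}_{+}\overline{\Theta^{\iota}_{+}}=1$, which is collected in Section~\ref{Sec:QG} from \cite{BaoWang13}. Finite-dimensionality of the $V_{m_i}$ guarantees that $\Upsilon_{+}$ and $\Theta^{\iota}_{+}$ act as finite sums, so $\psi^{\iota}_{+}$ is a well-defined antilinear map.

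The unitriangularity is the technical heart, and I would establish it by induction on $n$ via the recursive definition of $\psi^{\iota}_{+}$ and the factorization $\Theta^{\iota}_{+}=\Delta_{+}(\Upsilon_{+})\,\Theta_{+}\,(1\otimes\Upsilon_{+}^{-1})$. For $n=1$, since $\psi(v_k)=v_k$ for the canonical basis of $V_{m}$, one has $\psi^{\iota}_{+}(v_k)=\Upsilon_{+}v_k=v_k+\sum_{j\ge1}c_j\,F^{(j)}v_k$; as $F^{(j)}v_k$ is a multiple of $v_{k-2j}$ with coefficient in $\mathbb{Z}[q,q^{-1}]$ and $c_j\in\mathbb{Z}[q,q^{-1}]$, every nonleading term lies strictly below $v_k$. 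For the inductive step, the middle factor $\Theta_{+}=\sum_{r}(\cdots)F^{r}\otimes E^{r}$ already preserves $\le_{+}$ exactly as in Theorem~\ref{theorem-canonical1}: although $E^{r}$ raises weights in the right block, the raise at each intermediate position is only partial while the compensating drop on the left block is total, so every partial sum weakly decreases and the final partial sum is unchanged. The remaining factors $\Delta_{+}(\Upsilon_{+})$ and $1\otimes\Upsilon_{+}^{-1}$ apply only powers of $F$ through the coproduct, which can only lower weights and hence only decrease partial sums further. Combining these, $\psi^{\iota}_{+}(v_{k_1}\otimes\cdots\otimes v_{k_n})$ equals $v_{k_1}\otimes\cdots\otimes v_{k_n}$ plus a $\mathbb{Z}[q,q^{-1}]$-combination of terms indexed by $\mathbf{l}<_{+}\mathbf{k}$. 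The main obstacle is precisely this partial-sum bookkeeping through the coproduct and the recursion, and it is here that the argument departs from Theorem~\ref{theorem-canonical1}: the $F$-powers coming from the intertwiner genuinely lower the total weight, so the constraint $\sum_i l_i=\sum_i k_i$ of Theorem~\ref{theorem-canonical1}(2) must be dropped, leaving only $\mathbf{l}\le_{+}\mathbf{k}$.

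Finally, part~(3) is a formal consequence of (1) and (2): the transition matrix from $\{v_{k_1}\otimes\cdots\otimes v_{k_n}\}$ to $\{v_{k_1}\varclubsuit\cdots\varclubsuit v_{k_n}\}$ is unitriangular with respect to any linear refinement of $\le_{+}$, with $1$'s on the diagonal and off-diagonal entries in $q^{-1}\mathbb{Z}[q^{-1}]\subset\mathbb{Z}[q^{-1}]$. Such a matrix is invertible over $\mathbb{Z}[q^{-1}]$, over $\mathbb{Z}[q,q^{-1}]$, and over $\mathbb{C}(q)$, so the canonical basis of $U$ spans the same lattice and the same space as the standard basis over each of these rings, which yields the three basis statements simultaneously.
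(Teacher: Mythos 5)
Your proposal is correct and follows exactly the route the paper intends: the paper gives no independent argument for Theorem~\ref{theorem-canonical3}, stating only that "the proofs are similar to the one for Theorem~\ref{theorem-canonical1}," and your write-up is precisely that adaptation — verify that $\psi^{\iota}_{+}$ is an antilinear involution (via $\Upsilon_{+}\overline{\Upsilon_{+}}=1$ and $\Theta^{\iota}_{+}\overline{\Theta^{\iota}_{+}}=1$), check lower-unitriangularity with respect to $\le_{+}$ using the factorization of $\Theta^{\iota}_{+}$ and the fact that the intertwiner contributes only powers of $F$, and then invoke Lusztig's standard bar-invariant-basis lemma. Your observation that the total-weight constraint of Theorem~\ref{theorem-canonical1}(2) must be dropped because $\Upsilon_{+}$ genuinely lowers weight is the right one and is consistent with how the paper states part~(2).
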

We define the dual of a canonical basis,  
$v^{l'_1}\varspadesuit \ldots \varspadesuit v^{l'_n}$, 
with respect to the bilinear pairing:
\begin{eqnarray}
\label{IProductSB}
\langle v_{l_1}\varclubsuit\ldots\varclubsuit v_{l_n} 
v^{l'_1}\varspadesuit \ldots \varspadesuit v^{l'_n} \rangle
=\delta_{l_1}^{l'_1}\ldots\delta_{l_n}^{l'_n}.
\end{eqnarray}
The dual statement of Theorem~\ref{theorem-canonical3} is 
\begin{theorem}
\leavevmode
\begin{enumerate}
\item
There exists a unique element 
$v^{k_1}\varspadesuit\ldots\varspadesuit v^{k_n}\in V_{m_1}\otimes\ldots\otimes V_{m_n}$
such that 	
\begin{eqnarray*}
&&\psi^{\iota}_{-}(v^{k_1}\varspadesuit\ldots\varspadesuit v^{k_n})
=v^{k_1}\varspadesuit\ldots\varspadesuit v^{k_n}, \\
&&v^{k_1}\varspadesuit\ldots\varspadesuit v^{k_n}
-v^{k_1}\otimes\ldots\otimes v^{k_n}
\in q^{-1}\cdot_{\mathbb{Z}[q^{-1}]}V_{m_1}\otimes\ldots\otimes V_{m_n}
\end{eqnarray*}
\item 
The vector 
$v^{k_1}\varspadesuit\ldots\varspadesuit v^{k_n}-v^{k_1}\otimes\ldots\otimes v^{k_n}$
is a linear combination of $v^{l_1}\otimes\ldots\otimes v^{l_n}$ with 
$\mathbf{l}\le_{-}\mathbf{k}$.
The coefficients are in $q^{-1}\mathbb{Z}[q^{-1}]$.

\item 
The elements 
$v_{k_1}\varspadesuit\cdots\varspadesuit v_{k_n}$ 
form a $\mathbb{C}(q)$-basis of $V_{m_1}\otimes\cdots\otimes V_{m_n}$, 
a $\mathbb{Z}[q,q^{-1}]$-basis of 
$_{\mathbb{Z}[q,q^{-1}]}V_{m_1}\otimes\cdots\otimes V_{m_n}$, and 
a $\mathbb{Z}[q^{-1}]$-basis of 
$_{\mathbb{Z}[q^{-1}]}V_{m_1}\otimes\cdots\otimes V_{m_n}$.  
\end{enumerate}
\end{theorem}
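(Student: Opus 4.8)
The plan is to derive this dual statement from Theorem~\ref{theorem-canonical3} by duality with respect to the bilinear pairing, in exact parallel with the way the dual basis of Frenkel and Khovanov is obtained from Theorem~\ref{theorem-canonical1}. Throughout I abbreviate the standard bases by writing $v_{\mathbf{k}}$ for $v_{k_1}\otimes\cdots\otimes v_{k_n}$ and $v^{\mathbf{k}}$ for $v^{k_1}\otimes\cdots\otimes v^{k_n}$, and I write $v^{\varclubsuit}_{\mathbf{k}}$ and $v_{\varspadesuit}^{\mathbf{k}}$ for the canonical and dual canonical bases. The element $v_{\varspadesuit}^{\mathbf{k}}$ is already well defined by the nondegenerate pairing~\eqref{IProductSB}; the substance of the theorem is that this pairing-dual basis is precisely the one singled out by $\psi^{\iota}_{-}$-invariance together with the lattice condition.

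For parts (2) and (3) I would argue purely at the level of transition matrices. By Theorem~\ref{theorem-canonical3}(2) the matrix $A=(a_{\mathbf{l},\mathbf{k}})$ determined by $v^{\varclubsuit}_{\mathbf{k}}=\sum_{\mathbf{l}}a_{\mathbf{l},\mathbf{k}}v_{\mathbf{l}}$ is unitriangular for the order $\le_{+}$, with off-diagonal entries in $q^{-1}\mathbb{Z}[q^{-1}]$. Since $\{v^{\mathbf{k}}\}$ and $\{v_{\varspadesuit}^{\mathbf{k}}\}$ are the pairing-duals of $\{v_{\mathbf{k}}\}$ and $\{v^{\varclubsuit}_{\mathbf{k}}\}$, the transition matrix from $\{v^{\mathbf{k}}\}$ to $\{v_{\varspadesuit}^{\mathbf{k}}\}$ is the inverse transpose $A^{-T}$. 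Writing $A=I+N$ with $N$ strictly triangular, the identity $A^{-1}=\sum_{j\ge0}(-N)^{j}$ is a finite sum, so $A^{-1}$ again has off-diagonal entries in $q^{-1}\mathbb{Z}[q^{-1}]$, while transposition reverses $\le_{+}$ into $\le_{-}$. This gives $v_{\varspadesuit}^{\mathbf{k}}=v^{\mathbf{k}}+\sum_{\mathbf{l}<_{-}\mathbf{k}}b_{\mathbf{l},\mathbf{k}}v^{\mathbf{l}}$ with $b_{\mathbf{l},\mathbf{k}}\in q^{-1}\mathbb{Z}[q^{-1}]$, which is part (2), and part (3) follows because $A^{-T}$ is invertible over each of $\mathbb{C}(q)$, $\mathbb{Z}[q,q^{-1}]$ and $\mathbb{Z}[q^{-1}]$.

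For part (1) the lattice condition is immediate from the expansion just obtained. For the $\psi^{\iota}_{-}$-invariance I would first establish the adjointness of the two involutions with respect to the pairing, in the form
\[
\langle \psi^{\iota}_{+}(u),\,\psi^{\iota}_{-}(w)\rangle=\overline{\langle u,w\rangle},\qquad u,w\in V_{m_1}\otimes\cdots\otimes V_{m_n}.
\]
Granting this, and using that $v^{\varclubsuit}_{\mathbf{l}}$ is $\psi^{\iota}_{+}$-invariant while the scalar $\langle v^{\varclubsuit}_{\mathbf{l}},v_{\varspadesuit}^{\mathbf{k}}\rangle=\delta_{\mathbf{l}}^{\mathbf{k}}$ is bar-invariant, one computes $\langle v^{\varclubsuit}_{\mathbf{l}},\psi^{\iota}_{-}(v_{\varspadesuit}^{\mathbf{k}})\rangle=\overline{\langle v^{\varclubsuit}_{\mathbf{l}},v_{\varspadesuit}^{\mathbf{k}}\rangle}=\delta_{\mathbf{l}}^{\mathbf{k}}$. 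Thus $\psi^{\iota}_{-}(v_{\varspadesuit}^{\mathbf{k}})$ pairs with the basis $\{v^{\varclubsuit}_{\mathbf{l}}\}$ exactly as $v_{\varspadesuit}^{\mathbf{k}}$ does, so the two coincide by nondegeneracy, proving invariance. Uniqueness is the usual Lusztig-lemma argument: the difference of two $\psi^{\iota}_{-}$-invariant vectors both congruent to $v^{\mathbf{k}}$ modulo $q^{-1}$ times the $\mathbb{Z}[q^{-1}]$-lattice is itself invariant and lies in that lattice, hence vanishes.

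The main obstacle is the adjointness relation. Since $\psi^{\iota}_{\pm}=\Upsilon_{\pm}\circ\psi$ and the pairing satisfies $\langle xu,v\rangle=\langle u,\omega(x)v\rangle$, the claim reduces to tracking the $\omega$-adjoints of the factors $\Upsilon_{\pm}$ and $\Theta_{\pm}$ entering $\Theta^{\iota}_{\pm}=\Delta_{\pm}(\Upsilon_{\pm})\Theta_{\pm}(1\otimes\Upsilon_{\pm}^{-1})$. The essential points are that $\omega$ exchanges $E^{(n)}$ and $F^{(n)}$, so that the $\omega$-adjoint of $\Upsilon_{+}=\sum_{n}c_{n}F^{(n)}$ is $\sum_{n}c_{n}E^{(n)}$, which agrees with $\Upsilon_{-}=\sum_{n}\overline{c_{n}}E^{(n)}$ once the scalar bar involution is accounted for, together with the corresponding statement for $\Theta_{\pm}$ that already underlies the $\heartsuit/\diamondsuit$ duality of \cite{FK97,Lus93}. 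Reconciling the scalar factors $c_{n}$ against $\overline{c_{n}}$ with the antilinearity of $\psi$ is precisely what produces the complex conjugation on the right-hand side, and is the step that demands the most care.
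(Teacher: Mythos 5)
Your duality argument is essentially the proof the paper intends: the theorem is stated without proof as the formal dual of Theorem~\ref{theorem-canonical3} (just as \cite{FK97} obtain their Theorem~1.8 from Lusztig's result), and your inverse-transpose argument for parts (2)--(3) together with the adjointness computation for part (1) is the standard way to carry that dualization out. The adjointness relation you correctly single out as the delicate step does hold: on a single factor it reduces to $\omega(\Upsilon_{+})=\overline{\Upsilon_{-}}$ combined with the Bao--Wang identity $\overline{\Upsilon}\,\Upsilon=1$, and the tensor-product case follows from the analogous statements for $\Theta_{\pm}$ and $\Theta^{\iota}_{\pm}$.
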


\subsection{Graphical depiction of the standard bases}
The graphical calculus for $U_q(\mathfrak{sl}_2)$
was developed in~\cite[Section 2]{FK97}.
Since the comultiplication in the dual space is different from \cite{FK97},
we briefly summarize the graphical calculus for standard bases 
in this subsection.

The dual basis $v^{n-2k}$ is written as 
\begin{eqnarray*}
v^{n-2k}=\pi_n((v^1)^{\otimes(n-k)}\otimes(v^{-1})^{\otimes k}).
\end{eqnarray*}
The diagram for $v^{n-2k}$ is depicted as
\begin{eqnarray*}
\begin{tikzpicture}
\path[draw](0,0)--(2.97,0)--(2.97,-.5)--(0,-.5)--(0,0);
\draw[-](0.3,-1.2)--(0.3,-.5)(0.23,-0.7)--(0.3,-.5)--(0.37,-0.7);
\draw[-](0.3+0.9,-1.2)--(0.3+0.9,-.5)(0.23+0.9,-0.7)--(0.3+0.9,-.5)--(0.37+0.9,-0.7);
\draw[-](1.7,-1.2)--(1.7,-.5)(1.63,-1)--(1.7,-1.2)--(1.77,-1);
\draw[-](2.6,-1.2)--(2.6,-0.5)(1.63+0.9,-1)--(1.7+0.9,-1.2)--(1.77+0.9,-1);
\draw (2.97/2,-0.25) node{$n$} (0.75,-0.85)node{\ldots} (2.15,-0.85)node{\ldots};
\draw (0.75,-1.5)node{$\underbrace{\qquad\quad}_{n-k}$};
\draw (2.15,-1.5)node{$\underbrace{\qquad\quad}_{k}$};
\end{tikzpicture},
\end{eqnarray*}
where the box marked by $n$ with $n$ lines corresponds to 
the projector $\pi_{n}$.
The graph for a tensor product $v^{n_1-2k_1}\otimes\ldots v^{n_r-2k_r}$ 
is depicted by placing the diagram for $v^{n_i-2k_i}, 1\le i\le r$, from left 
to right in parallel. 

\subsection{Graphical depiction of dual canonical bases}
The diagram for $v^{n_1-2k_1}\varspadesuit\ldots\varspadesuit v^{n_r-2k_r}$ is obtained 
from the diagram for $v^{n_1-2k_1}\otimes\ldots\otimes v^{n_r-2k_r}$ 
by the following rules. 
The rules are essentially the same as the ones for 
the Kazhdan--Lusztig basis studied in~\cite{Shi14} ($m=1$ of Case B in~\cite{Shi14}).
\begin{enumerate}[(A)]
\item Make a pair between adjacent down arrow and up arrow (in this order). 
Connect this pair of two arrows into a simple unoriented arc. 
The arc does not intersect with anything.
\item Repeat the above procedure (making pairs) until all the up arrows are 
to the left of all down arrows.
\item Put a star ($\bigstar$) on the rightmost down arrow if it exists.
\item For the remaining down arrows, we make a pair of two adjacent down arrows
from right. 
Connect this pair of two arrows into a simple unoriented dashed arc.
\end{enumerate}
After applying the rules (A)-(D), there may be a down arrow which does not form 
a dashed arc. 
We call this down arrow an {\it unpaired down arrow}.
The diagram for $v^{m_1-2k_1}\heartsuit\ldots\heartsuit v^{m_n-2k_n}$ is obtained
by the rules (A) and (B) ~\cite[Section 2.3]{FK97}.

Each building block (a simple oriented arc, a dashed arc and an arrow with a star)
is a vector in $V_1$ or $V_1\otimes V_1$:
\begin{eqnarray}
\label{building-1}
\raisebox{-0.5\totalheight}{
\begin{tikzpicture}
\draw[thick](0,0)..controls (0,-.8)and(1.0,-.8)..(1.0,0);
\end{tikzpicture}}
&=&v^{-1}\otimes v^{1}-q^{-1}v^{1}\otimes v^{-1}, \\
\label{building-2}
\raisebox{-0.5\totalheight}{
\begin{tikzpicture}
\draw[thick,dashed](0,0)..controls (0,-.8)and(1.0,-.8)..(1.0,0);
\end{tikzpicture}}
&=&v^{-1}\otimes v^{-1}-q^{-1}v^{1}\otimes v^{1}, \\
\label{building-3}
\raisebox{-0.5\totalheight}{
\begin{tikzpicture}
\draw[thick](0,0)--(0,-0.7);
\draw (0,-.8)node{$\bigstar$};
\end{tikzpicture}}
&=&v^{-1}-q^{-1}v^{1},
\end{eqnarray}
and an up arrow (resp. an unpaired down arrow) corresponds to 
$v^{1}$ (resp. $v^{-1}$). 
A vector corresponding to 
$v^{k_1}\varspadesuit\ldots\varspadesuit v^{2k_r}$ 
is obtained by acting the projection on a tensor product of vectors 
for building blocks. 

\begin{example}
The graph and the vector corresponding to 
$v^{-1}\varspadesuit v^{-1}\in V_3\otimes V_3$ 
are as follows:
\begin{eqnarray*}
\raisebox{-0.5\totalheight}{
\begin{tikzpicture}
\draw (0,0)--(1.6,0)--(1.6,-.5)--(0,-0.5)--(0,0);
\draw (0.3,-0.5)--(0.3,-1.2)(0.23,-0.7)--(0.3,-.5)--(0.37,-0.7);
\draw (2,0)--(1.6+2,0)--(1.6+2,-.5)--(2,-0.5)--(2,0);
\draw (2.0+1.3,-0.5)--(2.0+1.3,-1.2);
\draw (1.3,-0.5)..controls(1.3,-1)and(2.3,-1)..(2.3,-0.5);
\draw[dashed] (0.8,-0.5)..controls(0.8,-1.5)and(2.8,-1.5)..(2.8,-0.5);
\draw (3.3,-1.2)node{$\bigstar$};
\end{tikzpicture}
}
&=&[-1,-1]-q^{-2}[1,-3]-q^{-1}[1,1]-q^{-2}[-1,1]+q^{-3}[3,-1]\\
&&+q^{-5}[1,-1]+q^{-2}[1,3]-q^{-5}[3,1].
\end{eqnarray*}	
where $[i,j]:=v^i\otimes v^j$.
\end{example}

First, we describe the dual canonical basis of $V_1^{\otimes n}$.
Set $t=-q$ in the notation of Section 2 and let $\kappa_i=\pm1$ for 
$1\le i\le n$. 
\begin{lemma}
\label{lemma-KL-dC}
The diagram for $v^{\kappa_1}\varspadesuit\ldots\varspadesuit v^{\kappa_n}$
provides the dual canonical basis of $V_1^{\otimes n}$.
\end{lemma}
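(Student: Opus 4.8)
The plan is to deduce the lemma from the uniqueness characterization of the dual canonical basis together with the diagrammatic description of the parabolic Kazhdan--Lusztig basis recalled in Section~\ref{Sec:Hecke}. By the dual of Theorem~\ref{theorem-canonical3}, the element $v^{\kappa_1}\varspadesuit\cdots\varspadesuit v^{\kappa_n}$ is the \emph{unique} $\psi^{\iota}_{-}$-invariant vector of $V_1^{\otimes n}$ whose expansion in the standard basis $\{v^{\lambda_1}\otimes\cdots\otimes v^{\lambda_n}\}$ is unitriangular, namely with leading term $v^{\kappa_1}\otimes\cdots\otimes v^{\kappa_n}$ of coefficient $1$ and all other coefficients in $q^{-1}\mathbb{Z}[q^{-1}]$ supported on indices $\lambda\le_{-}\kappa$. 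Writing $\Xi_{\kappa}$ for the vector produced by rules (A)--(D), it therefore suffices to verify these two properties for $\Xi_{\kappa}$.

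First I would check unitriangularity, which is the routine half. Since every $m_i=1$ the projector $\pi_1$ is the identity, so $\Xi_{\kappa}$ is literally the tensor product of the building-block vectors \eqref{building-1}--\eqref{building-3} placed at the legs prescribed by the diagram. Each block contributes $v^{\kappa_i}$ at each of its legs as its coefficient-$1$ term, while every other monomial it produces carries a factor $q^{-1}$; multiplying out then gives leading term $v^{\kappa_1}\otimes\cdots\otimes v^{\kappa_n}$ with coefficient $1$ and all remaining coefficients in $q^{-1}\mathbb{Z}[q^{-1}]$. The support condition is combinatorial: the alternative term of a simple arc, a dashed arc, or a starred arrow replaces a local pattern $(-,+)$, $(-,-)$, or $(-)$ by $(+,-)$, $(+,+)$, or $(+)$, and each such replacement only raises the partial sums $\sum_{i\le m}\lambda_i$, which is exactly the inequality defining $\lambda\le_{-}\kappa$; since the effect is monotone across disjoint blocks, every monomial of $\Xi_{\kappa}$ satisfies $\lambda\le_{-}\kappa$.

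The substantive half is $\psi^{\iota}_{-}$-invariance, where I would pass to the Hecke picture. Under the specialization $t=-q$ and the string--path bijection of Section~\ref{Sec:Hecke}, I would identify $V_1^{\otimes n}$ (with standard dual basis $v^{\kappa}$) with the maximal parabolic module $\mathcal{M}^{+}_{n}$ (with standard basis $m_{\alpha}$), the sign $\epsilon=+$ being forced by matching the support order, since the Bruhat order on $\mathcal{M}^{+}_{n}$ coincides with $\le_{-}$. The crux is to verify that the coideal bar involution $\psi^{\iota}_{-}$ corresponds to the bar involution of $\mathcal{M}^{+}_{n}$: concretely one matches the bulk quasi-$R$-matrix $\Theta_{-}$ with the action of $T_1,\dots,T_{n-1}$ on adjacent factors (as in \cite{FK97}) and the intertwiner $\Upsilon_{-}$ with the boundary generator $T_n$, the recursion for the coefficients $c_n$ encoding precisely the dashed-arc and starred-arrow data. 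Granting this, the blocks \eqref{building-1}--\eqref{building-3} are the images of the local pieces of the Kazhdan--Lusztig basis of \cite{Shi14} ($m=1$ of Case B), so rules (A)--(D) assemble the bar-invariant basis $C^{+}_{\alpha}$; transporting back shows $\Xi_{\kappa}$ is $\psi^{\iota}_{-}$-invariant, and uniqueness gives $\Xi_{\kappa}=v^{\kappa_1}\varspadesuit\cdots\varspadesuit v^{\kappa_n}$.

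The main obstacle is exactly this identification of bar involutions: one must show that the entangling quasi-$R$-matrix $\Theta^{\iota}_{-}=\Delta_{-}(\Upsilon_{-})\Theta_{-}(1\otimes\Upsilon_{-}^{-1})$ reproduces, leg by leg, the type-$B$ Hecke action under $t=-q$, including the correct signs and $q$-powers coming from the substitution and from working in the dual (rather than the standard) representation and the projection \eqref{projection-dual}. An alternative that sidesteps the global identification is to prove $\psi^{\iota}_{-}$-invariance of $\Xi_{\kappa}$ directly: each building block is individually $\psi^{\iota}_{-}$- or $\psi_{-}$-invariant, and because the arcs produced by (A)--(D) are non-crossing one can attempt to propagate invariance through the tensor product using the property $\Delta_{-}(u)\Theta^{\iota}_{-}=\Theta^{\iota}_{-}\overline{\Delta}_{-}(u)$. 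The delicate point of that route is controlling the cross-terms of $\Theta^{\iota}_{-}$ between legs belonging to distinct blocks, i.e. establishing that bar-invariance is genuinely local and that no additional corrections survive beyond those already encoded in \eqref{building-1}--\eqref{building-3}.
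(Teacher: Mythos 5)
Your overall strategy coincides with the paper's: check unitriangularity directly from the building blocks, and reduce $\psi^{\iota}_{-}$-invariance to the bar-invariance of a parabolic Kazhdan--Lusztig basis of the type-$B$ Hecke module at $t=-q$. The unitriangularity half of your argument is correct. However, the proof is not complete at its crux: the step you yourself call ``the main obstacle'' --- that the involution $\psi^{\iota}_{-}$, built from $\Theta_{-}$ and the intertwiner $\Upsilon_{-}$, corresponds leg by leg to the Hecke bar involution on the parabolic module (bulk generators $T_1,\dots,T_{N-1}$ matching $\Theta_{-}$ and the boundary generator $T_N$ matching $\Upsilon_{-}$) --- is precisely the substance of the lemma, and you only write ``granting this''. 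The paper closes this gap by invoking \cite[Theorem 5.8]{BaoWang13}, which provides an involution compatible with both the bar involution on $\mathcal{H}_N$ and $\psi^{\iota}_{-}$; without that citation or an explicit verification, the invariance claim is unproven.

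There is also a concrete misidentification of the target module. You match $v^{\kappa_1}\varspadesuit\cdots\varspadesuit v^{\kappa_n}$ with the basis $C^{+}_{\alpha}$ of $\mathcal{M}^{+}_{n}$, whereas the correct identification --- the one used in the paper's proof and forced by Eqn.~(\ref{eq-dC-KL}) --- is with $C^{-}_{\alpha}$ of $\mathcal{M}^{-}_{N}$ via the orientation-reversing map $\eta$ (which sends $v^{-1}$ to a $+$ step and $v^{1}$ to a $-$ step). The support order alone does not force the sign of $\epsilon$: it is consistent with $\mathcal{M}^{+}$ under the naive string--path bijection and with $\mathcal{M}^{-}$ under $\eta$. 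What decides the matter is the shape of the coefficients: rules (A)--(D) attach to each standard basis vector a single signed monomial $\pm q^{-j}$ (each block contributes one of two local patterns with disjoint supports), which agrees with $P^{-}_{\alpha,\beta}(-q^{-1})$ --- a monomial by \cite{Bre09} --- but cannot equal the generally non-monomial polynomials $P^{+}_{\alpha,\beta}$. With the wrong module the two bar involutions do not correspond, so the invariance argument as you set it up would fail even if the matching of involutions were carried out.
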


\begin{proof}
The diagram for 
$v^{\kappa_1}\varspadesuit\ldots\varspadesuit v^{\kappa_n}$
is the same as the diagram for the Kazhdan--Lusztig 
basis in~\cite{Shi14}.
From \cite[Theorem 5.8]{BaoWang13}, there exists an involution
which compatible with both the bar involution on $\mathcal{H}_N$ and 
the bar involution $\psi_{-}^{\iota}$.
Thus, the Kazhdan--Lusztig basis on $\mathcal{M}_N^{-}$ coincides 
with the canonical basis of $V_1^{\otimes N}$.
\end{proof}

For $\mathbf{k}\in I_{\mathbf{m}}$, 
let $\kappa:=(\kappa_1,\ldots,\kappa_{N})$, $N=\sum_{i}^{n}m_i$, be a sequence of 
$+1$ and $-1$ such that it starts with $(m_1+k_1)/2$ copies of $+1$, followed 
by $(m_1-k_1)/2$ copies of $-1$, followed by $(m_2+k_2)/2$ copies of $+1$, 
followed by $(m_2-k_2)/2$ copies of $-1$, $\ldots$ and ends with $(m_n-k_n)/2$ 
copies of $-1$.
We have the following theorem about the dual canonical basis:
\begin{theorem}
\label{Thm-dC}
The diagram for 
$v^{k_1}\varspadesuit\ldots\varspadesuit v^{k_n}$
provides the dual canonical basis of $U$ in 
$V_{m_1}\otimes\ldots\otimes V_{m_n}$, {\it i.e.},  
\begin{eqnarray}
v^{k_1}\varspadesuit\ldots\varspadesuit v^{k_n}
=(\pi_{m_1}\otimes\ldots\otimes\pi_{m_n})
v^{\kappa_1}\varspadesuit\ldots\varspadesuit v^{\kappa_N}.
\end{eqnarray}
\end{theorem}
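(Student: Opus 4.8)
The plan is to verify that
\begin{eqnarray*}
w:=(\pi_{m_1}\otimes\cdots\otimes\pi_{m_n})
(v^{\kappa_1}\varspadesuit\cdots\varspadesuit v^{\kappa_N})
\end{eqnarray*}
satisfies the two conditions that characterize the dual canonical basis of $U$ in $V_{m_1}\otimes\cdots\otimes V_{m_n}$, namely $\psi^{\iota}_{-}(w)=w$ together with $w-v^{k_1}\otimes\cdots\otimes v^{k_n}\in q^{-1}L$, where $L$ denotes the $\mathbb{Z}[q^{-1}]$-lattice spanned by the standard monomial basis $\{v^{l_1}\otimes\cdots\otimes v^{l_n}\}$. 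By the uniqueness part of the dual of Theorem~\ref{theorem-canonical3}, these two properties force $w=v^{k_1}\varspadesuit\cdots\varspadesuit v^{k_n}$. Since the diagram for $v^{k_1}\varspadesuit\cdots\varspadesuit v^{k_n}$ is by construction the image under $\pi_{m_1}\otimes\cdots\otimes\pi_{m_n}$ of the $V_1^{\otimes N}$ diagram attached to $\kappa$, and the latter equals $v^{\kappa_1}\varspadesuit\cdots\varspadesuit v^{\kappa_N}$ by Lemma~\ref{lemma-KL-dC}, this simultaneously establishes the displayed identity and the claim that the diagram provides the dual canonical basis.

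Write $\pi:=\pi_{m_1}\otimes\cdots\otimes\pi_{m_n}$. For bar-invariance I would first record the telescoping identity $\psi^{\iota}_{-}=\Delta_{-}(\Upsilon_{-})\circ\psi_{-}$ on any tensor product, where $\psi_{-}$ is the $U_q(\mathfrak{sl}_2)$-involution, $\Delta_{-}$ the iterated coproduct, and $\Upsilon_{-}$ acts through it. This follows by unwinding $\Theta^{\iota}_{-}=\Delta_{-}(\Upsilon_{-})\Theta_{-}(1\otimes\Upsilon_{-}^{-1})$ in the recursive definition of $\psi^{\iota}_{-}$, since the factor $1\otimes\Upsilon_{-}^{-1}$ cancels against the $\Upsilon_{-}$ hidden in $\psi^{\iota}_{-}$ on the right tensor leg. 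Now $\pi$ is a morphism of $U_q(\mathfrak{sl}_2)$-modules, hence commutes with the action of $\Upsilon_{-}$ through the coproduct; moreover $\pi$ intertwines the $U_q(\mathfrak{sl}_2)$-involutions $\psi_{-}$, which is precisely the projection compatibility underlying the $\heartsuit$-basis formula of Frenkel and Khovanov~\cite{FK97}. Combining these with the telescoping identity gives $\pi\circ\psi^{\iota}_{-}=\psi^{\iota}_{-}\circ\pi$, and since $v^{\kappa_1}\varspadesuit\cdots\varspadesuit v^{\kappa_N}$ is $\psi^{\iota}_{-}$-invariant by Lemma~\ref{lemma-KL-dC}, so is $w$.

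For the leading-term condition I would expand, using the dual of Theorem~\ref{theorem-canonical3} at the level of $V_1^{\otimes N}$,
\begin{eqnarray*}
v^{\kappa_1}\varspadesuit\cdots\varspadesuit v^{\kappa_N}
=v^{\kappa_1}\otimes\cdots\otimes v^{\kappa_N}
+\sum_{\lambda\le_{-}\kappa,\ \lambda\neq\kappa}c_{\lambda}\,
v^{\lambda_1}\otimes\cdots\otimes v^{\lambda_N},
\qquad c_{\lambda}\in q^{-1}\mathbb{Z}[q^{-1}],
\end{eqnarray*}
and apply $\pi$ term by term via $\pi(v^{\lambda_1}\otimes\cdots\otimes v^{\lambda_N})=q^{-\sum_i||\lambda^{(i)}||_{-}}\,v^{l_1}\otimes\cdots\otimes v^{l_n}$, where $\lambda^{(i)}$ is the block of $\lambda$ occupying the positions of $\pi_{m_i}$ and $l_i$ is its sum. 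The essential point is that in the distinguished sequence $\kappa$ each block consists of $+1$'s followed by $-1$'s, so no $-1$ precedes a $+1$ within a block, whence $||\kappa^{(i)}||_{-}=0$ and $l_i=(m_i+k_i)/2-(m_i-k_i)/2=k_i$; thus the leading term maps to $v^{k_1}\otimes\cdots\otimes v^{k_n}$ with coefficient $1$. Every other summand carries a coefficient $c_{\lambda}\in q^{-1}\mathbb{Z}[q^{-1}]$ multiplied by $q^{-\sum_i||\lambda^{(i)}||_{-}}\in\mathbb{Z}[q^{-1}]$, hence lands in $q^{-1}L$. Consequently $w\equiv v^{k_1}\otimes\cdots\otimes v^{k_n}\pmod{q^{-1}L}$, as required.

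The step I expect to be the main obstacle is the bar-compatibility $\pi\circ\psi^{\iota}_{-}=\psi^{\iota}_{-}\circ\pi$, and in particular making precise that $\pi$ intertwines the \emph{tensor} involution $\psi_{-}$ on the two tensor products. This must be distinguished from the naive antilinear involution fixing the standard basis, with which $\pi$ does \emph{not} commute, precisely because its matrix entries $q^{-||\kappa||_{-}}$ are not bar-invariant; the quasi-$R$-matrix in $\psi_{-}$ is exactly what corrects for this. Once this statement is isolated as a fact purely about $U_q(\mathfrak{sl}_2)$ and imported from~\cite{FK97}, the $\iota$-refinement becomes a formal consequence of the factorization $\psi^{\iota}_{-}=\Delta_{-}(\Upsilon_{-})\circ\psi_{-}$ together with the $U_q(\mathfrak{sl}_2)$-equivariance of $\pi$, and the remaining leading-term bookkeeping is routine.
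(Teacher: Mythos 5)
Your proof is correct and follows the same overall strategy as the paper --- verify the two characterizing properties of Theorem~\ref{theorem-canonical3}'s dual and invoke uniqueness --- but the bar-invariance step is organized genuinely differently. The paper does not establish the operator identity $\pi\circ\psi^{\iota}_{-}=\psi^{\iota}_{-}\circ\pi$ directly; instead it expands $v^{\kappa_1}\varspadesuit\cdots\varspadesuit v^{\kappa_N}$ in the dual canonical ($\heartsuit$) basis of $U_q(\mathfrak{sl}_2)$, uses that each $\heartsuit$ basis vector is fixed by $\Theta^{(N)}_{-}\psi\otimes\cdots\otimes\psi$, and then checks graphically that $\pi$ sends each $\heartsuit$ basis vector either to a $\heartsuit$ basis vector downstairs or to zero (the cap-under-projector vanishing), so that the fixed-point equation descends coefficientwise. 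You instead factor $\psi^{\iota}_{-}=\Delta^{n-1}(\Upsilon_{-})\circ\psi_{-}$ (an identity the paper also writes, since $\Theta^{(n)}_{-}\psi\otimes\cdots\otimes\psi=\psi_{-}$), commute $\pi$ past $\Delta^{n-1}(\Upsilon_{-})$ by $U_q(\mathfrak{sl}_2)$-equivariance exactly as the paper does, and then import $\pi\circ\psi_{-}=\psi_{-}\circ\pi$ as the Frenkel--Khovanov compatibility. That last commutation is precisely what the paper's $\heartsuit$-expansion-plus-vanishing argument proves by hand, so your version is a legitimate streamlining that outsources the one nontrivial computation to the cited literature; you correctly identify it as the crux and correctly distinguish the quasi-$R$-matrix--twisted $\psi_{-}$ from the naive basis-fixing involution, which is exactly the point at which a careless argument would fail. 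Your treatment of the leading-term condition via the projection weights $q^{-||\kappa||_{-}}$ and the observation that the distinguished $\kappa$ has $||\kappa^{(i)}||_{-}=0$ blockwise is somewhat more explicit than the paper's appeal to the building blocks (\ref{building-1})--(\ref{building-3}); both are fine. The only item the paper records that you omit is the triangularity $\mathbf{k'}<_{-}\mathbf{k}$ of the lower-order terms, but that is not needed for the uniqueness argument you invoke.
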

\begin{proof}
We denote $\pi:=\pi_{m_1}\otimes\ldots\otimes\pi_{m_n}$. 
We will first show that $\pi(v^{\kappa_1}\varspadesuit\ldots\varspadesuit v^{\kappa_N})$ 
is invariant under the action of $\psi^{\iota}_-$. 
The involution $\psi^{\iota}_-$ is written as 
\begin{eqnarray*}
\psi^{\iota}_-=\Delta^{n-1}(\Upsilon_-)\Theta^{(n)}_-\psi\otimes\ldots\otimes\psi.
\end{eqnarray*}
We expand $v^{k_1}\varspadesuit\ldots\varspadesuit v^{k_n}$ in terms of 
the dual canonical basis $v^{k'_1}\heartsuit\ldots\heartsuit v^{k'_n}$:
\begin{eqnarray*}
v^{k_1}\varspadesuit\ldots\varspadesuit v^{k_n}
=
\sum_{\mathbf{k'}}c_{\mathbf{k},\mathbf{k'}}
v^{k'_1}\heartsuit\ldots\heartsuit v^{k'_n},
\end{eqnarray*}
where $c_{\mathbf{k},\mathbf{k'}}\in q^{-1}\mathbb{Z}[q^{-1}]$ for 
$\mathbf{k}\neq\mathbf{k'}$ and $c_{\mathbf{k},\mathbf{k}}=1$.

From Lemma~\ref{lemma-KL-dC}, 
$v^{\kappa_1}\varspadesuit\ldots\varspadesuit v^{\kappa_N}$ is invariant 
under the action of $\psi^{\iota}_-$. 
Together with the fact that 
$v^{\kappa_1}\heartsuit\ldots\heartsuit v^{\kappa_N}$ is invariant 
under the action of $\Theta^{(N)}_{-}\psi\otimes\ldots\otimes\psi$, 
we have 
\begin{eqnarray*}
\Delta^{n-1}(\Upsilon_-)\sum_{\kappa'}
\overline{c_{\kappa,\kappa'}}
v^{\kappa'_1}\heartsuit\ldots\heartsuit v^{\kappa'_N}
=
\sum_{\kappa'}c_{\kappa,\kappa'}
v^{\kappa'_1}\heartsuit\ldots\heartsuit v^{\kappa'_N}.
\end{eqnarray*}
We act with $\pi$ on both sides. 
Since $\pi$ is a projection, the action of $\pi$ commutes 
with the one of $\Delta^{n-1}(\Upsilon_-)$.
Further, the action of $\pi$ on 
$v^{\kappa'_1}\heartsuit\ldots\heartsuit v^{\kappa'_N}$
may vanish by using 
\begin{eqnarray*}
\raisebox{-0.5\totalheight}{
\begin{tikzpicture}
\draw(0,0)--(2,0)--(2,-.5)--(0,-.5)--(0,0);
\draw(1,-.25)node{$n$};
\draw(0.5,-0.5)..controls(0.5,-1.0)and(1.0,-1.0)..(1.0,-0.5);
\end{tikzpicture}}
=0.
\end{eqnarray*}
Note that this vanishing condition is compatible with 
\begin{eqnarray*}
\raisebox{-0.5\totalheight}{
\begin{tikzpicture}
\draw(0,0)--(2,0)--(2,-.5)--(0,-.5)--(0,0);
\draw(1,-.25)node{$n$};
\draw(0.7,-0.5)--(0.7,-1.2)(0.63,-1)--(0.7,-1.2)--(0.77,-1);
\draw(1.3,-0.5)--(1.3,-1.2)(1.23,-0.7)--(1.3,-0.5)--(1.37,-0.7);
\draw(0.2,-0.9)node{$\ldots$}(1.7,-0.9)node{$\ldots$};
\end{tikzpicture}}
=q^{-1}
\raisebox{-0.5\totalheight}{
\begin{tikzpicture}
\draw(0,0)--(2,0)--(2,-.5)--(0,-.5)--(0,0);
\draw(1,-.25)node{$n$};
\draw(0.7,-0.5)--(0.7,-1.2)(0.63,-0.7)--(0.7,-0.5)--(0.77,-0.7);
\draw(1.3,-0.5)--(1.3,-1.2)(1.23,-1)--(1.3,-1.2)--(1.37,-1);
\draw(0.2,-0.9)node{$\ldots$}(1.7,-0.9)node{$\ldots$};
\end{tikzpicture}
}.
\end{eqnarray*}
Therefore, $\pi(v^{\kappa'_1}\heartsuit\ldots\heartsuit v^{\kappa'_N})$
is equal to $v^{k'_1}\heartsuit\ldots\heartsuit v^{k'_n}$ or zero.
For non-zero $\pi(v^{\kappa'_1}\heartsuit\ldots\heartsuit v^{\kappa'_N})$,
we can associate an sequence of integers $\mathbf{k'}$ with $\kappa'$ 
by the inverse of the map from $\mathbf{k}$ to $\kappa$ described just above 
Theorem~\ref{Thm-dC}.
For such $\mathbf{k'}$ and $\kappa'$, we abbreviate $c_{\kappa,\kappa'}$
as $c'_{\mathbf{k},\mathbf{k'}}$.
Note that the coefficients $c'_{\mathbf{k},\mathbf{k'}}$ is nothing but the 
expansion coefficients $c_{\mathbf{k},\mathbf{k'}}$.
Thus, We have 
\begin{eqnarray*}
\Delta^{n-1}(\Upsilon_-)\sum_{\mathbf{k'}}
\overline{c_{\mathbf{k},\mathbf{k'}}}
v^{k'_1}\heartsuit\ldots\heartsuit v^{k'_n}
=
\sum_{\mathbf{k'}}c_{\mathbf{k},\mathbf{k'}}
v^{k'_1}\heartsuit\ldots\heartsuit v^{k'_n}.
\end{eqnarray*}
This shows that $v^{k_1}\varspadesuit\ldots\varspadesuit v^{k_n}$
is invariant under the action of $\psi^{\iota}_-$.

The expansion coefficients of 
$v^{k_1}\varspadesuit\ldots\varspadesuit v^{k_n}$ 
in terms of the standard bases 
$v^{k'_1}\otimes\ldots\otimes v^{k'_n}$
are in $q^{-1}\mathbb{Z}[q^{-1}]$ for $\mathbf{k}\neq\mathbf{k'}$
and one for $\mathbf{k}=\mathbf{k'}$ by construction.
From Eqns.(\ref{building-1}) to (\ref{building-3}), 
a building block of a diagram has the form 
$v^{\kappa_1}\otimes v^{\kappa_2}
+q^{-1}\mathbb{Z}[q^{-1}]v^{\kappa'_1}\otimes v^{\kappa'_2}$ 
with $(\kappa'_1,\kappa'_2)<_{-}(\kappa_1,\kappa_2)$
or $v^{\kappa_1}+q^{-1}\mathbb{Z}[q^{-1}]v^{\kappa'_1}$
with $\kappa'_1<_{-}\kappa_1$.
Thus, if $\mathbf{k}\neq\mathbf{k'}$, we have $\mathbf{k'}<_{-}\mathbf{k}$.
This completes the proof of the theorem.
\end{proof}

\subsection{Inversion formula}
Let $\mathbf{k}=(k_1,\ldots,k_n)$ and 
$\mathbf{l}=(l_1\ldots,l_n)$.
We define 
\begin{eqnarray*}
R_{\mathbf{k},\mathbf{l}}(q^{-1})
&:=&\langle v_{l_1}\varclubsuit\ldots\varclubsuit v_{l_n},
v^{k_1}\otimes\ldots\otimes v^{k_n} \rangle,\qquad
\text{for } \mathbf{k}\le_{+}\mathbf{l}, \\
R^{\mathbf{k},\mathbf{l}}(q^{-1})
&:=&\langle v_{k_1}\otimes\ldots\otimes v_{k_n},
v^{l_1}\varspadesuit\ldots\varspadesuit v^{l_n}\rangle, \qquad
\text{for } \mathbf{k}\le_{-}\mathbf{l}.
\end{eqnarray*}
Then, the expansion of 
$v_{l_1}\varclubsuit\ldots\varclubsuit v_{l_n}$
in terms of $v_{k_1}\otimes\ldots\otimes v_{k_n}$
and its inverse is explicitly given by
\begin{eqnarray*}
v_{l_1}\varclubsuit\ldots\varclubsuit v_{l_n}
&=&
\sum_{\mathbf{k}\le_{+}\mathbf{l}}
R_{\mathbf{k},\mathbf{l}}(q^{-1})
v_{k_1}\otimes\ldots\otimes v_{k_n}, \\
v_{l_1}\otimes\ldots\otimes v_{l_n}
&=&\sum_{\mathbf{k}\le_{+}\mathbf{l}}
R^{\mathbf{l},\mathbf{k}}(q^{-1})
v_{k_1}\varclubsuit\ldots\varclubsuit v_{k_n}.
\end{eqnarray*}
We have similar formulas for 
$v^{k_1}\varspadesuit\ldots\varspadesuit v^{k_n}$.
Thus, the knowledge of $R_{\mathbf{k},\mathbf{l}}(q^{-1})$ and 
$R^{\mathbf{k},\mathbf{l}}(q^{-1})$ determines the expansion 
of dual canonical basis in terms of standard basis and 
vice versa. 
We will consider $R_{\mathbf{k},\mathbf{l}}(q^{-1})$ and 
$R^{\mathbf{k},\mathbf{l}}(q^{-1})$ in the next section.

From Eqn.(\ref{IProductSB}), we have an inversion formula:
\begin{eqnarray}
\label{InvR}
\sum_{\mathbf{k}}R_{\mathbf{k},\mathbf{l}}(q^{-1})
R^{\mathbf{k},\mathbf{l}}(q^{-1})
=\delta_{l_1}^{l'_1}\ldots\delta_{l_n}^{l'_n}.
\end{eqnarray}

\section{Path representation}
\label{Sec:Comb}

\subsection{\texorpdfstring{Maps $\zeta$ and $\eta$}{Maps zeta and eta}}
Let $l$ be an integer satisfying $0\le l\le m$.
We define a map $\zeta'$ from an integer $(m-2l)$ to a path 
of length $m$ by
\begin{eqnarray*}
\zeta'(m-2l)=\underbrace{+\ldots+}_{l}\underbrace{-\ldots-}_{m-l}.
\end{eqnarray*}
A map $\zeta$ from a sequence of integers 
$\mathbf{k}=(k_1,\ldots,k_n)\in I_{\mathbf{m}}$ to a path of length 
$\sum_{i=1}^{n}m_i$ is given by 
\begin{eqnarray}
\zeta(\mathbf{k})=\zeta'(k_1)\circ\ldots\circ\zeta'(k_n).
\end{eqnarray}
Here, $\zeta^{'}(k)\circ\zeta^{'}(k')$ is concatenation of 
two paths $\zeta^{'}(k)$ and $\zeta^{'}(k')$.

Similarly, a map $\eta'$ from an integer $(m-2k)$ to a path 
of length $m$ is defined by 
\begin{eqnarray*}
\eta(m-2k)=\underbrace{-\ldots-}_{m-k}\underbrace{+\ldots+}_{k}.
\end{eqnarray*}
We define a map $\eta$ from a sequence of integers 
$\mathbf{k}=(k_1,\ldots,k_n)\in I_{\mathbf{m}}$ to a path of length 
$\sum_{i=1}^n m_i$ by 
\begin{eqnarray*}
\eta(\mathbf{k})=\eta(k_1)\circ\ldots\circ\eta(k_n).
\end{eqnarray*}

Given $\mathbf{k}=(k_1,\ldots,k_n)$ and $\mathbf{l}=(l_1,\ldots,l_n)$,
the condition $\mathbf{k}\le_{-}\mathbf{l}$ is compatible with 
$\eta(\mathbf{k})\le\eta(\mathbf{l})$ at $\epsilon=-$, {\it i.e.},  
$\eta(\mathbf{k})$ is below $\eta(\mathbf{l})$.
Similarly, $\mathbf{k}\le_{+}\mathbf{l}$ if and only if 
$\eta(\mathbf{k})\le\eta(\mathbf{l})$ at $\epsilon=+$, {\it i.e.}, 
$\eta(\mathbf{k})$ is above $\eta(\mathbf{l})$.

\subsection{Ballot strips}
We make use of notations and terminologies for ballot strips 
used in~\cite[Section 3.1]{Shi14}.
We recollect necessary definitions since the rules of stacking ballot strips 
are slightly different from~\cite{Shi14}.

Let $\alpha\in\mathcal{P}_N$. We denote by $S(\alpha)$ the integral points
\begin{eqnarray*}
S(\alpha):=\{(i,j): (i,j)\text{ is above the path } \alpha, 0<i\le N,|j|<i, 
i+j-1\in2\mathbb{Z} \}.
\end{eqnarray*}
We put (45 degree rotated) squares of length $\sqrt{2}$ whose centers are all 
points in $S(\alpha)$. 
We regard this set of squares as a (45 degree rotated) shifted Ferrers diagram 
and denote it by $\lambda(\alpha)$.
We denote by $|\alpha|$ the number of squares in the shifted Ferrers diagram 
$\lambda(\alpha)$.
Given two paths $\alpha,\beta\in\mathcal{P}_N$ such that $\alpha$ is below 
$\beta$, we regard $\lambda(\alpha)/\lambda(\beta)$ as a 
shifted skew Ferrers diagram.
The number of squares in the skew diagram is denoted by 
$|\alpha|-|\beta|$.

Let $\mathbf{k}=(k_1,\ldots,k_n)\in I_{\mathbf{m}}$. 
Let $\alpha=\eta(\mathbf{k})$ and $\alpha'=\zeta(\mathbf{k})$ be two 
paths for $\mathbf{k}$. 
Let $\gamma$ be a path inbetween $\alpha$ and $\alpha'$.
We call the shifted skew Ferrers diagram $\lambda(\alpha)/\lambda(\gamma)$ 
a \emph{projection domain} ($p$-domain in short) for $\alpha$.
When we choose $\gamma=\alpha$ or all $m_i=1$, there is no $p$-domain. 

Two paths $\alpha=\eta(\mathbf{k})$ and $\beta=\eta(\mathbf{k'})$ 
($\alpha$ is below $\beta$) characterize the domains, the shifted skew 
Ferrers diagram $\mu:=\lambda(\alpha)/\lambda(\beta)$.
Let $P$ be a $p$-domain for $\alpha$. 
We call the domain $P\cap\mu$ a $p$-domain for $(\alpha,\beta)$.

\begin{defn}
We call a square whose centre is $(N,j)$ with $|j|<N$ and $N-j-1\in2\mathbb{Z}$ 
an \emph{anchor square}.
\end{defn}

We put ballot strips in a shifted skew Ferrers diagram.
We have one constraint for a ballot strip: 
the rightmost square of a ballot strip of length $(l,l')$ with $l'\ge1$ is on an 
anchor square in the shifted skew Ferrers diagram. 

Let $\mathcal{D},\mathcal{D'}$ be two ballot strips. 
We pile $\mathcal{D'}$ on top of $\mathcal{D}$ by the following two rules 
in a shifted skew Ferrers diagram $\lambda(\alpha)/\lambda(\beta)$:
\begin{enumerate}[Rule I]
\item
\begin{enumerate}
\item If there exists a square of $\mathcal{D}$ just below a square of $\mathcal{D'}$, 
then all squares just below a square of $\mathcal{D'}$ belong to $\mathcal{D}$.
\item Suppose $l'\ge1$. 
The number of ballot strips of length $(l,l')$ is even
for $l'-1\in2\mathbb{Z}$, and zero for otherwise. 
\item There is no $p$-domain for $(\alpha,\beta)$ in the shifted skew Ferrers 
diagram.
\end{enumerate}
\item
\begin{enumerate}
\item If there exists a square of $\mathcal{D'}$ just above, NW, or NE of a square of
$\mathcal{D}$, then all squares just above, NW, and NE of a square of $\mathcal{D}$ 
belong to $\mathcal{D}$ or $\mathcal{D'}$ except in a $p$-domain.
\item Suppose $l'\ge1$. 
If there exists a ballot strip $\mathcal{D}$ of length $(l,l')$ with
$l'-1\in2\mathbb{Z}$ (resp. $l'\in2\mathbb{Z}$), 
then there is a strip of length $(l'',l'+1)$ with $l''\ge l$ 
(resp. $(l'',l'-1)$ with $l''\le l$) just above
(resp. just below) $\mathcal{D}$.
\item There is no squares of a strip $\mathcal{D}$ of length $(l,l')$ with $l+l'\ge1$
in a $p$-domain. 
\item If a $p$-domain for $(\alpha,\beta)$ exists, fill it with unit squares.
\end{enumerate}
\end{enumerate}

Let $\alpha<\beta$ be two paths.
We denote by $\mathrm{Conf}(\alpha,\beta)$ the set of all possible 
configurations of ballot strips, and by $\mathrm{Conf}^{I/II}(\alpha,\beta)$ 
the subset of configurations satisfying Rule I/II.
In the case of Rule II, there may be several choices of a $p$-domain 
for $(\alpha,\beta)$. 
However, given a $p$-domain for $(\alpha,\beta)$, 
there exists at most one configuration satisfying Rule II. 

The weight of a ballot strip is defined by
\begin{eqnarray*}
\mathrm{wt}^{I}(\mathcal{D})
&:=&
\left\{
\begin{array}{cc}
t^{-1}, & l'\text{ is even}, \\
1, & l' \text{ is odd},
\end{array}
\right. \\
\mathrm{wt}^{II}(\mathcal{D})
&:=&
\left\{
\begin{array}{cc}
-t^{-1}, & l'\text{ is even}, \\
1, & l' \text{ is odd}, \\
t^{-1} & \text{a unit square in a $p$-domain}.
\end{array}
\right.
\end{eqnarray*}

\begin{defn}
The generating function of ballot strips for the paths 
$\alpha<\beta$ is defined by
\begin{eqnarray*}
Q_{\alpha,\beta}^{X,\epsilon}
=
\sum_{\mathcal{C}\in\mathrm{Conf}^{X}(\alpha,\beta)}
\prod_{\mathcal{D}\in\mathcal{C}}
\mathrm{wt}^{X}(\mathcal{D}),
\end{eqnarray*}
where $X={I,II}$ and $\epsilon=\pm$.
The order $\alpha<\beta$ is associated with the sign $\epsilon$. 
We define $Q_{\alpha,\alpha}^{X,\epsilon}=1$.
\end{defn}

\begin{example}
Let $\alpha=--++-+$ and $\beta=++++++$. 
The possible configurations satisfying Rule I are
\begin{eqnarray*}
\begin{tikzpicture}[scale=0.25]
\draw[thick](0,0)--(2,-2)--(4,0)--(5,-1)--(6,0)(0,0)--(6,6);
\draw(1,-1)--(7,5)--(6,6)(4,0)--(7,3)--(5,5)(6,0)--(7,1)--(4,4);
\draw(1,1)--(3,-1)(2,2)--(4,0)(3,3)--(6,0);
\end{tikzpicture}\quad
\begin{tikzpicture}[scale=0.25]
\draw[thick](0,0)--(2,-2)--(4,0)--(5,-1)--(6,0)(0,0)--(6,6);
\draw(1,-1)--(7,5)--(6,6)(5,1)--(7,3)--(5,5)(6,0)--(7,1)--(4,4);
\draw(1,1)--(3,-1)(2,2)--(3,1)(3,3)--(6,0);
\end{tikzpicture}\quad
\begin{tikzpicture}[scale=0.25]
\draw[thick](0,0)--(2,-2)--(4,0)--(5,-1)--(6,0)(0,0)--(6,6);
\draw(1,-1)--(4,2)(5,3)--(7,5)--(6,6)(5,1)--(7,3)--(5,5)(6,0)--(7,1)--(4,4);
\draw(1,1)--(3,-1)(2,2)--(3,1)(4,2)--(6,0);
\end{tikzpicture}\quad
\begin{tikzpicture}[scale=0.25]
\draw[thick](0,0)--(2,-2)--(4,0)--(5,-1)--(6,0)(0,0)--(6,6);
\draw(1,-1)--(7,5)--(6,6)(2,-2)--(7,3)--(5,5)(6,0)--(7,1)--(6,2);
\draw(1,1)--(3,-1)(2,2)--(4,0)(3,3)--(5,1)(4,4)--(5,3);
\end{tikzpicture}\quad
\begin{tikzpicture}[scale=0.25]
\draw[thick](0,0)--(2,-2)--(4,0)--(5,-1)--(6,0)(0,0)--(6,6);
\draw(1,-1)--(4,2)--(5,1)--(7,3)--(5,5)(5,3)--(7,5)--(6,6);
\draw(1,1)--(3,-1)(2,2)--(3,1)(4,4)--(5,3)(6,0)--(7,1)--(6,2);
\end{tikzpicture}
\end{eqnarray*}
The generating function is 
$Q_{\alpha,\beta}^{I,-}=t^{-13}+t^{-11}+2t^{-9}+t^{-5}$.
\end{example}

\begin{example}
Let $\mathbf{m}=(2,2,2,2), \mathbf{k}=(0,0,0,0)$ and  $\mathbf{k'}=(-2,-2,-2,2)$.
Then, $\alpha:=\eta(\mathbf{k})=-+-+-+-+$ and $\beta:=\eta(\mathbf{k'})=++++++--$. 
The possible configurations satisfying Rule II are
\begin{eqnarray*}
\begin{tikzpicture}[scale=0.25]
\draw[thick](0,0)--(6,6)--(8,4)
(0,0)--(1,-1)--(2,0)--(3,-1)--(4,0)--(5,-1)--(6,0)--(7,-1)--(8,0);
\draw(2,0)--(6,4)--(9,1)--(8,0)(1,1)--(2,0)(8,4)--(9,3)--(8,2);
\draw(4,0)--(6,2)--(8,0);
\fill[pattern=north east lines](0,0)--(1,1)--(2,0)--(1,-1)--(0,0);
\end{tikzpicture}\quad
\begin{tikzpicture}[scale=0.25]
\draw[thick](0,0)--(6,6)--(8,4)
(0,0)--(1,-1)--(2,0)--(3,-1)--(4,0)--(5,-1)--(6,0)--(7,-1)--(8,0);
\draw(2,0)--(6,4)--(9,1)--(8,0)(1,1)--(2,0)(8,4)--(9,3)--(8,2);
\draw(4,0)--(6,2)--(8,0)(5,1)--(6,0)--(7,1);
\fill[pattern=north east lines](0,0)--(1,1)--(2,0)--(1,-1)--(0,0);
\fill[pattern=north east lines](4,0)--(5,1)--(6,0)--(5,-1)--(4,0)
(6,0)--(7,1)--(8,0)--(7,-1)--(6,0);
\end{tikzpicture}
\end{eqnarray*}
where a square  
\raisebox{-0.3\totalheight}{
\begin{tikzpicture}[scale=0.25]
\fill[pattern=north east lines](0,0)--(1,1)--(2,0)--(1,-1)--(0,0);
\draw(0,0)--(1,1)--(2,0)--(1,-1)--(0,0);
\end{tikzpicture}}
is in a $p$-domain.
The generating function is $Q_{\alpha,\beta}^{II,-}=t^{-3}+t^{-5}$.
\end{example}

From Lemma 4 in~\cite{SZJ12}, we have 
\begin{lemma}
\label{transposeQ}
\begin{eqnarray}
Q_{\alpha,\beta}^{X,+}(t^{-1})
=
Q_{\beta,\alpha}^{X,-}(t^{-1}).
\end{eqnarray}
\end{lemma}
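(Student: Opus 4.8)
The plan is to realize both sides as generating functions of ballot-strip configurations on one and the same shifted skew Ferrers diagram, and then to read off the identity from the transpose symmetry of such generating functions established in \cite[Lemma 4]{SZJ12}. First I would observe that the two orientation conventions encode the same geometry. In the $\epsilon=+$ order the relation $\alpha<\beta$ means that $\alpha$ lies above $\beta$, while in the $\epsilon=-$ order the relation $\beta<\alpha$ again means that $\alpha$ lies above $\beta$. Hence the region enclosed between the two paths, and therefore the set of squares available to the ballot strips, is literally the same skew diagram for $Q_{\alpha,\beta}^{X,+}$ and for $Q_{\beta,\alpha}^{X,-}$; the only difference between the two sides is which of the two diagonal directions is treated as ``up'' when the strips are stacked and when $\lambda(\cdot)$ is formed.

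Next I would set up the reflection that interchanges the two diagonal step-directions and carries the $\epsilon=+$ stacking data for the pair $(\alpha,\beta)$ to the $\epsilon=-$ stacking data for the pair $(\beta,\alpha)$. This reflection fixes the right edge $i=N$, so it sends an anchor square to an anchor square, and it sends a $p$-domain for $(\alpha,\beta)$ to a $p$-domain for $(\beta,\alpha)$; consequently it restricts to a bijection $\mathrm{Conf}^{X}(\alpha,\beta)\to\mathrm{Conf}^{X}(\beta,\alpha)$ for each $X\in\{I,II\}$. This is precisely the combinatorial content of \cite[Lemma 4]{SZJ12}. Because our piling rules differ slightly from those of \cite{Shi14}, I would verify directly that each clause survives the reflection: Rule I(a)--(c) (adjacency of squares below a strip, the evenness of the number of strips of a given length, and absence of a $p$-domain), and Rule II(a)--(d) (the NW/NE adjacency condition, the pairing of a strip of length $(l,l')$ with a neighbouring strip, and the filling of a $p$-domain by unit squares).

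It then remains to match weights, and since the reflection preserves the vertical extent of a strip it preserves the parity of $l'$. Thus $\mathrm{wt}^{I}(\mathcal{D})$, which is $t^{-1}$ or $1$ according to the parity of $l'$, is preserved termwise; likewise $\mathrm{wt}^{II}(\mathcal{D})$ is preserved, including the sign attached to the case $l'$ even and the factor $t^{-1}$ attached to a unit square of a $p$-domain, because the reflection sends unit squares to unit squares. Taking the product $\prod_{\mathcal{D}}\mathrm{wt}^{X}(\mathcal{D})$ over a configuration is therefore invariant, and summing over the now-matched configurations yields $Q_{\alpha,\beta}^{X,+}=Q_{\beta,\alpha}^{X,-}$. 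The hard part will be the verification in the second paragraph: tracking how the parity-dependent weight and the anchor/$p$-domain constraints behave under the reflection, since these are exactly the features that were modified relative to \cite{Shi14}. Once the reflection is shown to preserve the length parities together with the anchor and $p$-domain structure, the identity follows from \cite[Lemma 4]{SZJ12} with no further computation.
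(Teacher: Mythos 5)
Your proposal is correct and takes essentially the same route as the paper: the paper's entire proof of this lemma is the citation of \cite[Lemma 4]{SZJ12}, and your argument simply unpacks the reflection/transposition symmetry that this citation encapsulates, together with the (sensible) caveat that the anchor-square, $p$-domain and weight conventions modified relative to \cite{Shi14} must be checked to survive the reflection. No further comparison is needed.
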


From Theorem~21 in \cite{Shi14}, we have the following two lemmas: 
\begin{lemma}\label{lemma-QP1}
\begin{eqnarray}
Q_{\alpha,\beta}^{I,+}(t^{-1})
=
P_{\alpha,\beta}^{+}(t^{-1}).
\end{eqnarray}
\end{lemma}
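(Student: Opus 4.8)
The plan is to derive the identity directly from \cite[Theorem~21]{Shi14} by recognizing that the Rule~I weighting is exactly the ballot-strip model of \cite{Shi14}. The essential point is condition~I(c): a Rule~I configuration exists only when $(\alpha,\beta)$ carries no $p$-domain, and this is precisely the situation $m_1=\cdots=m_n=1$, i.e.\ the $V_1^{\otimes N}$ (the ``$m=1$, Case~B'') setting of \cite{Shi14}. In that setting the parabolic Kazhdan--Lusztig polynomial $P^{+}_{\alpha,\beta}$ of $\mathcal{M}^{+}_N$ furnished by Deodhar's theorem is exactly the quantity that \cite[Theorem~21]{Shi14} writes as a generating function over ballot strips, so it remains only to match the two combinatorial presentations term by term.

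Concretely, I would compare three layers of data. First, the geometry: the point set $S(\alpha)$, the shifted Ferrers diagram $\lambda(\alpha)$, the anchor squares, and the skew shape $\lambda(\alpha)/\lambda(\beta)$ attached to $\alpha<\beta$ are introduced here verbatim from \cite[Section~3.1]{Shi14}. Second, the admissibility: with no $p$-domain present, the constraints I(a) and I(b) collapse to the piling rules of \cite{Shi14}, so $\mathrm{Conf}^{I}(\alpha,\beta)$ is the same set of configurations. Third, the weights: $\mathrm{wt}^{I}(\mathcal{D})$ assigns $t^{-1}$ to a strip of length $(l,l')$ with $l'$ even and $1$ with $l'$ odd, which is the $\epsilon=+$ weight of \cite{Shi14}; since \cite{Shi14} treats $P^{\pm}_{\alpha,\beta}$ uniformly in the path language, the $\epsilon=+$ case is covered directly, and if one prefers to argue in the opposite orientation one first applies Lemma~\ref{transposeQ}, $Q_{\alpha,\beta}^{I,+}(t^{-1})=Q_{\beta,\alpha}^{I,-}(t^{-1})$. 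The main obstacle is entirely bookkeeping: one must confirm that the ``slight difference'' flagged at the start of the subsection, namely the insertion of $p$-domains into the stacking rules, genuinely disappears in the Rule~I regime, so that neither a configuration nor a weight is gained or lost relative to \cite{Shi14}. Once this is verified, $Q_{\alpha,\beta}^{I,+}$ and the generating function of \cite{Shi14} are literally equal, and \cite[Theorem~21]{Shi14} yields $Q_{\alpha,\beta}^{I,+}=P^{+}_{\alpha,\beta}$.

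If one wishes to avoid quoting the combinatorics of \cite{Shi14}, the alternative is to use the uniqueness in Deodhar's theorem. Set $\widetilde{C}_\beta:=\sum_{\alpha\le\beta}Q_{\alpha,\beta}^{I,+}(t^{-1})\,m_\alpha$ in $\mathcal{M}^{+}_N$ and show that (i) $Q_{\beta,\beta}^{I,+}=1$, with $Q_{\alpha,\beta}^{I,+}\in t^{-1}\mathbb{Z}[t^{-1}]$ for $\alpha<\beta$, and (ii) $\widetilde{C}_\beta$ is bar-invariant. Properties (i) and (ii) characterize $C^{+}_\beta$, whence $Q_{\alpha,\beta}^{I,+}=P^{+}_{\alpha,\beta}$. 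Part~(i) reduces to checking that every nonempty admissible configuration carries at least one even-$l'$ strip, a short argument using the anchor condition; the genuinely hard step is the bar-invariance in (ii), which amounts to an $R$-polynomial type cancellation among ballot-strip configurations. Since this cancellation is exactly what \cite{Shi14} establishes, invoking \cite[Theorem~21]{Shi14} directly is the more economical route, and that is the approach I would take.
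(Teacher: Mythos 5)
Your proposal matches the paper's treatment: the paper gives no separate proof of this lemma but simply derives it as an immediate consequence of Theorem~21 in \cite{Shi14}, exactly as in your primary route. Your additional bookkeeping (checking that Rule~I(c) eliminates the $p$-domains so that the configurations and weights coincide with those of \cite{Shi14}) is a correct elaboration of why that citation applies, and the Deodhar-uniqueness alternative you sketch is not needed.
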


For given two paths $\alpha$ and $\beta$, we define a path $\gamma$ by
the following two conditions: 1) $\gamma$ is below both $\alpha$ and $\beta$, and 
2) there is no path above $\gamma$ satisfying 1). 
We denote the path $\gamma$ by $\min(\alpha,\beta)$.

For example, $\min(+---+,-+++-)=-+--+$.

\begin{lemma}
\label{lemma-QP2}
Let $\alpha=\eta(\mathbf{k}), \beta=\eta(\mathbf{k'})$
and $\alpha'=\min(\zeta(\mathbf{k}),\beta)$. 
\begin{eqnarray}
Q_{\alpha,\beta}^{II,-}(t^{-1})
=
\sum_{\alpha\le\gamma\le\alpha'}t^{-|\alpha|+|\gamma|}
P_{\gamma,\beta}^{-}(-t^{-1}).
\end{eqnarray}
\end{lemma}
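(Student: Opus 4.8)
The plan is to prove the identity by decomposing each Rule~II configuration according to its (unique) filled $p$-domain and factoring the weight accordingly. Fix $\alpha=\eta(\mathbf{k})$, $\beta=\eta(\mathbf{k'})$ and $\alpha'=\min(\zeta(\mathbf{k}),\beta)$, and recall that by Rule~II(d) every configuration $\mathcal{C}\in\mathrm{Conf}^{II,-}(\alpha,\beta)$ contains a $p$-domain filled with unit squares. The first step is to show that the filled $p$-domain of $\mathcal{C}$ is the shifted skew diagram $\lambda(\alpha)/\lambda(\gamma)$ for a uniquely determined lattice path $\gamma$, and that $\gamma$ must satisfy $\alpha\le\gamma\le\alpha'$: the $p$-domain lies in the projection region $\lambda(\alpha)/\lambda(\zeta(\mathbf{k}))$ for $\alpha$ and inside $\mu=\lambda(\alpha)/\lambda(\beta)$, and the intersection of these two regions is bounded above precisely by $\min(\zeta(\mathbf{k}),\beta)=\alpha'$.

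Next I would establish that the assignment $\mathcal{C}\mapsto\gamma$ sets up a bijection between $\mathrm{Conf}^{II,-}(\alpha,\beta)$ and pairs consisting of a path $\gamma$ with $\alpha\le\gamma\le\alpha'$ together with a Rule~II configuration in $\lambda(\gamma)/\lambda(\beta)$ carrying no $p$-domain. Concretely, deleting the filled $p$-domain from $\mathcal{C}$ leaves the ballot strips lying above $\gamma$, which form such a configuration; conversely any no-$p$-domain configuration above $\gamma$ reassembles with the filled region $\lambda(\alpha)/\lambda(\gamma)$ into a valid element of $\mathrm{Conf}^{II,-}(\alpha,\beta)$. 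The uniqueness half of this statement is exactly the remark already recorded, that given a $p$-domain for $(\alpha,\beta)$ there is at most one configuration satisfying Rule~II.

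Since $\mathrm{wt}^{II}$ is multiplicative over the unit squares of the $p$-domain and over the ballot strips, the weight of $\mathcal{C}$ factors as a product of the two pieces. The filled $p$-domain $\lambda(\alpha)/\lambda(\gamma)$ has $|\alpha|-|\gamma|$ unit squares, each of weight $t^{-1}$, contributing $t^{-|\alpha|+|\gamma|}$, while the ballot strips above $\gamma$ contribute the no-$p$-domain generating function $Q^{II,-}_{\gamma,\beta}(t^{-1})$. For the latter I would invoke Theorem~21 of~\cite{Shi14}, which in the absence of a projector identifies the Rule~II generating function with a parabolic Kazhdan--Lusztig polynomial, giving $Q^{II,-}_{\gamma,\beta}(t^{-1})=P^-_{\gamma,\beta}(-t^{-1})$; the argument $-t^{-1}$ is forced by the weight $-t^{-1}$ assigned to the even strips in $\mathrm{wt}^{II}$ together with the convention $t=-q$, exactly as the weight $t^{-1}$ in $\mathrm{wt}^{I}$ produces $P^{+}$ in Lemma~\ref{lemma-QP1}. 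Summing the factored weights over all admissible $\gamma$ then yields the claimed formula.

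I expect the main obstacle to be the bijection of the second step, namely verifying that the interface along $\gamma$ imposes no spurious constraints: one must check that the stacking conditions Rule~II(a)--(c) relating the filled $p$-domain to the strips immediately above it are automatically satisfied and do not couple the two pieces, so that the residual configuration above $\gamma$ really is an \emph{arbitrary} no-$p$-domain Rule~II configuration and the weight is genuinely multiplicative across the boundary. Handling the edge cases where $\gamma$ meets $\beta$ and where the $p$-domain is empty, so that $\gamma=\alpha$ and the pure ballot-strip count is recovered, is the remaining bookkeeping.
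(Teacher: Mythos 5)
Your proposal is correct and takes essentially the same approach as the paper: the paper's proof likewise decomposes each Rule~II configuration by the path $\gamma$ bounding the filled $p$-domain, extracts the factor $t^{-|\alpha|+|\gamma|}$ from the $|\alpha|-|\gamma|$ unit squares, identifies the remaining strips between $\gamma$ and $\beta$ with $P^{-}_{\gamma,\beta}(-t^{-1})$ via the results of~\cite{Shi14}, and sums over $\gamma$. Your write-up is merely more explicit about the bijection and the multiplicativity of the weight across the interface, which the paper takes for granted.
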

\begin{proof}
Given $\gamma$ ($\alpha\le\gamma\le\alpha'$), the domain 
inbetween $\alpha$ and $\gamma$ is a $p$-domain for $(\alpha,\beta)$.
The number of squares in a $p$-domain is $|\alpha|-|\gamma|$.
This gives the factor $t^{-|\alpha|+|\gamma|}$.

The domain inbetween $\gamma$ and $\beta$ is filled with ballot strips
satisfying Rule II-(a) and II-(b).
Since these two rules are the same as the ones for Kazhdan--Lusztig 
polynomials (Section 3.1 in~\cite{Shi14}), the contribution is nothing but 
the Kazhdan--Lusztig polynomial $P^{-}_{\gamma,\beta}(-t^{-1})$ 
(Corollary 20 in~\cite{Shi14}). 
The sum over all possible $\gamma$ gives the desired equation.
\end{proof}

\subsection{\texorpdfstring{Path representation of $R$}{Path representation of R}}

\begin{theorem}
Let $\mathbf{k,l}\in I_{\mathbf{m}}$, $\alpha=\eta(\mathbf{k})$ and 
$\beta=\eta(\mathbf{l})$.
\begin{eqnarray}
R^{\mathbf{k},\mathbf{l}}(q^{-1})=Q_{\alpha,\beta}^{II,-}(q^{-1}).
\end{eqnarray}
\end{theorem}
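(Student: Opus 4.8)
The plan is to reduce to the case $V_1^{\otimes N}$, where the coefficient is a parabolic Kazhdan--Lusztig polynomial, and then to translate the projection $\pi:=\pi_{m_1}\otimes\ldots\otimes\pi_{m_n}$ into the filling of a $p$-domain. By definition $R^{\mathbf{k},\mathbf{l}}(q^{-1})=\langle v_{k_1}\otimes\ldots\otimes v_{k_n},\,v^{l_1}\varspadesuit\ldots\varspadesuit v^{l_n}\rangle$ is the coefficient of $v^{k_1}\otimes\ldots\otimes v^{k_n}$ in the dual canonical basis, and by Theorem~\ref{Thm-dC} this dual canonical basis equals $\pi(v^{\lambda_1}\varspadesuit\ldots\varspadesuit v^{\lambda_N})$, where $\lambda$ is the expanded sign sequence attached to $\mathbf{l}$.

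First I would expand $v^{\lambda_1}\varspadesuit\ldots\varspadesuit v^{\lambda_N}$ in the standard basis of $V_1^{\otimes N}$. By Lemma~\ref{lemma-KL-dC} this is the Kazhdan--Lusztig basis of $\mathcal{M}_N^{-}$, so its coefficient on $v^{\kappa'_1}\otimes\ldots\otimes v^{\kappa'_N}$ is $P^{-}_{\eta(\kappa'),\beta}$ evaluated at $t=-q$, where $\beta=\eta(\mathbf{l})$ under the sign-flipping bijection of Section~\ref{Sec:Hecke}. Applying $\pi$ and using Eqn~\eqref{projection-dual}, each block collapses with weight $q^{-||\kappa'^{(i)}||_{-}}$ (here $\kappa'^{(i)}$ denotes the restriction of $\kappa'$ to the $i$-th block), and all sign sequences sharing the block sums $\mathbf{k}$ are identified. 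This gives
\[
R^{\mathbf{k},\mathbf{l}}(q^{-1})=\sum_{\kappa'} q^{-\sum_{i}||\kappa'^{(i)}||_{-}}\,P^{-}_{\eta(\kappa'),\beta}(-q^{-1}),
\]
the sum being over sign sequences $\kappa'$ whose $i$-th block sums to $k_i$.

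Next I would match this with Lemma~\ref{lemma-QP2}, which gives $Q^{II,-}_{\alpha,\beta}(q^{-1})=\sum_{\alpha\le\gamma\le\alpha'}q^{-|\alpha|+|\gamma|}P^{-}_{\gamma,\beta}(-q^{-1})$ with $\alpha=\eta(\mathbf{k})$ and $\alpha'=\min(\zeta(\mathbf{k}),\beta)$. Two combinatorial facts finish the comparison. First, as $\kappa'$ ranges over the sign sequences with block sums $\mathbf{k}$, the path $\gamma:=\eta(\kappa')$ ranges exactly over the interval $[\eta(\mathbf{k}),\zeta(\mathbf{k})]$ (these two extremal paths meet at every block boundary, so any path between them is pinned there and is realised by a unique rearrangement); since $P^{-}_{\gamma,\beta}$ vanishes unless $\gamma\le\beta$, the effective range is $[\alpha,\min(\zeta(\mathbf{k}),\beta)]=[\alpha,\alpha']$. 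Second, the within-block inversion count $\sum_i||\kappa'^{(i)}||_{-}$ equals the number $|\alpha|-|\gamma|$ of cells of the $p$-domain between $\alpha$ and $\gamma$, computed block by block. These two identities carry the summand of the displayed formula onto the summand of Lemma~\ref{lemma-QP2} under $t^{-1}\mapsto q^{-1}$, so the two sums agree.

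The hard part will be the second identity together with the bookkeeping of conventions: one must check that the sign-flipping bijection carrying the physical expanded sequence to the $\eta$-path transports the projection weight $q^{-||\kappa||_{-}}$ of Eqn~\eqref{projection-dual} precisely onto the weight $t^{-1}$ per unit square of a $p$-domain in Rule II, and that the identification $t=-q$ of Lemma~\ref{lemma-KL-dC} is consistent with the substitution $t^{-1}\mapsto q^{-1}$ used in the statement. Once the inversion-equals-area identity is set up block by block and the admissible rearrangements are matched with the interval $[\alpha,\alpha']$, the theorem follows directly from Lemma~\ref{lemma-QP2}.
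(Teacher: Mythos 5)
Your proposal is correct and follows essentially the same route as the paper's proof: expand $v^{\lambda_1}\varspadesuit\ldots\varspadesuit v^{\lambda_N}$ in the standard basis of $V_1^{\otimes N}$ via Lemma~\ref{lemma-KL-dC}, apply the projection $\pi$ to pick up the factor $q^{-|\alpha|+|\gamma|}$, and match the resulting sum over $\gamma\in[\alpha,\alpha']$ with Lemma~\ref{lemma-QP2}. The two combinatorial identifications you flag as the ``hard part'' (rearrangements with fixed block sums versus the path interval, and inversion count versus $p$-domain area) are exactly the steps the paper compresses into the single statement $\pi(v^{\kappa'})=q^{-D}v^{\mathbf{k}}$ with $D=|\eta(\mathbf{k})|-|\eta(\kappa')|$.
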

\begin{proof}

From Lemma~\ref{lemma-KL-dC}, the dual canonical basis in 
$V_1^{\otimes N}$ is expanded as 
\begin{eqnarray}
\label{eq-dC-KL}
v^{\kappa_1}\varspadesuit\ldots\varspadesuit v^{\kappa_N}
=
\sum_{\gamma_1\le\gamma_2}
P_{\gamma_1,\gamma_2}^{-}(-q^{-1})
v^{\kappa'_1}\otimes\ldots\otimes v^{\kappa'_N}
\end{eqnarray}
where $\gamma_1=\eta(\mathbf{\kappa'})$ and $\gamma_2=\eta(\mathbf{\kappa})$.
Given $\mathbf{\kappa'}$, we define $k_i, 1\le i\le n$, as 
$k_i=\kappa'_{d_{i-1}+1}+\ldots+\kappa'_{d_{i}}$,
where $d_{i}:=\sum_{l=1}^{i}m_{l}$ and $d_0=0$.
The action of $\pi=\pi_1\otimes\ldots\otimes\pi_n$ on 
$v^{\kappa'_1}\otimes\ldots\otimes v^{\kappa'_N}$
is 
\begin{eqnarray*}
\pi(v^{\kappa'_1}\otimes\ldots\otimes v^{\kappa'_N})
=
q^{-D}v^{k_1}\otimes\ldots\otimes v^{k_n}
\end{eqnarray*}
where $D=|\eta(\mathbf{k})|-|\eta(\mathbf{\kappa'})|$.
 
From Theorem~\ref{Thm-dC}, the dual canonical basis is obtained 
by multiplying $\pi$ on 
$v^{\kappa_1}\varspadesuit\ldots\varspadesuit v^{\kappa_N}$ 
for an appropriate choice of $\mathbf{\kappa}$.
For such $\kappa$, the multiplication of $\pi$ on 
Eqn.(\ref{eq-dC-KL}) yields
\begin{eqnarray*}
v^{l_1}\varspadesuit\ldots\varspadesuit v^{l_N}
=
\sum_{\alpha\le\beta}
\sum_{\alpha\le\gamma\le\alpha'}
q^{-|\alpha|+|\gamma|}P_{\gamma,\beta}^{-}(-q^{-1})
v^{k_1}\otimes\ldots\otimes v^{k_n}
\end{eqnarray*}
where $\alpha=\eta(\mathbf{k}), \beta=\eta(\mathbf{l})$
and $\alpha'=\min(\alpha,\beta)$.
Together with Lemma~\ref{lemma-QP2}, we complete the proof.
\end{proof}

\subsection{Inversion formula}
Fix $\mathbf{m}$. 
Suppose that a path $\alpha$ is below a path $\beta$ and 
written as $\alpha=\eta(\mathbf{k})$ and 
$\beta=\eta(\mathbf{k'})$ with $\mathbf{k,k'}\in I_{\mathbf{m}}$.
A path $\gamma$ 
is said to be {\it admissible} if and only if there exists 
$\mathbf{k''}\in I_{\mathbf{m}}$ satisfying $\gamma=\eta(\mathbf{k''})$ 
and $\gamma$ is above $\alpha$ and below $\beta$.
We denote by $\mathrm{Adm}(\alpha,\beta)$ the set of all admissible 
paths inbetween two paths $\alpha$ and $\beta$.
\begin{theorem}
\label{InvQ}
\begin{eqnarray}
\sum_{\beta\in\mathrm{Adm}(\alpha,\gamma)}
Q_{\alpha,\beta}^{I,-}(t^{-1})Q_{\beta,\gamma}^{II,-}(t^{-1})=\delta_{\alpha,\gamma}.
\end{eqnarray}
\end{theorem}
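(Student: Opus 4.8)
The plan is to recognize Theorem~\ref{InvQ} as the purely combinatorial shadow of the biorthogonality between the canonical and dual canonical bases of $U$, i.e. the inversion formula~\eqref{InvR}. First I would make explicit where \eqref{InvR} comes from: expanding both factors of the defining duality~\eqref{IProductSB}, namely $\langle v_{l_1}\varclubsuit\cdots\varclubsuit v_{l_n},\,v^{l'_1}\varspadesuit\cdots\varspadesuit v^{l'_n}\rangle=\delta_{\mathbf{l},\mathbf{l}'}$, in the standard basis and reading off the definitions of $R_{\mathbf{k},\mathbf{l}}$ and $R^{\mathbf{k},\mathbf{l}'}$ yields
\begin{equation*}
\sum_{\mathbf{k}} R_{\mathbf{k},\mathbf{l}}(q^{-1})\, R^{\mathbf{k},\mathbf{l}'}(q^{-1}) = \delta_{\mathbf{l},\mathbf{l}'},
\end{equation*}
where the summed index $\mathbf{k}$ is the \emph{first} index of both factors, and the sum is effectively restricted to those $\mathbf{k}$ for which both factors are nonzero, that is $\mathbf{k}\le_{+}\mathbf{l}$ and $\mathbf{k}\le_{-}\mathbf{l}'$.

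Next I would substitute the two path representations. For the dual canonical basis I use the theorem just proved, $R^{\mathbf{k},\mathbf{l}'}(q^{-1})=Q^{II,-}_{\eta(\mathbf{k}),\eta(\mathbf{l}')}(q^{-1})$. For the canonical basis I use the companion identity $R_{\mathbf{k},\mathbf{l}}(q^{-1})=Q^{I,+}_{\eta(\mathbf{k}),\eta(\mathbf{l})}(q^{-1})$, obtained by the same projection argument as in the $R^{\mathbf{k},\mathbf{l}}$ theorem, now combined with Lemma~\ref{lemma-QP1} ($Q^{I,+}=P^{+}$) and the fact that the canonical basis of $V_1^{\otimes N}$ expands with the parabolic polynomials $P^{+}$. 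The point of the $\epsilon=+$ version is that Lemma~\ref{transposeQ} transposes it, $Q^{I,+}_{\eta(\mathbf{k}),\eta(\mathbf{l})}=Q^{I,-}_{\eta(\mathbf{l}),\eta(\mathbf{k})}$, so that \eqref{InvR} becomes
\begin{equation*}
\sum_{\mathbf{k}} Q^{I,-}_{\eta(\mathbf{l}),\eta(\mathbf{k})}(q^{-1})\, Q^{II,-}_{\eta(\mathbf{k}),\eta(\mathbf{l}')}(q^{-1}) = \delta_{\mathbf{l},\mathbf{l}'}.
\end{equation*}
Setting $\alpha=\eta(\mathbf{l})$, $\beta=\eta(\mathbf{k})$ and $\gamma=\eta(\mathbf{l}')$, the nonvanishing conditions $\mathbf{k}\le_{+}\mathbf{l}$ and $\mathbf{k}\le_{-}\mathbf{l}'$ translate, via the compatibility of $\le_{\pm}$ with the order of $\eta$-paths recorded in Section~\ref{Sec:Comb}, into ``$\beta$ above $\alpha$'' and ``$\beta$ below $\gamma$'', while $\mathbf{k}\in I_{\mathbf{m}}$ forces $\beta$ to be admissible; hence the sum runs exactly over $\beta\in\mathrm{Adm}(\alpha,\gamma)$, and $\delta_{\mathbf{l},\mathbf{l}'}$ becomes $\delta_{\alpha,\gamma}$ since $\eta$ is injective. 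This is Theorem~\ref{InvQ}; the identity is one of polynomials in a formal variable, so renaming $q^{-1}$ to $t^{-1}$ gives the stated form.

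The main obstacle I anticipate is not the algebra of \eqref{InvR} but the bookkeeping needed to land exactly on $\mathrm{Adm}(\alpha,\gamma)$: one must reconcile the two \emph{opposite} orders $\le_{+}$ and $\le_{-}$ attached to the two bases, confirm that the unrestricted sum in \eqref{InvR} collapses to admissible paths and no others, and check that the single discrepancy — the summation index sits as the \emph{first} index of both factors in \eqref{InvR} but as the \emph{middle} index in Theorem~\ref{InvQ} — is resolved precisely by the transpose symmetry of Lemma~\ref{transposeQ}. A secondary point to make fully rigorous is the companion representation $R_{\mathbf{k},\mathbf{l}}=Q^{I,+}_{\eta(\mathbf{k}),\eta(\mathbf{l})}$; if it is not already on record I would establish it in a short lemma mirroring the $Q^{II,-}$ theorem, observing that the canonical basis carries no projection ($p$-domain) corrections and so is governed by Rule~I rather than Rule~II.
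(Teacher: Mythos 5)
Your route is genuinely different from the paper's. The paper proves Theorem~\ref{InvQ} directly as a combinatorial statement about ballot-strip configurations: it fixes a configuration $\mathcal{C}$ between $\alpha$ and $\gamma$, identifies the set of intermediate paths $\bar{\beta}$ splitting $\mathcal{C}$ into a Rule~I part and a Rule~II part, shows this set has cardinality $2^{|\mathcal{I}(C)|}$ with $\mathcal{I}(C)\neq\emptyset$, and cancels contributions in pairs via a sign-reversing analysis of the border strips (following \cite[Theorem 6]{SZJ12} and \cite[Theorem 10]{Shi14}), with extra care for $p$-domains and admissibility. You instead want to deduce the identity from the algebraic biorthogonality \eqref{InvR} together with the two path representations of $R$ and $R^{\,\cdot,\cdot}$. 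The translation itself (orders $\le_{\pm}$ versus above/below, Lemma~\ref{transposeQ} moving the summation index from first position to middle position, injectivity of $\eta$, collapse of the sum to $\mathrm{Adm}(\alpha,\gamma)$) is handled correctly.

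The genuine gap is circularity. In the paper the companion identity $R_{\mathbf{k},\mathbf{l}}=Q^{I,+}_{\eta(\mathbf{k}),\eta(\mathbf{l})}$ is \emph{deduced from} Theorem~\ref{InvQ}: its one-line proof reads off $R_{\mathbf{k},\mathbf{l}}=Q^{I,-}_{\eta(\mathbf{l}),\eta(\mathbf{k})}$ by comparing \eqref{InvR} with Theorem~\ref{InvQ} and Lemma~\ref{transposeQ}. So you cannot invoke it as an input without first supplying an independent proof. You do flag this, but your proposed fix --- ``a short lemma mirroring the $Q^{II,-}$ theorem, observing that the canonical basis carries no projection ($p$-domain) corrections'' --- understates what is required and is not quite accurate: the projection of the non-dual canonical basis does produce corrections, namely the $q$-binomial normalization over the fiber of $\pi$, and the paper needs Lemma~\ref{lemma-qbinomial} and Lemma~\ref{lemma-PP} precisely to show that the sum over each fiber collapses back to $P^{+}_{\eta(\mathbf{k}),\eta(\mathbf{l})}$. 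You would also need the $\epsilon=+$ analogue of Lemma~\ref{lemma-KL-dC} (that the canonical basis of $V_1^{\otimes N}$ as a $U$-module is the Kazhdan--Lusztig basis of $\mathcal{M}_N^{+}$) and a projection theorem mirroring Theorem~\ref{Thm-dC}, proved by a direct characterization argument rather than via Corollary~\ref{cor-RP} (which again depends on Theorem~\ref{InvQ}). This programme is plausible and would yield a legitimate alternative proof, but as written the argument rests on a statement that, in the paper's logical order, is a consequence of the theorem you are proving.
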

\begin{proof}
We prove the theorem along~\cite[Theorem 6]{SZJ12} and \cite[Theorem 10]{Shi14}.
The main differences are the existence of $p$-domains and admissibility of a path.

When $\alpha=\gamma$, Theorem holds true. 
Hereafter, we assume $\alpha\neq\gamma$.
Fix two paths $\alpha,\gamma$ and a configuration 
$\mathcal{C}\in\mathrm{Conf}(\alpha,\gamma)$ such that there exists a path 
$\beta$ satisfying $\mathcal{C}=\mathcal{C}_{I}(\beta)\sqcup\mathcal{C}_{II}(\beta)$
where $\mathcal{C}_{I}(\beta)\in\mathrm{Conf}^{I}(\alpha,\beta)$ and 
$\mathcal{C}_{II}(\beta)\in\mathrm{Conf}^{II}(\beta,\gamma)$.
We denote by $P(\mathcal{C})$ the set of such paths $\beta$'s.
If there exist $p$-domains in $\mathcal{C}_{II}(\beta)$, we define a path 
$\bar{\beta}$ such that the shifted skew Ferrers diagram 
$\lambda(\beta)/\lambda(\bar{\beta})$ is the $p$-domains for $(\beta,\gamma)$.
If there is no $p$-domain, we define $\bar{\beta}=\beta$.
We denote by $\bar{P}(\mathcal{C})$ the set of such $\bar{\beta}$.
Notice that $\bar{\beta}$ may not be admissible.
Reversely, one can obtain an admissible path $\beta$ from $\bar{\beta}$ as follows.
If $\bar{\beta}$ is admissible, then $\beta=\bar{\beta}$.
If a path $\bar{\beta}$ is not admissible, 
one can obtain an admissible path $\beta$ below $\bar{\beta}$ such that 
the number of squares filling the region $\lambda(\beta)/\lambda(\bar{\beta})$ 
is minimum.  
Therefore, there is a one-to-one correspondence between an admissible path 
in $P(\mathcal{C})$ and a path in $\bar{P}(\mathcal{C})$.
Since a unit square can be piled on top of any other ballot strip by Rule I,
the region inbetween $\alpha$ and $\bar{\beta}$ satisfies Rule I.
The configuration $\mathcal{C}$ is written as 
$\mathcal{C}=\mathcal{C}_{I}(\bar{\beta})\sqcup\mathcal{C}_{II}(\bar{\beta})$.
We define the set of ballot strips $\mathcal{I}(C)$ as
\begin{eqnarray*}
\mathcal{I}(C):=
\left(
\bigcup_{\bar{\beta}\in \bar{P}(\mathcal{C})}
\mathcal{C}_{I}(\bar{\beta})
\right)
\cap
\left(
\bigcup_{\bar{\beta}\in \bar{P}(\mathcal{C})}
\mathcal{C}_{II}(\bar{\beta})
\right).
\end{eqnarray*}
An element in $\mathcal{I}(C)$ is a single ballot strip or ballot strips which 
are on top of each other and glued together.
The element of $\mathcal{I}(C)$ is on the border of $C_{I}(\bar{\beta})$ and 
$C_{II}(\bar{\beta})$.
By a similar argument to Proposition~4 in \cite{SZJ12}, one can 
show $\mathcal{I}(C)\neq\emptyset$.
Further, Lemma~5 in \cite{SZJ12} holds true as well. 
From the definition of $\mathcal{I}(C)$, a path $\bar{\beta}\in\bar{P}(C)$ is 
characterized by being above or below an element in $\mathcal{I}(C)$.
Thus, the cardinality of $\bar{P}(C)$ is $2^{|\mathcal{I}(C)|}$.

We denote by $\mathrm{wt}(\mathcal{C})$ the weight of a configuration 
$\mathcal{C}$. 
The weight of a unit square in a $p$-domain for Rule II is $t^{-1}$, which is 
the same value as the weight of a unit square for Rule I.
We have 
\begin{eqnarray*}
\sum_{\beta\in\mathrm{Adm}(\alpha,\gamma)}
Q_{\alpha,\beta}^{I,-}Q_{\beta,\gamma}^{II,-}
&=&\sum_{\mathcal{C}}\sum_{\beta\in P(\mathcal{C})}\mathrm{wt}(\mathcal{C}) \\
&=&\sum_{\mathcal{C}}|\mathrm{wt}(\mathcal{C})|\sum_{\bar{\beta}\in \bar{P}(\mathcal{C})}
\mathrm{sign}(\mathcal{C}).
\end{eqnarray*}
To prove the theorem, it is enough to show 
$\sum_{\bar{\beta}\in \bar{P}(\mathcal{C})}\mathrm{sign}(\mathcal{C})=0$.
We first show this equality in the case of two paths.
Let $\beta_1,\beta_2\in\bar{P}(\mathcal{C})$ be two paths such that 
there exists an element $\mathcal{D}$ 
belonging to $\mathcal{C}_{I}(\beta_1)$ and $\mathcal{C}_{II}(\beta_2)$. 
In other words, this means $\beta_1$ is above $\beta_2$.
We have two cases for $\mathcal{D}$:
\begin{enumerate}[{Case} 1.]
\item $\mathcal{D}$ is of length $(l,0)$.
The sign of $\mathcal{D}$ for Rule I is plus, whereas minus for Rule II.
Thus the contributions cancel each other.
\item $\mathcal{D}$ is glued ballot strips.
For Rule I, $\mathcal{D}$ consists of three elements: two ballot strips of 
same length $(l,l')$ with $l'-1\in2\mathbb{Z}$ and a ballot strip of 
length $(l'',0)$ for some $l''$.
On the other hand, for Rule II, $\mathcal{D}$ consists of the ballot strip of length 
$(l,l')$ and a ballot strip of length $(l+l'',l'+1)$.
The sign is plus for Rule I and minus for Rule II.
Thus the contributions cancel each other.
\end{enumerate}
Finally, $\sum_{\bar{\beta}\in \bar{P}(\mathcal{C})}\mathrm{sign}(\mathcal{C})$
is reduced to the sum of contributions by local sum, which is zero.
This completes the proof.
\end{proof}

\begin{theorem}
We have 
\begin{eqnarray}
R_{\mathbf{k},\mathbf{l}}(q^{-1})
=
Q_{\eta(\mathbf{k}),\eta(\mathbf{l})}^{I,+}(q^{-1})
\end{eqnarray}
\end{theorem}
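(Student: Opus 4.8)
The plan is to deduce the identity from the two inversion formulas already established, rather than re-deriving the standard-basis expansion of $v_{l_1}\varclubsuit\cdots\varclubsuit v_{l_n}$ from scratch. Write $\alpha=\eta(\mathbf{k})$, $\beta=\eta(\mathbf{l})$, and introduce the candidate quantity $\widetilde{R}_{\mathbf{k},\mathbf{l}}(q^{-1}):=Q^{I,+}_{\eta(\mathbf{k}),\eta(\mathbf{l})}(q^{-1})$, which I want to identify with $R_{\mathbf{k},\mathbf{l}}(q^{-1})$. The strategy rests on the observation that $\widetilde{R}$ and $R$ both satisfy one and the same linear relation against the matrix $R^{\mathbf{k},\mathbf{l}}$, and that this relation has a unique solution because $R^{\mathbf{k},\mathbf{l}}$ is unitriangular.

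First I would translate the combinatorial inversion formula Theorem~\ref{InvQ} into the language of the $R$-coefficients. Starting from
\[
\sum_{\beta\in\mathrm{Adm}(\alpha,\gamma)}
Q^{I,-}_{\alpha,\beta}(q^{-1})\,Q^{II,-}_{\beta,\gamma}(q^{-1})=\delta_{\alpha,\gamma},
\]
I apply Lemma~\ref{transposeQ} to rewrite the first factor as $Q^{I,-}_{\alpha,\beta}=Q^{I,+}_{\beta,\alpha}$, and I use the previous theorem expressing $R^{\mathbf{k},\mathbf{l}}=Q^{II,-}_{\eta(\mathbf{k}),\eta(\mathbf{l})}$ in the second. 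Setting $\alpha=\eta(\mathbf{a})$, $\beta=\eta(\mathbf{b})$, $\gamma=\eta(\mathbf{c})$ and recalling that the admissible paths $\beta$ are precisely those of the form $\eta(\mathbf{b})$ with $\mathbf{b}\in I_{\mathbf{m}}$, the formula becomes
\[
\sum_{\mathbf{b}}\widetilde{R}_{\mathbf{b},\mathbf{a}}(q^{-1})\,R^{\mathbf{b},\mathbf{c}}(q^{-1})=\delta_{\mathbf{a},\mathbf{c}}.
\]

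Next I would compare this with the algebraic inversion formula (\ref{InvR}) (obtained by expanding the duality pairing (\ref{IProductSB}) between $v_{l_1}\varclubsuit\cdots$ and $v^{l'_1}\varspadesuit\cdots$ in the standard and dual-standard bases), which after the same relabelling reads $\sum_{\mathbf{b}}R_{\mathbf{b},\mathbf{a}}(q^{-1})R^{\mathbf{b},\mathbf{c}}(q^{-1})=\delta_{\mathbf{a},\mathbf{c}}$. Here I must check that the two summation ranges coincide: in (\ref{InvR}) the summand is nonzero only when $\mathbf{b}\le_{+}\mathbf{a}$ and $\mathbf{b}\le_{-}\mathbf{c}$, which under $\eta$ translate to $\alpha$ below $\beta$ below $\gamma$, precisely the condition $\beta\in\mathrm{Adm}(\alpha,\gamma)$. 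Subtracting the two identities yields $\sum_{\mathbf{b}}\bigl(\widetilde{R}_{\mathbf{b},\mathbf{a}}-R_{\mathbf{b},\mathbf{a}}\bigr)R^{\mathbf{b},\mathbf{c}}=0$ for all $\mathbf{a},\mathbf{c}$. Since the matrix $(R^{\mathbf{b},\mathbf{c}})$ is unitriangular with respect to $\le_{-}$ (diagonal entries $1$, off-diagonal entries in $q^{-1}\mathbb{Z}[q^{-1}]$), it is invertible, whence $\widetilde{R}_{\mathbf{b},\mathbf{a}}=R_{\mathbf{b},\mathbf{a}}$ for all $\mathbf{a},\mathbf{b}$, which is exactly the assertion $R_{\mathbf{k},\mathbf{l}}=Q^{I,+}_{\eta(\mathbf{k}),\eta(\mathbf{l})}$.

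I expect the only genuinely delicate point to be the bookkeeping of orders and index sets: verifying that $\eta$ sends $\le_{\pm}$ to the ``above/below'' relations consistently, that the transpose of Lemma~\ref{transposeQ} swaps the two arguments in the direction needed, and above all that the admissibility constraint in Theorem~\ref{InvQ} matches exactly the vanishing pattern built into the definitions of $R_{\mathbf{k},\mathbf{l}}$ and $R^{\mathbf{k},\mathbf{l}}$. Once these dictionaries are aligned, the remainder is formal linear algebra driven by the invertibility of $(R^{\mathbf{b},\mathbf{c}})$.
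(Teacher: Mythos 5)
Your proposal is correct and follows essentially the same route as the paper: the paper's own proof is a one-line deduction combining the algebraic inversion formula (\ref{InvR}), the combinatorial inversion formula of Theorem~\ref{InvQ}, Lemma~\ref{transposeQ}, and the identification $R^{\mathbf{k},\mathbf{l}}=Q^{II,-}_{\eta(\mathbf{k}),\eta(\mathbf{l})}$, concluding by uniqueness of the inverse of the unitriangular matrix $(R^{\mathbf{k},\mathbf{l}})$. You have merely spelled out the bookkeeping of orders and admissibility that the paper leaves implicit, which is consistent with its argument.
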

\begin{proof}
From inversion relations (\ref{InvR}) and Theorem~\ref{InvQ} together with 
Lemma~\ref{transposeQ} , we have 
$R_{\mathbf{k},\mathbf{l}}
=Q_{\eta(\mathbf{l}),\eta(\mathbf{k})}^{I,-} 
=Q_{\eta(\mathbf{k}),\eta(\mathbf{l})}^{I,+}$
\end{proof}
From Lemma~\ref{lemma-QP1}, it is obvious that 
\begin{cor}
\label{cor-RP}
\begin{eqnarray}
R_{\mathbf{k},\mathbf{l}}(q^{-1})
=
P_{\eta(\mathbf{k}),\eta(\mathbf{l})}^{+}(q^{-1}).
\end{eqnarray}
\end{cor}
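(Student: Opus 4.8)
The plan is to obtain the corollary by directly composing the two results that immediately precede it. The theorem just above already identifies the expansion coefficient $R_{\mathbf{k},\mathbf{l}}(q^{-1})$ with the ballot-strip generating function $Q_{\eta(\mathbf{k}),\eta(\mathbf{l})}^{I,+}(q^{-1})$, so the only remaining task is to replace this generating function by the corresponding parabolic Kazhdan--Lusztig polynomial.

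Concretely, I would invoke Lemma~\ref{lemma-QP1}, which asserts the identity $Q_{\alpha,\beta}^{I,+}(t^{-1}) = P_{\alpha,\beta}^{+}(t^{-1})$ of polynomials in a single formal variable. Specializing $\alpha = \eta(\mathbf{k})$ and $\beta = \eta(\mathbf{l})$ and evaluating that variable at $q^{-1}$ yields $Q_{\eta(\mathbf{k}),\eta(\mathbf{l})}^{I,+}(q^{-1}) = P_{\eta(\mathbf{k}),\eta(\mathbf{l})}^{+}(q^{-1})$. Chaining this equality with the preceding theorem gives $R_{\mathbf{k},\mathbf{l}}(q^{-1}) = P_{\eta(\mathbf{k}),\eta(\mathbf{l})}^{+}(q^{-1})$. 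The boundary case $\mathbf{k} = \mathbf{l}$, i.e.\ $\eta(\mathbf{k}) = \eta(\mathbf{l})$, is covered separately by the normalizations $Q_{\alpha,\alpha}^{I,+} = 1 = P_{\alpha,\alpha}^{+}$, so no case is left out.

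There is essentially no obstacle here, since the corollary is a formal consequence of two statements already in hand; the only point deserving a moment's care is the bookkeeping of the variable. Under the substitution $t = -q$ fixed before Lemma~\ref{lemma-KL-dC}, the Hecke-algebra quantities are naturally functions of $t^{-1}$, whereas $R$ is presented as a function of $q^{-1}$. Because $Q^{I,+}$ and $P^{+}$ in Lemma~\ref{lemma-QP1} are literally the same polynomial, feeding $q^{-1}$ into that common polynomial is unambiguous and introduces no spurious sign, in contrast with the $P^{-}(-t^{-1})$ that occurs in Lemma~\ref{lemma-QP2}. Hence the stated identity follows at once.
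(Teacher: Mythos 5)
Your proposal is correct and follows exactly the paper's route: the corollary is stated there as an immediate consequence of the preceding theorem $R_{\mathbf{k},\mathbf{l}}=Q^{I,+}_{\eta(\mathbf{k}),\eta(\mathbf{l})}$ combined with Lemma~\ref{lemma-QP1}. Your extra remarks on the variable bookkeeping and the diagonal normalization are harmless additions but not needed.
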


\subsection{Canonical bases of U}
We first prove two lemmas used later. 
\begin{lemma}
\label{lemma-qbinomial}
Let $l\in I_{(m)}$ and $\epsilon=+$.
We have
\begin{eqnarray}
\label{qbinomial}
\sum_{\alpha'\le\gamma\le\alpha}q^{2|\gamma|-|\alpha|-|\alpha'|}
=\genfrac{[}{]}{0pt}{}{m}{(m-l)/2}
\end{eqnarray}
where $\alpha=\eta(l)$ and $\alpha'=\zeta(l)$.
\end{lemma}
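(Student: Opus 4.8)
The plan is to identify the left-hand side of (\ref{qbinomial}) with the $q$-weighted (by area) enumeration of lattice paths inside a rectangle, and then to recognize that enumeration as the symmetric Gaussian binomial coefficient. Write $k:=(m-l)/2$, so that both $\alpha=\eta(l)$ and $\alpha'=\zeta(l)$ are paths of length $m$ with exactly $k$ up-steps and $m-k$ down-steps, sharing a common endpoint. By construction $\alpha'=\zeta(l)$ takes all of its up-steps first and is therefore the \emph{highest} such path, while $\alpha=\eta(l)$ takes all of its down-steps first and is the \emph{lowest}. In the order associated with $\epsilon=+$ (where $\le$ means ``above''), these are respectively the smallest and the largest path with this endpoint, so the condition $\alpha'\le\gamma\le\alpha$ selects \emph{all} paths $\gamma$ of length $m$ with that endpoint. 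These paths are in the standard bijection with Young diagrams $\lambda$ contained in the $k\times(m-k)$ rectangle, sending $\gamma$ to the shifted skew diagram $\lambda(\gamma)/\lambda(\alpha')$ lying between the top path $\alpha'$ and $\gamma$.

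Granting this bijection, I would set $|\lambda|:=|\gamma|-|\alpha'|\ge 0$, the number of squares of $\lambda(\gamma)/\lambda(\alpha')$; as $\gamma$ runs from $\alpha'$ down to $\alpha$, the partition $\lambda$ sweeps out every diagram in the box, from the empty one to the full rectangle. Two geometric facts then drive the computation: first, $|\lambda|$ takes each integer value in $\{0,\ldots,k(m-k)\}$ via the box-partition correspondence, and second, the total region between the two extreme paths is exactly the $k\times(m-k)$ rectangle, so $|\alpha|-|\alpha'|=k(m-k)$. Using these, the exponent rewrites cleanly:
\begin{eqnarray*}
2|\gamma|-|\alpha|-|\alpha'|
&=&2\bigl(|\gamma|-|\alpha'|\bigr)-\bigl(|\alpha|-|\alpha'|\bigr) \\
&=&2|\lambda|-k(m-k),
\end{eqnarray*}
so the left-hand side of (\ref{qbinomial}) becomes $\sum_{\lambda\subseteq k\times(m-k)}q^{2|\lambda|-k(m-k)}$.

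To finish, I would invoke the symmetric $q$-binomial identity. The classical Gaussian binomial coefficient satisfies $\binom{m}{k}_{q^2}=\sum_{\lambda\subseteq k\times(m-k)}q^{2|\lambda|}$, while a direct computation from $[n]=q^{-(n-1)}(1+q^2+\cdots+q^{2(n-1)})$ gives the normalization $\genfrac{[}{]}{0pt}{}{m}{k}=q^{-k(m-k)}\binom{m}{k}_{q^2}$. Combining,
\begin{eqnarray*}
\sum_{\lambda\subseteq k\times(m-k)}q^{2|\lambda|-k(m-k)}
&=&q^{-k(m-k)}\binom{m}{k}_{q^2} \\
&=&\genfrac{[}{]}{0pt}{}{m}{k}
=\genfrac{[}{]}{0pt}{}{m}{(m-l)/2},
\end{eqnarray*}
which is the desired identity.

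I expect the main obstacle to be the two geometric facts in the second paragraph: verifying within the paper's $45^{\circ}$-rotated \emph{shifted} Ferrers setup that $\alpha'$ and $\alpha$ are genuinely the extreme paths (so every intermediate $\gamma$ is counted exactly once), and that $\lambda(\gamma)/\lambda(\alpha')$ sweeps out precisely the $k\times(m-k)$ box with $|\alpha|-|\alpha'|=k(m-k)$. This amounts to a careful reading of the definition of $S(\cdot)$, checking in particular that the triangular constraint $|j|<i$ never truncates the relevant region for paths with this endpoint. Once the geometry is pinned down, the remaining work is the purely formal rewriting of the exponent and the routine Gaussian-binomial computation.
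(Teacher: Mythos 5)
Your argument is correct, but it takes a genuinely different route from the paper. You identify the index set $\{\gamma:\alpha'\le\gamma\le\alpha\}$ with all lattice paths having the common endpoint of $\zeta(l)$ and $\eta(l)$, pass to the standard bijection with partitions $\lambda$ in the $k\times(m-k)$ box (where $k=(m-l)/2$), rewrite the exponent as $2|\lambda|-k(m-k)$ using $|\alpha|-|\alpha'|=k(m-k)$, and then invoke the area generating function $\binom{m}{k}_{q^2}=\sum_{\lambda}q^{2|\lambda|}$ together with the normalization $\genfrac{[}{]}{0pt}{}{m}{k}=q^{-k(m-k)}\binom{m}{k}_{q^2}$; all of these steps check out, including the point you flag that the triangular constraint $|j|<i$ in the definition of $S(\cdot)$ never cuts into the region between two paths issuing from the origin, since any square strictly between them at abscissa $i$ has $|j|\le i-1$. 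The paper instead proceeds by induction on $m$: it splits the paths $\gamma$ according to whether the first step is $+$ or $-$, observes that each half contributes $q^{-(m+l)/2}\genfrac{[}{]}{0pt}{}{m-1}{(m-l)/2-1}$ or $q^{(m-l)/2}\genfrac{[}{]}{0pt}{}{m-1}{(m-l)/2}$ by the induction hypothesis, and concludes via the balanced $q$-Pascal recurrence. Your approach buys a closed-form, bijective explanation of why the symmetric $q$-binomial appears (and makes the symmetry under $q\mapsto q^{-1}$ transparent), at the cost of having to pin down the box-partition correspondence and the conversion between the balanced and classical Gaussian binomials; the paper's induction is shorter to write down and stays entirely within the path formalism already set up, but is less conceptually revealing.
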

\begin{proof} 
We prove the lemma by induction. When $m=1$ and $m=2$, 
the lemma is obviously true.
We assume that the lemma holds true for all $m'<m$ and all $l\in I_{(m')}$.
A path $\gamma$ starts with $+$ or $-$.
Let $P_+$ (resp. $P_-$) be the set of $\gamma$, $\alpha'\le\gamma\le\alpha$, 
starting $+$ (resp. $-$).
We define two paths: 
$\alpha_1:=+\underbrace{-\cdots-}_{(m+l)/2}\underbrace{+\cdots+}_{(m-l)/2-1}$ 
and 
$\alpha'_1:=-\underbrace{+\cdots+}_{(m-l)/2}\underbrace{-\cdots-}_{(m+l)/2-1}$.
Then we have 
\begin{eqnarray*}
\sum_{\gamma\in P_{+}}q^{2|\gamma|-|\alpha|-|\alpha'|}
&=&
q^{-(m+l)/2}\sum_{\gamma\in P_{+}}q^{2|\gamma|-|\alpha_1|-|\alpha'|} \\
&=&q^{-(m+l)/2}\genfrac{[}{]}{0pt}{}{m-1}{(m-l)/2-1}
\end{eqnarray*}
and 
\begin{eqnarray*}
\sum_{\gamma\in P_{-}}q^{2|\gamma|-|\alpha|-|\alpha'|}
&=&
q^{(m-l)/2}\sum_{\gamma\in P_{-}}q^{2|\gamma|-|\alpha|-|\alpha'_1|} \\
&=&q^{(m-l)/2}\genfrac{[}{]}{0pt}{}{m-1}{(m-l)/2}
\end{eqnarray*} 
where we have used the induction assumption. 
The sum of two contributions is the right hand side of Eqn.(\ref{qbinomial}).
\end{proof}

\begin{lemma}
\label{lemma-PP}
Let $\alpha=\eta(\mathbf{l})$, $\alpha'=\zeta(\mathbf{l})$ and 
$\beta=\eta(\mathbf{k})$ with $\mathbf{k,l}\in I_{\mathbf{m}}$. 
We have 
\begin{eqnarray}
\label{PP-relation}
P^+_{\alpha,\beta}(q^{-1})
=
\sum_{\alpha'\le\gamma\le\alpha}q^{|\gamma|-|\alpha'|}
\prod_{i=1}^{n}\genfrac{[}{]}{0pt}{}{m_i}{(m_i-l_i)/2}^{-1}	
P^{+}_{\gamma,\beta}(q^{-1}).
\end{eqnarray}
\end{lemma}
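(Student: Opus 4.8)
The plan is to reduce the stated identity to a single monomial relation between parabolic Kazhdan--Lusztig polynomials and then to establish that relation combinatorially. Write $b:=\prod_{i=1}^{n}\genfrac{[}{]}{0pt}{}{m_i}{(m_i-l_i)/2}$. Since $\alpha=\eta(\mathbf{l})$ and $\alpha'=\zeta(\mathbf{l})$ agree at every block boundary, a path $\gamma$ with $\alpha'\le\gamma\le\alpha$ passes through the same lattice points at the boundaries and varies independently between $\zeta'(l_i)$ and $\eta'(l_i)$ inside each block; moreover the cells counted by $|\alpha|-|\gamma|$ and by $|\gamma|-|\alpha'|$ all lie strictly inside the blocks, so the exponent $2|\gamma|-|\alpha|-|\alpha'|$ is additive over the blocks. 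Applying Lemma~\ref{lemma-qbinomial} in each block and multiplying gives $\sum_{\alpha'\le\gamma\le\alpha}q^{2|\gamma|-|\alpha|-|\alpha'|}=b$. Consequently (\ref{PP-relation}) is equivalent to the claim that, for every $\gamma$ with $\alpha'\le\gamma\le\alpha$,
\[
P^{+}_{\gamma,\beta}(q^{-1})=q^{|\gamma|-|\alpha|}P^{+}_{\alpha,\beta}(q^{-1}),
\]
since substituting this into the right-hand side of (\ref{PP-relation}) and invoking the $q$-binomial identity above collapses the sum to $b^{-1}\cdot b\cdot P^{+}_{\alpha,\beta}=P^{+}_{\alpha,\beta}$. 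I will call the displayed equation the \emph{monomial claim}.

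Next I would pin down the regime and translate the monomial claim into the ballot-strip model. For $P^{+}_{\alpha,\beta}$ to occur we have $\alpha\le_{+}\beta$, and since each $\gamma$ satisfies $\gamma\le\alpha\le\beta$ in the $\epsilon=+$ order, every $P^{+}_{\gamma,\beta}$ is defined and the exponent $|\gamma|-|\alpha|\le0$, consistent with $P^{+}\in\mathbb{Z}[q^{-1}]$. By Lemma~\ref{lemma-QP1} together with Lemma~\ref{transposeQ} the claim becomes $Q^{I,-}_{\beta,\gamma}(q^{-1})=q^{-(|\alpha|-|\gamma|)}Q^{I,-}_{\beta,\alpha}(q^{-1})$, where the configurations now live in the shifted skew diagram $\lambda(\beta)/\lambda(\gamma)$. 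As $\gamma$ rises from $\alpha$ towards $\alpha'$ this diagram grows by exactly the region $\lambda(\alpha)/\lambda(\gamma)$, whose number of cells is $|\alpha|-|\gamma|$; the monomial claim thus says that filling this extra region multiplies the generating function by the pure monomial $q^{-(|\alpha|-|\gamma|)}$, i.e.\ contributes one factor $t^{-1}\mapsto q^{-1}$ per cell and decouples from whatever configuration occupies $\lambda(\alpha)/\lambda(\beta)$.

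To prove this it suffices, by induction on $|\alpha|-|\gamma|$, to treat a single elementary step: let $\gamma$ and $\gamma''$ differ by one interior within-block swap $-+\mapsto+-$, so that $\gamma$ lies one cell above $\gamma''$ and $|\gamma|=|\gamma''|-1$. I would prove $Q^{I,-}_{\beta,\gamma}=q^{-1}Q^{I,-}_{\beta,\gamma''}$ by a weight-preserving bijection between $\mathrm{Conf}^{I}(\beta,\gamma'')$ and $\mathrm{Conf}^{I}(\beta,\gamma)$: the single extra cell sits strictly inside a block, below $\gamma$ and hence below the anchor squares, so by the ``a unit square piles on top of any strip'' property of Rule I (already exploited in the proof of Theorem~\ref{InvQ}) it is forced to be an isolated $(1,0)$ strip of weight $t^{-1}$, and deleting it returns a configuration for $(\beta,\gamma'')$ with all remaining strips unchanged. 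Iterating over the $|\alpha|-|\gamma|$ cells of $\lambda(\alpha)/\lambda(\gamma)$ yields the monomial claim, and with the first paragraph the lemma follows.

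The step I expect to be the main obstacle is precisely this decoupling: one must check that the extra interior cell can neither serve as the anchor cell of a taller vertical strip nor fuse with a horizontal strip already present in $\lambda(\alpha)/\lambda(\beta)$ in a way that would alter its weight. This is where the hypotheses $\alpha=\eta(\mathbf{l})$ and $\beta=\eta(\mathbf{k})$ (both of $\eta$-type, hence minimal in their block classes) and ``$\gamma$ in the block class of $\mathbf{l}$'' are essential: they force the swap to occur in the interior of a block whose boundary heights are shared by $\gamma$ and $\alpha$, so the added cell never reaches the anchor row and the forced factor $t^{-1}$ is genuinely independent of the strips lying above it.
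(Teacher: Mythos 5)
Your proposal is correct and takes essentially the same route as the paper's proof: both arguments hinge on the monomial relation $P^{+}_{\gamma,\beta}=q^{|\gamma|-|\alpha|}P^{+}_{\alpha,\beta}$ for $\alpha'\le\gamma\le\alpha$, established by showing that Rule I forces the region $\lambda(\alpha)/\lambda(\gamma)$ to carry only unit squares of weight $t^{-1}$ (the paper argues this globally via the absence of a convex corner $+-$ below that region, you argue it by an inductive single-cell deletion), and then collapse the sum using Lemma~\ref{lemma-qbinomial}. Your explicit block-by-block justification of the $q$-binomial identity for general $\mathbf{l}$ is a small point the paper leaves implicit, but the substance of the two proofs is the same.
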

\begin{proof}
We first show that if $\alpha'\le\gamma\le\alpha$ ($\epsilon=+$), 
$P^{+}_{\gamma,\beta}=q^{-|\alpha|+|\gamma|}P^{+}_{\alpha,\beta}$.
Recall that $P^{+}_{\alpha,\beta}=Q^{I,+}_{\alpha,\beta}$ 
(Lemma~\ref{lemma-QP1}). 
The region between $\alpha$ and $\beta$ is filled with ballot strips 
by Rule I. 
One can pile a ballot strip on top of a ballot strip of length $(l,l')$ 
with $l+l'\ge1$ if and only if 
there is a convex path $+-$ below the strip.
We consider the region between $\alpha'$ and $\beta$.
From the definition of $\eta$, there is no convex path below the 
region $\lambda(\alpha)/\lambda(\gamma)$.
Thus there is no ballot strips of length $(l,l')$ with $l+l'\ge1$ 
in this region. 
Since the weight of a unit square is $t^{-1}$, we have 
$P^{+}_{\gamma,\beta}=q^{-|\alpha|+|\gamma|}P^{+}_{\alpha,\beta}$.

The right hand side of Eqn.(\ref{PP-relation}) is 
\begin{eqnarray*}
\sum_{\alpha'\le\gamma\le\alpha}q^{|\gamma|-|\alpha'|}
\prod_{i=1}^{n}\genfrac{[}{]}{0pt}{}{m_i}{(m_i-l_i)/2}^{-1}	
P^{+}_{\gamma,\beta}(q^{-1})
&=&
\prod_{i=1}^{n}
\genfrac{[}{]}{0pt}{}{m_i}{(m_i-l_i)/2}^{-1}	
P^{+}_{\alpha,\beta}(q^{-1})
\sum_{\alpha'\le\gamma\le\alpha}q^{2|\gamma|-|\alpha|-|\alpha'|}  \\
&=&
P^{+}_{\alpha,\beta}
\end{eqnarray*}
where we have used Lemma~\ref{lemma-qbinomial}. 
This completes the proof.
\end{proof}

The following theorem is the dual statement of Theorem~\ref{Thm-dC}. 
Given $\mathbf{k}\in I_{\mathbf{m}}$, let $(\kappa_1,\ldots,\kappa_N)$ 
be the sequence defined just above Theorem~\ref{Thm-dC}.
\begin{theorem}
The canonical basis $v_{m_1-2k_1}\varclubsuit\cdots\varclubsuit v_{m_n-2k_n}$
is given by  
\begin{eqnarray}
\label{projection-CB}
v_{m_1-2k_1}\varclubsuit\cdots\varclubsuit v_{m_n-2k_n}
=(\pi_1\otimes\cdots\otimes\pi_n)
v_{\kappa_1}\varclubsuit\cdots\varclubsuit v_{\kappa_N}.
\end{eqnarray}
\end{theorem}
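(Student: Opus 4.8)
The plan is to verify that the right-hand side $(\pi_{m_1}\otimes\cdots\otimes\pi_{m_n})(v_{\kappa_1}\varclubsuit\cdots\varclubsuit v_{\kappa_N})$ satisfies the two conditions of Theorem~\ref{theorem-canonical3} that characterize the canonical basis of $U$, namely invariance under $\psi^{\iota}_{+}$ together with a triangular expansion in the standard basis whose diagonal coefficient is $1$ and whose off-diagonal coefficients lie in $q^{-1}\mathbb{Z}[q^{-1}]$ and are supported on indices $\le_{+}$ the given one. This is the sign-$+$ mirror of the argument for Theorem~\ref{Thm-dC}. Throughout I write $\pi:=\pi_{m_1}\otimes\cdots\otimes\pi_{m_n}$.

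First I would establish $\psi^{\iota}_{+}$-invariance, following the proof of Theorem~\ref{Thm-dC} with every $-$ replaced by $+$. By the $\epsilon=+$ analogue of Lemma~\ref{lemma-KL-dC} (the Kazhdan--Lusztig basis of $\mathcal{M}^{+}_N$ coincides, via \cite[Theorem 5.8]{BaoWang13}, with the canonical basis of $V_1^{\otimes N}$ relative to $\psi^{\iota}_{+}$), the element $v_{\kappa_1}\varclubsuit\cdots\varclubsuit v_{\kappa_N}$ is $\psi^{\iota}_{+}$-invariant. I would expand it in the $U_q(\mathfrak{sl}_2)$-canonical basis $v_{\kappa'_1}\diamondsuit\cdots\diamondsuit v_{\kappa'_N}$, write $\psi^{\iota}_{+}=\Delta^{n-1}(\Upsilon_{+})\Theta^{(N)}_{+}\psi\otimes\cdots\otimes\psi$, and then apply $\pi$. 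The two facts to check are that $\pi$ commutes with $\Delta^{n-1}(\Upsilon_{+})$, because $\pi$ is a morphism of $U_q(\mathfrak{sl}_2)$-modules while $\Upsilon_{+}$ lies in a completion of $U_q(\mathfrak{sl}_2)$, and that $\pi$ sends each $v_{\kappa'_1}\diamondsuit\cdots\diamondsuit v_{\kappa'_N}$ either to the projected canonical basis $v_{k'_1}\diamondsuit\cdots\diamondsuit v_{k'_n}$ or to zero. The latter is the standard-basis analogue of the vanishing relation used for Theorem~\ref{Thm-dC}: each $\pi_{m_i}$ annihilates the antisymmetric, lower-irreducible combinations inside its block. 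These two facts give invariance of $\pi(v_{\kappa_1}\varclubsuit\cdots\varclubsuit v_{\kappa_N})$.

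For the triangularity I would compute the standard-basis expansion directly. By Corollary~\ref{cor-RP} applied to $V_1^{\otimes N}$ (the case of all $m_i=1$), the coefficient of a standard monomial in $v_{\kappa_1}\varclubsuit\cdots\varclubsuit v_{\kappa_N}$ is a Kazhdan--Lusztig polynomial $P^{+}_{\eta(\cdot),\eta(\cdot)}(q^{-1})$. Since each $\pi_{m_i}$ maps a weight monomial of its block to a scalar multiple of a single vector, applying $\pi$ collapses all $\kappa'$ with the same block sums into one coefficient, producing a normalized refinement sum of $P^{+}$ over intermediate paths $\gamma$ running between $\zeta$ and $\eta$ of the relevant sequence. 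Lemma~\ref{lemma-qbinomial} evaluates the within-block normalization as the product of $q$-binomial coefficients, and Lemma~\ref{lemma-PP} then identifies the collapsed coefficient with the polynomial $R$ of Corollary~\ref{cor-RP}, i.e. with the standard-basis coefficient of $v_{m_1-2k_1}\varclubsuit\cdots\varclubsuit v_{m_n-2k_n}$. Because a vector is determined by its standard-basis expansion, this matching of coefficients proves the asserted equality, and it simultaneously delivers the diagonal coefficient $1$, the integrality in $q^{-1}\mathbb{Z}[q^{-1}]$, and the support on indices $\le_{+}$.

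The hard part will be the interface between projection and involution. One must confirm that $\pi\circ\Delta^{n-1}(\Upsilon_{+})=\Delta^{n-1}(\Upsilon_{+})\circ\pi$ really holds and that the image-or-zero dichotomy survives verbatim for the standard basis, since both the comultiplication $\Delta_{+}$ and the projector normalizations differ from the dual setting of Theorem~\ref{Thm-dC}. Equivalently, in the coefficient-matching route the delicate point is that the block collapse yields precisely the $q$-binomial normalization of Lemma~\ref{lemma-qbinomial} with no spurious powers of $q$ and with the path indices correctly aligned, so that Lemma~\ref{lemma-PP} applies cleanly; verifying this normalization is where the real work lies.
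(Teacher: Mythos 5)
Your second paragraph --- expanding $v_{\kappa_1}\varclubsuit\cdots\varclubsuit v_{\kappa_N}$ in the standard basis via Corollary~\ref{cor-RP}, collapsing each block under $\pi$ into the $q$-binomial-normalized sum handled by Lemma~\ref{lemma-qbinomial}, and invoking Lemma~\ref{lemma-PP} to identify the collapsed coefficients with the standard-basis coefficients $R_{\mathbf{k},\mathbf{l}}$ of $v_{m_1-2k_1}\varclubsuit\cdots\varclubsuit v_{m_n-2k_n}$ --- is precisely the paper's proof, and it is correct. The preliminary $\psi^{\iota}_{+}$-invariance argument is therefore redundant (as you in effect observe, matching standard-basis expansions already forces equality), and the paper omits it.
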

\begin{proof}
We compute the right hand side of Eqn.(\ref{projection-CB}) briefly.
We expand $v_{\kappa_1}\varclubsuit \cdots\varclubsuit v_{\kappa_N}$ 
in terms of the standard basis $v_{\kappa'_1}\otimes\cdots\otimes v_{\kappa'_N}$.
We also expand $v_{m_1-2k_1}\varclubsuit \cdots\varclubsuit v_{m_n-2k_n}$ 
in terms of the standard basis $v_{m_1-2l_1}\otimes\cdots\otimes v_{m_n-2l_n}$. 
Let $\alpha=\eta(m_1-l_1,\ldots,m_n-l_n)$, $\alpha'=\zeta(m_1-l_1,\ldots,m_n-l_n)$,   
$\beta=\eta(\kappa)$ and $\gamma=\eta(\kappa')$. 
The coefficient is nothing but the Kazhdan--Lusztig polynomials $P^{+}_{\gamma,\beta}$. 
The projection (\ref{projection-dual}) gives the factor 
$q^{|\gamma|-|\alpha'|}\prod_{i=1}^{n}\genfrac{[}{]}{0pt}{}{m_i}{(m_i-l_i)/2}^{-1}$.  	
Thus the sum of all the contributions is reduced to the right hand side of 
Eqn.(\ref{PP-relation}). 
From Lemma~\ref{lemma-PP} and Corollary~\ref{cor-RP}, we obtain 
Eqn.(\ref{projection-CB}).
\end{proof}

\section{Integral structure}
\label{Sec:IS}
\subsection{Integral structure of tensor products}
From the definitions of an arc and a dashed arc, we have the following lemma.
\begin{lemma}
\label{moveuparrow}
We have two identities:
\begin{eqnarray}
\label{moveuparroweq1}
\raisebox{-0.3\totalheight}{
\begin{tikzpicture}
\draw(-0.8,0)--(-0.8,-0.7)(-0.87,-0.5)--(-0.8,-0.7)--(-0.73,-0.5);
\draw(0,0)--(0,-0.7)(-0.07,-0.2)--(0,0)--(0.07,-0.2);
\end{tikzpicture}}
&=&
\raisebox{-0.5\totalheight}{
\begin{tikzpicture}
\draw(0,0)..controls(0,-1)and(0.8,-1)..(0.8,0);
\end{tikzpicture}}
\quad+q^{-1}\ 
\raisebox{-0.3\totalheight}{
\begin{tikzpicture}
\draw(-0.8,0)--(-0.8,-0.7)(-0.87,-0.2)--(-0.8,0)--(-0.73,-0.2);
\draw(0,0)--(0,-0.7)(-0.07,-0.5)--(0,-0.7)--(0.07,-0.5);
\end{tikzpicture}}, \\
\label{moveuparroweq2}
\raisebox{-0.3\totalheight}{
\begin{tikzpicture}
\draw(-0.8,0)--(-0.8,-0.7)(-0.87,-0.5)--(-0.8,-0.7)--(-0.73,-0.5);
\draw(0,0)--(0,-0.7)(-0.07,-0.5)--(0,-0.7)--(0.07,-0.5);
\end{tikzpicture}}
&=&
\raisebox{-0.5\totalheight}{
\begin{tikzpicture}
\draw[dashed,thick](0,0)..controls(0,-1)and(0.8,-1)..(0.8,0);
\end{tikzpicture}}
\quad
+q^{-1}\ 
\raisebox{-0.3\totalheight}{
\begin{tikzpicture}
\draw(-0.8,0)--(-0.8,-0.7)(-0.87,-0.2)--(-0.8,0)--(-0.73,-0.2);
\draw(0,0)--(0,-0.7)(-0.07,-0.2)--(0,0)--(0.07,-0.2);
\end{tikzpicture}}.
\end{eqnarray}
\end{lemma}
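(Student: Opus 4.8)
The plan is to verify each identity directly by translating every diagram into its corresponding vector in $V_1\otimes V_1$ through the building-block dictionary of Eqns.~(\ref{building-1})--(\ref{building-3}), and then checking the resulting algebraic equality. Both sides of each identity are vectors in the same two-fold tensor product, so it suffices to expand them in the standard basis $\{v^{\pm1}\otimes v^{\pm1}\}$ and compare coefficients.

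First I would read off the orientations of the arrows on each side. For Eqn.~(\ref{moveuparroweq1}), the left-hand side is a down arrow followed by an up arrow, which corresponds to $v^{-1}\otimes v^{1}$. On the right-hand side, the solid arc expands, by Eqn.~(\ref{building-1}), to $v^{-1}\otimes v^{1}-q^{-1}v^{1}\otimes v^{-1}$, while the second term is $q^{-1}$ times an up arrow followed by a down arrow, i.e. $q^{-1}\,v^{1}\otimes v^{-1}$. Adding these, the two $v^{1}\otimes v^{-1}$ contributions cancel and one is left with $v^{-1}\otimes v^{1}$, matching the left-hand side. The second identity Eqn.~(\ref{moveuparroweq2}) is handled in the same way: the left-hand side is two down arrows, giving $v^{-1}\otimes v^{-1}$, and the right-hand side is the dashed arc $v^{-1}\otimes v^{-1}-q^{-1}v^{1}\otimes v^{1}$ from Eqn.~(\ref{building-2}) plus $q^{-1}$ times two up arrows, namely $q^{-1}\,v^{1}\otimes v^{1}$; again the $v^{1}\otimes v^{1}$ terms cancel and the equality follows.

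Since both relations are purely local --- they equate two vectors in a single $V_1\otimes V_1$ factor --- the only point requiring a word of care is that they are intended to be applied inside a larger diagram, that is, after tensoring on both sides with the remaining strands and composing with the projection $\pi$. This is immediate because the comultiplication and the projection act factor-wise, so a local rewriting of two adjacent strands extends verbatim to the whole tensor product. I do not anticipate any genuine obstacle: the entire content of the lemma is the substitution of the definitions in Eqns.~(\ref{building-1}) and~(\ref{building-2}), and in the forward direction each identity simply records that moving an up arrow to the left past a down arrow produces a solid arc (resp. a dashed arc, when both strands point down) together with a $q^{-1}$-correction term in which the two arrows have been swapped.
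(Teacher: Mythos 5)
Your proof is correct and is exactly the verification the paper intends: the paper offers no written proof beyond the remark that the lemma follows "from the definitions of an arc and a dashed arc," and your substitution of Eqns.~(\ref{building-1})--(\ref{building-2}) followed by cancellation of the $q^{-1}$ cross terms is that one-line check. Your reading of the arrow orientations and the remark about locality under tensoring and projection are both accurate.
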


\begin{theorem}
\label{thm-decomp-dual-dual}
The coefficients of the decomposition
\begin{eqnarray*}
v^{l_1}\heartsuit\cdots\heartsuit v^{l_N}
=
\sum_{\mathbf{k}}c^{\mathbf{l}}_{\mathbf{k}}
v^{k_1}\varspadesuit\cdots\varspadesuit v^{k_N}
\end{eqnarray*}
belong to $q^{-1}\mathbb{N}[q^{-1}]$ except 
$c_{\mathbf{l}}^{\mathbf{l}}=1$.
\end{theorem}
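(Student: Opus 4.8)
The plan is to prove this by \emph{diagrammatic rewriting}, using the local identities of Lemma~\ref{moveuparrow} together with the building block~(\ref{building-3}). The starting point is that the diagram for $v^{l_1}\heartsuit\cdots\heartsuit v^{l_N}$ (rules (A), (B), see~\cite[Section 2.3]{FK97}) and the diagram for $v^{l_1}\varspadesuit\cdots\varspadesuit v^{l_N}$ (rules (A)--(D), Theorem~\ref{Thm-dC}) arise from the \emph{same} non-crossing pairing of adjacent down--up arrows into simple arcs; hence they carry identical simple arcs and identical unmatched up arrows, and any projector $\pi$ is common to both. The two diagrams differ \emph{only} in the treatment of the unmatched down arrows: in the $\heartsuit$-diagram each is the vector $v^{-1}$, whereas in the $\varspadesuit$-diagram the rightmost one carries a star and the remaining ones are paired from the right into dashed arcs. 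Thus the whole problem reduces to rewriting the block of unmatched down arrows from its $\heartsuit$-form into $\varspadesuit$-forms.

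First I would fix $\mathbf l$, apply rules (A), (B), and list the unmatched down arrows. Processing them from the right exactly as in rules (C), (D), I would use three moves, each carrying only the coefficients $1$ and $q^{-1}$: the rightmost single down arrow is rewritten by~(\ref{building-3}) as $(\text{star})+q^{-1}(\text{up arrow})$; a pair of down arrows is rewritten by~(\ref{moveuparroweq2}) as $(\text{dashed arc})+q^{-1}(\text{up})(\text{up})$; and whenever a move creates an up arrow sitting to the right of a down arrow, I carry it leftward by~(\ref{moveuparroweq1}), namely $(\text{down})(\text{up})=(\text{simple arc})+q^{-1}(\text{up})(\text{down})$, until it is absorbed into the block of unmatched up arrows. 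The \emph{principal} branch, in which one always selects the star / dashed-arc / simple-arc summand and never the $q^{-1}$-summand, reassembles precisely the $\varspadesuit$-diagram of $\mathbf l$ itself, so $c^{\mathbf l}_{\mathbf l}=1$; every other branch contributes at least one factor $q^{-1}$, and the sign sequence $\mathbf k$ of the resulting term is obtained from $\mathbf l$ by flipping to $+1$ exactly those down arrows that became up arrows.

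Because all local coefficients are $+1$ or $+q^{-1}$, every collected coefficient is a sum of products of $+1$'s and $+q^{-1}$'s, hence lies in $\mathbb N[q^{-1}]$ with no cancellation, and each non-principal branch carries at least one factor $q^{-1}$. This gives $c^{\mathbf l}_{\mathbf k}\in q^{-1}\mathbb N[q^{-1}]$ for $\mathbf k\neq\mathbf l$, together with $c^{\mathbf l}_{\mathbf l}=1$, which is exactly the assertion. Note that positivity is \emph{manifest} in this model precisely because the three moves never introduce a minus sign, which is the whole reason for rewriting rather than invoking bar-invariance or composing the (signed, monomial) standard-basis transition matrices, where cancellation would be opaque.

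The step I expect to be the main obstacle is verifying that each \emph{terminal} configuration is genuinely the canonical $\varspadesuit$-diagram of its sign sequence $\mathbf k$, and not merely \emph{some} vector of the correct triangular form. The delicate point is that an unmatched down arrow need not be adjacent to the up arrow created to its right: completed (possibly nested) simple arcs may lie between them, so carrying the up arrow leftward by~(\ref{moveuparroweq1}) must be reconciled with the already-formed arcs, and one must check that the arcs produced coincide with the unique non-crossing matching that rules (A), (B) assign to $\mathbf k$, while the down-arrow remnant matches rules (C), (D). I would establish this combinatorial claim from the consistency of the diagram evaluation in~\cite{FK97,Shi14} (independence of the evaluated vector from how non-crossing arcs are drawn), which simultaneously yields confluence of the rewriting, so that the coefficient $c^{\mathbf l}_{\mathbf k}$ is well defined; the identification of each terminal vector with $v^{k_1}\varspadesuit\cdots\varspadesuit v^{k_N}$ then follows from the uniqueness characterization in Theorem~\ref{Thm-dC}. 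Since all contributions are positive, nothing further is needed once well-definedness is granted.
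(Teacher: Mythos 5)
Your proposal is correct and follows essentially the same route as the paper: the paper's proof also expands the $\heartsuit$-diagram into $\varspadesuit$-diagrams by successively applying Lemma~\ref{moveuparrow} (converting adjacent down-arrow pairs into dashed arcs and carrying up arrows leftward), observing that every local coefficient is $1$ or $q^{-1}$ so that positivity and the triangularity $c^{\mathbf l}_{\mathbf l}=1$ are manifest. Your additional care about confluence of the rewriting and about identifying each terminal configuration with the genuine canonical $\varspadesuit$-diagram via the uniqueness in Theorem~\ref{Thm-dC} addresses a point the paper's proof leaves implicit, but it is a refinement of, not a departure from, the same argument.
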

\begin{proof}
The diagram $D$ of $v^{l_1}\heartsuit\cdots\heartsuit v^{l_N}$ consists
of arcs, up arrows and down arrows.
The down arrows are right to the up arrows in $D$. 
To obtain the expansion of $D$ in terms of 
$D':=v^{k_1}\varspadesuit\cdots\varspadesuit v^{k_N}$, 
we successively use Lemma~\ref{moveuparrow}.
We change two adjacent down arrows into a dashed arc, and move 
up arrows to the left of down arrows.
The coefficients in the right hand sides of Eqns.(\ref{moveuparroweq1}) 
and (\ref{moveuparroweq2})
are one for an arc and a dashed arc, and are in $q^{-1}\mathbb{N}[q^{-1}]$ 
for otherwise.
Thus, the coefficient $c_{\mathbf{l}}^{\mathbf{l}}=1$ and other coefficients 
are in $q^{-1}\mathbb{N}[q^{-1}]$.
This completes the proof.
\end{proof}

\begin{remark}
\label{rem-dec}
In the proof of Theorem~\ref{thm-decomp-dual-dual}, 
the numbers of arcs and up arrows in $D'$ may be more than those in $D$.  
However, if $D$ has an arc (resp. an up arrow), then $D'$ has also an arc 
(resp. an up arrow) in the same position. 
The coefficients of decomposition of 
$v^{l_1}\heartsuit\cdots\heartsuit v^{l_N}$
is equivalent to the decomposition of $v^{-1}\otimes\cdots\otimes v^{-1}$ where 
the number of $v^{-1}$ is the same as the number of down arrows in $D$.
This implies that 
$c_{\mathbf{l}}^{\mathbf{k}}=R_{\mathbf{k}_0,\mathbf{l}_0}$ for some $\mathbf{k}_0$ 
and $\mathbf{l}_0$.
\end{remark}

\begin{theorem}
\label{thm-dec-dual-dual2}
The coefficients of the decomposition
\begin{eqnarray*}
v^{l_1}\varspadesuit\cdots\varspadesuit v^{l_N}
=
\sum_{\mathbf{k}}c_{\mathbf{k}}^{\mathbf{l}}
v^{k_1}\heartsuit\cdots\heartsuit v^{k_N},
\end{eqnarray*}
belong to $q^{-1}\mathbb{Z}[q^{-1}]$ except 
$c_{\mathbf{l}}^{\mathbf{l}}=1$.
\end{theorem}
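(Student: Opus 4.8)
The plan is to deduce this statement from Theorem~\ref{thm-decomp-dual-dual} by pure matrix inversion, since the two decompositions are mutually inverse changes of basis between the $\heartsuit$- and $\varspadesuit$-bases of $V_1^{\otimes N}$. Arranging the basis vectors of a fixed weight $w=\sum_i l_i$ as the entries of a column vector, Theorem~\ref{thm-decomp-dual-dual} reads $\heartsuit = C\,\varspadesuit$ for a transition matrix $C=(c^{\mathbf{l}}_{\mathbf{k}})$, and the matrix governing the present theorem is exactly $C^{-1}$. It therefore suffices to show that $C$ is invertible over $\mathbb{Z}[q,q^{-1}]$ and that $C^{-1}$ has $1$'s on the diagonal with off-diagonal entries in $q^{-1}\mathbb{Z}[q^{-1}]$.

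First I would fix $w$ and restrict to the finite index set $\{\mathbf{k}\in\{\pm1\}^N:\sum_i k_i=w\}$, partially ordered by $\le_{+}$ (comparison of partial sums). Both bases are unitriangular transforms of the standard basis for this order: the defining property of $v^{k_1}\heartsuit\cdots\heartsuit v^{k_N}$ places its off-diagonal standard-basis terms at indices $\mathbf{l}$ with $\mathbf{k}<_{+}\mathbf{l}$, while the defining property of $v^{k_1}\varspadesuit\cdots\varspadesuit v^{k_N}$ places its off-diagonal terms at $\mathbf{l}$ with $\mathbf{l}\le_{-}\mathbf{k}$. Matching the $\le_{+}$-minimal standard-basis term of $v^{l_1}\heartsuit\cdots\heartsuit v^{l_N}$ against the expansion $\sum_{\mathbf{k}}c^{\mathbf{l}}_{\mathbf{k}}\,v^{k_1}\varspadesuit\cdots\varspadesuit v^{k_N}$ then forces $c^{\mathbf{l}}_{\mathbf{l}}=1$ and $c^{\mathbf{l}}_{\mathbf{k}}=0$ unless $\mathbf{l}\le_{+}\mathbf{k}$, so $C$ is unitriangular for $\le_{+}$.

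With triangularity established, write $C=I+N$ with $N$ strictly upper-triangular, hence nilpotent, and with entries in $q^{-1}\mathbb{N}[q^{-1}]\subset q^{-1}\mathbb{Z}[q^{-1}]$ by Theorem~\ref{thm-decomp-dual-dual}. The series $C^{-1}=\sum_{j\ge0}(-N)^{j}$ terminates because $N$ is nilpotent, so $C^{-1}$ exists over $\mathbb{Z}[q,q^{-1}]$. Every entry of $N^{j}$ with $j\ge1$ is a finite $\mathbb{Z}$-linear combination of products of $j$ entries of $N$, hence lies in $q^{-j}\mathbb{Z}[q^{-1}]\subseteq q^{-1}\mathbb{Z}[q^{-1}]$; thus each off-diagonal entry of $C^{-1}$ lies in $q^{-1}\mathbb{Z}[q^{-1}]$ and each diagonal entry equals $1$. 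This yields the asserted integrality, and the alternating signs in $\sum_{j}(-N)^{j}$ make clear why the positivity of Theorem~\ref{thm-decomp-dual-dual} is lost and only $\mathbb{Z}$ survives.

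The only genuinely delicate point is the compatibility of the two triangular structures used in the second step: the $\heartsuit$-basis is triangular for $<_{+}$ while the $\varspadesuit$-basis is stated in terms of $\le_{-}$, and one must observe that, under the convention that $\mathbf{l}\le_{-}\mathbf{k}$ means precisely $\mathbf{k}\le_{+}\mathbf{l}$, these are the same direction, so that a single partial order simultaneously triangularizes both. Once this is noted the remainder is formal. Alternatively one could argue diagrammatically, inverting the local relations of Lemma~\ref{moveuparrow} to rewrite a $\varspadesuit$-diagram in the $\heartsuit$-basis; this reproduces the same result and manifestly generates the sign changes, but it is less transparent than the matrix-inversion argument above.
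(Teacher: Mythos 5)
Your overall strategy --- obtaining the theorem by inverting the unitriangular transition matrix of Theorem~\ref{thm-decomp-dual-dual} --- is coherent and genuinely different from the paper's proof, which instead argues directly on diagrams (expanding each dashed arc and each starred arrow via Eqns.~(\ref{building-2})--(\ref{building-3}) and then moving up arrows leftwards with Eqn.~(\ref{moveuparroweq1})). However, your first step contains a concrete error: you restrict to a fixed weight space $\{\mathbf{k}:\sum_i k_i=w\}$. The $\varspadesuit$-basis is \emph{not} compatible with the weight decomposition of $V_1^{\otimes N}$: a dashed arc equals $v^{-1}\otimes v^{-1}-q^{-1}v^{1}\otimes v^{1}$ and a starred arrow equals $v^{-1}-q^{-1}v^{1}$, both of which mix weights (this is unavoidable, since the generator of $U$ does not commute with $K$). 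Consequently the transition matrix $C$ is not block-diagonal in the weight grading; already for $N=2$ one has $v^{-1}\heartsuit v^{-1}=v^{-1}\varspadesuit v^{-1}+q^{-1}\,v^{1}\varspadesuit v^{1}$, so $c^{(-1,-1)}_{(1,1)}=q^{-1}\neq0$ although the total weights differ. Restricting to a weight block therefore discards genuine entries of $C$, and the set $\{v^{k_1}\varspadesuit\cdots\varspadesuit v^{k_N}:\sum_ik_i=w\}$ is not even contained in, let alone a basis of, the weight-$w$ subspace.

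The error is localized and easily repaired: work on all of $V_1^{\otimes N}$ with the partial order $\le_+$ on the full index set $\{\pm1\}^N$, which is already finite, so the weight restriction buys you nothing in any case. Your triangularity argument --- both bases are unitriangular transforms of the standard basis in the same direction, because $\mathbf{l}\le_-\mathbf{k}$ is by definition $\mathbf{k}\le_+\mathbf{l}$ --- goes through verbatim on the full index set, strict $\le_+$-triangularity of $N=C-I$ still gives nilpotency, and the terminating series $C^{-1}=\sum_{j\ge0}(-N)^j$ yields diagonal entries $1$ and off-diagonal entries in $q^{-1}\mathbb{Z}[q^{-1}]$ exactly as you argue. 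With that one correction the proof is complete; compared with the paper's two-line diagrammatic rewriting it is less explicit about the coefficients but makes structurally transparent why the positivity of Theorem~\ref{thm-decomp-dual-dual} degrades to mere integrality here, namely through the alternating signs of the inverse series.
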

\begin{proof}
We expand a dashed arc and a down arrow with a star in terms of 
up and down arrows where the coefficients of up arrows are in
$q^{-1}\mathbb{Z}[q^{-1}]$.
We move up arrows to the left by using Eqn.(\ref{moveuparroweq1}) 
in Lemma~\ref{moveuparrow}.
We have $c_{\mathbf{l}}^{\mathbf{l}}=1$ and 
$c_{\mathbf{l}}^{\mathbf{k}}\in q^{-1}\mathbb{Z}[q^{-1}]$.
\end{proof}

\begin{remark}
By a similar argument to Remark~\ref{rem-dec} and 
Theorem~\ref{thm-dec-dual-dual2}, the decomposition 
of $v^{l_1}\varspadesuit\cdots\varspadesuit v^{l_N}$ 
is divided into two steps: (1) decompose 
$v^{-1}\varspadesuit\cdots\varspadesuit v^{-1}$ into 
$v^{\kappa_1}\otimes\cdots\otimes v^{\kappa_N'}$.
(2) Decomposition of $v^{\kappa_1}\otimes\cdots\otimes v^{\kappa_N'}$
into $v^{\sigma_1}\heartsuit\cdots\heartsuit v^{\sigma_N'}$.
The contribution of step (1) in $c_{\mathbf{k}}^{\mathbf{l}}$ is 
$R^{\mathbf{k}_0,\mathbf{l}_0}$ for some $\mathbf{k}_0$ and $\mathbf{l}_0$. 
Similarly, the contribution of step (2) is $P^{A,+}_{\mathbf{k}_0,\mathbf{l}_1}$ 
where $P^{A,+}_{\mathbf{k}_0,\mathbf{l}_1}$
is the Kazhdan--Lusztig polynomial of type $A$ (see, {\it e.g.}, \cite{SZJ12}). 
Thus,  
$c_{\mathbf{k}}^{\mathbf{l}}
=
R^{\mathbf{k}_0,\mathbf{l}_0}P^{A,+}_{\mathbf{k}_0,\mathbf{l}_1}$ 
for some $\mathbf{k}_0, \mathbf{l}_0$ and $\mathbf{l}_1$.
\end{remark}

Let $M$ (resp. $N$) be a product of finitely many irreducible representations
of $U_q(\mathfrak{sl}_2)$ (resp. $U$).
Let $\{b_{\mathbf{k}}|\mathbf{k}\in I_{\mathbf{m}}\}$ be dual canonical bases 
in $M$ and $\{b'_{\mathbf{k'}}|\mathbf{k'}\in I_{\mathbf{m'}}\}$ be dual 
canonical bases in $N$.
\begin{theorem}
Consider the decomposition 
\begin{eqnarray*}
b_{\mathbf{k}}\otimes b'_{\mathbf{k'}}
=
\sum_{\mathbf{l}\in I_{\mathbf{m}},\mathbf{l'}\in I_{\mathbf{m'}}}
c^{\mathbf{l,l'}}_{\mathbf{k,k'}}b_{\mathbf{l}}\varspadesuit b_{\mathbf{l'}}.
\end{eqnarray*}
where $b_{\mathbf{l}}\varspadesuit b_{\mathbf{l'}}$ is a dual canonical base 
of $U$ in $M\otimes N$.
The coefficients $c^{\mathbf{l,l'}}_{\mathbf{k,k'}}$ belong 
to $q^{-1}\mathbb{N}[q^{-1}]$ unless $c^{\mathbf{k,k'}}_{\mathbf{k,k'}}=1$.
\end{theorem}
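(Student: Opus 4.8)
The plan is to extend the diagrammatic argument of Theorem~\ref{thm-decomp-dual-dual} to the present setting, where projection boxes for the irreducible factors of $M$ and $N$ are now present. First I would record the shape of the two diagrams being tensored. Since $b_{\mathbf{k}}$ is a dual canonical basis of $U_q(\mathfrak{sl}_2)$ in $M$, its diagram is obtained by rules (A) and (B) only, so below the projection boxes it contains only simple oriented arcs together with free up and down arrows, every free up arrow lying to the left of every free down arrow. By contrast $b'_{\mathbf{k'}}$, a dual canonical basis of $U$ in $N$, is produced by rules (A)--(D) and may in addition carry dashed arcs, a starred down arrow, and an unpaired down arrow. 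The diagram for $b_{\mathbf{k}}\otimes b'_{\mathbf{k'}}$ is the juxtaposition of these two, and the goal is to expand it in the $\varspadesuit$-diagrams $b_{\mathbf{l}}\varspadesuit b'_{\mathbf{l'}}$ of $M\otimes N$.

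Next I would carry out the conversion to $\varspadesuit$-form using Lemma~\ref{moveuparrow}. The only feature of the juxtaposed diagram violating the $\varspadesuit$-rules is that the free down arrows coming from the left factor now sit to the left of the free up arrows coming from the right factor. I would move the up arrows of the right factor leftwards, one adjacency at a time, by Eqn.(\ref{moveuparroweq1}); each such step replaces a down--up adjacency either by a simple arc (coefficient $1$) or by the swapped up--down pair (coefficient $q^{-1}$). Once all free up arrows lie to the left of all free down arrows, I would re-pair the surviving down arrows into dashed arcs and a star by rules (C)--(D) and Eqn.(\ref{moveuparroweq2}), a down--down adjacency becoming a dashed arc (coefficient $1$) or a swapped up--up pair (coefficient $q^{-1}$). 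Because the up arrows of the right factor already lay to the left of its down arrows, none of these moves disturbs the dashed arcs or the star already present in $b'_{\mathbf{k'}}$; they are merely carried along. Iterating until every term is a genuine $\varspadesuit$-diagram, the process terminates.

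I would then read off integrality and positivity. Every elementary step contributes a factor in $\{1\}\cup q^{-1}\mathbb{N}[q^{-1}]$ by Eqns.(\ref{building-1})--(\ref{building-3}) and Lemma~\ref{moveuparrow}, so each accumulated coefficient lies in $\mathbb{N}[q^{-1}]$. There is exactly one term built without any swap, namely the one in which every available down--up and down--down adjacency is resolved into an arc or a dashed arc; its leading standard-basis vector is the product of the leading terms of $b_{\mathbf{k}}$ and $b'_{\mathbf{k'}}$, which coincides with the leading term of $b_{\mathbf{k}}\varspadesuit b'_{\mathbf{k'}}$. By the triangularity of the $\varspadesuit$-basis with respect to the standard basis this forces $c^{\mathbf{k,k'}}_{\mathbf{k,k'}}=1$, whereas every other $\varspadesuit$-basis is reached only through at least one swap and hence appears with a coefficient in $q^{-1}\mathbb{N}[q^{-1}]$.

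Finally I would address compatibility with the projection boxes, which I expect to be the main obstacle: in the $V_1^{\otimes N}$ case the argument above is the same method as the proof of Theorem~\ref{thm-decomp-dual-dual}, and the genuinely new point is that the factors are images of the projections $\pi_{m_1}\otimes\cdots\otimes\pi_{m_n}$. I would lift the identity to the corresponding tensor power of $V_1$, perform the moves there, and project back. By Theorem~\ref{Thm-dC} the projection sends each $\varspadesuit$-basis either to the corresponding $\varspadesuit$-basis of $M\otimes N$ or to zero (when an arc falls directly below a single projection box), multiplying each surviving term only by a nonnegative power of $q^{-1}$ as in Eqn.(\ref{projection-dual}). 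Thus positivity is preserved, terms annihilated by $\pi$ can only drop contributions, and the lifted no-swap term projects to $b_{\mathbf{k}}\varspadesuit b'_{\mathbf{k'}}$ with coefficient still $1$. The care needed here is purely bookkeeping: one checks that no two surviving lifted $\varspadesuit$-bases collapse onto the same index $(\mathbf{l},\mathbf{l'})$, which holds because the sorted lift realizing a nonzero projection is unique, so the diagonal coefficient cannot be contaminated.
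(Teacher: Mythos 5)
Your proposal is correct and follows essentially the same route as the paper's proof: juxtapose the two diagrams, observe that the only defect is the free down arrows of $b_{\mathbf{k}}$ sitting to the left of the free up arrows of $b'_{\mathbf{k'}}$, resolve these adjacencies via Lemma~\ref{moveuparrow} whose coefficients lie in $\{1\}\cup q^{-1}\mathbb{N}[q^{-1}]$, and identify the unique swap-free term as giving $c^{\mathbf{k,k'}}_{\mathbf{k,k'}}=1$. Your additional discussion of compatibility with the projections $\pi_{m_i}$ is a sound piece of bookkeeping that the paper's own (terser) proof leaves implicit.
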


\begin{proof}
The diagram $b\otimes b'$ is obtained by placing the diagrams 
$b$ and $b'$ in parallel. 
Note that the down arrows in $b$ is left to the up arrows in 
$b'$.
To obtain a diagram for a dual canonical basis of $U$, 
we use Lemma~\ref{moveuparrow} successively.
We can move 
an up arrow in $b'$ to the right of a down arrow in $b$.  
The coefficients in the right hand sides of Eqns.(\ref{moveuparroweq1}) and 
(\ref{moveuparroweq2})
are in $\mathbb{N}[q^{-1}]$. 
The coefficient of the arc and the dashed arc is one, that is, 
not in $q^{-1}\mathbb{N}[q^{-1}]$.
Thus, we have $c^{\mathbf{k,k'}}_{\mathbf{k,k'}}=1$ and 
$c^{\mathbf{l,l'}}_{\mathbf{k,k'}}\in q^{-1}\mathbb{N}[q^{-1}]$.
\end{proof}

\subsection{\texorpdfstring{Action of $Y$ on standard bases and dual canonical bases}
{Action of Y on standard bases and dual canonical bases}}
We consider the action of $Y$ on a standard basis 
$v^{\kappa}:=v^{\kappa_1}\otimes\cdots\otimes v^{\kappa_{N}}$ where 
$\kappa_{i}=\pm1$.
For each $1\le i\le N$, we define 
$Y_{(i)}(v^{\kappa}):=v^{\kappa_1}\otimes\cdots\otimes v^{\kappa_{i-1}}
\otimes v^{-\kappa_{i}}\otimes v^{\kappa_{i+1}}\otimes\cdots\otimes v^{\kappa_{N}}$.
Set $d_{i}:=\sum_{j=1}^{i}\kappa_{j}$. 
The action of $Y$ is defined by 
\begin{eqnarray}
\label{actionYonSB}
Y(v^{\kappa}):=\sum_{i=1}^{N}q^{d_{i-1}}Y_{(i)}(v^{\kappa})
+q^{d_{N}}v^{\kappa}.
\end{eqnarray}

\begin{prop}
The definition (\ref{actionYonSB}) provides the action of $Y$ 
on a standard basis.
\end{prop}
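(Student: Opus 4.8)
The plan is to show that the explicit formula (\ref{actionYonSB}) reproduces the genuine $U$-module structure on $V_1^{\otimes N}$ obtained by restricting along $\iota$ and using the coproduct. Since $U=\mathbb{C}(q)[X]$ is a free polynomial algebra, there are no algebra relations to verify; the entire content of the proposition is that (\ref{actionYonSB}) agrees with the action of $\iota(Y)=F+q^{-1}KE+K$ computed through the iterated comultiplication. The structural input I would use is that $U$ is a \emph{left coideal}: from Eqn.(\ref{coproductY}) we have $\Delta(Y)=K\otimes Y+Z\otimes 1$ with $Z:=F+q^{-1}KE\in U_q(\mathfrak{sl}_2)$, the second tensor factor lying in $U$. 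Regarding $V_1^{\otimes N}=V_1\otimes V_1^{\otimes(N-1)}$ and applying $\Delta(Y)$ to the outermost pair, the action satisfies the recursion
\[
Y(v^{\kappa_1}\otimes w)=(Kv^{\kappa_1})\otimes Y(w)+(Zv^{\kappa_1})\otimes w,\qquad w\in V_1^{\otimes(N-1)},
\]
where on the right $Y$ acts on $V_1^{\otimes(N-1)}$ by the very same rule.

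First I would record the action of $K$ and $Z$ on the dual basis of a single $V_1$. A direct computation from the defining representation gives $Kv^{\pm1}=q^{\pm1}v^{\pm1}$, together with $Ev^{-1}=v^{1}$, $Ev^{1}=0$, $Fv^{1}=v^{-1}$, $Fv^{-1}=0$, whence $Zv^{\kappa_1}=v^{-\kappa_1}$ with coefficient $1$. This also settles the base case $N=1$: since $\iota(Y)=Z+K$, one finds $Yv^{\kappa_1}=v^{-\kappa_1}+q^{\kappa_1}v^{\kappa_1}$, which is exactly what (\ref{actionYonSB}) gives for $N=1$ (recall $d_0=0$, $d_1=\kappa_1$).

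Then I would induct on $N$. Writing $\kappa=(\kappa_1,\kappa')$ with $\kappa'=(\kappa_2,\ldots,\kappa_N)$ and letting $d'_i:=\kappa_2+\cdots+\kappa_i$ (with $d'_1=0$) be the partial sums of $\kappa'$, the induction hypothesis expands $Y(v^{\kappa'})$. Substituting into the recursion and using $Kv^{\kappa_1}=q^{\kappa_1}v^{\kappa_1}$, $Zv^{\kappa_1}=v^{-\kappa_1}$, together with the identifications $v^{\kappa_1}\otimes v^{\kappa'}=v^{\kappa}$ and $v^{-\kappa_1}\otimes v^{\kappa'}=Y_{(1)}(v^{\kappa})$, yields
\[
Y(v^{\kappa})=\sum_{i=2}^{N}q^{\kappa_1+d'_{i-1}}Y_{(i)}(v^{\kappa})+q^{\kappa_1+d'_N}v^{\kappa}+Y_{(1)}(v^{\kappa}).
\]
The bookkeeping identity $d_i=\kappa_1+d'_i$ for $i\ge1$ turns $\kappa_1+d'_{i-1}$ into $d_{i-1}$ and $\kappa_1+d'_N$ into $d_N$, while $Y_{(1)}(v^{\kappa})=q^{d_0}Y_{(1)}(v^{\kappa})$; assembling these recovers precisely (\ref{actionYonSB}).

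The only genuinely delicate points are (i) confirming that $Y$ acts through $\Delta_{-}$, so that the coideal recursion above is the correct one, which I would check by verifying $\Delta_{-}(\iota(Y))=K\otimes\iota(Y)+Z\otimes 1$ directly from the formulas for $\Delta_{-}(E),\Delta_{-}(F),\Delta_{-}(K)$; and (ii) keeping the index shift $d_i=\kappa_1+d'_i$ straight so that the weight factors $q^{d_{i-1}}$ reassemble correctly under the tensor decomposition. Everything else is the routine induction sketched above.
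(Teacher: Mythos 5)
Your proposal is correct and follows essentially the same route as the paper: the base case $N=1$ by direct computation, then induction on $N$ via the coideal recursion $Y(v^{\kappa_1}\otimes v')=q^{\kappa_1}v^{\kappa_1}\otimes Y(v')+v^{-\kappa_1}\otimes v'$ coming from $\Delta(Y)=K\otimes Y+(F+q^{-1}KE)\otimes 1$, followed by the index bookkeeping $d_i=\kappa_1+d'_i$. The extra care you take in verifying $Zv^{\kappa_1}=v^{-\kappa_1}$ on the dual basis and in identifying which comultiplication is used is a welcome but inessential refinement of the paper's argument.
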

\begin{proof}
We prove Proposition by induction on $N$. 
When $N=1$, Proposition holds true by a straightforward calculation.
We assume that Proposition is true up to some $N\ge1$.
A standard basis $v^{\kappa}$ is written as 
$v^{\kappa}=v^{\kappa_1}\otimes v'$ where $v'$ is a standard 
basis of length $N-1$.
Let $d'_{i}:=\sum_{j=2}^{i}\kappa_j$.
From the induction assumption, we have 
\begin{eqnarray*}
Y(v')=\sum_{i=2}^{N}q^{d'_{i-1}}Y_{(i-1)}(v')+q^{d'_{N}}v'.
\end{eqnarray*}
From Eqns.(\ref{coproductY}) and (\ref{actionYonSB}), we have 
\begin{eqnarray*}
Y(v^{\kappa})&=&q^{\kappa_1}v^{\kappa_1}\otimes Y(v')
+v^{-\kappa_1}\otimes v' \\
&=&\sum_{i=2}^{N}q^{\kappa_1+d'_{i-1}}v^{\kappa_1}\otimes Y_{(i-1)}(v')
+q^{\kappa_1+d'_{N}}v \\
&=&\sum_{i=1}^{N}q^{d_{i-1}}Y_{(i)}(v)+q^{d_{N}}v,
\end{eqnarray*}
where we have used $v^{\kappa_1}\otimes Y_{(i-1)}(v')=Y_{(i)}(v)$.
\end{proof}

Recall that the diagram $D$ for a dual canonical basis of $U$ 
consists of arcs, dashed arcs, up arrows, down arrow 
with a star and at most one unpaired down arrow.
We enumerate all the up arrows from left to right 
by $1,2,\ldots,n_{\uparrow}$ where $n_{\uparrow}$ is the number 
of up arrows of $D$.

If $n_{\uparrow}=0$, we have three cases for a diagram $D$:
\begin{enumerate}
\item $D$ has no down arrow with a star. 
Define the action of $Y$ by 
\begin{eqnarray*}
Y(D):=D.
\end{eqnarray*}
\item $D$ has a down arrow with a star and an unpaired down arrow.
Let $D'$ be a diagram obtained from $D$ by changing the unpaired 
down arrow of $D$ to an up arrow. Define the action of $Y$ by 
\begin{eqnarray*}
Y(D):=D'.
\end{eqnarray*}
\item $D$ has a down arrow with a star but no unpaired down arrow.
Define the action of $Y$ by 
\begin{eqnarray*}
Y(D):=0.
\end{eqnarray*}
\end{enumerate}

Suppose $n_{\uparrow}>0$.  
For each $i$, $1\le i<n_{\uparrow}$, we denote by $Y_{(i)}(D)$ a  
diagram obtained from $D$ by connecting the $i$-th and $(i+1)$-th 
up arrows of $D$ via an unoriented arc.

We denote by $D'$ a diagram 
obtained from $D$ by changing the $n_{\uparrow}$-th up arrow to 
a down arrow. 
We construct $Y_{(n_{\uparrow})}(D)$ from $D'$ as follows.
If $D$ has an unpaired down arrow $d$, $Y_{(n_{\uparrow})}(D)$ is obtained 
from $D'$ by connecting the reversed down arrow of $D'$ and $d$ via a dashed arc.
If $D$ has no unpaired down arrow, then $Y_{(n_{\uparrow})}(D)=D'$.

If there is no down arrow with a star in $D$, we define 
$Y_{(n_{\uparrow}+1)}(D)=D$. 
If there is an unpaired down arrow in $D$, we denote by 
$Y_{(n_{\uparrow}+1)}(D)$ a diagram obtained from $D$ by changing
the unpaired down arrow to an up arrow. 
Define $Y_{(n_{\uparrow}+1)}(D)=0$ for otherwise.

Define the action of $Y$ by 
\begin{eqnarray}
\label{actionYonD}
Y(D):=\sum_{1\le i\le n_{\uparrow}+1}[i]Y_{(i)}(D).
\end{eqnarray} 

\begin{example}
Let $\mathbf{k}=(1,0,-2)\in I_{(3,4,4)}$.
The diagram $D$  is 
\begin{eqnarray*}
\begin{tikzpicture}
\draw(0,0)--(1.4,0)--(1.4,-0.5)--(0,-0.5)--(0,0);
\draw(0.2,-0.5)--(0.2,-1.2)(0.13,-0.7)--(0.2,-0.5)--(0.27,-0.7);
\draw(0.7,-0.5)--(0.7,-1.2)(0.63,-0.7)--(0.7,-0.5)--(0.77,-0.7);
\draw(1.7,0)--(3.6,0)--(3.6,-0.5)--(1.7,-0.5)--(1.7,0);
\draw(2.4,-0.5)--(2.4,-1.2)(2.33,-0.7)--(2.4,-0.5)--(2.47,-0.7);
\draw(2.9,-0.5)--(2.9,-1.2)(2.83,-1)--(2.9,-1.2)--(2.97,-1);
\draw(3.9,0)--(5.8,0)--(5.8,-0.5)--(3.9,-0.5)--(3.9,0);
\draw(5.6,-0.5)--(5.6,-1.2)node{$\bigstar$};
\draw(1.2,-0.5)..controls(1.2,-1)and(1.9,-1)..(1.9,-0.5);
\draw(3.4,-0.5)..controls(3.4,-1)and(4.1,-1)..(4.1,-0.5);
\draw[thick,dashed](4.6,-0.5)..controls(4.6,-1)and(5.1,-1)..(5.1,-0.5);
\end{tikzpicture}.
\end{eqnarray*}
Then 
\begin{eqnarray*}
Y_{(2)}(D)=
\raisebox{-0.5\totalheight}{
\begin{tikzpicture}
\draw(0,0)--(1.4,0)--(1.4,-0.5)--(0,-0.5)--(0,0);
\draw(0.2,-0.5)--(0.2,-1.2)(0.13,-0.7)--(0.2,-0.5)--(0.27,-0.7);
\draw(1.7,0)--(3.6,0)--(3.6,-0.5)--(1.7,-0.5)--(1.7,0);
\draw(2.9,-0.5)--(2.9,-1.2)(2.83,-1)--(2.9,-1.2)--(2.97,-1);
\draw(3.9,0)--(5.8,0)--(5.8,-0.5)--(3.9,-0.5)--(3.9,0);
\draw(5.6,-0.5)--(5.6,-1.2)node{$\bigstar$};
\draw(1.2,-0.5)..controls(1.2,-1)and(1.9,-1)..(1.9,-0.5);
\draw(3.4,-0.5)..controls(3.4,-1)and(4.1,-1)..(4.1,-0.5);
\draw[thick,dashed](4.6,-0.5)..controls(4.6,-1)and(5.1,-1)..(5.1,-0.5);
\draw(0.7,-0.5)..controls(0.7,-1.5)and(2.4,-1.5)..(2.4,-0.5);
\end{tikzpicture}}, \\
Y_{(3)}(D)=
\raisebox{-0.5\totalheight}{
\begin{tikzpicture}
\draw(0,0)--(1.4,0)--(1.4,-0.5)--(0,-0.5)--(0,0);
\draw(0.2,-0.5)--(0.2,-1.2)(0.13,-0.7)--(0.2,-0.5)--(0.27,-0.7);
\draw(0.7,-0.5)--(0.7,-1.2)(0.63,-0.7)--(0.7,-0.5)--(0.77,-0.7);
\draw(1.7,0)--(3.6,0)--(3.6,-0.5)--(1.7,-0.5)--(1.7,0);
\draw(3.9,0)--(5.8,0)--(5.8,-0.5)--(3.9,-0.5)--(3.9,0);
\draw(5.6,-0.5)--(5.6,-1.2)node{$\bigstar$};
\draw(1.2,-0.5)..controls(1.2,-1)and(1.9,-1)..(1.9,-0.5);
\draw(3.4,-0.5)..controls(3.4,-1)and(4.1,-1)..(4.1,-0.5);
\draw[thick,dashed](4.6,-0.5)..controls(4.6,-1)and(5.1,-1)..(5.1,-0.5);
\draw[thick,dashed](2.4,-0.5)..controls(2.4,-1)and(2.9,-1)..(2.9,-0.5);
\end{tikzpicture}}, \\
Y_{(4)}(D)=
\raisebox{-0.5\totalheight}{
\begin{tikzpicture}
\draw(0,0)--(1.4,0)--(1.4,-0.5)--(0,-0.5)--(0,0);
\draw(0.2,-0.5)--(0.2,-1.2)(0.13,-0.7)--(0.2,-0.5)--(0.27,-0.7);
\draw(0.7,-0.5)--(0.7,-1.2)(0.63,-0.7)--(0.7,-0.5)--(0.77,-0.7);
\draw(1.7,0)--(3.6,0)--(3.6,-0.5)--(1.7,-0.5)--(1.7,0);
\draw(2.4,-0.5)--(2.4,-1.2)(2.33,-0.7)--(2.4,-0.5)--(2.47,-0.7);
\draw(2.9,-0.5)--(2.9,-1.2)(2.83,-0.7)--(2.9,-0.5)--(2.97,-0.7);
\draw(3.9,0)--(5.8,0)--(5.8,-0.5)--(3.9,-0.5)--(3.9,0);
\draw(5.6,-0.5)--(5.6,-1.2)node{$\bigstar$};
\draw(1.2,-0.5)..controls(1.2,-1)and(1.9,-1)..(1.9,-0.5);
\draw(3.4,-0.5)..controls(3.4,-1)and(4.1,-1)..(4.1,-0.5);
\draw[thick,dashed](4.6,-0.5)..controls(4.6,-1)and(5.1,-1)..(5.1,-0.5);
\end{tikzpicture}}.
\end{eqnarray*}
We have $Y_{(1)}(D)=0$. 
Therefore,
\begin{eqnarray*}
Y(D)=[2]Y_{(2)}(D)+[3]Y_{(3)}(D)+[4]Y_{(4)}(D).
\end{eqnarray*}

\end{example}

\begin{theorem}
\label{thm-actionT}
The definition (\ref{actionYonD}) provides the action of $Y$ on a 
canonical basis.
\end{theorem}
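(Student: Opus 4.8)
The plan is to reduce to the case $V_1^{\otimes N}$ and there to check the combinatorial rule (\ref{actionYonD}) against the action (\ref{actionYonSB}) on standard bases, which is already known to be the genuine action of $Y$.

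First I would remove the projectors. By Theorem~\ref{Thm-dC} the diagram $D$ is $\pi(\tilde D)$ with $\pi=\pi_{m_1}\otimes\cdots\otimes\pi_{m_n}$ and $\tilde D$ the dual canonical diagram in $V_1^{\otimes N}$, $N=\sum_i m_i$. Each $\pi_{m_i}$ is a map of $U_q(\mathfrak{sl}_2)$-modules, so through $\iota$ and the coproduct (\ref{coproductY}) the map $\pi$ is a map of $U$-modules and $Y\circ\pi=\pi\circ Y$. Hence $Y(D)=\pi(Y(\tilde D))$, and it suffices to see that $\pi$ carries the combinatorial rule for $\tilde D$ to the one for $D$. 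The free up arrows of $\tilde D$ and of $D$ are literally the same, so the enumeration and the weights $[i]$ are preserved; the only checks are that joining two up arrows inside one box is killed by $\pi$ (the relation ``a cup inside a box vanishes'' from the proof of Theorem~\ref{Thm-dC}) and that creating a star or a dashed arc is compatible with the $q^{-1}$-swap relation displayed there.

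Next I would settle the case $V_1^{\otimes N}$. Here the dual basis coincides with the weight basis, $Y$ acts through (\ref{coproductY}), and its action on standard bases is exactly (\ref{actionYonSB}). I would induct on $N$, peeling the leftmost factor: writing $\tilde D=\sum_{a=\pm1}v^{a}\otimes w_{a}$ according to the first tensor slot, linearity of (\ref{actionYonSB}) gives
\begin{eqnarray*}
Y(\tilde D)=\sum_{a=\pm1}\bigl(q^{a}\,v^{a}\otimes Y(w_{a})+v^{-a}\otimes w_{a}\bigr),
\end{eqnarray*}
which reduces the computation to $Y$ on $V_1^{\otimes(N-1)}$, to which the inductive hypothesis applies after expanding $w_{a}$ in the dual canonical basis of $V_1^{\otimes(N-1)}$. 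The resulting vectors are then re-expressed as dual canonical diagrams by means of the building-block identities (\ref{building-1})--(\ref{building-3}) and Lemma~\ref{moveuparrow}: a newly adjacent down--up pair is turned into an arc, two new down arrows into a dashed arc, and the star is restored on the rightmost down arrow. Matching the outcome with $\sum_i[i]\,Y_{(i)}(\tilde D)$ is then a finite check organised by the type of the leftmost building block (a free up arrow, the left endpoint of a simple or dashed arc, a starred down arrow, or an unpaired down arrow).

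As a guiding consistency condition I would use bar invariance: since $Y$ commutes with $\psi^{\iota}_{-}$ and each $Y_{(i)}(D)$ is again a dual canonical diagram, both sides of (\ref{actionYonD}) are $\psi^{\iota}_{-}$-invariant and the coefficients must be bar-invariant, in agreement with the symmetric Laurent polynomials $[i]$. The main obstacle is precisely the reorganisation in the inductive step: the two terms of the recursion do not match the operators $Y_{(i)}$ one by one, so a single $Y_{(i)}(\tilde D)$ collects contributions from several terms, and one must show that, after reinterpreting the outputs as diagrams, the weights $q^{a}$ and the building-block coefficients $\mp q^{-1}$ assemble into the quantum integer $[i]$. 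Concretely this rests on telescoping identities of the form $[i+1]-q^{-1}[i]=q^{i}$ (already visible for $v^{1}\otimes v^{1}$), together with the careful bookkeeping of how flipping the rightmost up arrow changes the star, the unpaired down arrow and the dashed arcs; treating these boundary cases uniformly, rather than the generic interior joins, is the delicate point.
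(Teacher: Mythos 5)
Your proposal is correct and follows essentially the same route as the paper: reduce to $\mathbf{m}=(1,\ldots,1)$ by commuting $Y$ with the projectors, then induct on $N$ by peeling off the leftmost tensor factor via $Y(v^{a}\otimes w)=q^{a}v^{a}\otimes Y(w)+v^{-a}\otimes w$, organize the verification by the type of the leftmost building block, and reassemble coefficients with the quantum-integer Pascal identity (the paper uses $q[i]+q^{-i}=[i+1]$, equivalent to your $[i+1]-q^{-1}[i]=q^{i}$). The paper likewise only writes out representative cases of the resulting finite check, so your level of completeness matches its own.
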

\begin{proof}
It is enough to show the case of $\mathbf{m}=(1,\ldots,1)$ 
since the actions of generators of $U_q(\mathfrak{sl}_2)$  
and that of projector commute with each other.
Hereafter, we set $\mathbf{m}=(1,\ldots,1)$.
We prove Theorem by induction.
In the case of $L=1,\ldots,4$, we can verify Theorem by a direct computation.
Suppose that Theorem is true up to some $L\ge4$.

Suppose the leftmost arrow is an up arrow. 
Let $D$ be a diagram $\uparrow\pi_0$ where $\pi_0$ is a diagram
of length $L-1$. 
\begin{eqnarray}
\label{actionT0}
\begin{aligned}
Y(D)&=
(K\otimes t+q^{-1}KE\otimes1+F\otimes1)\uparrow\pi_0 \\
&=q\uparrow Y(\pi_0)+\downarrow\pi_0
\end{aligned}
\end{eqnarray}
We have the following six cases:
\begin{enumerate}[(A)]
\item $\pi_0$ has no up arrows.
\begin{enumerate}[({A}1)] 
\item $\pi_0$ has no down arrow with a star.
The diagram $\pi_0$ consists of arcs. 
\item $\pi_0$ has a down arrow with a star and an unpaired arrow.
The diagram $\pi_0$ is written as $\pi_0=\pi_1\downarrow\pi_2$ 
where $\pi_1$ consists of 
arcs and $\pi_2$ consists of arcs, dashed arcs and down arrow with a star.
\item $\pi_0$ has a down arrow with a star but no unpaired arrow.
The diagram $\pi_0$ consists of arcs, dashed arcs and a down arrow with 
a star.
\end{enumerate}
\item $\pi_0$ has up arrows. 
\begin{enumerate}[(B1)]
\item $\pi_0$ has no down arrow with a star.
The diagram $\pi_0$ is written as 
$\pi_0=\pi_1\uparrow\pi_2\uparrow\ldots\uparrow\pi_{l}$  where 
$\pi_i,1\le i\le l$, consists of arcs.
\item $\pi_0$ has a down arrow with a star and an unpaired arrow.
The diagram $\pi_0$ is written as 
$\pi_0=\pi_1\uparrow\pi_2\uparrow\ldots\uparrow\pi_{l}\downarrow\pi_{l+1}$
where $\pi_i, 1\le i\le l$, consists of arcs and $\pi_{l+1}$ consists of 
arcs, dashed arcs and a down arrow with a star.
\item $\pi_0$ has a down arrow with a star but no unpaired arrow.
The diagram $\pi_0$ is written as 
$\pi_0=\pi_1\uparrow\pi_2\uparrow\ldots\uparrow\pi_{l}$  where 
$\pi_i,1\le i\le l-1$, consists of arcs and $\pi_l$ consists of 
arcs, dashed arcs and a down arrow with a star.
\end{enumerate}
\end{enumerate}

We consider (B2) case since all other cases can be similarly proven.
From the assumption, we have 
\begin{eqnarray}
\label{actionT1}
\begin{aligned}
Y(\pi_0)
&=
\sum_{2\le i\le l-1}[i-1]\cdot 
\pi_1\uparrow\ldots\uparrow\pi_{i-1}
\raisebox{-0.6\totalheight}{
\begin{tikzpicture}
\draw(0,-0.1)..controls(0,-0.8)and(0.8,-0.8)..(0.8,-0.1);
\draw(0.4,-0.25)node{$\pi_{i}$};
\end{tikzpicture}}
\ \pi_{i+1}\uparrow\ldots\uparrow\pi_{l}\downarrow\pi_{l+1} \\
&\quad+
[l-1]\cdot\uparrow\pi_1\uparrow\ldots\uparrow\pi_{l-1}
\raisebox{-0.6\totalheight}{
\begin{tikzpicture}
\draw[thick,dashed](0,-0.1)..controls(0,-0.8)and(0.8,-0.8)..(0.8,-0.1);
\draw(0.4,-0.25)node{$\pi_{l}$};
\end{tikzpicture}}
\ \pi_{l+1}
+[l]\cdot\pi_{1}\uparrow\ldots\uparrow\pi_{l+1}.
\end{aligned}
\end{eqnarray}
We also have 
\begin{eqnarray}
\label{actionT2}
\begin{aligned}
\downarrow\pi_1\uparrow\ldots\uparrow\pi_{l}\downarrow\pi_{l+1}
&=
\sum_{1\le i\le l-1}
q^{-(i-1)}\cdot
\uparrow\pi_1\ldots
\raisebox{-0.6\totalheight}{
\begin{tikzpicture}
\draw(0,-0.1)..controls(0,-0.8)and(0.8,-0.8)..(0.8,-0.1);
\draw(0.4,-0.25)node{$\pi_{i}$};
\end{tikzpicture}}
\ldots\pi_l\downarrow\pi_{l+1} \\
&\quad+
q^{-(l-1)}\cdot
\uparrow\pi_1\ldots\pi_{l-1}
\raisebox{-0.6\totalheight}{
\begin{tikzpicture}
\draw[thick,dashed](0,-0.1)..controls(0,-0.8)and(0.8,-0.8)..(0.8,-0.1);
\draw(0.4,-0.25)node{$\pi_{l}$};
\end{tikzpicture}}
\ \pi_{l+1}
+q^{-l}\cdot \uparrow\pi_1\uparrow\ldots\uparrow\pi_{l+1}.
\end{aligned}
\end{eqnarray}
Substituting Eqns.(\ref{actionT1}) and (\ref{actionT2}) into 
Eqn.(\ref{actionT0}), we obtain
\begin{eqnarray*}
\begin{aligned}
Y(D)
&=
\sum_{1\le i\le l-1}
[i]\cdot
\uparrow\pi_1\uparrow\ldots
\raisebox{-0.6\totalheight}{
\begin{tikzpicture}
\draw(0,-0.1)..controls(0,-0.8)and(0.8,-0.8)..(0.8,-0.1);
\draw(0.4,-0.25)node{$\pi_{i}$};
\end{tikzpicture}}
\ldots\pi_{l}\downarrow\pi_{l+1} \\
&+
[l]\cdot
\uparrow\pi_1\ldots\pi_{l-1}
\raisebox{-0.6\totalheight}{
\begin{tikzpicture}
\draw[thick,dashed](0,-0.1)..controls(0,-0.8)and(0.8,-0.8)..(0.8,-0.1);
\draw(0.4,-0.25)node{$\pi_{l}$};
\end{tikzpicture}}
\ \pi_{l+1}
+
[l+1]\cdot
\uparrow\pi_{1}\uparrow\ldots\uparrow\pi_{l+1},
\end{aligned}
\end{eqnarray*}
where we have used $q[i]+q^{-i}=[i+1]$.
This is the desired expression for $Y(D)$.

Suppose the leftmost arrow $a$ is a down arrow.
Then $D$ has the following four possibilities:
\begin{enumerate}[(A)]
\item $a$ is connected with an up arrow by an arc. 
The diagram $D$ is written as 
$
\raisebox{-0.6\totalheight}{
\begin{tikzpicture}
\draw(0,-0.1)..controls(0,-0.8)and(0.8,-0.8)..(0.8,-0.1);
\draw(0.4,-0.25)node{$\pi_{0}$};
\end{tikzpicture}}
\ \pi_1
$
where $\pi_0$ consists of arcs and $\pi_1$ consists of arcs, 
dashed arcs, a down arrow with a star and an unpaired down arrow.

\item $a$ is a down arrow with a star. 
$D$ is written as 
$
\raisebox{-0.5\totalheight}{
\begin{tikzpicture}
\draw(0,0)--(0,-0.4)node{$\bigstar$};
\end{tikzpicture}}
\pi_0
$ 
where $\pi_0$ consists of arcs.

\item $a$ is connected with an down arrow by a dashed arc.
The diagram $D$ is written as 
$
\raisebox{-0.6\totalheight}{
\begin{tikzpicture}
\draw[thick,dashed](0,-0.1)..controls(0,-0.8)and(0.8,-0.8)..(0.8,-0.1);
\draw(0.4,-0.25)node{$\pi_{0}$};
\end{tikzpicture}}
\ \pi_1
$
where $\pi_0$ consists of arcs and $\pi_1$ consists of arcs, dashed arcs and 
a down arrow with a star.

\item $a$ is an unpaired down arrow. 
The diagram $D$ is depicted as $\downarrow\pi_0$ where $\pi_0$ consists of 
arcs, dashed arc and a down arrow with a star.
\end{enumerate} 
We consider the case (C) since other cases are similarly proven.
We have
\begin{eqnarray*}
Y(D)&=&Y(\downarrow\pi_0\downarrow\pi_1-q^{-1}\uparrow\pi_0\uparrow\pi_1) \\
&=&
q^{-1}\downarrow Y(\pi_0\downarrow\pi_1)+\uparrow\pi_0\downarrow\pi_1
-\uparrow Y(\pi_0\downarrow\pi_1)-q^{-1}\downarrow\pi_0\downarrow\pi_1 \\
&=&0.
\end{eqnarray*}
Here we have used $Y(\pi_0\downarrow\pi_1)=\pi_0\uparrow\pi_1$ since 
the down arrow in $\pi_0\downarrow\pi_1$ is an unpaired down arrow in
the diagram $\pi_0\downarrow\pi_1$.
This is the desired expression for $Y(D)$.
\end{proof}
\begin{cor}
\label{cor-Y1}
All coefficients of the action of $Y$ on the dual canonical basis
in a tensor product $V_{m_1}\otimes\ldots\otimes V_{m_n}$ belong
to $\mathbb{N}[q,q^{-1}]$. Especially, they are quantum integers.
\end{cor}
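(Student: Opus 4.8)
The plan is to deduce the corollary directly from Theorem~\ref{thm-actionT}, which already expresses $Y$ acting on a dual canonical basis diagram $D$ as the combination (\ref{actionYonD}), namely $Y(D)=\sum_{1\le i\le n_{\uparrow}+1}[i]\,Y_{(i)}(D)$. The two things I need to establish are that each $Y_{(i)}(D)$ is again (the diagram of) a dual canonical basis element, so that (\ref{actionYonD}) is literally the expansion of $Y(D)$ in that basis, and that each coefficient $[i]$ lies in $\mathbb{N}[q,q^{-1}]$.

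First I would verify that every $Y_{(i)}(D)$ appearing in (\ref{actionYonD}) is a legitimate dual canonical basis diagram or zero, and that the nonzero ones are pairwise distinct. By the construction preceding Theorem~\ref{thm-actionT}, each $Y_{(i)}(D)$ modifies $D$ in a different location --- joining the $i$-th and $(i+1)$-th up arrows by an arc, converting the last up arrow into a down arrow (possibly joined to an unpaired down arrow by a dashed arc), or turning an unpaired down arrow into an up arrow --- and each such operation yields a diagram obeying the rules (A)--(D) that characterise the dual canonical basis, with the degenerate cases giving $0$. Because these operations act on distinct parts of $D$, the resulting nonzero diagrams are mutually distinct basis elements, so no two terms of (\ref{actionYonD}) collapse. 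Hence the right-hand side of (\ref{actionYonD}) is exactly the expansion of $Y(D)$ in the dual canonical basis, and each of its coefficients is a single quantum integer $[i]$.

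Next I would record the elementary expansion $[i]=q^{i-1}+q^{i-3}+\cdots+q^{-(i-1)}$, a sum of distinct powers of $q$ each with coefficient $1$, so that $[i]\in\mathbb{N}[q,q^{-1}]$. Combining this with the previous paragraph shows at once that every coefficient of $Y$ acting on a dual canonical basis element lies in $\mathbb{N}[q,q^{-1}]$ and is in fact a quantum integer. Finally, since $Y$ is built from the generators of $U_q(\mathfrak{sl}_2)$ and these commute with the projector $\pi$ (as used at the start of the proof of Theorem~\ref{thm-actionT}), the conclusion passes from $V_1^{\otimes N}$ to a general tensor product $V_{m_1}\otimes\cdots\otimes V_{m_n}$, which is the assertion of the corollary.

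I do not expect a serious obstacle: the substantive work is already contained in Theorem~\ref{thm-actionT}, and the corollary amounts to reading off its coefficients. The only point requiring genuine care is confirming that each $Y_{(i)}(D)$ is a valid basis diagram (rather than merely some vector) and that distinct nonzero terms remain distinct, so that (\ref{actionYonD}) is a bona fide basis expansion and the coefficients are honest single quantum integers; this is guaranteed by the diagrammatic rules, while the positivity and integrality of $[i]$ are immediate.
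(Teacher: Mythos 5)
Your proposal is correct and follows essentially the same route as the paper, which states Corollary~\ref{cor-Y1} as an immediate consequence of Theorem~\ref{thm-actionT}: the expansion (\ref{actionYonD}) exhibits the coefficients as the quantum integers $[i]\in\mathbb{N}[q,q^{-1}]$, with the reduction to general $V_{m_1}\otimes\cdots\otimes V_{m_n}$ handled by the commutation of $Y$ with the projectors. Your extra care in checking that the nonzero $Y_{(i)}(D)$ are distinct basis diagrams (so no coefficients collapse into sums) is a reasonable point the paper leaves implicit.
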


\subsection{\texorpdfstring{Eigensystem of $Y$}{Eigensystem of Y}}
Let $\mathbf{k}\in I_{\mathbf{m}}$ and $D$ be a diagram for  
a dual canonical base 
$v^{k_1}\varspadesuit\ldots\varspadesuit v^{k_{n}}$.
The integer $n_{\uparrow}$ is the number of up arrows in $D$.
We define an integer $N_{\mathbf{k}}$ as follows:
\begin{enumerate}
\item If $D$ does not have a down arrow with a star, 
$N_{\mathbf{k}}=n_{\uparrow}+1$. 
\item If $D$ has a down arrow with a star but no unpaired down arrow,
$N_{\mathbf{k}}=n_{\uparrow}$.
\item If $D$ has a down arrow with a star and an unpaired down arrow,
$N_{\mathbf{k}}=-(n_{\uparrow}+1)$.
\end{enumerate}
Note that $|N_{\mathbf{k}}|$ is the maximum integer which appears 
in the expansion of $Y(D)$.
For an integer $j\in\mathbb{Z}$, we define 
\begin{eqnarray*}
M_{j}:=\#\{N_{\mathbf{k}}| N_{\mathbf{k}}=j \text{ and } \mathbf{k}\in I_{\mathbf{m}} \}.
\end{eqnarray*}
For example, consider $I_{(2,2)}$. Then, $N_{(2,-2)}=-3$ and $N_{(0,-2)}=1$. 
We have $M_5=1, M_3=2, M_1=3, M_{-1}=2$ and $M_{-3}=1$. 

We consider the matrix representation of $Y$ in $V_{m_1}\otimes\ldots\otimes V_{m_n}$ 
with respect to dual canonical bases.
\begin{theorem}
\label{thm-eigenvalues}
The generator $Y$ has an eigenvalue $[j], j\in\mathbb{Z}$, with the multiplicity $M_{j}$.
\end{theorem}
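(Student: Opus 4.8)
The plan is to realise the matrix of $Y$ in the dual canonical basis as block upper-triangular with respect to the statistic $|N_{\mathbf{k}}|$ and to read the spectrum off the diagonal blocks. Working from the explicit action~(\ref{actionYonD}) supplied by Theorem~\ref{thm-actionT}, I would first track, for a basis diagram $D$ indexed by $\mathbf{k}\in I_{\mathbf{m}}$, how each summand $Y_{(i)}(D)$ changes the number of up arrows $n_{\uparrow}$ and the case (1, 2 or 3) that defines $N_{\mathbf{k}}$. The terms with $i<n_{\uparrow}$ join two up arrows by an arc, dropping $n_{\uparrow}$ by two (and possibly vanishing when the arc is annihilated by a projection $\pi_{m_i}$), while the remaining terms either fix $D$ or reverse a single arrow, dropping $n_{\uparrow}$ by at most one. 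Since $|N_{\mathbf{k}}|\in\{n_{\uparrow},n_{\uparrow}+1\}$ in each of the three cases, every summand other than the leading one has strictly smaller $|N|$; this yields the block-triangular shape and reduces the eigenvalue problem to the diagonal blocks.

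Next I would identify the diagonal blocks. A Case~1 diagram ($N_{\mathbf{k}}=n_{\uparrow}+1$) has leading term $[N_{\mathbf{k}}]\,Y_{(n_{\uparrow}+1)}(D)=[N_{\mathbf{k}}]\,D$, a $1\times1$ block with eigenvalue $[N_{\mathbf{k}}]$. For a Case~2 diagram $D$ with $N_{\mathbf{k}}=j>0$ the term $Y_{(n_{\uparrow}+1)}(D)$ vanishes, so the leading term is $[j]\,Y_{(j)}(D)=[j]\,D'$, where $D'$ is the Case~3 diagram with $N_{D'}=-j$ obtained by reversing the rightmost up arrow; symmetrically the leading term of $Y(D')$ is $[j]\,D$. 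Hence each Case~2/Case~3 pair spans a $2\times2$ diagonal block whose diagonal entries vanish and whose two off-diagonal entries both equal $[j]$, so its eigenvalues are $[j]$ and $-[j]=[-j]$. Finally, a diagram with $N_{\mathbf{k}}=0$ (Case~2 with $n_{\uparrow}=0$) satisfies $Y(D)=0$, a $1\times1$ block with eigenvalue $[0]=0$.

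It then remains to count multiplicities. Reversing the rightmost up arrow and reversing the unpaired down arrow are mutually inverse operations preserving membership in $I_{\mathbf{m}}$, so they set up a bijection between Case~2 diagrams with $N_{\mathbf{k}}=j$ and Case~3 diagrams with $N_{\mathbf{k}}=-j$; in particular every such diagram lies in exactly one $2\times2$ block, and the number of these blocks with parameter $j$ equals $\#\{\mathbf{k}:N_{\mathbf{k}}=j\}=\#\{\mathbf{k}:N_{\mathbf{k}}=-j\}$ for $j>0$. Collecting the diagonal contributions, for $j>0$ the eigenvalue $[j]$ arises once for every Case~1 diagram with $N_{\mathbf{k}}=j$ and once for every $2\times2$ block with parameter $j$, for a total of $M_{j}$; the eigenvalue $[-j]$ arises once per such block, giving $M_{-j}$; and $[0]$ arises $M_{0}$ times. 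Since $[j]\neq[j']$ for distinct $j,j'\in\mathbb{Z}$, these counts are precisely the eigenvalue multiplicities, which is the assertion.

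The hard part is the diagrammatic bookkeeping behind the first two paragraphs: verifying that every non-leading $Y_{(i)}(D)$ (including those killed by a projection) strictly lowers $|N|$, that the leading reversal terms are nonzero and fall into the claimed case with the claimed value of $N_{\mathbf{k}}$, and that the two reversal operations are inverse bijections on $I_{\mathbf{m}}$ yielding exactly the antidiagonal $2\times2$ block with off-diagonal entries $[j]$. I expect this to rely on the same case analysis already carried out in the proof of Theorem~\ref{thm-actionT}.
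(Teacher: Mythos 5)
Your argument is correct, and it rests on the same structural facts that drive the paper's proof: the three-way classification of a diagram by the presence of a starred arrow and an unpaired down arrow, the fact that the leading summand of $Y(D)$ carries the coefficient $[\,|N_{\mathbf{k}}|\,]$ while every other summand strictly decreases $|N|$, and the pairing of the two diagram types with $|N_{\mathbf{k}}|=j$ into an antidiagonal $2\times2$ system with eigenvalues $\pm[j]$. What you do differently is the packaging. The paper fixes $\mathbf{k}$, sets $a_{\mathbf{k'}}=0$ whenever $|N_{\mathbf{k'}}|\ge N_{\mathbf{k}}$, reads the eigenvalue off the top component(s), and then solves for all remaining components by induction in the lexicographic order, invoking Lemma~\ref{lemma-Gamma} to see that each inductive step is uniquely solvable; that route is longer but produces an explicit eigenvector for each $\mathbf{k}$, which the paper needs afterwards (Corollary~\ref{cor-EP}, Theorem~\ref{thm-generic-psi}). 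Your filtration by $|N|$ reads the multiplicities off the associated graded and is the more economical proof of the bare spectral statement, at the price of yielding no eigenvectors. Two small repairs are needed. First, the asserted equality $\#\{\mathbf{k}:N_{\mathbf{k}}=j\}=\#\{\mathbf{k}:N_{\mathbf{k}}=-j\}$ is false as written (for $I_{(2,2)}$ one has $M_{3}=2$ but $M_{-3}=1$): your bijection matches only the Case~2 diagrams with $N_{\mathbf{k}}=j$ against the Case~3 diagrams with $N_{\mathbf{k}}=-j$, so the number of $2\times2$ blocks with parameter $j$ is $M_{-j}$; happily your final count, which adds the Case~1 and Case~2 diagrams separately to obtain $M_{j}$, is the correct one. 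Second, the mutual inverseness of the two reversal maps requires checking that the reversed rightmost up arrow lands to the left of all dashed arcs, so that the right-to-left pairing of rule~(D) leaves exactly that arrow unpaired; this is the bookkeeping you defer, and it is indeed contained in the case analysis of Theorem~\ref{thm-actionT}.
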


Before proceeding to the proof of Theorem~\ref{thm-eigenvalues}, 
we introduce notations and lemmas.

We define the lexicographic order for $\mathbf{k}\in I_{\mathbf{m}}$:  
$\mathbf{k'}<_{\mathrm{lex}}\mathbf{k}$ if and only if 
there exists $i$ such that 
$k'_{j}=k_{j}$ for $1\le j\le i$ and 
$k'_{i+1}<k_{i+1}$.

Let $\Gamma$ be a graph whose vertices are indexed by 
$\mathbf{k}\in I_{\mathbf{m}}$.
We connect vertices $\mathbf{k}$ and $\mathbf{k'}$ by 
an arrow from $\mathbf{k}$ to $\mathbf{k'}$ if 
$v^{k'_1}\varspadesuit\ldots\varspadesuit v^{k'_n}$ appears 
in the expansion of 
$Y(v^{k_1}\varspadesuit\ldots\varspadesuit v^{k_n})$.
If there exists an arrow from $\mathbf{k}$ to $\mathbf{k'}$, 
then we denote it by $\mathbf{k}\rightarrow\mathbf{k'}$.
Suppose $D$ and $D'$ are the diagrams associated with $\mathbf{k}$
and $\mathbf{k'}$.
We write $D\rightarrow D'$ if $\mathbf{k}\rightarrow\mathbf{k'}$.

It is obvious that
\begin{lemma}
The vertex $(m_1,\ldots,m_n)\in I_{\mathbf{m}}$ has a unique incoming 
arrow from itself.
A vertex $\mathbf{k}$ with $\mathbf{k}\neq(m_1,\ldots,m_n)$ 
has at least one incoming arrows.
\end{lemma}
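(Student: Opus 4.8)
The plan is to argue entirely at the level of the diagrams $D_{\mathbf{k}}$ and to detect incoming arrows by inverting the elementary operations $Y_{(i)}$ in the defining formula~\eqref{actionYonD}. By Theorem~\ref{thm-actionT} every diagram produced by an $Y_{(i)}$ is again the canonical diagram $D_{\mathbf{k}'}$ of a vertex $\mathbf{k}'\in I_{\mathbf{m}}$, so ``$\mathbf{k}$ has an incoming arrow'' is exactly the statement that there exist a vertex $\mathbf{k}'$ and an index $i$ with $D_{\mathbf{k}}$ appearing in $Y(D_{\mathbf{k}'})$, i.e. with $Y_{(i)}(D_{\mathbf{k}'})=D_{\mathbf{k}}$.

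For the first assertion, the diagram of $\mathbf{m}=(m_1,\ldots,m_n)$ is the all-up-arrow diagram: it has $n_{\uparrow}=\sum_i m_i$ up arrows and no arc, no dashed arc, no $\bigstar$ and no unpaired down arrow. First I would record that it does occur in $Y(D_{\mathbf{m}})$, since the absence of a $\bigstar$ makes $Y_{(n_{\uparrow}+1)}(D_{\mathbf{m}})=D_{\mathbf{m}}$, which is the self-loop. Then I would run through the three possible effects of an $Y_{(i)}$ on an arbitrary diagram: an index $i<n_{\uparrow}$ inserts an arc, the index $i=n_{\uparrow}$ inserts a down arrow, and $i=n_{\uparrow}+1$ either returns the diagram unchanged (only when there is no $\bigstar$), turns an unpaired down arrow into an up arrow while keeping the $\bigstar$, or gives $0$. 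An all-up-arrow target has no arc, no down arrow and no $\bigstar$, so the only operation that can output it is the unchanged-identity case, and that forces the source to be $D_{\mathbf{m}}$ itself; hence $\mathbf{m}$ has the single incoming arrow described.

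For the second assertion I would produce, for each $\mathbf{k}\neq\mathbf{m}$, an explicit predecessor by inverting the one operation appropriate to the shape of $D_{\mathbf{k}}$. If $D_{\mathbf{k}}$ carries a $\bigstar$ (cases (2) and (3) in the definition of $N_{\mathbf{k}}$), I would flip the leftmost free down arrow of $D_{\mathbf{k}}$, meaning the leftmost down arrow not enclosed in an arc, into an up arrow, producing a diagram $D'$; applying $Y_{(n_{\uparrow})}$ to $D'$ turns this arrow back into a down arrow, and the canonical pairing then restores $D_{\mathbf{k}}$ exactly, the reinstated down arrow reappearing either joined to the former unpaired down arrow by a dashed arc or as a new unpaired down arrow, according to the case. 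If $D_{\mathbf{k}}$ has no $\bigstar$ (case (1)), then, $\mathbf{k}$ being different from $\mathbf{m}$, it contains at least one arc, and I would open the arc whose left endpoint is the leftmost down arrow of $D_{\mathbf{k}}$, converting both endpoints into up arrows; these become adjacent free up arrows in $D'$, so the arc-creating operation $Y_{(i)}$ with $i<n_{\uparrow}(D')$ that joins them rebuilds precisely $D_{\mathbf{k}}$.

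The step requiring care, and the main obstacle, is that flipping a down arrow into an up arrow a priori re-runs the whole noncrossing pairing of rules (A)--(D), so I must check that the $D'$ I write down really is the canonical diagram of a vertex and that the forward operation reproduces $D_{\mathbf{k}}$ on the nose. This reduces to two local checks. First, the flipped arrow acquires no new partner: in every case the arrow I flip is the leftmost free (or leftmost) down arrow, so no free down arrow lies to its left and the new up arrow cannot be paired, and in the arc case its former partner likewise becomes a free up arrow, so the only change to the pairing is the intended one. Second, the outcome still lies in $I_{\mathbf{m}}$, because the box containing the flipped down arrow is not full, whence its label is at most $m_i-2$ and remains $\le m_i$ after the flip. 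Granting these, each $\mathbf{k}\neq\mathbf{m}$ receives an arrow from the predecessor $\mathbf{k}'$ just constructed, which completes the proof.
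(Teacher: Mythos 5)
Your argument is correct. The paper offers no proof at all here (the lemma is introduced with ``It is obvious that''), so there is nothing to compare against; your case analysis of which diagrams can arise as $Y_{(i)}(D')$ in (\ref{actionYonD}) is a legitimate and complete justification. Two small remarks. First, your case (1) (no $\bigstar$) is already covered by the definition itself: for any diagram without a down arrow with a star, $Y_{(n_{\uparrow}+1)}(D)=D$, so every starless vertex carries a self-loop and your arc-opening construction, while valid, is not needed for existence (it does show such vertices have a second, distinct predecessor). Second, in the starred case your description of how the reinstated down arrow reappears (``joined to the former unpaired down arrow by a dashed arc or as a new unpaired down arrow'') silently omits the sub-case where the only free down arrow is the starred one itself, in which the reinstated arrow simply reacquires the star; the construction still works there, but the sentence should cover it. The only genuinely delicate point --- that the flipped or opened diagram $D'$ is again the canonical diagram of a vertex of $I_{\mathbf{m}}$ and that the forward operation reproduces $D_{\mathbf{k}}$ exactly --- is the one you isolate, and your two local checks (the flipped arrow acquires no new partner because nothing pairable lies to its left, and the box label stays in range) do settle it.
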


\begin{lemma}
\label{lemma-Gamma}
In $\Gamma$, suppose $\mathbf{k'}\rightarrow\mathbf{k}$
with $\mathbf{k'}<_{lex}\mathbf{k}$ and 
$\mathbf{k,k'}\neq(m_1,\ldots,m_n)$.
Then we have
\begin{enumerate}
\item The number of such $\mathbf{k}'$ is at most one for 
each vertex $\mathbf{k}$.
\item The vertex $\mathbf{k'}$ has a incoming arrow from 
the vertex $\mathbf{k}$, that is, $\mathbf{k}\rightarrow\mathbf{k'}$.
\item The vertices $\mathbf{k,k'}$ do not have an arrow from itself.
\item There is no $\mathbf{k''}$ such that 
$\mathbf{k''}\rightarrow\mathbf{k'}$ and $\mathbf{k''}<_{\mathrm{lex}}\mathbf{k'}$
\end{enumerate}
\end{lemma}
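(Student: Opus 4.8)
The plan is to reduce everything to the combinatorics of the diagrams and the explicit formula (\ref{actionYonD}), $Y(D)=\sum_{1\le i\le n_\uparrow+1}[i]\,Y_{(i)}(D)$, reading off from the definition of each $Y_{(i)}$ how it changes the underlying arrow sequence (equivalently $\mathbf{k}$). Since the action of $Y$ commutes with the projector, it suffices to work with diagrams. First I would record the lexicographic effect of each term: for $1\le i\le n_\uparrow$ the operation $Y_{(i)}$ turns a free up arrow into a down arrow---flipping a single $+$ into a $-$ and leaving all earlier positions fixed---so it strictly lowers $\mathbf{k}$ in the order $<_{\mathrm{lex}}$; the top term $Y_{(n_\uparrow+1)}(D)$ either reproduces $D$ (a self-arrow, possible only when $D$ has no starred down arrow), vanishes (when $D$ has a star but no unpaired down arrow), or promotes the unpaired down arrow to an up arrow, flipping a single $-$ into a $+$ and thus strictly raising $\mathbf{k}$.

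The key structural statement I would prove from this is that an arrow $\mathbf{k'}\to\mathbf{k}$ with $\mathbf{k'}<_{\mathrm{lex}}\mathbf{k}$ occurs precisely as this promotion: $D'$ has an unpaired (leftmost free) down arrow and $D$ is obtained from $D'$ by turning it into an up arrow, which thereby becomes the rightmost free up arrow of $D$. All four assertions then follow. For (1), the source $D'$ is recovered from $D$ by the inverse operation of changing the rightmost free up arrow back into a down arrow; this is uniquely determined by $D$, so there is at most one lex-increasing predecessor. For (3), $D'$ has an unpaired down arrow, hence a starred down arrow, so its top term is not $D'$ and $\mathbf{k'}$ has no self-arrow; and since promotion sends the (even) number of free down arrows of $D'$ to an odd number, $D$ acquires a star but loses its unpaired down arrow, so $\mathbf{k}$ has no self-arrow either.

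For (2) I would verify $Y_{(n_\uparrow)}(D)=D'$: because $D$ has a star but no unpaired down arrow, the rule defining $Y_{(n_\uparrow)}$ simply changes the rightmost free up arrow of $D$ into a down arrow, which is exactly the inverse of the promotion and so returns $D'$; as this is the only $Y_{(i)}(D)$ equal to $D'$ and its coefficient $[n_\uparrow]$ is a nonzero quantum integer, the arrow $\mathbf{k}\to\mathbf{k'}$ is present. For (4) I would argue that any target of a lex-increasing arrow is, by the characterization, the outcome of a promotion and hence has no unpaired down arrow; since $D'$ does possess an unpaired down arrow, it cannot be such a target, so no $\mathbf{k''}\to\mathbf{k'}$ with $\mathbf{k''}<_{\mathrm{lex}}\mathbf{k'}$ exists.

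The main obstacle I anticipate is the bookkeeping of rules (C)--(D) after a single arrow is flipped: one must re-star and re-pair the free down arrows and check that this re-drawing never alters the arrow sequence at any other position, so that the lex-comparison is governed solely by the one flipped coordinate, and that ``leftmost free down arrow'' and ``rightmost free up arrow'' are genuinely inverse to one another. The crux is the parity count---that promotion carries an even number of free down arrows to an odd number---since this single toggle of the star/unpaired dichotomy is what makes the uniqueness in (1), the existence of the reverse arrow in (2), and the absence of self-arrows in (3) all consistent.
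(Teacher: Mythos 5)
Your proposal is correct and takes essentially the same route as the paper's proof: both classify the terms $Y_{(i)}(D)$ of the explicit action in Theorem~\ref{thm-actionT} by their effect on the lexicographic order, identify the promotion of the unpaired down arrow to an up arrow (and its inverse, reversing the rightmost free up arrow) as the unique lex-increasing move, and read off statements (1)--(4) from this characterization.
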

\begin{proof}
Recall the explicit action of $Y$ in Theorem~\ref{thm-actionT}. 
The relation 
$\mathbf{k'}\rightarrow\mathbf{k}$ with $\mathbf{k'}<_{\mathrm{lex}}\mathbf{k}$ 
means that the diagram $D$ of $v^{k_1}\varspadesuit\ldots\varspadesuit v^{k_n}$ 
has a down arrow with a star but not an unpaired down arrow. 
Consider the diagram $D'$ obtained from $D$ by 
changing the rightmost up arrow to an unpaired down arrow.
We have $\mathbf{k'}<_{\mathrm{lex}}\mathbf{k}$ and $D$ appears 
in the expansion of $Y(D')$.
Similarly, let $D''$ be a diagram obtained from $D$ by changing 
the leftmost down arrow (which forms a dashed arc or a down arrow
with a star) to an up arrow, or by changing an outer arc to 
two up arrows.
In these two cases, we have $\mathbf{k}<_{\mathrm{lex}}\mathbf{k''}$. 
Thus (1) follows.
The statements (2) is obvious since $D'$ appears in 
the expansion $Y(D)$.
The statement (3) follows from the fact that $D$ (resp. $D'$) 
does not appear in the expansion of $Y(D)$ (resp. $Y(D')$).
We can show the statement (4) by a similar argument on $D'$.
\end{proof}

\begin{proof}[Proof of Theorem~\ref{thm-eigenvalues}]
We will construct an eigenvector characterized by $\mathbf{k}$ with
the eigenvalue $N_{\mathbf{k}}$. 

Let $a_{\mathbf{k}}$ be an indeterminate and take a linear combination 
of the dual canonical basis,
\begin{eqnarray*}
\chi
:=
\sum_{\mathbf{k}}
a_{\mathbf{k}}v^{k_1}\varspadesuit\cdots\varspadesuit v^{k_n}.
\end{eqnarray*}
From Corollary~\ref{cor-Y1}, the action of $Y$ on a dual canonical
basis is written as 
\begin{eqnarray*}
Y(v^{k_1}\varspadesuit\cdots\varspadesuit v^{k_n})=
\sum_{\mathbf{k'}}[n_{\mathbf{k',k}}]
v^{k'_1}\varspadesuit\cdots\varspadesuit v^{k'_n}
\end{eqnarray*}
with a non-negative integer $n_{\mathbf{k',k}}$. 
We want to solve the eigenvalue problem $Y\chi=y\chi$ 
(with an eigenvalue $y$).
The eigenvalue problem is equivalent to 
\begin{eqnarray}
\label{EP1}
\sum_{\mathbf{k'}:\mathbf{k'}\rightarrow\mathbf{k}}
[n_{\mathbf{k},\mathbf{k'}}]a_{\mathbf{k'}}=ya_{\mathbf{k}}.
\end{eqnarray}

Let $D$ be the diagram for $v^{k_1}\varspadesuit\cdots\varspadesuit v^{k_n}$
and $\Gamma'$ be a graph obtained from $\Gamma$ by deleting 
an outgoing arrow from a vertex $\mathbf{k'}$ if $a_{\mathbf{k'}}=0$.
Suppose $a_{\mathbf{k}}\neq0$. 
We have three cases for $D$: (A) $D$ has no down arrow with a star,  
(B) $D$ has a down arrow with a star but not an unpaired down arrow, 
and (C) $D$ has a down arrow with a star and an unpaired down arrow.

\paragraph{\bf Case A} 
We set $a_{\mathbf{k'}}=0$ if $|N_{\mathbf{k'}}|\ge N_{\mathbf{k}}$.
In $\Gamma'$, there exists no incoming arrow on a vertex $\mathbf{k}$ 
except from itself.
The graph $\Gamma'$ does not have an arrow from $\mathbf{k'}$ with 
$\mathbf{k}<_{\mathrm{lex}}\mathbf{k'}$.
We solve the eigenvalue problem by induction in the lexicographic
order. 

The $\mathbf{k}$-component of Eqn.(\ref{EP1}) is equal to 
\begin{eqnarray*}
[N_{\mathbf{k}}]a_{\mathbf{k}}=ya_{\mathbf{k}}.
\end{eqnarray*}
The solution is $y=[N_{\mathbf{k}}]$ since $a_{\mathbf{k}}\neq0$.

Suppose that we have a partial solution 
$\{a_{\mathbf{l'}}| \mathbf{l}<_{\mathrm{lex}}\mathbf{l'}\le_{\mathrm{lex}}\mathbf{k}\}$
for some $\mathbf{l}$.
We will show that $a_{\mathbf{l}}$ is uniquely obtained from 
the partial solution.
We have two cases from Lemma~\ref{lemma-Gamma}. 
\begin{enumerate}
\item The vertex $\mathbf{l}$ has no incoming arrow from 
$\mathbf{l'}$ with $\mathbf{l'}<_{\mathrm{lex}}\mathbf{l}$. 
Equation (\ref{EP1}) is equal to 
\begin{eqnarray*}
\sum_{\genfrac{}{}{0pt}{}{\mathbf{l'}\neq\mathbf{l}}{\mathbf{l'}\rightarrow\mathbf{l}}}
[n_{\mathbf{l},\mathbf{l'}}]a_{\mathbf{l'}}
=
(y-[n_{\mathbf{l},\mathbf{l}}])a_{\mathbf{l}}.
\end{eqnarray*}
Since the integer $n_{\mathbf{l},\mathbf{l}}$ is zero or a positive integer 
less than $N_{\mathbf{k}}$, $a_{\mathbf{l}}$ is uniquely written in terms of the partial 
solution $\{a_{\mathbf{l'}}| \mathbf{l}<_{\mathrm{lex}}\mathbf{l'}\}$.

\item The vertex $\mathbf{l}$ has a incoming arrow from 
$\mathbf{l'}$ with $\mathbf{l'}<_{\mathrm{lex}}\mathbf{l}$. 
Equation (\ref{EP1}) is equal to 
\begin{eqnarray*}
\sum_{\genfrac{}{}{0pt}{}{\mathbf{l''}\neq\mathbf{l,l'}}{\mathbf{l''}\rightarrow\mathbf{l}}}
[n_{\mathbf{l},\mathbf{l''}}]a_{\mathbf{l''}}
&=&
ya_{\mathbf{l}}-[n_{\mathbf{l},\mathbf{l'}}]a_{\mathbf{l'}}, \\
\sum_{\genfrac{}{}{0pt}{}{\mathbf{l''}\neq\mathbf{l,l'}}{\mathbf{l''}\rightarrow\mathbf{l'}}}
[n_{\mathbf{l},\mathbf{l''}}]a_{\mathbf{l''}}
&=&
ya_{\mathbf{l'}}-[n_{\mathbf{l'},\mathbf{l}}]a_{\mathbf{l}}.
\end{eqnarray*}
where $\mathbf{l}<_{\mathrm{lex}}\mathbf{l''}$. 
Since $n_{\mathbf{k},\mathbf{k'}},n_{\mathbf{k'},\mathbf{k}}$ are less than $N_{\mathbf{k}}$,
the rank of the equations is two. We can solve these two equations simultaneously 
with respect to $a_{\mathbf{l}}$ and $a_{\mathbf{l'}}$.
\end{enumerate} 
In both cases, $a_{\mathbf{l}}$ is expressed in terms of the partial solution.
An eigenvector with an eigenvalue $N_{\mathbf{k}}$ is constructed.

\paragraph{\bf Case B}
Let $D'$ be the diagram for $v^{k'_1}\varspadesuit\cdots\varspadesuit v^{k'_n}$ 
obtained from $D$ by changing the rightmost up arrow to an 
unpaired down arrow.
We set $a_{\mathbf{k'}}\neq0$ and 
$a_{\mathbf{k''}}=0$  if $|N_{\mathbf{k''}}|\ge N_{\mathbf{k}}$.
In $\Gamma'$, the vertices $\mathbf{k,k'}$ have only one incoming arrow, 
namely, $\mathbf{k}\leftrightarrow\mathbf{k'}$.
The $\mathbf{k}$- and $\mathbf{k'}$-components in Eqn.(\ref{EP1}) are equal to
\begin{eqnarray*}
[|N_{\mathbf{k'}}|]a_{\mathbf{k'}}=ya_{\mathbf{k}}, \\
{[N_{\mathbf{k}}]}a_{\mathbf{k}}=ya_{\mathbf{k'}}.
\end{eqnarray*}
Note $N_{\mathbf{k}}>0$. 
Since $N_{\mathbf{k'}}=-N_{\mathbf{k}}$, we have the solution 
$y=\pm[N_{\mathbf{k}}]$.
We take $y=[N_{\mathbf{k}}]$ since $y=-[N_{\mathbf{k}}]=[N_{\mathbf{k'}}]$ 
will be considered in Case C.
By a similar argument to Case A, we can construct an eigenvector
with an eigenvalue $[N_{\mathbf{k}}]$.

\paragraph{\bf Case C}
Let $D'$ be the diagram for $v^{k'_1}\varspadesuit\cdots\varspadesuit v^{k'_n}$ 
such that $D'$ is obtained from $D$ by changing the unpaired down arrow to 
an up arrow.
We set $a_{\mathbf{k'}}\neq0$ and $a_{\mathbf{k''}}=0$  if 
$|N_{\mathbf{k''}}|\ge N_{\mathbf{k}}$.
Note that $N_{\mathbf{k}}<0$.
By a similar argument to Case (B), we can construct an eigenvector with
an eigenvalue $N_{\mathbf{k}}$.

In the cases B and C, an eigenvector is characterized
by $\mathbf{k}$ and $\mathbf{k'}$. However, they are distinguished by 
the sign of an eigenvalue.
Therefore, in all cases, an eigenvector is characterized by $\mathbf{k}$
by construction. 
This completes the proof of Theorem.
\end{proof}

Set $L=\sum_{i=1}^{n}m_i$ for $\mathbf{m}=(m_1,\ldots,m_n)$. 
\begin{cor}
\label{cor-EP}
The generator $Y$ has the eigenvector with the eigenvalue 
$[L+1]$ and the multiplicity is one. 
\end{cor}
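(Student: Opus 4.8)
The plan is to deduce the corollary directly from Theorem~\ref{thm-eigenvalues}, which states that the eigenvalue $[j]$ occurs with multiplicity $M_{j}$. Since the quantum integers $[j]$ are pairwise distinct as rational functions in $q$ for distinct $j\in\mathbb{Z}$, it suffices to prove that $M_{L+1}=1$, i.e.\ that there is exactly one $\mathbf{k}\in I_{\mathbf{m}}$ with $N_{\mathbf{k}}=L+1$.

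The first step is the observation that every diagram $D$ for a dual canonical basis in $V_{m_1}\otimes\cdots\otimes V_{m_n}$ is built from exactly $L=\sum_{i}m_i$ arrows, namely the $L$ arrows coming from the tensor factors of $V_1^{\otimes L}$ before the projection $\pi_{m_1}\otimes\cdots\otimes\pi_{m_n}$. In particular the number of up arrows obeys $n_{\uparrow}\le L$, with equality precisely when $D$ consists entirely of up arrows and has no down arrow whatsoever.

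Next I would run through the three cases defining $N_{\mathbf{k}}$. In case (1) (no down arrow with a star) we have $N_{\mathbf{k}}=n_{\uparrow}+1\le L+1$, and equality forces $n_{\uparrow}=L$, so $D$ is the all-up-arrow diagram; this corresponds to the single sequence $\mathbf{k}=\mathbf{m}=(m_1,\ldots,m_n)$, which indeed lies in $I_{\mathbf{m}}$ since $m_i\equiv m_i\pmod 2$. In case (2) (a starred down arrow but no unpaired down arrow) the starred arrow is itself a down arrow, so $n_{\uparrow}\le L-1$ and hence $N_{\mathbf{k}}=n_{\uparrow}\le L-1<L+1$. In case (3) we have $N_{\mathbf{k}}=-(n_{\uparrow}+1)<0\le L<L+1$. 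Thus the only $\mathbf{k}$ with $N_{\mathbf{k}}=L+1$ is $\mathbf{k}=\mathbf{m}$, which gives $M_{L+1}=1$, and Theorem~\ref{thm-eigenvalues} then yields the eigenvalue $[L+1]$ with multiplicity one.

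There is no serious obstacle here: the argument is a short case analysis. The only point requiring care — and the one I would state explicitly — is that in case (2) the down arrow carrying the star must be counted among the $L$ arrows, so it strictly lowers the count of up arrows below $L$; this is exactly what rules out the otherwise tempting possibility $N_{\mathbf{k}}=n_{\uparrow}=L+1$. Combined with the uniqueness of the sequence $\mathbf{m}$ in case (1), this completes the proof.
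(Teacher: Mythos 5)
Your proposal is correct and follows essentially the same route as the paper: the paper also picks $\mathbf{k}=(m_1,\ldots,m_n)$, notes $N_{\mathbf{k}}=L+1$, and observes that every other diagram has fewer than $L$ up arrows, so $M_{L+1}=1$ by Theorem~\ref{thm-eigenvalues}. Your version merely spells out the three-case check of the definition of $N_{\mathbf{k}}$ that the paper leaves implicit.
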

\begin{proof}
Choose $\mathbf{k}=(m_1,\ldots,m_n)$. We have $N_{\mathbf{k}}=L+1$.
The number of up arrows in the diagram for $\mathbf{k'}\neq\mathbf{k}$ 
is less than $L$. Thus, we have $M_{L+1}=1$.
\end{proof}

\begin{cor}
The eigenvalue $[L+1]$ is the largest at $q=1$.
\end{cor}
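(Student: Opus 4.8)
The plan is to reduce the comparison of eigenvalues at $q=1$ to a comparison of integers, and then to bound those integers using the definition of $N_{\mathbf{k}}$. First I would recall that the quantum integer $[j]=(q^j-q^{-j})/(q-q^{-1})$ equals $q^{j-1}+q^{j-3}+\cdots+q^{-(j-1)}$, a sum of $j$ monomials each specializing to $1$ at $q=1$, so that $[j]|_{q=1}=j$ for every integer $j$. Consequently $j\mapsto[j]|_{q=1}$ is the identity map on $\mathbb{Z}$, in particular strictly order-preserving. By Theorem~\ref{thm-eigenvalues} the spectrum of $Y$ consists precisely of the values $[j]$ with $j=N_{\mathbf{k}}$ as $\mathbf{k}$ ranges over $I_{\mathbf{m}}$, so determining the largest eigenvalue at $q=1$ amounts to showing that $\max_{\mathbf{k}\in I_{\mathbf{m}}}N_{\mathbf{k}}=L+1$.

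Next I would bound $N_{\mathbf{k}}$ by inspecting the three cases of its definition. Writing $n_{\uparrow}$ for the number of up arrows in the diagram $D$ of $v^{k_1}\varspadesuit\cdots\varspadesuit v^{k_n}$, one always has $n_{\uparrow}\le L$, since $L$ is the total number of arrows. In case~(1) we get $N_{\mathbf{k}}=n_{\uparrow}+1\le L+1$; in case~(2) the presence of a down arrow with a star forces $n_{\uparrow}\le L-1$, hence $N_{\mathbf{k}}=n_{\uparrow}\le L-1$; and in case~(3) we have $N_{\mathbf{k}}=-(n_{\uparrow}+1)<0$. Thus $N_{\mathbf{k}}\le L+1$ in every case, and equality forces case~(1) with $n_{\uparrow}=L$, i.e. a diagram consisting entirely of up arrows, which occurs only for $\mathbf{k}=(m_1,\ldots,m_n)$; this is exactly the content of Corollary~\ref{cor-EP}.

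Combining the two steps, every eigenvalue other than $[L+1]$ has the form $[j]$ with $j=N_{\mathbf{k}}<L+1$, so at $q=1$ it equals $j<L+1=[L+1]|_{q=1}$. Hence $[L+1]$ is strictly the largest eigenvalue at $q=1$, which is the assertion of the corollary. I expect no genuine obstacle here: the only points requiring care are the uniform bound $N_{\mathbf{k}}\le L+1$ across the three cases of the definition of $N_{\mathbf{k}}$, and the observation that $j\mapsto[j]$ is strictly increasing at $q=1$, both of which are immediate from the definitions together with Corollary~\ref{cor-EP}.
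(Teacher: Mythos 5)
Your proposal is correct and is essentially the argument the paper intends: the corollary is stated without proof because it follows immediately from Corollary~\ref{cor-EP} (whose proof already notes that $n_{\uparrow}<L$ for all $\mathbf{k}'\neq(m_1,\ldots,m_n)$, hence $N_{\mathbf{k}'}\le L$) together with the observation that $[j]|_{q=1}=j$. Your explicit case analysis of the definition of $N_{\mathbf{k}}$ just spells out what the paper leaves implicit.
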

Let $\Psi:=\sum_{\mathbf{k}\in I_{\mathbf{m}}}
\Psi_{\mathbf{k}}v^{k_1}\varspadesuit\ldots\varspadesuit v^{k_n}$ 
be the eigenvector of $Y$ with the eigenvalue $[L+1]$.
\begin{lemma}
The $(-m_1,\ldots,-m_n)$-component of $\Psi$ is non-zero.
\end{lemma}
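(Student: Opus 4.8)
The plan is to specialize at $q=1$ and deduce the statement from Perron--Frobenius theory applied to the nonnegative matrix of $Y$.

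First I would assemble the needed input. By Corollary~\ref{cor-Y1} the matrix $A(q)$ of $Y$ in the dual canonical basis has entries $[n_{\mathbf{k},\mathbf{k'}}]$ with $n_{\mathbf{k},\mathbf{k'}}\in\mathbb{N}$, and $n_{\mathbf{k},\mathbf{k'}}\ge1$ precisely when $\mathbf{k'}\to\mathbf{k}$ in the graph $\Gamma$. Setting $q=1$ gives a nonnegative integer matrix $A:=A(1)$ whose eigenvalues are the integers $N_{\mathbf{k}}$ of Theorem~\ref{thm-eigenvalues} evaluated at $q=1$. From the definition of $N_{\mathbf{k}}$ one has $|N_{\mathbf{k}}|\le n_\uparrow+1\le L+1$, with $|N_{\mathbf{k}}|=L+1$ only for $\mathbf{k}_0:=(m_1,\dots,m_n)$, at which $N_{\mathbf{k}_0}=L+1$ (the value $-(L+1)$ cannot occur, as it would force $L$ up arrows together with a starred down arrow). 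Hence $L+1$ is the unique eigenvalue of maximal modulus, i.e. the spectral radius of $A$, and it is simple by Corollary~\ref{cor-EP}.

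Next I would apply Perron--Frobenius: a nonnegative matrix admits a nonnegative eigenvector for its spectral radius, and since this eigenspace is one-dimensional, the specialization $\Psi|_{q=1}$ lies in it and is, up to an overall sign, nonnegative. (The entries of $\Psi$ are regular at $q=1$: the construction in the proof of Theorem~\ref{thm-eigenvalues} divides only by factors $[L+1]-[n]$ with $n<L+1$, which are nonzero at $q=1$.) Normalizing so that $\Psi_{\mathbf{k}_0}\ne0$, we may take $\Psi|_{q=1}\ge0$, so that $\mathbf{k}_0$ lies in $S:=\{\mathbf{k}:\Psi_{\mathbf{k}}|_{q=1}\ne0\}$. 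Examining the $\mathbf{k}$-component of $A\,\Psi|_{q=1}=(L+1)\,\Psi|_{q=1}$, whenever $\Psi_{\mathbf{k}}|_{q=1}=0$ the nonnegative sum $\sum_{\mathbf{k'}\to\mathbf{k}}n_{\mathbf{k},\mathbf{k'}}\,\Psi_{\mathbf{k'}}|_{q=1}$ vanishes term by term, so every $\Gamma$-predecessor of $\mathbf{k}$ also lies outside $S$. Equivalently, $S$ is closed under following the arrows of $\Gamma$.

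It then remains to produce a directed path in $\Gamma$ from $\mathbf{k}_0$ to $\mathbf{k}_{\min}:=(-m_1,\dots,-m_n)$. Beginning with the all-up diagram and repeatedly applying the term $Y_{(n_\uparrow)}$ of~(\ref{actionYonD}), which turns the rightmost up arrow into a down arrow and carries the nonzero label $[n_\uparrow]$, the weight $\sum_i\kappa_i$ decreases by $2$ at each step; after $L$ steps one reaches the all-down diagram, which by the pairing rules (A)--(D) is exactly the diagram $D_{\min}$ of $\mathbf{k}_{\min}$. This gives $\mathbf{k}_0\to\cdots\to\mathbf{k}_{\min}$, so $\mathbf{k}_{\min}\in S$ and $\Psi_{\mathbf{k}_{\min}}|_{q=1}\ne0$; a fortiori $\Psi_{\mathbf{k}_{\min}}\ne0$ in $\mathbb{C}(q)$, which is the assertion. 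The step I expect to be the main obstacle is this reachability claim: one must verify that at every stage ``turn the rightmost up arrow into a down arrow'' really is a single admissible term of the action, that the number of unpaired up arrows strictly decreases so the procedure cannot stall at an intermediate diagram with $n_\uparrow=0$, and that no step secretly creates an arc that hides an up arrow; if the greedy choice ever stalls, one argues instead that some available up arrow can always be reversed without forming a new arc, keeping a descending directed path to $\mathbf{k}_{\min}$. The Perron--Frobenius input is routine once the modulus bound on the eigenvalues and the regularity of $\Psi$ at $q=1$ are in hand.
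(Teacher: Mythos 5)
Your argument is essentially correct, but it follows a genuinely different route from the paper. The paper works directly over $\mathbb{C}(q)$: it writes down the unique directed chain $(m_1,\ldots,m_n)=\mathbf{k}_0\rightarrow\mathbf{k}_1\rightarrow\cdots\rightarrow\mathbf{k}_L=(-m_1,\ldots,-m_n)$ in $\Gamma$ (your greedy ``flip the rightmost up arrow'' path is exactly this chain, so the reachability step you flag as the main obstacle is unproblematic), observes via Lemma~\ref{lemma-Gamma} that each $\mathbf{k}_i$ has no incoming arrows other than from $\mathbf{k}_{i-1}$ and possibly $\mathbf{k}_{i+1}$, and then reads off from the local $2\times2$ systems that each $a_{\mathbf{k}_i}$ is a nonzero multiple of $a_{\mathbf{k}_{i-1}}$; non-vanishing thus propagates exactly down the chain. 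You instead specialize at $q=1$, identify $L+1$ as the simple spectral radius of the nonnegative matrix of $Y$, invoke Perron--Frobenius to get sign-definiteness of $\Psi|_{q=1}$, and use closure of the support under the arrows of $\Gamma$. Your approach is more robust (it does not need the fine structure of Lemma~\ref{lemma-Gamma}, only that every coefficient on the chain is a nonzero quantum integer), at the cost of the extra analytic bookkeeping about regularity of $\Psi$ at $q=1$ and the identification of the spectrum there.

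One step does need to be repaired: you cannot ``normalize so that $\Psi_{\mathbf{k}_0}\neq0$'' --- whether that component vanishes is not a matter of normalization. The correct justification, which is also what the paper uses, is that the eigenvector for $[L+1]$ constructed in Case~A of the proof of Theorem~\ref{thm-eigenvalues} is built starting from $a_{\mathbf{k}_0}\neq0$, and since $M_{L+1}=1$ (Corollary~\ref{cor-EP}) every eigenvector for $[L+1]$ is a scalar multiple of it; hence $\Psi_{\mathbf{k}_0}\neq0$ and, after rescaling, $\Psi_{\mathbf{k}_0}|_{q=1}>0$, which is what launches your support argument. With that citation inserted, and with the observation that $N_{\mathbf{k}}=-(L+1)$ cannot occur (as you note), the Perron--Frobenius route goes through.
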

\begin{proof}
We make use of the notation in the proof of Theorem~\ref{thm-eigenvalues}.
There exists a sequence of $\mathbf{k}_i$ such that 
\begin{eqnarray*}
\mathbf{k}_0=(m_1,\ldots,m_n)\rightarrow \mathbf{k}_1\rightarrow \mathbf{k}_2
\rightarrow\ldots\rightarrow\mathbf{k}_{L}=(-m_1,\ldots,-m_n).
\end{eqnarray*}
Note that this sequence is unique. 
For some $\mathbf{k}_i$, we have $\mathbf{k}_{i}\leftarrow\mathbf{k}_{i+1}$.
For each $\mathbf{k}_i$, there is no $\mathbf{k'}$ such that 
$\mathbf{k'}\neq\mathbf{k}_{i-1},\mathbf{k}_{i+1}$ and $\mathbf{k'}\rightarrow\mathbf{k}_i$.
From Lemma~\ref{lemma-Gamma}, there exists no triplet $(i,i+1,i+2)$ such that 
$\mathbf{k}_{i}\leftarrow\mathbf{k}_{i+1}\leftarrow\mathbf{k}_{i+2}$.
There is no incoming arrow from itself except $\mathbf{k}_0$.

From Eqn.(\ref{EP1}), when 
$\mathbf{k}_{i-1}\rightarrow\mathbf{k}_{i}\rightarrow\mathbf{k}_{i+1}$, 
we have
\begin{eqnarray*}
[n_{\mathbf{k}_{i-1},\mathbf{k}_{i}}]a_{\mathbf{k}_{i-1}}
=
[L+1]a_{\mathbf{k}_{i}}.
\end{eqnarray*}
When 
$\mathbf{k}_{i-1}\rightarrow\mathbf{k}_{i}\leftrightarrow
\mathbf{k}_{i+1}\rightarrow\mathbf{k}_{i+2}$, 
we have 
\begin{eqnarray*}
[n_{\mathbf{k}_{i-1},\mathbf{k}_i}]a_{\mathbf{k}_{i-1}}
+[n_{\mathbf{k}_{i+1},\mathbf{k}_i}]a_{\mathbf{k}_{i+1}}
&=&
[L+1]a_{\mathbf{k}_{i}}, \\
\left[n_{\mathbf{k}_{i},\mathbf{k}_{i+1}}\right]a_{\mathbf{k}_{i}}
&=&
[L+1]a_{\mathbf{k}_{i+1}}.
\end{eqnarray*}
In both cases, $a_{\mathbf{k}_i}$ is proportional to $a_{\mathbf{k}_{i-1}}$.
Therefore, $a_{(-m_1,\ldots,-m_n)}\neq0$ if $a_{(m_1,\ldots,m_n)}\neq0$.
The unique eigenvector $\Psi$ has non-zero $a_{(m_1,\ldots,m_n)}$.
This completes the proof.
\end{proof}

\begin{defn}
\label{def-normalization}
We normalize $\Psi$ such that 
$\Psi_{(-m_1,\ldots,-m_n)}=1$.
\end{defn}

Let $D$ be a diagram and $E$ be a diagram obtained from $D$ by removing
the projectors.
The graph $\Gamma$ for $D$ is obtained from the graph $\Gamma'$ for $E$
by deleting some vertices.
The eigenvalue problem for $\Psi_D$ is the same as the 
one for $\Psi_{E}$.
Thus this implies 
\begin{lemma}
\label{lemma-psi-noproj}
We have $\Psi_{D}=\Psi_{E}$.
\end{lemma}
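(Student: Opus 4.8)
The plan is to realize both eigenvectors as the unique normalized solution of one and the same linear recursion, and to show that passing to $V_1^{\otimes L}$ only enlarges the index set by vertices that never influence the ones we keep. Recall that the top eigenvector is characterized by the linear system (\ref{EP1}) with $y=[L+1]$ together with the normalization of Definition~\ref{def-normalization}, and that by Corollary~\ref{cor-EP} the eigenvalue $[L+1]$ is simple, so this data determines the eigenvector uniquely. Write $\tilde{\Psi}$ for the corresponding eigenvector in $V_1^{\otimes L}$, normalized so that its all-down component equals $1$; thus $\Psi_E=\tilde{\Psi}_{\kappa}$ for the sequence $\kappa$ attached to $E$. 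Since removing the projectors from the all-down diagram of $V_{m_1}\otimes\cdots\otimes V_{m_n}$ produces the all-down diagram of $V_1^{\otimes L}$, the two normalizations are compatible, and it suffices to prove $\Psi_{\mathbf{k}}=\tilde{\Psi}_{\kappa}$ whenever $\kappa$ is the block-structured sequence attached to $\mathbf{k}\in I_{\mathbf{m}}$.

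The first observation is that the action (\ref{actionYonD}) is purely diagrammatic: each $Y_{(i)}$ is read off from the positions of the arrows, arcs, dashed arcs and stars, and the coefficients $[i]$ never see the projectors. Hence the weighted digraph $\Gamma$ attached to $V_{m_1}\otimes\cdots\otimes V_{m_n}$ is the sub-digraph of the digraph $\Gamma'$ of $V_1^{\otimes L}$ spanned by the block-structured vertices, with identical edge weights: the same rule (\ref{actionYonD}) computes $Y$ in both modules by Theorem~\ref{thm-actionT}, and in the projected module a term $Y_{(i)}(D)$ simply vanishes whenever it would create a solid arc internal to a single projector block, by the relation used in the proof of Theorem~\ref{Thm-dC}. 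The deleted vertices are therefore exactly the diagrams carrying such an intra-block solid arc; equivalently they are the non-block-structured $\kappa$, since a block that is not of the form $+\cdots+-\cdots-$ always contains an adjacent pair $-+$, which the pairing rules (A)--(B) close into an intra-block arc.

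The key step is to show that these deleted vertices never send an arrow into a retained one, i.e. that the block-structured vertices are backward-closed in $\Gamma'$. This follows from the fact that every operation $Y_{(i)}$ in (\ref{actionYonD}) only creates new arcs or dashed arcs and flips free arrows, and never erases an existing solid arc. Consequently, if $D_{\mathbf{k'}}$ has an intra-block solid arc then so does every diagram appearing in $Y(D_{\mathbf{k'}})$. Therefore, if $\mathbf{k'}\to\mathbf{k}$ with $\mathbf{k}$ block-structured, the target $D_{\mathbf{k}}$ has no intra-block arc, so neither does $D_{\mathbf{k'}}$, and $\mathbf{k'}$ is block-structured as well.

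With backward-closure in hand the proof concludes quickly. The subsystem of (\ref{EP1}) indexed by the block-structured vertices is closed and, by the first observation, coincides edge-for-edge and weight-for-weight with the system on $\Gamma$. The restriction of $\tilde{\Psi}$ to the block-structured vertices therefore solves this system and takes the value $1$ at the all-down vertex, so by simplicity of $[L+1]$ it must equal $\Psi$. Reading off components gives $\Psi_{\mathbf{k}}=\tilde{\Psi}_{\kappa}$, that is, $\Psi_D=\Psi_E$. The main obstacle is precisely the backward-closure step: one has to identify which vertices are deleted (those whose diagram has a solid arc inside a block, equivalently those vanishing under the projection) and verify from the diagrammatic rules that this defect can only be propagated forward along the arrows of $\Gamma'$, never destroyed; everything else is bookkeeping around the uniqueness statement of Corollary~\ref{cor-EP}.
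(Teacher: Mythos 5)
Your proof is correct and follows essentially the same route as the paper, which simply asserts that the graph $\Gamma$ for $D$ is the vertex-deleted subgraph of the graph for $E$ and that the eigenvalue problems therefore coincide. Your argument usefully supplies the two points the paper leaves implicit — that the retained (block-structured) vertices are backward-closed because the operations $Y_{(i)}$ never destroy an existing solid arc, and that simplicity of the eigenvalue $[L+1]$ plus the common normalization at the all-down diagram forces the restricted eigenvector to agree with $\Psi$.
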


\subsection{\texorpdfstring{Eigenvector of $Y$}{Eigenvector of Y}}
\label{sec-EigensystemY}
We first consider the eigenfunction of $Y$ with the eigenvalue $[N+1]$
on standard bases.
Then, we will show an explicit expression of $\Psi_{\mathbf{k}}$ on
the Kazhdan--Lusztig bases.

Let $v^{\kappa}:=v^{\kappa_1}\otimes\cdots\otimes v^{\kappa_N}$ with 
$\kappa_i=\pm1$, and $J:=\{i| \kappa_i=1\}$. 
We define a vector indexed by a binary string $\kappa$ by 
\begin{eqnarray}
\Psi^{0}_{\kappa}:=\prod_{i\in J}q^{N+1-i},
\end{eqnarray}
and $\Psi^{0}:=\sum_{\kappa}\Psi^{0}_{\kappa}v^{\kappa}$.
\begin{prop}
We have $Y\Psi^{0}=[N+1]\Psi^{0}$.
\end{prop}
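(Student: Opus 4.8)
The plan is to exploit the fact that $\Psi^{0}$ is completely factorized as a tensor product of length‑one vectors, and then to run an induction on $N$ through the coproduct in Eqn.(\ref{coproductY}). Writing $w_{a}:=v^{-1}+q^{a}v^{1}\in V_{1}$ and denoting by $\Phi_{N}$ the vector $\Psi^{0}$ attached to a given $N$, one reads off from $\Psi^{0}_{\kappa}=\prod_{i\in J}q^{N+1-i}$ that
\[
\Phi_{N}=w_{N}\otimes w_{N-1}\otimes\cdots\otimes w_{1},
\]
because the coefficient of $v^{\kappa_{1}}\otimes\cdots\otimes v^{\kappa_{N}}$ in this product is exactly $\prod_{i:\kappa_{i}=1}q^{N+1-i}$. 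Peeling off the leftmost factor and reindexing $i\mapsto i-1$ shows that the remaining tensorand is precisely the same vector for length $N-1$, i.e. $\Phi_{N}=w_{N}\otimes\Phi_{N-1}$; this is the structure that drives the induction.

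Next I would compute the action on the single factor $V_{1}$. On $V_{1}$ the symmetric pairing is the identity matrix in the basis $\{v_{1},v_{-1}\}$, so the dual basis coincides with the standard basis and $K,E,F$ act by the formulas of Section~\ref{section-QG}; equivalently this is the $N=1$ instance of Eqn.(\ref{actionYonSB}). A one‑line computation then gives
\[
K\,w_{N}=q^{-1}v^{-1}+q^{N+1}v^{1},\qquad (q^{-1}KE+F)\,w_{N}=q^{N}v^{-1}+v^{1}.
\]
Applying $\Delta(Y)=K\otimes Y+q^{-1}KE\otimes 1+F\otimes 1$ to $w_{N}\otimes\Phi_{N-1}$ and using the induction hypothesis $Y\Phi_{N-1}=[N]\Phi_{N-1}$, the coefficient of $v^{-1}\otimes\Phi_{N-1}$ becomes $[N]q^{-1}+q^{N}$ and that of $v^{1}\otimes\Phi_{N-1}$ becomes $[N]q^{N+1}+1$.

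The statement then collapses to the two quantum‑integer identities $[N]q^{-1}+q^{N}=[N+1]$ and $[N]q^{N+1}+1=[N+1]q^{N}$, both immediate from $[N]=(q^{N}-q^{-N})/(q-q^{-1})$; together they assert exactly that $Y(w_{N}\otimes\Phi_{N-1})=[N+1]\,w_{N}\otimes\Phi_{N-1}=[N+1]\Phi_{N}$, which finishes the induction once the trivial base case $N=1$ is checked. I do not expect a genuine obstacle here: the only delicate points are the index bookkeeping in the factorization (so that the truncated vector really is $\Phi_{N-1}$) and keeping the dual‑picture coproduct conventions straight. As a consistency check, the eigenvalue $[N+1]$ and its multiplicity one are exactly what Corollary~\ref{cor-EP} predicts. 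An alternative route, avoiding the factorization, is to compare the coefficient of each $v^{\kappa}$ directly from Eqn.(\ref{actionYonSB}); this reduces to showing that the multiset $\{d_{N}\}\cup\{\,d_{i-1}-\kappa_{i}(N+1-i):1\le i\le N\,\}$ equals $\{N,N-2,\dots,-N\}$ independently of $\kappa$, which one proves by the same kind of induction after peeling off $\kappa_{1}$.
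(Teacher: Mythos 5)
Your proof is correct, but it takes a genuinely different route from the paper's. The paper proves the identity by brute force on matrix elements: it writes an arbitrary binary string $\kappa$ as a sequence of runs of $+$'s and $-$'s, computes separately the contributions to $\sum_{\kappa'}Y_{\kappa,\kappa'}\Psi^{0}_{\kappa'}$ coming from flipping a $+$ (giving $q^{-N+N_{\uparrow}-1}[N_{\uparrow}]\Psi^{0}_{\kappa}$), from flipping a $-$ (giving $q^{N-N_{\downarrow}+1}[N_{\downarrow}]\Psi^{0}_{\kappa}$), and from the diagonal term $q^{N_{\uparrow}-N_{\downarrow}}\Psi^{0}_{\kappa}$, and then checks that the three pieces sum to $[N+1]\Psi^{0}_{\kappa}$. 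You instead observe the structural fact that $\Psi^{0}$ factorizes as the product state $w_{N}\otimes\cdots\otimes w_{1}$ with $w_{a}=v^{-1}+q^{a}v^{1}$ (which does reproduce $\Psi^{0}_{\kappa}=\prod_{i\in J}q^{N+1-i}$), and push the eigenvalue equation through the coproduct (\ref{coproductY}) by induction on $N$, exactly as the paper itself does when establishing Eqn.~(\ref{actionYonSB}) — but here applied to the eigenvector rather than to a standard basis vector. Your two key computations $Kw_{N}=q^{-1}v^{-1}+q^{N+1}v^{1}$ and $(q^{-1}KE+F)w_{N}=q^{N}v^{-1}+v^{1}$ are right (and your justification that the dual basis of $V_{1}$ coincides with the standard basis, so the formulas of Section~\ref{section-QG} apply, is the correct way to ground them), and the two quantum-integer identities $[N]q^{-1}+q^{N}=[N+1]$ and $[N]q^{N+1}+1=q^{N}[N+1]$ close the induction. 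What your approach buys is brevity and an explanation of \emph{why} the eigenvector exists — it is a factorized state whose one-site factors are tuned so that the eigenvalue equation propagates through $\Delta(Y)$ — whereas the paper's computation requires no such structural insight but pays for it with heavier bookkeeping over runs. Both arguments are complete.
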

\begin{proof}
In general, a binary string $\kappa$ is written as 
\begin{eqnarray*}
\underbrace{+\ldots+}_{n_1}
\underbrace{-\ldots-}_{n'_1}
\underbrace{+\ldots+}_{n_2}
\ldots
\underbrace{+\ldots+}_{n_I}
\underbrace{-\ldots-}_{n'_I}.
\end{eqnarray*}
We set $N_{\uparrow}:=\sum_{i=1}^{I}n_i$,  
$N_{\downarrow}:=\sum_{i=1}^{I}n'_i$ and 
$J_{i}:=\{j\in\mathbb{N}| 1+\sum_{k=1}^{i-1}(n_k+n'_k)\le j\le 
n_i+\sum_{k=1}^{i-1}(n_k+n'_k)\}$.
We will show 
$\sum_{\kappa'}Y_{\kappa,\kappa'}\Psi^{0}_{\kappa'}
=[N+1]\Psi^{0}_{\kappa}$ where $Y_{\kappa,\kappa'}$ is a matrix
representation of $Y$.
From the action of $Y$ on $v^{\kappa}$, $v^{\kappa}$ appears 
in the expansion of $Y(v^{\kappa'})$ if and only if there exists 
$i$ such that $\kappa'_{i}=-\kappa_{i}$ and 
$\kappa_{j}=\kappa'_{j}$ for $j\neq i$, or $\kappa'=\kappa$ .

The binary string $\kappa'$ with $\kappa\neq\kappa'$ and  
$Y_{\kappa,\kappa'}\neq0$ can be obtained from $\kappa$ 
by changing $\kappa_i$ to $-\kappa_i$ for $1\le i\le N$.
We have two cases for $\kappa_i$: 1) $\kappa_i=+$ and 
2) $\kappa_i=-$.

\paragraph{Case 1}
When $j\in J_{i}$, $1\le i\le I$, we have 
$Y_{\kappa,\kappa'}=q^{j-1-2\sum_{k=1}^{i-1}n'_{k}}$ and 
$\Psi_{\kappa'}=q^{-(N+1-j)}\Psi_{\kappa}$. 
The sum of contributions of these $\kappa'$'s is given by
\begin{eqnarray*}
\Psi^{0}_{\kappa}\sum_{i=1}^{I}\sum_{j\in J_{i}}q^{-N-2-2j-2\sum_{k=1}^{i-1}n'_{k}}
&=&\sum_{i=1}^{N_{\uparrow}}q^{-N-2+2j}\Psi^{0}_{\kappa} \\
&=&q^{-N+N_{\uparrow}-1}[N_{\uparrow}]\Psi^{0}_{\kappa}
\end{eqnarray*}

\paragraph{Case 2}
Let $J'_{i}:=\{j\in\mathbb{N}|1+n_i+\sum_{k=1}^{i-1}(n_k+n'_k)\le j\le
\sum_{k=1}^{i}(n_k+n'_{k})\}$.
By a similar argument to Case 1, the sum of contributions is given by
\begin{eqnarray*}
\Psi^{0}_{\kappa}\sum_{i=1}^{I}\sum_{j\in J'_{i}}q^{N+2-2j+2\sum_{k=1}^{i}n_{k}} 
=q^{N-N_{\downarrow}+1}[N_{\downarrow}]\Psi^{0}_{\kappa}.
\end{eqnarray*}

Since $Y_{\kappa,\kappa}=q^{N_{\uparrow}-N_{\downarrow}}$, we have a 
contribution from $\kappa$ itself, which is 
$q^{N_{\uparrow}-N_{\downarrow}}\Psi^{0}_{\kappa}$. 
The sum of these three contributions gives $[N+1]\Psi^{0}_{\kappa}$. 
This completes the proof.
\end{proof}

Below, we will consider the eigenvector $\Psi$ on the 
Kazhdan--Lusztig bases.
From Lemma~\ref{lemma-psi-noproj}, it is enough to consider 
$\Psi_{\mathbf{k}}$ with $k_i\in\{1,-1\}$ for all $1\le i\le n$ 
instead of $\Psi_{\mathbf{k'}}$ with $\mathbf{k'}\in I_{\mathbf{m}}$.

Let $D$ be a diagram for a dual canonical basis, $N$ be the total number of 
all arrows and $N_1$ is the number of an unpaired down arrow 
($N_1$ is either $0$ or $1$).
Let $S$ be the set of all arcs of $D$.
We define the sets of arcs $S_R, S_M$ and $S_L$ as follows. 
If $D$ has a down arrow with a star, we define
\begin{eqnarray*}
S_R&:=&\{A\in S|\text{$A$ is right to the down arrow with a star}\}, \\
S_M&:=&
\begin{cases}
\{A\in S\setminus S_R|\text{$A$ is right to the unpaired down arrow}\}, & \text{for $N_1=1$}, \\
\{A\in S\setminus S_R|\text{$A$ is right to the leftmost down arrow forming a dashed arc}\} &
\text{for $N_1=0$}, 
\end{cases} \\
S_L&:=&
\begin{cases}
\{A\in S\setminus(S_R\cup S_M)|\text{$A$ is right to the rightmost up arrow}\}, & 
\text{if $D$ has an up arrow}, \\
\{A\in S\setminus(S_R\cup S_M)\}, & \text{if $D$ has no up arrows}
\end{cases}
\end{eqnarray*}
and otherwise $S_R, S_M$ and $S_L$ are defined as the empty set. 
An arc $A$ is said to be an {\it outer arc} if there exists neither an arc nor 
a dashed arc outside of $A$.
We denote by $S^{+}$ the set of outer arcs and 
define $S^{+}_R:=S_R\cap S^{+}, S^{+}_M:=S_M\cap S^{+}$ and $S^{+}_L:=S_L\cap S^{+}$.
We define $T$ as the set of all dashed arcs of $D$.

We define
\begin{eqnarray*}
N_2&:=&
\begin{cases}
|S|, & \text{$D$ does not have a down arrow with a star}, \\
|S|+|T|+1 & \text{$D$ has a down arrow with a star},
\end{cases} \\
N_3&:=&|S_M|+|T|,\\
N_4&:=&N-|S|+|S_R|+|S_M|+|S_L|+1, \\
N_5&:=&
\begin{cases}
[N_4]/[N_3] & \text{$D$ has a down arrow with a star and $N_1=0$}, \\
1 & \text{otherwise}.
\end{cases}
\end{eqnarray*}

We enumerate all arrows (including arrows forming an arc, a dashed arc and 
a down arrow with a star) from left to right.
Let $s$ be the integer assigned to the down arrow with a star.
If the $i$-th down arrow and the $j$-th ($i<j$) up arrow form an arc $A$, then 
$A$ is said to be size $(j-i+1)/2$.
We denote by $m_A$ the size of an arc $A$.
For an arc $A$, we define the following values:
\begin{eqnarray*}
&&d_{1,A}:=N-j, \qquad
d_{2,A}:=\frac{i-s+1}{2}. \\
&&N_6:=
\begin{cases}
\displaystyle
\prod_{A\in S_L^+\cup S^+_M}
\frac{[1+m_A+d_{1,A}]}{[1+2m_A+d_{1,A}]}, & N_1=0, \\
\displaystyle
\prod_{A\in S^+_M}
\frac{[1+m_A+d_{1,A}]}{[1+2m_A+d_{1,A}]}, & N_1=1, \\
\end{cases}\\
&&N_7:=\prod_{A\in S_R^+}
\prod_{i=0}^{N_3}\frac{[d_{2,A}+i]}{[d_{2,A}+m_A+i]}
\prod_{B\in S_M}
\frac{[d_{2,A}+m_A+c_{B}]}{[d_{2,A}+c_{B}]}
\end{eqnarray*}
where $c_{B}, B\in S_M$, is the sum of the number of arcs in $S_M$ 
right to $B$ or outside of $B$ (including $B$), and the number of 
dashed arcs right to $B$.
Suppose that $k$-the down arrow and the $l$-th ($k<l$) down arrow 
form a dashed arc $A$. 
Then $A$ is said to be size $(l-k+1)/2$.
Let $T'$ be the set of dashed arcs except the leftmost one.
We define
\begin{eqnarray*}
N_8:=
\begin{cases}
\prod_{A\in T}[d_{3,A}+m_A]^{-1}, & \text{for $N_1=1$}, \\
\prod_{A\in T'}[d_{3,A}+m_A]^{-1}, & \text{for $N_1=0$},
\end{cases}
\end{eqnarray*}
where $d_{3,A}:=(s-l+1)/2$.
Let $U$ be the union of $T$ and a down arrow with a star.
We define 
\begin{eqnarray*}
d_{4,A}&:=&
\begin{cases}
1+(N-s)/2, & \text{for a down arrow with a star}, \\
1+(N-k)/2, & \text{for a dashed arc},
\end{cases} \\
N_9&:=&
\frac{\prod_{i=1}^{N-N_2}(q^{i}+q^{-i})}
{\prod_{A\in U}(q^{d_{4,A}}+q^{-d_{4,A}})}
\end{eqnarray*}

We enumerate up arrows, arcs and dashed arcs and an unpaired down 
arrow from left. 
If there are arcs inside an arc or a dashed arc, we increase 
the integer one by one.
Let $N_A$ be the integer assigned to an arc or a dashed arc $A$,
and $N_\downarrow$ be the integer assigned to an unpaired down arrow.
We define
\begin{eqnarray*}
N_{10}:=
\begin{cases}
\prod_{A\in S\cup T}[N_A], & \text{for $N_1=0$}, \\
{[N_\downarrow]}\prod_{A\in S\cup T}[N_A] & \text{for $N_1=1$}.
\end{cases}
\end{eqnarray*}

\begin{example}	
Let $D$ be a diagram depicted as 
\begin{eqnarray*}
\uparrow\uparrow
\raisebox{-0.72\totalheight}{
\begin{tikzpicture}
\draw(0,0)..controls(0,-1)and(1,-1)..(1,0);
\draw(1/3,0)..controls(1/3,-0.5)and(2/3,-0.5)..(2/3,0);
\end{tikzpicture}}
\uparrow\! 
\raisebox{-0.5\totalheight}{
\begin{tikzpicture}
\draw(0,0)..controls(0,-0.5)and(1/3,-0.5)..(1/3,0);
\end{tikzpicture}}\,
\raisebox{-0.5\totalheight}{
\begin{tikzpicture}
\draw[dashed,thick](0,0)..controls(0,-0.5)and(1/3,-0.5)..(1/3,0);
\end{tikzpicture}}\,
\raisebox{-0.72\totalheight}{
\begin{tikzpicture}
\draw[dashed,thick](0,0)..controls(0,-1)and(1,-1)..(1,0);
\draw(1/3,0)..controls(1/3,-0.5)and(2/3,-0.5)..(2/3,0);
\end{tikzpicture}}\!\!\!
\raisebox{-0.62\height}{
\begin{tikzpicture}
\draw(0,0)--(0,-0.5)node{$\bigstar$};
\end{tikzpicture}}\!\!\!
\raisebox{-0.72\totalheight}{
\begin{tikzpicture}
\draw(0,0)..controls(0,-1)and(1,-1)..(1,0);
\draw(1/3,0)..controls(1/3,-0.5)and(2/3,-0.5)..(2/3,0);
\end{tikzpicture}}
\end{eqnarray*}
Then we have $N_2=9$, $N_5=[19]/[4], N_6=[13]/[14], N_7=[4][5]^{-1}[6]^{-1}, N_8=[3]^{-1}$, 
$N_{10}=[3]\cdot[4]\cdot[6]\cdot[7]\cdot[8]\cdot[9]\cdot[10]\cdot[11]$ and 
\begin{eqnarray*}
N_9=(q+q^{-1})(q^2+q^{-2})(q^4+q^{-4})\prod_{i=7}^{11}(q^i+q^{-i}).
\end{eqnarray*}  
\end{example}

\begin{theorem}
\label{thm-generic-psi}
In the above notation, we have 
\begin{eqnarray}
\label{generic-psi}
\Psi_D
=\prod_{A\in S}[m_A]^{-1}\cdot
N_5\cdot N_6 \cdot N_7 \cdot N_8\cdot N_9\cdot N_{10}.
\end{eqnarray}
\end{theorem}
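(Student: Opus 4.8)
The plan is to use uniqueness. By Corollary~\ref{cor-EP} the $[L+1]$-eigenspace of $Y$ is one-dimensional, and Definition~\ref{def-normalization} fixes the scale via $\Psi_{(-m_1,\ldots,-m_n)}=1$. Hence, writing $\Phi_D$ for the right-hand side of~(\ref{generic-psi}), it suffices to prove that $\Phi$ (i) satisfies the eigenvalue recursion~(\ref{EP1}) with $y=[L+1]$ for every target $\mathbf{k}$, and (ii) takes the value $1$ on the all-down diagram. By Lemma~\ref{lemma-psi-noproj} I reduce at once to $\mathbf{m}=(1,\ldots,1)$, so every diagram is built from arcs, dashed arcs, up arrows, an optional starred down arrow and an optional unpaired down arrow, and $L=N$.

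For step (i), fix a target diagram $D$ and read the explicit action~(\ref{actionYonD}) of Theorem~\ref{thm-actionT} backwards to list all sources $D'$ with $D'\to D$ together with their weights. Since $Y(D')=\sum_{i}[i]\,Y_{(i)}(D')$, a source contributes $D$ with coefficient $[i]$ precisely when $Y_{(i)}(D')=D$; concretely this happens through one of a few local moves: an arc of $D$ arises by joining the $i$-th and $(i+1)$-th up arrows of $D'$; the rightmost down-related feature of $D$ arises from the rightmost up arrow of $D'$ (possibly fusing with an unpaired down arrow into a dashed arc); or $D=D'$ via the diagonal term $Y_{(n_{\uparrow}+1)}$. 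Substituting $\Phi$ into~(\ref{EP1}) and dividing by $\Phi_D$, the statement becomes that the weighted sum of the ratios $\Phi_{D'}/\Phi_D$ over these moves equals $[L+1]$.

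The crux, and the main obstacle, is the bookkeeping behind these ratios. The six auxiliary factors $N_5,\ldots,N_{10}$ are engineered so that each local move alters only a controlled part of the data $m_A$, $d_{1,A}$, $d_{2,A}$, $d_{3,A}$, $d_{4,A}$ and of the sets $S_R,S_M,S_L,T$, making each $\Phi_{D'}/\Phi_D$ an explicit ratio of quantum integers with predictable arguments. I would organize the verification as an induction that strips off the leftmost arrow of $D$, mirroring the case split (A)--(C) and (B1)--(B3) in the proof of Theorem~\ref{thm-actionT}, and collapse the resulting weighted sum to $[L+1]$ using identities of the type $q^{a}[b]+q^{-b}[a]=[a+b]$ (the generalization of $q[i]+q^{-i}=[i+1]$ already used there) together with the $q$-binomial recursion of Lemma~\ref{lemma-qbinomial}. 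I expect the two technical lemmas collected in Appendix~\ref{Sec:app} to furnish exactly the nontrivial quantum-integer identities that close this induction.

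Step (ii), evaluating $\Phi$ on the all-down diagram, reduces to a single quantum-integer identity among $N_5,N_8,N_9,N_{10}$ (the remaining factors being empty products), and is where the boundary conventions for empty products and small quantum integers must be pinned down. As an independent cross-check, and an alternative route to~(\ref{generic-psi}), one may compare with the standard-basis eigenvector $\Psi^{0}$: since the eigenvalue is simple, $\Psi=c\,\Psi^{0}$ for a scalar $c$, so pairing with the canonical basis gives $\Psi_{\mathbf{k}}=c\sum_{\kappa}\Psi^{0}_{\kappa}R_{\kappa,\mathbf{k}}$, and Corollary~\ref{cor-RP} rewrites each $R_{\kappa,\mathbf{k}}=P^{+}_{\eta(\kappa),\eta(\mathbf{k})}$ as a Kazhdan--Lusztig polynomial; expanding this sum with the ballot-strip machinery of Section~\ref{Sec:Comb} recovers the product formula and fixes $c$.
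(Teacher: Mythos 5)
Your global strategy is the same as the paper's: both arguments amount to checking that the explicit product formula satisfies the eigenvalue recursion~(\ref{EP1}) with $y=[L+1]$, invoking the simplicity of that eigenvalue (Corollary~\ref{cor-EP}) and the normalization of Definition~\ref{def-normalization}, after reducing to $\mathbf{m}=(1,\ldots,1)$ via Lemma~\ref{lemma-psi-noproj}. You also correctly identify that the whole difficulty lies in computing the ratios $\Phi_{D'}/\Phi_D$ for the incoming arrows $D'\to D$ and collapsing the weighted sum using the Appendix lemmas. However, the proposal stops exactly where the proof begins: the verification of~(\ref{EP1}) is the entire content of the argument, and in the paper it occupies five preparatory lemmas (Lemmas~\ref{lemma-psi-1}, \ref{lemma-nodownarrow}, \ref{lemma-noarcleftstar}, \ref{lemma-link21}, \ref{lemma-link22}) plus a two-case main induction, none of which you carry out. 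Writing ``I expect the two technical lemmas \ldots to furnish exactly the nontrivial quantum-integer identities'' is a hope, not a proof; Lemmas~\ref{lemma-appendix1} and~\ref{lemma-appendix2} only close the induction after the incoming-arrow weights and the ratios $\Phi_{D_i}/\Phi_D$ have been computed explicitly for each structural case, which is where all the work is.

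There is also a concrete flaw in the one organizational choice you do commit to. You propose to induct by stripping off the leftmost arrow, mirroring cases (A)--(C), (B1)--(B3) of Theorem~\ref{thm-actionT}. That recursion works for establishing the \emph{action} of $Y$ because the coproduct gives $Y(\uparrow\pi_0)=q\uparrow Y(\pi_0)+\downarrow\pi_0$, relating length $L$ to length $L-1$. It does not transfer to the eigenvector: the target eigenvalue $[L+1]$ changes with the length, so the length-$(L-1)$ eigenvector (eigenvalue $[L]$) is not related component-wise to the length-$L$ one, and the quantities $N_5,\ldots,N_{10}$ depend on right-anchored positional data such as $d_{1,A}=N-j$ that shift under left-truncation. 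The paper instead inducts on the number of arcs and dashed arcs at \emph{fixed} length: every source $D_i\to D$ other than the partner $D'$ (obtained by toggling the rightmost up arrow/unpaired down arrow) has strictly fewer arcs or dashed arcs, so the induction hypothesis supplies all but two unknowns and the remaining coupled $2\times2$ system in $(\Psi_D,\Psi_{D'})$ is solved explicitly. You would need to replace your induction by this one (or something equivalent) for the plan to go through. Your closing alternative via $\Psi=cR\Psi^0$ is a legitimate independent route in principle --- the paper uses exactly this identity, but only afterwards, to prove positivity in Proposition~\ref{prop-pint-psi} --- yet extracting the closed product formula from $\sum_{\kappa}\Psi^0_{\kappa}P^{+}_{\eta(\kappa),\eta(\mathbf{k})}$ would require summing Kazhdan--Lusztig polynomials in closed form, which is not easier than the direct verification and is nowhere justified.
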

Before proceeding to the proof of Theorem~\ref{thm-generic-psi}, 
we prove five lemmas used later.

Let $\mathbf{k}_i=(\underbrace{1,\ldots,1}_{N-i+1},\underbrace{-1,\ldots,-1}_{i-1})$
for $1\le i\le N+1$. 
Then, $\{\mathbf{k}_i| 1\le i\le n\}$ satisfies a unique sequence
\begin{eqnarray*}
\mathbf{k}_1=(1,\ldots,1)\rightarrow
\mathbf{k}_2\leftrightarrow
\mathbf{k}_3\rightarrow\cdots
\rightarrow \mathbf{k}_{N+1}=(-1,\ldots,-1).
\end{eqnarray*}
\begin{lemma}
\label{lemma-psi-1}
The components $\Psi_{i}:=\Psi_{\mathbf{k}_i}$, $1\le i\le N+1$, 
are 
\begin{eqnarray}
\label{Psi00}
\Psi_{2m+1}&=&\frac{\prod_{i=1}^{N-m}(q^{i}+q^{-i})}{\prod_{i=1}^{m}(q^{i}+q^{-i})}
\genfrac{[}{]}{0pt}{}{N-m}{m}, \\
\label{Psi01}
\Psi_{2m}&=&\frac{\prod_{i=1}^{N-m}(q^{i}+q^{-i})}{\prod_{i=1}^{m}(q^{i}+q^{-i})}
\frac{[N+1]}{[m]}
\genfrac{[}{]}{0pt}{}{N-m}{m-1}.
\end{eqnarray}
\end{lemma}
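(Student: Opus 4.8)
The plan is to pin down the components $\Psi_i=\Psi_{\mathbf{k}_i}$ by solving the recurrences that the eigenvalue equation~(\ref{EP1}) imposes along the chain $\mathbf{k}_1\to\mathbf{k}_2\leftrightarrow\mathbf{k}_3\to\cdots\to\mathbf{k}_{N+1}$, and then to check that the closed forms~(\ref{Psi00})--(\ref{Psi01}) solve them. Because the eigenvalue $[N+1]$ occurs with multiplicity one (Corollary~\ref{cor-EP}) and $\Psi$ is normalized by $\Psi_{N+1}=1$ (Definition~\ref{def-normalization}), it is enough to produce \emph{some} sequence satisfying the recurrences and this normalization. By the lemma preceding Definition~\ref{def-normalization} together with Lemma~\ref{lemma-Gamma}, every vertex $\mathbf{k}_i$ of this chain receives incoming arrows only from its neighbours $\mathbf{k}_{i\pm1}$, so~(\ref{EP1}) restricted to the chain closes into a finite linear system in the $\Psi_i$ alone.

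First I would read off the transition coefficients from the explicit action in Theorem~\ref{thm-actionT}. The diagram of $\mathbf{k}_i=(\underbrace{1,\ldots,1}_{N-i+1},\underbrace{-1,\ldots,-1}_{i-1})$ carries $n_{\uparrow}=N-i+1$ up arrows and no oriented arc; it is of Case A for $i=1$, of Case B (a starred down arrow, no unpaired down arrow) for even $i$, and of Case C (a starred down arrow and one unpaired down arrow) for odd $i\ge 3$. The operation $Y_{(n_{\uparrow})}$ turns the rightmost up arrow into a down arrow and produces $\mathbf{k}_{i+1}$ with coefficient $[n_{\uparrow}]$, while in Case C the operation $Y_{(n_{\uparrow}+1)}$ turns the unpaired down arrow into an up arrow and produces $\mathbf{k}_{i-1}$ with coefficient $[n_{\uparrow}+1]$; the remaining operations $Y_{(j)}$ with $j<n_{\uparrow}$ each create an oriented arc and hence leave the chain. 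This yields
\[
[N-i+2]\,\Psi_{i-1}=[N+1]\,\Psi_i\qquad(i\ \text{odd},\ i\ge 3),
\]
\[
[N-i+2]\,\Psi_{i-1}+[N-i+1]\,\Psi_{i+1}=[N+1]\,\Psi_i\qquad(i\ \text{even}),
\]
where in the even relation the convention $[0]=0$ absorbs the boundary case $i=N+1$. The equation at $\mathbf{k}_1$ is an identity (its only self-contribution is the diagonal $[N+1]\Psi_1$), so $\Psi_1$ is a free scale; reading the two recurrences forward from $\Psi_1$ and fixing it by $\Psi_{N+1}=1$ determines every $\Psi_i$ uniquely.

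It then remains to verify that~(\ref{Psi00})--(\ref{Psi01}) satisfy both recurrences and the normalization. For $i=2m+1$ the odd recurrence reduces, after cancelling the common prefactor, to the $q$-binomial identity
\[
[N-2m+1]\genfrac{[}{]}{0pt}{}{N-m}{m-1}=[m]\genfrac{[}{]}{0pt}{}{N-m}{m},
\]
which is immediate from the factorial expression of $\genfrac{[}{]}{0pt}{}{n}{k}$. Eliminating $\Psi_{2m+1}$ from the even recurrence through the odd one reduces it to the two-term ratio $\Psi_{2m}/\Psi_{2m-1}=[N-2m+2][N+1]/([2N-2m+2][2m])$, which follows from the identities $[2a]=(q^{a}+q^{-a})[a]$ and $[a]^2-[b]^2=[a+b][a-b]$. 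Finally $\Psi_{N+1}=1$ is checked by substituting $m=N/2$ into~(\ref{Psi00}) when $N$ is even and $m=(N+1)/2$ into~(\ref{Psi01}) when $N$ is odd. By uniqueness, the closed forms equal the true components.

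The main obstacle is the combinatorial bookkeeping of the middle step: correctly sorting each $\mathbf{k}_i$ into Cases A/B/C, identifying precisely which $Y_{(j)}$ lands on a chain-neighbour and with which quantum-integer coefficient, and invoking Lemma~\ref{lemma-Gamma} to rule out incoming arrows from diagrams off the chain. Once these coefficients are fixed the surviving computation is the routine $q$-integer algebra of the preceding paragraph.
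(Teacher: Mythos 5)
Your proposal is correct and follows essentially the same route as the paper: you extract the same two local equations $[N-2m+2]\Psi_{2m-1}+[N-2m+1]\Psi_{2m+1}=[N+1]\Psi_{2m}$ and $[N-2m+1]\Psi_{2m}=[N+1]\Psi_{2m+1}$ from the chain structure, solve them to get the two-term ratio $\Psi_{2m}/\Psi_{2m-1}=[N+1][N-2m+2]/([2m][2N-2m+2])$, and fix the scale by $\Psi_{N+1}=1$. The only difference is that you spell out the case analysis of the $Y$-action and the $q$-binomial verification that the paper leaves implicit.
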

\begin{proof}
We use the notation in the proof of Theorem~\ref{thm-eigenvalues}.
The sequence is locally equivalent to 
\begin{eqnarray*}
\mathbf{k}_{2m-1}\rightarrow\mathbf{k}_{2m}
\leftrightarrow\mathbf{k}_{2m+1}.
\end{eqnarray*}
Since $\Psi$ is the eigenvector with the eigenvalue $[N+1]$, 
the components $\Psi_{i}:=\Psi_{\mathbf{k}_i}$, $1\le i\le N+1$, 
satisfy the following eigenvalue problem:
\begin{eqnarray*}
{[N-2m+2]}\Psi_{2m-1}+{[N-2m+1]}\Psi_{2m+1}&=&{[N+1]}\Psi_{2m}, \\
{[N-2m+1]}\Psi_{2m}&=&{[N+1]}\Psi_{2m+1}.
\end{eqnarray*}
The solution is 
\begin{eqnarray*}
\Psi_{2m}&=&\frac{[N+1][N-2m+2]}{[2m][2N-2m+2]}\Psi_{2m-1}, \\
\Psi_{2m+1}&=&\frac{[N-2m+1][N-2m+2]}{[2m][2N-2m+2]}\Psi_{2m-1}.
\end{eqnarray*}
Suppose $\Psi_{1}=\prod_{i=1}^{N}(q^{i}+q^{-i})$. 
The coefficients $\Psi_{k}$ satisfy Eqns.(\ref{Psi00}) and (\ref{Psi01}).
We have $\Psi_{N+1}=1$. 
This normalization is compatible with Definition~\ref{def-normalization}.
This completes the proof.
\end{proof}

Let $D$ be a diagram
\begin{eqnarray*}
\underbrace{\uparrow\cdots\uparrow}_{n_1}
\raisebox{-0.72\totalheight}{
\begin{tikzpicture}[scale=0.5]
\draw(0,0)..controls(0,-2)and(3,-2)..(3,0);
\draw(1.5,-2.2)node{size $m_1$};
\draw(1.5,-0.5)node{$\vdots$};
\end{tikzpicture}}
\underbrace{\uparrow\cdots\uparrow}_{n_2}
\raisebox{-0.9\totalheight}{
\begin{tikzpicture}
\draw(0,0)..controls(0,-4/3)and(2,-4/3)..(2,0);
\draw(1,-1.45)node{size $m_2$};
\draw(1,-0.6)node{$\vdots$};
\end{tikzpicture}}
\underbrace{\uparrow\cdots\uparrow}_{n_3}
\cdots
\raisebox{-0.77\totalheight}{
\begin{tikzpicture}[scale=0.8]
\draw(0,0)..controls(0,-4/3)and(2,-4/3)..(2,0);
\draw(1,-1.45)node{size $m_K$};
\draw(1,-0.6)node{$\vdots$};
\end{tikzpicture}}
\underbrace{\uparrow\cdots\uparrow}_{n_{K+1}}
\end{eqnarray*}
where the region inside an arc of size $m_i, 1\le i\le K$, is filled 
with arcs.
Let $S_i$ be the set of all arcs inside the arc of size $m_{i}$.
\begin{lemma}\label{lemma-nodownarrow}
Let $D$ be a diagram depicted above.
The component $\Psi_D$ is 
\begin{eqnarray*}
\Psi_D
=
\prod_{i=1}^{d}(q^i+q^{-i})
\prod_{j=1}^{K}\frac{\left[\sum_{p=1}^{j}(n_p+m_p)\right]!}
{\left[n_j+\sum_{p=1}^{j-1}(n_p+m_p)\right]!}
\prod_{A\in S_j}[m_{A}]^{-1}
\end{eqnarray*}
where $d=\sum_{i=1}^{K+1}(m_i+n_i)$ with $m_{K+1}=0$.
\end{lemma}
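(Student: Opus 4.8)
The plan is to prove the formula by induction on the number of arcs $|S|$ of $D$, using the explicit action of $Y$ from Theorem~\ref{thm-actionT} together with the fact (Corollary~\ref{cor-EP}) that the eigenvalue $[N+1]$ is simple, so that the components $\Psi_D$ are pinned down by the eigenvalue equation (\ref{EP1}) and the normalization recorded in Lemma~\ref{lemma-psi-1}. For the base case $|S|=0$ the diagram $D$ consists of $N=n_1$ up arrows, whence $\Psi_D=\Psi_{(1,\ldots,1)}=\prod_{i=1}^{N}(q^{i}+q^{-i})$; this agrees with the claimed formula, since the product over $j$ is empty and $d=n_1$.

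First I would identify the incoming arrows of $D$ in the graph $\Gamma$. Since $D$ has no down arrow with a star, no dashed arc and no unpaired down arrow, inspecting the maps $Y_{(i)}$ shows that the only diagrams $\tilde{D}\neq D$ with $D$ occurring in $Y(\tilde{D})$ are those obtained by opening a single outer arc of $D$ into two adjacent up arrows, while the self-loop $Y_{(n_\uparrow+1)}(D)=D$ contributes $[n_\uparrow+1]$ with $n_\uparrow=\sum_p n_p$. Writing $\nu_i=\sum_{p=1}^{i}n_p$ and letting $\tilde{D}_{i}$ be $D$ with the outer arc of the block $A_i$ opened, the $D$-component of (\ref{EP1}) becomes
\begin{eqnarray*}
\sum_{i=1}^{K}[\nu_i+1]\,\Psi_{\tilde{D}_{i}}=\bigl([N+1]-[n_\uparrow+1]\bigr)\Psi_D.
\end{eqnarray*}
Because $n_\uparrow<N$ whenever $|S|>0$, the prefactor $[N+1]-[n_\uparrow+1]$ is nonzero, so this relation determines $\Psi_D$ from the values $\Psi_{\tilde{D}_{i}}$, each of which has $|S|-1$ arcs and lies in the same family of diagrams, hence is governed by the induction hypothesis.

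The core of the argument is then to substitute the induction hypothesis for the $\Psi_{\tilde{D}_{i}}$ and verify the resulting identity. Opening the outer arc of $A_i$ raises the number of up arrows by two and deletes one arc of size $m_i$, so $d(\tilde{D}_{i})=d+1$ and $\prod_{A\in S(\tilde{D}_{i})}[m_A]^{-1}=[m_i]\prod_{A\in S(D)}[m_A]^{-1}$. In the block decomposition of $\tilde{D}_{i}$ the two opened endpoints attach to the neighbouring up-arrow groups, and the interior blocks $B_1,\ldots,B_r$ of $A_i$ become top-level blocks carrying no up arrows between them; their factorial contribution therefore telescopes to $[\tau_i+m_i]!/[\tau_i+1]!$, \emph{independent} of the internal arrangement of $A_i$, where $\tau_i=n_i+\sum_{p=1}^{i-1}(n_p+m_p)$. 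After this simplification the verification reduces to a finite identity among quantum integers in the data $(n_1,\ldots,n_{K+1},m_1,\ldots,m_K)$, which I would establish using $q[i]+q^{-i}=[i+1]$ and the $q$-Pascal relations in the spirit of Lemma~\ref{lemma-qbinomial}.

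The main obstacle is precisely this last quantum-integer identity: one must control how the factors $[\sigma_j]!/[\tau_j]!$ redistribute when $A_i$ is opened and its interior is reabsorbed into the top level. The telescoping observation above is what renders the internal filling of each $A_i$ irrelevant beyond the product $\prod_{A\in S_j}[m_A]^{-1}$, and it collapses the problem to checking a single weighted sum over $i=1,\ldots,K$. I expect this to be the only genuinely computational step, with everything else being a bookkeeping of the block structure under the arc-opening operation.
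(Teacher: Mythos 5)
Your proposal is correct and follows essentially the same route as the paper: induction on the number of arcs, identification of the incoming arrows to $D$ in $\Gamma$ as the openings of single outer arcs plus the self-loop with coefficient $[n_\uparrow+1]$, the relation $\sum_i[\nu_i+1]\Psi_{\tilde D_i}=([N+1]-[n_\uparrow+1])\Psi_D$, and reduction to a closed quantum-integer identity (which the paper isolates as Lemma~\ref{lemma-appendix1} and proves by a one-step induction on $K$, exactly the computation you defer). Your telescoping observation for the interior blocks of the opened arc is the same bookkeeping the paper encodes in the factors $L_j,L'_j,L''_j$.
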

\begin{proof}
We prove Theorem by induction.
When $D$ has no arc, that is, $n_1\neq0$, $n_i=0$, $2\le i\le K+1$,  
and $m_j=0$, $1\le j\le K$, we have $\Psi_{D}=\prod_{i=1}^{n_1}(q^{i}+q^{-i})$.
Theorem holds true since this is compatible with 
Definition~\ref{def-normalization} and Lemma~\ref{lemma-psi-1}.

We assume that Theorem is true for $D'$ where $D'$ has 
one less arcs than $D$.
From Eqn.(\ref{EP1}), the eigenvalue problem is equal to 
\begin{multline}
\label{EPDnostar}
\sum_{i=1}^{K}
\prod_{j=1}^{d+1}(q^{j}+q^{-j})\cdot
\left[1+\sum_{j=1}^{i}n_j\right]
\left(\prod_{j=1}^{i-1}L_j\right)L'_i
\left(\prod_{j=i+1}^{K}L''_j\right) 
+\left[1+\sum_{p=1}^{K+1}n_p\right]\Psi_{D}
=\left[1+\sum_{p=1}^{K+1}(n_p+2m_p)\right]\Psi_{D}
\end{multline}
where $m_{K+1}=0$ and 
\begin{eqnarray*}
L_j&=&\frac{\left[\sum_{p=1}^{j}(n_p+m_p)\right]!}
{\left[n_j+\sum_{p=1}^{j-1}(n_p+m_p)\right]!}
\prod_{A\in S_j}[m_A]^{-1}, \\
L'_j&=&\frac{\left[\sum_{p=1}^{j}(n_p+m_p)\right]!\cdot [m_j]}
{\left[1+n_j+\sum_{p=1}^{j-1}(n_p+m_p)\right]!}
\prod_{A\in S_j}[m_A]^{-1},\\
L''_j&=&\frac{\left[1+\sum_{p=1}^{j}(n_p+m_p)\right]!}
{\left[1+n_j+\sum_{p=1}^{j-1}(n_p+m_p)\right]!}
\prod_{A\in S_j}[m_A]^{-1}.
\end{eqnarray*}
The first term of the left hand side of Eqn.(\ref{EPDnostar}) is 
rewritten as 
\begin{eqnarray}
\label{EPDnostar2}
\prod_{j=1}^{d+1}(q^{j}+q^{-j})
\left(\prod_{j=1}^{K}L_j\right)\cdot
\sum_{i=1}^{K}
\left[1+\sum_{j=1}^{i}n_j\right][m_i]
\frac{\prod_{j:j>i}\left[1+\sum_{p=1}^{j}m_p+n_p\right]}
{\prod_{j:j\ge i}\left[1+n_j+\sum_{p=1}^{j-1}m_p+n_p\right]}.
\end{eqnarray}
Substituting Eqn.(\ref{EPDnostar2}) and Lemma~\ref{lemma-appendix1} into 
Eqn.(\ref{EPDnostar}), we obtain 
\begin{eqnarray*}
\Psi_D
=\prod_{i=1}^{d}(q^{i}+q^{-i})\cdot \prod_{i=1}^{K}L_i.
\end{eqnarray*}
This completes the proof.
\end{proof}

Let $D$ be a diagram depicted as
\begin{eqnarray*}
\underbrace{\uparrow\cdots\uparrow}_{n-1}
\!\!\!
\raisebox{-0.8\totalheight}{
\begin{tikzpicture}
\draw(0,0)--(0,-1)node{$\bigstar$};
\end{tikzpicture}}
\!\!\!
\raisebox{-0.8\totalheight}{
\begin{tikzpicture}[scale=0.6]
\draw(0,0)..controls(0,-1.6)and(2.0,-1.6)..(2,0);
\draw(1,-1.6)node{size $m_1$};
\draw(1,-0.7)node{$\vdots$};
\end{tikzpicture}}
\!\!\!
\raisebox{-0.75\totalheight}{
\begin{tikzpicture}[scale=0.56]
\draw(0,0)..controls(0,-1.6)and(2.0,-1.6)..(2,0);
\draw(1,-1.5)node{size $m_2$}(1,-0.68)node{$\vdots$};
\end{tikzpicture}}
\cdots
\raisebox{-0.82\totalheight}{
\begin{tikzpicture}[scale=0.85]
\draw(0,0)..controls(0,-1.6)and(2.0,-1.6)..(2,0);
\draw(1,-1.4)node{size $m_K$}(1,-0.7)node{$\vdots$};
\end{tikzpicture}}
\end{eqnarray*}
and $D'$ be the diagram obtained from $D$ by changing 
the rightmost up arrow of $D$ to a down arrow.
The region inside an arc of size $m_i, 1\le i\le K$, is filled 
with arcs.
Let $S$ be the set of all arcs.
We denote by $m_A$ the size of an arc $A$.
\begin{lemma}
\label{lemma-noarcleftstar}
In the above notation, we have 
\begin{eqnarray*}
\Psi_D
&=&
\frac{\prod_{i=1}^{n-1+M}(q^{i}+q^{-i})}{q^{M+1}+q^{-(M+1)}}
\frac{[n-1+M]!}
{[n-1]!\cdot\prod_{A\in S}[m_A]}
\frac{[1+n+2M]}{[1+M]},\\
\Psi_{D'}
&=&
\frac{[n-1]}{[1+n+2M]}\Psi_D,
\end{eqnarray*}
where $M=\sum_{i=1}^{K}m_i$.
\end{lemma}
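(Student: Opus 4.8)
The plan is to read off the two components $\Psi_D$ and $\Psi_{D'}$ directly from the eigenvalue equations~(\ref{EP1}) for the top eigenvector, whose eigenvalue is $[1+n+2M]$ (set $N:=n+2M$, the total number of arrows of $D$, so that the eigenvalue is $[N+1]$ by Corollary~\ref{cor-EP}). The only genuinely new input required is a description of the incoming arrows of $D$ and $D'$ in the graph $\Gamma$. To organise this I would introduce the auxiliary diagram $\hat D$ obtained from $D$ by changing the starred down arrow into an up arrow: then $\hat D$ carries no star and consists of $n$ up arrows followed by the $K$ filled arcs, so its component $\Psi_{\hat D}$ is already computed by Lemma~\ref{lemma-nodownarrow}.

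First I would identify all incoming arrows using the explicit rules for the action of $Y$ in Theorem~\ref{thm-actionT}. For $D'$, which carries a star and an unpaired down arrow, the only incoming arrow is from $D$, produced by changing the rightmost up arrow of $D$ into a down arrow, with coefficient $[n-1]$. For $D$ the incoming arrows are exactly two: one from $\hat D$, produced by changing the $n$-th (rightmost) up arrow of $\hat D$ into a down arrow, which becomes the star, with coefficient $[n]$; and one from $D'$, produced by turning the unpaired down arrow of $D'$ back into an up arrow, with coefficient $[n-1]$. The step I expect to be the main obstacle is ruling out all further incoming arrows of $D$: a contribution $Y_{(i)}(\tilde D)=D$ with $i<n_\uparrow(\tilde D)$ would require opening one of the outer arcs of $D$ into two up arrows, but those arcs lie to the right of the star, so $\tilde D$ would have up arrows to the right of a down arrow and hence fail to be a valid diagram; and since $D$ has no dashed arc, nothing can arise from the dashed-arc mechanism of $Y_{(n_\uparrow)}$. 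This case analysis of the action rules is the combinatorial heart of the argument.

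With the incoming arrows in hand, equation~(\ref{EP1}) for the two components reads
\begin{align*}
[n-1]\,\Psi_D &= [N+1]\,\Psi_{D'}, \\
[n]\,\Psi_{\hat D} + [n-1]\,\Psi_{D'} &= [N+1]\,\Psi_D .
\end{align*}
The first equation gives $\Psi_{D'}=\tfrac{[n-1]}{[1+n+2M]}\Psi_D$, which is the second assertion of the lemma. Substituting this into the second equation gives $[n]\,\Psi_{\hat D}=\tfrac{[N+1]^2-[n-1]^2}{[N+1]}\Psi_D$, and here I would invoke the identity $[a]^2-[b]^2=[a+b][a-b]$ with $a=1+n+2M$ and $b=n-1$, so that $[N+1]^2-[n-1]^2=[2n+2M]\,[2+2M]$, whence
\begin{eqnarray*}
\Psi_D=\frac{[n]\,[1+n+2M]}{[2n+2M]\,[2+2M]}\,\Psi_{\hat D}.
\end{eqnarray*}

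Finally I would insert the value of $\Psi_{\hat D}$ from Lemma~\ref{lemma-nodownarrow}, in which the factorial ratios telescope and $d=n+M$, giving
\begin{eqnarray*}
\Psi_{\hat D}=\prod_{i=1}^{n+M}(q^i+q^{-i})\,\frac{[n+M]!}{[n]!}\,\prod_{A\in S}[m_A]^{-1}.
\end{eqnarray*}
Writing $[2n+2M]=[n+M](q^{n+M}+q^{-(n+M)})$ and $[2+2M]=[1+M](q^{1+M}+q^{-(1+M)})$ via $[2k]=[k](q^{k}+q^{-k})$, the factor $q^{n+M}+q^{-(n+M)}$ cancels against the top term of the product, while $[n+M]!/[n+M]=[n-1+M]!$ and $[n]/[n]!=1/[n-1]!$; the remaining simplification collapses to the stated closed form for $\Psi_D$. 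Everything after the incoming-arrow count is routine bookkeeping of quantum integers, so the delicate part of the proof is entirely the second paragraph.
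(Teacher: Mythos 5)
Your proposal is correct and follows essentially the same route as the paper: you identify the same two incoming arrows into $D$ (from $\hat D$, which is the paper's $D''$, with coefficient $[n]$, and from $D'$ with coefficient $[n-1]$) and the single arrow $D\rightarrow D'$, obtaining exactly the pair of linear equations the paper writes down from Eqn.~(\ref{EP1}) and Lemma~\ref{lemma-nodownarrow}. The only difference is that you carry out the quantum-integer algebra (via $[a]^2-[b]^2=[a+b][a-b]$ and $[2k]=[k](q^k+q^{-k})$) that the paper compresses into ``solving the above equations,'' and your bookkeeping checks out.
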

\begin{proof}
There exists only one diagram $D''$, $D<_{\mathrm{lex}}D''$,  
such that the relation $D''\rightarrow D$ holds.
The diagram $D''$ is obtained from $D$ by changing the down arrow with 
a star to an up arrow. 
We have $\Psi_{D''}$ from Lemma~\ref{lemma-nodownarrow}.
The eigenvalue problem (\ref{EP1}) is equivalent to 
\begin{eqnarray*}
\prod_{i=1}^{n+M}(q^{i}+q^{-i})\cdot [n]
\frac{[n+M]!}{[n]!\prod_{A\in S}[m_A]}
+[n-1]\Psi_{D'}&=&[n+1+2M]\Psi_{D}, \\
{[n-1]}\Psi_{D}&=&[n+1+2M]\Psi_{D'}.
\end{eqnarray*}
Solving the above equations, we have a desired expression.
\end{proof}

Let $D$ be a diagram depicted as 
\begin{eqnarray*}
\underbrace{\uparrow\cdots\uparrow}_{n_1}
\!\!\!
\raisebox{-0.7\totalheight}{
\begin{tikzpicture}[scale=0.4]
\draw(0,0)..controls(0,-2)and(3,-2)..(3,0);
\draw(1.5,-2)node{size $m_1$};
\end{tikzpicture}}
\underbrace{\uparrow\cdots\uparrow}_{n_2}
\!\!\!
\raisebox{-0.7\totalheight}{
\begin{tikzpicture}[scale=0.4]
\draw(0,0)..controls(0,-4/3)and(2,-4/3)..(2,0);
\draw(1,-1.45)node{size $m_2$};
\end{tikzpicture}}
\!\!\!
\underbrace{\uparrow\cdots\uparrow}_{n_3}
\cdots
\!\!\!
\raisebox{-0.7\totalheight}{
\begin{tikzpicture}[scale=0.5]
\draw(0,0)..controls(0,-4/3)and(2,-4/3)..(2,0);
\draw(1,-1.45)node{size $m_K$};
\end{tikzpicture}}
\underbrace{\uparrow\cdots\uparrow}_{n_{K+1}}
\!\!\!
\raisebox{-0.8\totalheight}{
\begin{tikzpicture}
\draw(0,0)--(0,-1)node{$\bigstar$};
\end{tikzpicture}
}
\!\!\!\!\!
\raisebox{-0.7\totalheight}{
\begin{tikzpicture}[scale=0.6]
\draw(0,0)..controls(0,-1.6)and(2.0,-1.6)..(2,0);
\draw(1,-1.7)node{size $m'_1$};
\end{tikzpicture}}
\,
\cdots
\raisebox{-0.7\totalheight}{
\begin{tikzpicture}[scale=0.4]
\draw(0,0)..controls(0,-1.6)and(2.0,-1.6)..(2,0);
\draw(1,-1.7)node{size $m'_{J}$};
\end{tikzpicture}}
\end{eqnarray*}
and $D'$ be the diagram obtained from $D$ by changing 
the rightmost up arrow of $D$ to a down arrow.
Let $S$ be the set of all arcs inside of 
an arc of size $m_i, 1\le i\le K$ or $m'_j, 1\le j\le J$.  
We consider two cases: (1) $n_{K+1}\neq0$ (Lemma~\ref{lemma-link21})
and (2) $n_{I+1}\neq0$ and $n_{i}=0$, $I+2\le i\le K+1$ (Lemma~\ref{lemma-link22}).
\begin{lemma}
\label{lemma-link21}
Let $n_{K+1}\neq0$. 
Set $N_i=\sum_{j=1}^{i}n_i$, $M_i=\sum_{j=1}^{i}m_j$, 
$N=N_{K+1}, M=M_K$ and $M'=\sum_{i=1}^Jm'_i$.
We have
\begin{eqnarray}
\label{onelink21}
\Psi_{D}
&=&
\frac{\prod_{i=1}^{N+M+M'}(q^{i}+q^{-i})}{(q^{M'+1}+q^{-(M'+1)})}
\frac{[2+N+M+2M']}
{\prod_{A\in S}[m_A]\cdot[M'+1]}
\prod_{i=1}^{K}
\frac{[N_i+M_i]!}{[N_i+M_{i-1}]!}
\frac{[N+M+M']!}{[N+M]!},
\\
\Psi_{D'}
&=&\frac{[N+M]}{[2+N+M+2M']}\Psi_D.
\end{eqnarray}

\end{lemma}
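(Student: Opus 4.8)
The plan is to argue by induction on the number of solid arcs lying to the left of the starred down arrow, the base case being precisely Lemma~\ref{lemma-noarcleftstar}: when there are no such arcs one has $K=0$, $M=0$, and the right-hand arcs $m'_j$ play the role of the arcs there, so the formula (\ref{onelink21}) reduces verbatim to that of Lemma~\ref{lemma-noarcleftstar}. Throughout I work in the reduced setting $m_i=1$ permitted by Lemma~\ref{lemma-psi-noproj}, so that $\Psi$ is the top eigenvector of $Y$ with eigenvalue $[L+1]$, where $L=N+2M+2M'+1$ is the total number of arrows of $D$. The two unknowns $\Psi_D$ and $\Psi_{D'}$ are pinned down simultaneously by writing the eigenvalue relation (\ref{EP1}) at the two vertices $D$ and $D'$ of the graph $\Gamma$.

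First I would enumerate the incoming arrows at $D$ and at $D'$ using the explicit action of Theorem~\ref{thm-actionT}. Since $D$ is a Case~B diagram (a starred down arrow but no unpaired down arrow) and $D'$ is its Case~C partner obtained by reversing the rightmost up arrow, the two are linked by $D\leftrightarrow D'$, with both the arrow $D\to D'$ and the arrow $D'\to D$ carrying the coefficient $[N]$. The unique lex-larger parent of $D$ is the diagram $D''$ obtained by turning the starred down arrow into an up arrow; because an up arrow immediately to the left of the solid arcs $m'_j$ does not pair under rules~(A)--(B), this $D''$ is exactly of the type treated in Lemma~\ref{lemma-nodownarrow}, and the arrow $D''\to D$ carries the coefficient $[N+1]$. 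The remaining incoming arrows arise from closing one of the $K$ outermost left-hand arcs: opening the $i$-th such arc produces a diagram $E_i$ (and, starting from $D'$, a diagram $E'_i$) with one fewer arc to the left of the star, and the arrow into $D$ (resp. $D'$) carries the coefficient $[N_i+1]$; these $E_i,E'_i$ are themselves instances of the present lemma and are supplied by the induction hypothesis.

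The two eigenvalue equations then read
\[ [N+1]\Psi_{D''}+[N]\Psi_{D'}+\sum_{i=1}^{K}[N_i+1]\Psi_{E_i}=[L+1]\Psi_{D}, \]
\[ [N]\Psi_{D}+\sum_{i=1}^{K}[N_i+1]\Psi_{E'_i}=[L+1]\Psi_{D'}. \]
I would substitute the closed form of $\Psi_{D''}$ from Lemma~\ref{lemma-nodownarrow} and the inductive values of $\Psi_{E_i},\Psi_{E'_i}$ (so that $\Psi_{E'_i}$ and $\Psi_{E_i}$ already stand in the claimed ratio). The arc-opening sums $\sum_i[N_i+1]\Psi_{E_i}$ then collapse to a single closed factor by the same Gauss-type $q$-binomial identity used in Lemma~\ref{lemma-nodownarrow}, namely Lemma~\ref{lemma-appendix1} of the appendix; it is this telescoping that converts the naive eigenvalue $[L+1]=[N+2M+2M'+2]$ into the shifted quantum integer $[2+N+M+2M']$ appearing in the statement. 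Solving the resulting $2\times2$ linear system for $\Psi_D$ and $\Psi_{D'}$ yields (\ref{onelink21}) together with the ratio $\Psi_{D'}=\tfrac{[N+M]}{[2+N+M+2M']}\Psi_D$, the normalization $\Psi_{(-m_1,\dots,-m_n)}=1$ being inherited from the base case.

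The main obstacle is the bookkeeping of the arc-opening sum. One must verify that, after inserting the inductive expressions, $\sum_{i=1}^{K}[N_i+1]\Psi_{E_i}$ simplifies to the factor $\prod_{i=1}^{K}\tfrac{[N_i+M_i]!}{[N_i+M_{i-1}]!}$ times the appropriate right-hand data, and that its combination with the $D''$-term is an exact quantum-integer cancellation of the type established in the appendix. The delicate point is that opening an outer arc exposes its nested internal arcs (exactly as in the $L'_i$ term of Lemma~\ref{lemma-nodownarrow}), so keeping the enumeration of incoming arrows complete and each weight $[N_i+1]$ correctly placed, in the simultaneous presence of the star and the fixed right-hand block $m'_1,\dots,m'_J$, is the crux of the argument.
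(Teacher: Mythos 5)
Your proposal is correct and follows essentially the same route as the paper: induction with Lemma~\ref{lemma-noarcleftstar} as the base case, enumeration of the incoming arrows at $D$ and $D'$ (the star-reversed parent, the $[N]$-weighted pair $D\leftrightarrow D'$, and the $K$ arc-opened diagrams with weights $[N_i+1]$), substitution of the inductive values, collapse of the arc-opening sum via Lemma~\ref{lemma-appendix1}, and solution of the resulting $2\times2$ system. The only cosmetic difference is that you source the star-removed diagram from Lemma~\ref{lemma-nodownarrow} where the paper cites Lemma~\ref{lemma-noarcleftstar}; your attribution is the more natural one since that diagram has no starred arrow.
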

\begin{proof}
We prove Lemma by induction. 
The case where $m_i=0$ for $1\le i\le K$ is reduced to Lemma~\ref{lemma-noarcleftstar}.
We assume that Lemma holds true for a diagram which has one less arcs than $D$.
Let $D_1$ be a diagram obtained from $D$ by changing the outer arc of size $m_i$ 
to two up arrows.
Then, we have $D_1\rightarrow D$. 
If $D_2$ is obtained from $D$ by changing the down arrow with a star to an up 
arrow, we have $D_2\rightarrow D$.
Let $D'$ be a diagram obtained from $D$ by changing the rightmost up arrow of 
$D$ to an unpaired down arrow. 
We have $D\leftrightarrow D'$. 
Let $D_3$ be a diagram obtained from $D'$ by changing the outer arc of 
size $m_i$ to two up arrows.
We have $D_3\rightarrow D'$.

Let $A$ be the right hand side of Eqn.(\ref{onelink21}).
We have an explicit expression for $\Psi_{D_1}$ from the induction 
assumption, and $\Psi_{D_2}$ from Lemma~\ref{lemma-noarcleftstar}.
Therefore, the eigenvalue problem (\ref{EP1}) for the $D$- and 
$D'$-components is reduced to 
\begin{multline*}
\sum_{i=1}^{K}
A(q^{d}+q^{-d})[1+N_i][m_i]
\frac{\prod_{j\ge i+1}[1+N_j+M_j]}
{\prod_{j\ge i}[1+N_j+M_{j-1}]}
\frac{[N+M+M'+1][N+M+2M'+3]}{[N+M+1][N+M+2M'+2]}  \\
+A(q^{d}+q^{-d})(q^{M'+1}+q^{-(M'+1)})
\frac{[N+M+M'+1][N+1][M'+1]}{[N+M+1][N+M+2M'+2]}
+[N]\Psi_{D'} \\ 
=[N+2M+2M'+2]\Psi_{D}, 
\end{multline*}
\begin{multline*}
\sum_{i=1}^{K}
A(q^{d}+q^{-d})[1+N_i][m_i]
\frac{\prod_{j\ge i+1}[1+N_j+M_j]}
{\prod_{j\ge i}[1+N_j+M_{j-1}]}
\frac{[N+M+M'+1]}{[2+N+M+2M']} \\
+[N]\Psi_{D}=[N+2M+2M'+2]\Psi_{D'}.
\end{multline*}
where $d=N+M+M'+1$.
Substituting Lemma~\ref{lemma-appendix1} into the above equations, 
we obtain
\begin{eqnarray*}
\Psi_D=A, \qquad\Psi_{D'}=\frac{[N+M]}{[2+N+M+2M']}A.
\end{eqnarray*}
This completes the proof.
\end{proof}
\begin{lemma}
\label{lemma-link22}
Let $n_{I+1}\neq0$ and $n_{i}=0$ for $I+2\le i\le K+1$. 
Set $N=\sum_{i=1}^{I+1}n_i, M_1=\sum_{i=1}^{I}m_i, M_2=\sum_{i=I+1}^{K}m_i$ 
and $M'=\sum_{i=1}^{J}m'_i$.
We have 
\begin{eqnarray}
\label{onelink22}
&&\Psi_D
=
\frac{\prod_{i=1}^{d}(q^{i}+q^{-i})}{q^{M'+1}+q^{-(M'+1)}}
\prod_{A\in S}[m_A]^{-1}
\prod_{i=1}^{I}
\frac{[\sum_{j=1}^{i}n_j+m_j]!}{[\sum_{j=1}^{i}n_j+\sum_{j=1}^{i-1}m_j]!}
\frac{[d]!}{[M_1+N]!}
\frac{[L_1]}{[M'+1]}
\prod_{i=I+1}^{K}
\frac{[g_i-m_i]}{[g_i]}, \\
&&\Psi_{D'}=\frac{[N+M_1]}{[L_1]}\Psi_D,
\end{eqnarray}
where $d=N+M_1+M_2+M'$, $L_1=N+M_1+2M_2+2M'+2$ 
and $g_i=2+2M'+2\sum_{j=i}^{K}m_j$.
\end{lemma}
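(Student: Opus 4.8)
The plan is to run the same simultaneous induction as in the proof of Lemma~\ref{lemma-link21}, computing $\Psi_D$ and $\Psi_{D'}$ together, with the induction on the total number of arcs. The all-arcs-vanish base case is Lemma~\ref{lemma-noarcleftstar}, and the configuration with no stacked arcs ($K=I$, so that $n_{I+1}=n_{K+1}\neq0$ and $M_2=0$) is precisely Lemma~\ref{lemma-link21}: in that case the product $\prod_{i=I+1}^{K}[g_i-m_i]/[g_i]$ is empty and Eqn.(\ref{onelink22}) collapses to Eqn.(\ref{onelink21}), which one verifies by matching $L_1$, $d$, and the factorials. Throughout, $D'$ is obtained from $D$ by turning its rightmost up arrow (which sits in the block $n_{I+1}$) into an unpaired down arrow, so $D\leftrightarrow D'$ by Lemma~\ref{lemma-Gamma}, and the eigenvalue in (\ref{EP1}) is $[L+1]$ whose index $N+2M_1+2M_2+2M'+2$ exceeds $L_1=N+M_1+2M_2+2M'+2$ by $M_1$.

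First I would list every diagram $D''$ with $D''\rightarrow D$, reading the multiplicities off the explicit action in Theorem~\ref{thm-actionT}. They fall into three families: (i) opening an outer arc of one of the separated arcs $m_1,\ldots,m_I$ into two up arrows; (ii) opening the outermost of the stacked arcs $m_{I+1},\ldots,m_K$; and (iii) changing the starred down arrow into an up arrow, which removes the star. Each diagram of type (i) or (ii) has one fewer arc and is again of the present form, so its component is supplied by the induction hypothesis; the type (iii) diagram has no star and its component is given by Lemma~\ref{lemma-nodownarrow}. The incoming arrows of $D'$ are obtained by opening arcs in $D'$, whose components come from the $\Psi_{D'}$ part of the induction hypothesis. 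Substituting all of this into (\ref{EP1}) yields a pair of coupled linear equations in the two unknowns $\Psi_D$ and $\Psi_{D'}$, structurally parallel to the two displayed equations in the proof of Lemma~\ref{lemma-link21}.

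The computational core is to collapse the two sums over arc openings. Factoring out the conjectured value $A$ of $\Psi_D$, the sum over the separated arcs $m_1,\ldots,m_I$ reduces to a product exactly as in Lemma~\ref{lemma-link21}, via Lemma~\ref{lemma-appendix1}. The genuinely new ingredient is the sum over the stacked arcs: since no up arrows separate $m_{I+1},\ldots,m_K$, an opened stacked arc must be reconnected across the whole region up to the right end, so the relevant quantum-integer ratios telescope and reassemble into $\prod_{i=I+1}^{K}[g_i-m_i]/[g_i]$ with $g_i=2+2M'+2\sum_{j=i}^{K}m_j$. After these simplifications the two coupled equations become compatible and force $\Psi_D=A$ together with $\Psi_{D'}=([N+M_1]/[L_1])\Psi_D$, which is Eqn.(\ref{onelink22}); the normalisation of Definition~\ref{def-normalization} is inherited from the base case.

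I expect the main obstacle to be the bookkeeping in family (ii). Because the stacked arcs share anchor squares with the star and with one another, opening one of them rearranges the pairing of the remaining arrows in a way with no analogue in Lemma~\ref{lemma-link21}; pinning down the spans $g_i$, carrying out the telescoping exactly, and checking that the $D$- and $D'$-equations stay consistent after the collapse (so that no spurious constraint on $A$ appears) is the delicate point. The remaining steps are the quantum-integer identities already exploited for Lemma~\ref{lemma-link21}.
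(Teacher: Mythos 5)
Your overall skeleton is the paper's: induction on the number of arcs, enumeration of the incoming arrows of $D$ and $D'$ from the explicit action of $Y$, components of the predecessors supplied by Lemma~\ref{lemma-nodownarrow}, Lemma~\ref{lemma-link21} and the induction hypothesis, and a $2\times2$ linear system in $\Psi_D,\Psi_{D'}$ that is solved after collapsing the sums. However, your description of the new ingredient contains a genuine gap. The arcs $m_{I+1},\ldots,m_K$ are not a nested stack with a single ``outermost'' member: they sit side by side at the top level (only the groups of up arrows between them have been set to zero), so \emph{each} of them is an outer arc and \emph{each} can be opened into two up arrows, giving $K-I$ distinct predecessors of $D$. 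This is exactly the second sum $\sum_{i=I+1}^{K}$ in the paper's $D$-equation (Eqn.~(\ref{onelink24})); with only one term from this family, as your family (ii) literally states, the eigenvalue equation is wrong and no value of $A$ satisfies it. The counterpart subtlety, which you also do not address, is that \emph{none} of $m_{I+1},\ldots,m_K$ contributes to the $D'$-equation: opening one of them in $D'$ would place free up arrows to the right of the unpaired down arrow, which is not a valid diagram, so Eqn.~(\ref{onelink25}) only carries the sum over the separated arcs $m_1,\ldots,m_I$. This asymmetry between the two equations is what makes the system consistent and determines the correction factor $\prod_{i=I+1}^{K}[g_i-m_i]/[g_i]$; your heuristic about arcs ``sharing anchor squares'' and ``reconnecting across the whole region'' does not correspond to the actual mechanism.

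Second, the claim that the remaining steps are ``the quantum-integer identities already exploited for Lemma~\ref{lemma-link21}'' is not correct: Lemma~\ref{lemma-link21} only needs Lemma~\ref{lemma-appendix1}, whereas here the collapse of the sum over $i=I+1,\ldots,K$ together with the star-removal term requires the separate telescoping identity of Lemma~\ref{lemma-appendix2} (applied with $x=N+M_1$ and $z=M'$). You correctly flag this collapse as the delicate point, but the proof needs that specific identity, not a rerun of the appendix lemma used before. Finally, note that the paper's base case for the induction is the configuration $m_1=\cdots=m_{K-1}=0$, $m_K\neq0$ (one consecutive arc, no separated ones), for which the predecessor set degenerates ($D$ is the unique predecessor of $D'$); your proposed base cases are fine but this degenerate configuration still has to be checked separately before the general step applies.
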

\begin{proof}
We prove Lemma by induction.
We consider the case where $m_i=0$ for $1\le i\le K-1$ and $m_K\neq0$. 
If $D_1$ has one less arcs than $D$ ($D_1$ does not have the outer arc
of size $m_K$), we have $D_1\rightarrow D$.
If $D_2$ does not have a down arrow with a star but all arcs are in 
the same position as $D$, then $D_2\rightarrow D$.
We have a formula for $\Psi_{D_1}$ and $\Psi_{D_2}$ from 
Lemma~\ref{lemma-nodownarrow} and Lemma~\ref{lemma-link21}.
On the other hand, a diagram $D_3$ such that $D_3\rightarrow D'$
is only $D_3=D$.
Let $A$ be the right hand side of Eqn.(\ref{onelink22}).
The eigenvalue problem is reduced to 
\begin{multline*}
A(q^{d+1}+q^{-(d+1)})[m_K]\frac{[N+m_K+M'+1]}{[N+m_K+1]}\frac{[2+2m_K+2M']}{[2+m_K+2M']}
\frac{[N+m_K+2M'+3]}{[N+2m_K+2M'+2]} \\
+A(q^{d+1}+q^{-(d+1)})(q^{M'+1}+q^{-(M'+1)})
\frac{[N+1][N+m_K+M'+1][M'+1][2+2m_K+2M']}{[N+m_K+1][N+2m_K+2M'+2][2+m_K+2M']} \\
+[N]\Psi_{D'}=[N+2+2m_K+2M']\Psi_{D}, 
\end{multline*}
\begin{eqnarray*}
[N]\Psi_{D}=[N+2+2m_K+2M']\Psi_{D'}
\end{eqnarray*}
The solution is 
\begin{eqnarray*}
\Psi_D=A,\qquad \Psi_{D'}=\frac{[N]}{[N+2+2m_K+2M']}A
\end{eqnarray*}
which is a desired expression.

We assume that Lemma holds true for a diagram which has one less 
arcs than $D$.
If $D_1$ has one less arcs than $D$, $D_1\rightarrow D$.
If $D_2$ does not have a down arrow with a star and all arcs are 
in the same position as $D$, $D_2\rightarrow D$.
We also have $D'\rightarrow D$.
From the assumption, Lemma~\ref{lemma-nodownarrow} and 
Lemma~\ref{lemma-link21}, we have a formula for the components 
$\Psi_{D_1}$ and $\Psi_{D_2}$.
Similarly, if $D_3$ has one less arcs than $D'$, $D_3\rightarrow D'$.
We have $D\rightarrow D'$.
Let $A$ be the right hand side of Eqn.(\ref{onelink22}).
The eigenvalue problem (\ref{EP1}) is reduced to 
\begin{multline}
\label{onelink24}
\sum_{i=1}^{I}A(q^{d+1}+q^{-(d+1)})\left[1+\sum_{j=1}^{i}n_j\right]
\frac{\prod_{j\ge i+1}[1+\sum_{k=1}^{j}n_k+m_k]}
{\prod_{j\ge i}[1+\sum_{k=1}^{j}n_k+\sum_{k=1}^{j-1}m_k]}
\frac{[m_i][d+1]}{[N+M_1+1]}\frac{[L_1+1]}{[L_1]} \\
+\sum_{i=I+1}^{K}(q^{d+1}+q^{-(d+1)})A
\frac{\prod_{j\ge i+1}[1+N+M_1+\sum_{k=I+1}^{j}m_k]}
{\prod_{j\ge i}[1+N+M_1+\sum_{k=I+1}^{j-1}m_k]}
\frac{[m_i][N+1][d+1]}{[1+N+M_1+M_2]}
\frac{[h_i]}{[L_1]}
\prod_{j=I+1}^{i}\frac{[g_i]}{[g_i-m_i]} \\
+A(q^{d+1}+q^{-(d+1)})(q^{M'+1}+q^{-(M'+1)})[N+1]\frac{[d+1]}{[N+M_1+M_2+1]}
\frac{[M'+1]}{[L_1]}
\prod_{j=I+1}^{K}\frac{[g_j]}{[g_j-m_j]}  \\
+[N]\Psi_{D'}=[L_1+M_1]\Psi_{D}
\end{multline}
where $h_i=3+N+\sum_{j=I+1}^{i}m_j+2\sum_{j=i+1}^{K}m_p+2M'$ and 
\begin{multline}
\label{onelink25}
\sum_{i=1}^{I}A(q^{d+1}+q^{-(d+1)})\left[1+\sum_{j=1}^{i-1}n_j\right]
\frac{\prod_{j=i+1}^{I}[1+\sum_{k=1}^{j}n_k+m_k]}
{\prod_{j=i}^{I}[1+\sum_{k=1}^{j}n_k+\sum_{k=1}^{j-1}m_k]}
\frac{[d+1]}{[L_1]} 
+[N]\Psi_{D}=[L_1+M_1]\Psi_{D'}.
\end{multline}

We apply Lemma~\ref{lemma-appendix1} to the first term of 
Eqn.(\ref{onelink24}), Lemma~\ref{lemma-appendix2} to the 
second and third terms with $x=N+M_1$ and $z=M'$, and 
Lemma~\ref{lemma-appendix1} to Eqn.(\ref{onelink25}).
We obtain
\begin{eqnarray*}
(q^{d+1}+q^{-(d+1)})\frac{[d+1][L_1-N]}{[L_1]}A+[N]\Psi_{D'}&=&[L_1+M_1]\Psi_{D}, \\
(q^{d+1}+q^{-(d+1)})A[M_1]\frac{[d+1]}{[L_1]}
+[N]\Psi_{D}&=&[L_1+M_1]\Psi_{D'}.
\end{eqnarray*}
The solution is 
\begin{eqnarray*}
\Psi_D=A,\qquad \Psi_{D'}=\frac{[N+M_1]}{[L_1]}A.
\end{eqnarray*}
This completes the proof.
\end{proof}

\begin{proof}[Proof of Theorem~\ref{thm-generic-psi}]
We prove Theorem by induction on the numbers of arcs and dashed arcs.
From Lemma~\ref{lemma-psi-1}, Lemma~\ref{lemma-link21} and 
Lemma~\ref{lemma-link22}, Theorem holds true when the number of 
dashed arcs or arcs is zero.

Let $D_0$ be a diagram which has a down arrow with a star and $N_1=0$. 
The diagram $D_0$ starts with $n_1$ up arrows, followed by an outer 
arc of size $m_1$, followed by $n_2$ up arrows, followed by an outer 
arc of size $m_2$, $\cdots$, followed by $n_{I+1}$ up arrows, followed 
by a dashed arc, $\cdots$, followed by a down arrow with a star, 
followed by an arc of size $m'_1$, $\cdots$ and ends with an arc of 
size $m'_J$.
Let $D'_0$ be a diagram obtained from $D_0$ by changing the rightmost
up arrow of $D_0$ to an unpaired down arrow. 
We assume that Eqn.(\ref{generic-psi}) holds true for a diagram which 
has one less dashed arcs or one less arcs than $D_0$.
Let $N'=\sum_{i=1}^{I+1}n_i$, $M=\sum_{i=1}^{I}m_i$, 
$L_1=N'+M+2|S_M|+2|T|+2|S_R|+2$, $L_2=N'+M+|S_M|+|T|+|S_R|+1$ and 
$A$ be the right hand side of Eqn.(\ref{generic-psi}) for $D_0$

We consider the two cases: 1) $n_{I+1}\neq0$ and 
2) $n_{H+1}\neq0, n_{i}=0, H+2\le i\le I+1$. 

\paragraph{\bf Case 1}
If $D_1$ has one less arcs than $D_0$, then $D_1\rightarrow D_0$.
If $D_2$ has one less dashed arcs than $D_0$, then $D_2\rightarrow D_0$.
We have $D'_0\rightarrow D_0$.
Similarly, if $D_3$ has one less arcs than $D'_0$, then 
$D_3\rightarrow D'_0$. 
We have $D_{0}\rightarrow D'_0$.
The eigenvalue problem is equivalent to 
\begin{multline}
\label{generic-psi2}
(q^{L_2}+q^{-L_2})A\sum_{i=1}^{I}\left[1+\sum_{j=1}^{i}n_j\right][m_i]
\frac{\prod_{j=i+1}^{I}[1+\sum_{k=1}^{j}(n_k+m_k)]}
{\prod_{j=i}^{I}[1+\sum_{k=1}^{j}n_j+\sum_{k=1}^{j-1}m_k]}
\frac{[L_2][1+L_1]}{[1+N'+M][L_1]} \\
+(q^{L_2}+q^{-L_2})(q^{d}+q^{-d})[N'+1]A\frac{[L_2][1+|S_R|+|S_M|+|T|]}{[L_1][N'+M+1]}
+[N']\Psi_{D'_0}=[L_1+M]\Psi_{D_0},
\end{multline}
where  $d=|S_R|+|S_M|+|T|+1$ and
\begin{eqnarray*}
(q^{L_2}+q^{-L_{2}})A[M]\frac{[L_2]}{[L_1]}+[N']\Psi_{D_0}=[L_1+M]\Psi_{D'_0},
\end{eqnarray*}
where we have used the fact that the weight of the leftmost 
dashed arc in $D'_0$ is $(1+|S_R|+|T|)^{-1}$.
Applying Lemma~\ref{lemma-appendix1} to Eqn.(\ref{generic-psi2}),
we obtain 
\begin{eqnarray*}
\Psi_{D_0}=A,\qquad \Psi_{D'_0}=\frac{[N'+M]}{[L_1]}A.
\end{eqnarray*}
This completes the proof of Case 1.

\paragraph{\bf Case 2}
Set $M_1=\sum_{i=1}^{H}m_i$.
By a similar argument to Case 1, the eigenvalue problem for $\Psi_D$ 
is equal to 
\begin{multline}
\label{eqn-genericPsi-21}
(q^{L_2}+q^{-L_2})A\sum_{i=1}^{H}\left[1+\sum_{j=1}^{i}n_j\right][m_i]
\frac{\prod_{j\ge i+1}[1+\sum_{k=1}^{j}n_k+m_k]}
{\prod_{j\ge i}[1+\sum_{k=1}^{j}n_k+\sum_{k=1}^{j-1}m_k]}
\frac{[L_2][1+L_1+M-M_1]}{[1+N'+M_1][L_1-M_1+M]}\\
+
(q^{L_2}+q^{-L_2})A\sum_{i=H+1}^{K}
\frac{\prod_{j\ge i+1}[1+M_1+N'+\sum_{k=H+1}^{j}m_k]}
{\prod_{j\ge i}[1+M_1+N'+\sum_{k=H+1}^{j-1}m_k]}
\frac{[N'+1][L_2][L_{3,i}]}{[1+M+N'][L_1-M_1+M]}
\prod_{j=H+1}^{i}\frac{[g_j]}{[g_j-m_j]} \\
+(q^{L_2}+q^{-L_2})(q^d+q^{-d})A[N'+1]\frac{[L_2][1+|S_R|+|S_M|+|T|]}{[N'+M+1][L_1-M_1+M]}
\prod_{i=H+1}^{K}\frac{[g_i]}{[g_i-m_i]} \\
+[N']\Psi_{D'}=[L_1+M]\Psi_D,
\end{multline}
where $L_{3,i}=1+L_1+\sum_{j=i+1}^{K}m_j$,   
$g_i=2+2\sum_{j=i}^{K}m_j+2|S_R|+2|S_M|+2|T|$ and $d=|S_R|+|S_M|+|T|+1$.
Applying Lemma~\ref{lemma-appendix1} to the first term
and Lemma~\ref{lemma-appendix2} to the second and third 
terms of Eqn.(\ref{eqn-genericPsi-21}), we obtain
\begin{eqnarray}
\label{generic-psi3}
\frac{A[2L_2][2d+2M-M_1]}{[L_1+M-M_1]}
+[N']\Psi_{D'}=[L_1+M]\Psi_{D}. 
\end{eqnarray}
The eigenvalue problem for $\Psi_{D'}$ is 
\begin{eqnarray*}
A\sum_{i=1}^{H}[m_i]
\frac{\prod_{j\ge i+1}[1+\sum_{k=1}^{j}n_k+m_k]}
{\prod_{j\ge i}[1+\sum_{k=1}^{j}n_k+\sum_{k=1}^{j-1}m_k]}
\frac{[2L_2]}{[L_1+M-M_1]} 
+[N']\Psi_{D}=[L_1+M]\Psi_{D'}.
\end{eqnarray*}
Inserting Lemma~\ref{lemma-appendix1}, we obtain
\begin{eqnarray}
\label{generic-psi4}
A\frac{[M_1][2L_2]}{[L_1+M-M_1]}
+[N']\Psi_{D}=[L_1+M]\Psi_{D'}.
\end{eqnarray}
The solution of Eqns.(\ref{generic-psi3}) and 
(\ref{generic-psi4}) is
\begin{eqnarray*}
\Psi_D=A,\qquad \Psi_{D'}=\frac{[N'+M_1]}{[L_1+M-M_1]}A.
\end{eqnarray*}
This completes the proof.
\end{proof}

We have two propositions regarding the eigenfunction $\Psi$. 
\begin{prop}
\label{prop-pint-psi}
All components of $\Psi$ belong to $\mathbb{N}[q,q^{-1}]$ and 
invariant under $q\rightarrow q^{-1}$.
\end{prop}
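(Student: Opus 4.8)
The plan is to read off both assertions directly from the closed formula for $\Psi_D$ in Theorem~\ref{thm-generic-psi}. First I would reduce to the case $\mathbf{m}=(1,\ldots,1)$: by Lemma~\ref{lemma-psi-noproj} we have $\Psi_D=\Psi_E$, where $E$ is the diagram obtained from $D$ by deleting all projectors, so it suffices to treat diagrams in $V_1^{\otimes N}$. The key observation is that every factor occurring in~(\ref{generic-psi}) is built from quantum integers $[n]$, Gaussian binomials $\genfrac{[}{]}{0pt}{}{a}{b}$, and the palindromes $(q^i+q^{-i})$; since $[n]=q^{n-1}+q^{n-3}+\cdots+q^{-(n-1)}$, the binomial $\genfrac{[}{]}{0pt}{}{a}{b}$, and each $(q^i+q^{-i})$ all lie in $\mathbb{N}[q,q^{-1}]$ and are invariant under $q\mapsto q^{-1}$, it is enough to rewrite $\Psi_D$ as a \emph{product} (with no subtractions and no surviving denominators) of such terms. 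Both claims of the Proposition then follow simultaneously, because $\mathbb{N}[q,q^{-1}]$ is closed under multiplication and a product of palindromic Laurent polynomials is palindromic.

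Second, I would treat the factor $N_9$ separately, since it is the only one carrying the $(q^i+q^{-i})$ terms. With $U$ the union of $T$ and the starred arrow, I would verify that the exponents $d_{4,A}$, $A\in U$, are pairwise distinct integers lying in $\{1,\ldots,N-N_2\}$; this is a direct check from the definitions of $N_2$ and $d_{4,A}$ in terms of the positions of the dashed arcs and the starred down arrow. Granting this, $N_9=\prod_{i\in I_9}(q^i+q^{-i})$ for the complementary index set $I_9=\{1,\ldots,N-N_2\}\setminus\{d_{4,A}:A\in U\}$, which is manifestly in $\mathbb{N}[q,q^{-1}]$ and palindromic.

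Third, and this is where the real work lies, I would show that the remaining factor $\prod_{A\in S}[m_A]^{-1}\cdot N_5 N_6 N_7 N_8 N_{10}$ regroups, arc by arc and dashed-arc by dashed-arc, into a product of Gaussian binomials. Here $N_{10}=\prod_{A\in S\cup T}[N_A]$ (times $[N_\downarrow]$ when $N_1=1$) supplies the numerator quantum integers, the denominators $[m_A]$ from nested arcs together with the ratios in $N_6$, $N_7$, $N_8$ supply the complementary factorials, and $N_5=[N_4]/[N_3]$ fixes the global normalization. Concretely I would match, for each arc $A$, the consecutive numerator integers produced by the labels $N_{A'}$ of the arcs nested inside $A$ against the falling-factorial denominator attached to $A$, so that each local ratio collapses to a single $\genfrac{[}{]}{0pt}{}{a}{b}$; the same telescoping already occurs, in disguise, in Lemmas~\ref{lemma-nodownarrow}--\ref{lemma-link22}, whose quotients $[\,\cdot\,]!/[\,\cdot\,]!$ are exactly products of consecutive $[n]$'s. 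The principal obstacle is purely combinatorial: verifying that the index ranges align exactly, so that no denominator quantum integer survives and every surviving quotient is an honest binomial. Once this bookkeeping is completed, $\Psi_D$ stands exhibited as a product of binomials $\genfrac{[}{]}{0pt}{}{a}{b}$ and palindromes $(q^i+q^{-i})$, which proves at once that $\Psi_D\in\mathbb{N}[q,q^{-1}]$ and that it is invariant under $q\mapsto q^{-1}$.
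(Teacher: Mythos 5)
Your argument for bar-invariance is fine and coincides with the paper's: every factor in Eqn.~(\ref{generic-psi}) is a quantum integer $[n]$ or a palindrome $(q^i+q^{-i})$, each fixed by $q\mapsto q^{-1}$, so $\Psi_D$ is fixed as well. The positivity half, however, has a genuine gap. Your entire proof of $\Psi_D\in\mathbb{N}[q,q^{-1}]$ rests on the claim that $\prod_{A\in S}[m_A]^{-1}\cdot N_5N_6N_7N_8N_{10}$ regroups into a product of Gaussian binomials, and you defer exactly this step as ``bookkeeping.'' But this is the whole content of the assertion, and it is far from routine: the individual factors $N_5=[N_4]/[N_3]$, the ratios $[1+m_A+d_{1,A}]/[1+2m_A+d_{1,A}]$ in $N_6$, the nested quotients in $N_7$, and the reciprocals $[d_{3,A}+m_A]^{-1}$ in $N_8$ are not Laurent polynomials individually (recall $[a]/[b]$ is a Laurent polynomial only when $b\mid a$), so one would have to exhibit a global telescoping across \emph{all} of $N_5,\ldots,N_{10}$ simultaneously, for an arbitrary configuration of nested arcs, dashed arcs, and the starred arrow. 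Nothing in the paper establishes such a factorization, Lemmas~\ref{lemma-nodownarrow}--\ref{lemma-link22} only cover special diagram shapes, and it is not even clear the statement is true in the generality you need. As written, the proposal proves positivity only modulo an unproved combinatorial identity.

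The paper takes a different and much cleaner route for positivity: it observes that $\Psi^0=\sum_\kappa\Psi^0_\kappa v^\kappa$ with $\Psi^0_\kappa=\prod_{i\in J}q^{N+1-i}\in\mathbb{N}[q]$ is the eigenvector of $Y$ on standard bases with the same eigenvalue $[N+1]$, that this eigenvalue has multiplicity one (Corollary~\ref{cor-EP}), and that the transition matrix from standard to dual canonical bases is $R_{\mathbf{k},\mathbf{l}}$, which by Corollary~\ref{cor-RP} equals a parabolic Kazhdan--Lusztig polynomial $P^{+}_{\eta(\mathbf{k}),\eta(\mathbf{l})}(q^{-1})\in\mathbb{N}[q^{-1}]$. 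Hence $\Psi=R\Psi^0$ is a product of two nonnegative objects and lies in $\mathbb{N}[q,q^{-1}]$, with no need to analyze the closed formula at all. If you want to salvage your approach you would need to actually prove the binomial factorization (which would be a strictly stronger and independently interesting statement); otherwise you should replace the third step by the $\Psi=R\Psi^0$ argument.
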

\begin{proof}
Let $\mathbf{k}\in I_{\mathbf{m}}$. 
From Theorem~\ref{thm-generic-psi}, the eigenvector $\Psi_{\mathbf{k}}$ 
contains only quantum numbers and $(q^{i}+q^{-i})$ for some $i$.
Therefore, $\Psi_{D}$ is obviously invariant under $q\rightarrow q^{-1}$.
We abbreviate $v^{k_1}\otimes\cdots v^{k_{n}}$ as $v^{\mathbf{k}}$. 
Recall that $\Psi^{0}_{\kappa}$, $\kappa=\{\pm1\}^{N}$, is the 
eigenvector of $Y$ with the eigenvalue $[N+1]$. 
Since the action of $Y$ commutes with the one of the projection 
$\pi:=\pi_{m_1}\otimes\cdots\otimes\pi_{m_n}$, the eigenvector 
$\Psi^{0}_{\mathbf{k}}$ on the standard basis is written as 
$\Psi^{0}_{\mathbf{k}}v^{\mathbf{k}}
=\sum_{\kappa}\pi(\Psi^{0}_{\kappa}v^{\kappa})$ 
where the sum is all over $\kappa$'s satisfying 
$v^{\mathbf{k}}\propto\pi(v^{\kappa})$.
The transition matrix from the standard basis to the Kazhdan--Lusztig 
basis is nothing but $R:=R_{\mathbf{k,l}}$. 
The two vectors $\Psi$ and $\Psi^{0}$ are a unique eigenvector of $Y$ 
with the multiplicity one. 
Therefore, we have $\Psi=R\Psi^{0}$. 
From Corollary~\ref{cor-RP}, we have 
$R_{\mathbf{k,l}}\in q^{-1}\mathbb{N}[q^{-1}]$ for 
$\mathbf{k}\neq\mathbf{l}$ and $R_{\mathbf{k,k}}=1$. 
Since the vector $\Psi^{0}$ is in $\mathbb{N}[q]$, 
we have $\Psi_{\mathbf{k}}\in\mathbb{N}[q,q^{-1}]$.
\end{proof}

\begin{remark}
In the proof of Proposition~\ref{prop-pint-psi}, we show that 
$\Psi=R\Psi^{0}$. 
This gives highly non-trivial relations among the 
Kazhdan--Lusztig polynomials.
\end{remark}

Let $\mathbf{k}\in I_{\mathbf{m}}$. 
Recall the diagram for a standard base 
$v^{k_1}\otimes\ldots\otimes v^{k_{N}}$.
Let $J_{\mathbf{k}}$ be the set of positions (from the right end) 
of up arrows of the diagram.
We define 
\begin{eqnarray*}
d_{\mathbf{k}}:=\sum_{i\in J_{\mathbf{k}}}i.
\end{eqnarray*}
For example, when $\mathbf{k}=(0,2)$ with $\mathbf{m}=(2,2)$, 
$J_{\mathbf{k}}=\{1,2,4\}$ and $d_{\mathbf{k}}=1+2+4=7$.
\begin{prop}
The component $\Psi_{\mathbf{k}}$
has the leading term $q^{d_{\mathbf{k}}}$ 
with the leading coefficient one.
\end{prop}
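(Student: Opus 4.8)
The plan is to reduce everything to the already-established base change $\Psi=R\,\Psi^0$ from the proof of Proposition~\ref{prop-pint-psi} and then to read off the leading term from the explicit monomial shape of $\Psi^0$, the triangularity of $R$, and one monotonicity lemma.

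First I would pass to the case $\mathbf{m}=(1,\dots,1)$. By Lemma~\ref{lemma-psi-noproj} the component $\Psi_{\mathbf{k}}$ is unchanged when the projectors are stripped from the diagram, and $d_{\mathbf{k}}$ depends only on the positions of the up arrows of the binary string $\kappa$ attached to $\mathbf{k}$, so it is also unchanged; hence I may assume $\mathbf{k}=\kappa\in\{\pm1\}^{N}$. In that case the definition of the standard-basis eigenvector $\Psi^0$ gives at once $\Psi^0_{\mathbf{k}}=\prod_{i:\kappa_i=+1}q^{\,N+1-i}=q^{\,d_{\mathbf{k}}}$, i.e. $\Psi^0_{\mathbf{k}}$ is the single monomial $q^{d_{\mathbf{k}}}$ with coefficient $1$. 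Writing $\Psi=R\,\Psi^0$ in components and invoking Corollary~\ref{cor-RP}, the coefficient takes the form $\Psi_{\mathbf{k}}=q^{d_{\mathbf{k}}}+\sum_{\mathbf{l}<_+\mathbf{k}}R_{\mathbf{l},\mathbf{k}}(q^{-1})\,q^{d_{\mathbf{l}}}$, where $R_{\mathbf{l},\mathbf{k}}=P^{+}_{\eta(\mathbf{l}),\eta(\mathbf{k})}(q^{-1})\in q^{-1}\mathbb{N}[q^{-1}]$ for $\mathbf{l}<_+\mathbf{k}$ and the diagonal coefficient is $1$.

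The crux is the \emph{monotonicity} of the statistic $d$ along the path order: if $\mathbf{l}\le_+\mathbf{k}$, then $d_{\mathbf{l}}\le d_{\mathbf{k}}$. To see this, let $a_i$ and $b_i$ be the indicators that position $i$ carries an up arrow in $\mathbf{k}$ and in $\mathbf{l}$ respectively, and set $S_m=\sum_{i\le m}(a_i-b_i)$. The hypothesis $\mathbf{l}\le_+\mathbf{k}$ is exactly $S_m\ge 0$ for all $m$, while $d_{\mathbf{k}}-d_{\mathbf{l}}=\sum_i (a_i-b_i)(N+1-i)$; an Abel summation collapses the right-hand side to $\sum_{m=1}^{N}S_m\ge 0$, which is the claim. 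I expect this lemma to be the only genuine obstacle, and the delicate point is to pin down the correct orientation: the sum in $\Psi_{\mathbf{k}}$ must run over $\mathbf{l}<_+\mathbf{k}$ (equivalently $\mathbf{k}<_-\mathbf{l}$), and it is precisely for this direction that $d$ decreases.

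Finally I would finish with a degree count. For each $\mathbf{l}<_+\mathbf{k}$ the summand $R_{\mathbf{l},\mathbf{k}}(q^{-1})\,q^{d_{\mathbf{l}}}$ has top $q$-degree at most $d_{\mathbf{l}}-1\le d_{\mathbf{k}}-1$, because $R_{\mathbf{l},\mathbf{k}}\in q^{-1}\mathbb{N}[q^{-1}]$ and $d_{\mathbf{l}}\le d_{\mathbf{k}}$ by the lemma. Thus the unique contribution in degree $d_{\mathbf{k}}$ is the diagonal monomial $q^{d_{\mathbf{k}}}$, and positivity of all coefficients rules out any cancellation; hence $\Psi_{\mathbf{k}}$ has leading term $q^{d_{\mathbf{k}}}$ with leading coefficient $1$, as asserted.
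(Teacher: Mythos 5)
Your proof is correct and follows essentially the same route as the paper: both rest on the identity $\Psi=R\Psi^{0}$ from Proposition~\ref{prop-pint-psi}, the unitriangularity of $R$ with off-diagonal entries in $q^{-1}\mathbb{N}[q^{-1}]$, the observation that $\Psi^{0}_{\mathbf{k}}$ is the monomial $q^{d_{\mathbf{k}}}$, and the monotonicity of $d$ along the order governing $R$. The only difference is that you supply the Abel-summation proof of the monotonicity (stated for the path order $\le_{+}$, which is the order actually relevant to the support of $R$), whereas the paper merely asserts an analogous statement for the lexicographic order.
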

\begin{proof}
Notice that if $\mathbf{k}<_{\mathrm{lex}}\mathbf{k'}$, we have 
$d_{\mathbf{k'}}<d_{\mathbf{k}}$.
The vector $\Psi^{0}_{\mathbf{k}}$ has a leading term $q^{d_{\mathbf{k}}}$ 
with the coefficient one. 
The matrix representation of $R$ is an upper triangular matrix 
whose diagonal entries are one and non-zero entries are in 
$q^{-1}\mathbb{N}[q^{-1}]$. 
In the proof of Proposition~\ref{prop-pint-psi}, we show that 
$\Psi=R\Psi^{0}$. 
Therefore, the leading term of $\Psi_{\mathbf{k}}$ is $q^{d_{\mathbf{k}}}$
and the leading coefficient is one. 
\end{proof}

\subsection{Sum rule}
Below, we set $m_i=m$, $1\le i\le n$.
We denote by $s_{m,L}:=\sum_{\mathbf{k}}\Psi_{\mathbf{k}}$ 
the sum of components of $\Psi$.
We are interested in the case of $q=1$ since we expect that 
the sum of the eigenvector $\Psi$ is related to the total number 
of some combinatorial objects.
In Table~\ref{Table-sum}, we list up first few values of $s_{m,L}$
at $q=1$.

The Pell numbers $P_{n}$ are defined by the recurrence relation 
\begin{eqnarray*}
P_{n}=2P_{n-1}+P_{n-2},
\end{eqnarray*}
with $P_{0}=1$ and $P_{1}=1$. 
Let $c_{n}$ be the sequence of integers A094723 in \cite{OEIS}. 
The sequence $c_n$ is given by 
\begin{eqnarray*}
c_{n}=P_{n+2}-2^{n}.
\end{eqnarray*}
\begin{theorem}[Sum Rule I]
The sum $s_{m,1}$ at $q=1$ satisfy 
\begin{eqnarray*}
s_{m,1}=c_{m}.
\end{eqnarray*}
\end{theorem}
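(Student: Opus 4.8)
The plan is to recognize that $s_{m,1}$ is a sum over the dual canonical basis of a \emph{single} module $V_m$ (the case $n=1$), reduce it to the linear chain already solved in Lemma~\ref{lemma-psi-1}, and then carry out an explicit binomial summation at $q=1$.

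First I would observe that, by Theorem~\ref{Thm-dC}, the $m+1$ dual canonical basis vectors of $V_m$ are the projections of the staircase diagrams $v^{\kappa_1}\varspadesuit\cdots\varspadesuit v^{\kappa_m}$ with $\kappa=\underbrace{+\cdots+}_{m-l}\underbrace{-\cdots-}_{l}$, $0\le l\le m$. Lemma~\ref{lemma-psi-noproj} guarantees that the corresponding components of $\Psi$ are unchanged when the projector $\pi_m$ is removed, so they coincide with the components $\Psi_i$ ($1\le i\le m+1$) of the chain $\mathbf{k}_1\to\mathbf{k}_2\leftrightarrow\cdots\to\mathbf{k}_{m+1}$ treated in Lemma~\ref{lemma-psi-1} with $N=m$. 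Hence $s_{m,1}=\sum_{i=1}^{m+1}\Psi_i$, and Lemma~\ref{lemma-psi-1} supplies every term in closed form.

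Next I would set $q=1$, where $[n]\to n$, each factor $q^i+q^{-i}\to 2$, and the $q$-binomials become ordinary binomials. Grouping the summands by the parity of $i$ turns the sum into
\[
s_{m,1}\big|_{q=1}=\sum_{j\ge0}2^{m-2j}\binom{m-j}{j}+\sum_{j\ge1}2^{m-2j}\,\frac{m+1}{j}\binom{m-j}{j-1}.
\]
The first sum is the classical binomial expression for a Pell number, and the second is its companion; after an application of Pascal's rule each is identified with a Pell number, and combining them through the Pell recurrence $P_{m+2}=2P_{m+1}+P_m$ yields $s_{m,1}|_{q=1}=P_{m+2}-2^m=c_m$.

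The hard part will be the second binomial identity together with the parity bookkeeping and the correct Pell indexing. I expect the most economical way to finish is induction on $m$: instead of evaluating the two sums separately, one shows directly that $s_{m,1}|_{q=1}$ satisfies the linear recurrence obeyed by $c_m$ (equivalently the third–order recurrence with characteristic polynomial $(x^2-2x-1)(x-2)$), so that the two binomial identities collapse to a routine verification of a recurrence plus the base cases $s_{0,1}=c_0$ and $s_{1,1}=c_1$. A more conceptual variant would replace the explicit formulas by the relation $\Psi=R\Psi^0$ of Proposition~\ref{prop-pint-psi}: since $\Psi^0\equiv 1$ at $q=1$ and $R_{\mathbf{k},\mathbf{l}}=P^{+}_{\eta(\mathbf{k}),\eta(\mathbf{l})}$ by Corollary~\ref{cor-RP}, one obtains $s_{m,1}|_{q=1}=\sum_{\alpha\le\beta}P^{+}_{\alpha,\beta}(1)$ over the totally ordered chain of staircase paths, recasting the computation as a combinatorial count of parabolic Kazhdan--Lusztig polynomials that produces the same Pell number.
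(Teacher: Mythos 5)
Your proposal follows essentially the same route as the paper: the $L=1$ problem is reduced to the chain of Lemma~\ref{lemma-psi-1}, the components are summed at $q=1$ to give exactly the two binomial sums appearing in the paper's proof, and the total is identified with $P_{m+2}-2^{m}$ via a Pell-type recurrence. The only cosmetic differences are that the paper justifies the reduction by observing that the graph $\Gamma$ for $V_m$ is literally the chain of Lemma~\ref{lemma-psi-1} (rather than invoking Lemma~\ref{lemma-psi-noproj}), and that it first merges the two sums into a single one before checking the recurrence.
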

\begin{proof}
Since we have $L=1$, diagrams for the dual canonical bases have no arcs.
The number of diagrams is $m+1$. 
The graph $\Gamma$ is equal to 
\begin{eqnarray*}
\raisebox{-0.4\totalheight}{
\begin{tikzpicture}
\draw[thick,->](0,-0.2)..controls(-0.7,0.3)and(-0.7,-0.8)..(0,-0.3);
\end{tikzpicture}}
m\rightarrow m-1 \leftrightarrow
m-2\rightarrow\cdots
\rightarrow -m.
\end{eqnarray*}
The graph $\Gamma$ is isomorphic to the sequence considered in 
Lemma~\ref{lemma-psi-1}.
Therefore, the component $\Psi_{k}, k\in I_{m}$, is the same as in
Lemma~\ref{lemma-psi-1}.
The sum $s_{m,1}$ is written as 
\begin{eqnarray*}
s_{m,1}&=&\sum_{n=0}^{\lfloor m/2\rfloor}2^{m-2n}\genfrac{(}{)}{0pt}{}{m-n}{n}
+\sum_{n=1}^{\lfloor (m+1)/2\rfloor}2^{m-2n}\frac{m+1}{n}\genfrac{(}{)}{0pt}{}{m-n}{n-1} \\
&=&-2^{m}+\sum_{n=0}^{\infty}2^{m-2n+1}\genfrac{(}{)}{0pt}{}{m-n+1}{n}.
\end{eqnarray*}
Let 
\begin{eqnarray*}
A_m=\sum_{n=0}^{\infty}2^{m-2n+1}\genfrac{(}{)}{0pt}{}{m-n+1}{n}.
\end{eqnarray*}
Then $A_m$ satisfies the recurrence relation $A_{m+1}=2A_{m}+2A_{m-1}$
with $A_0=2$ and $A_1=5$. 
This implies $A_m=P_{m+2}$.
Therefore, $s_{m,1}=P_{m+2}-2^{m}$. 
This completes the proof.
\end{proof}

\begin{table}[ht]
\caption{First few values of the sum $s_{m,L}$ at $q=1$}
\label{Table-sum}
\centering
\scalebox{0.95}{
\begin{tabular}{cc|rrrrrr}
 & & \multicolumn{5}{c}{Length $L$} \\ 
& & 1 & 2 & 3 & 4 & 5 & 6 \\ \hline
\multicolumn{1}{c}{\multirow{9}{*}{$m$}} & 1 & 3 & 10 & 
38 & 156 & 692 & 3256 \\ \cline{2-8}
& 2 & 8 & 92 & 1408 & 26576 & 594432  & \\ \cline{2-8}
& 3 & 21 & 832 & 52736 & 4700592 & 549144752 &  \\ \cline{2-8}
& 4 & 54 & 7276 & 1924040 & 817051024 & 505001670752 & $\cdots$ \\ \cline{2-8}
& 5 & 137 & 62756 & 69395300 & 141326485016 & 4184410893902752 &  \\ \cline{2-8} 
& 6 & 344 & 534416 & 2479324096 & 24339640457600 & 430183061610221568 &  \\ \cline{2-8}
& 7 & 857 & 4514352 & 88070572208 &4184410893902752 & 398477790183643039008 &  \\ \cline{2-8} 
& 8 & 2122 & & \scalebox{0.8}{$\vdots$} & & &  \\ \cline{1-8}
\end{tabular}}
\end{table}

Let $c_n$ be the sequence of integers A00902 in~\cite{OEIS} (see also~\cite{Rob76}). 
The sequence $c_n$ satisfies the following recurrence relation:
\begin{eqnarray*}
c_n=2c_{n-1}+(2n-2)c_{n-2},
\end{eqnarray*}
with the initial condition $c_1=1$ and $c_2=3$.
\begin{conj}[Sum Rule II]
The sum $s_{1,L}$ at $q=1$ satisfy 
\begin{eqnarray*}
s_{1,L}=c_{L+1}
\end{eqnarray*}
for all $L\ge1$.
\end{conj}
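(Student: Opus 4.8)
The plan is to reduce the sum rule to a statement about type-$B$ Kazhdan--Lusztig polynomials and then to a weighted count of ballot strips. The first move uses the identity $\Psi=R\,\Psi^{0}$ established in the proof of Proposition~\ref{prop-pint-psi}. Setting $q=1$ in the generating vector $\Psi^{0}_{\kappa}=\prod_{i\in J}q^{N+1-i}$ gives $\Psi^{0}_{\kappa}|_{q=1}=1$ for every binary string $\kappa$, so $\Psi^{0}|_{q=1}$ is the all-ones vector. Applying $R|_{q=1}$ to it shows that $s_{1,L}|_{q=1}=\sum_{\mathbf{k}}\Psi_{\mathbf{k}}|_{q=1}$ is the sum of all entries of $R|_{q=1}$, which by Corollary~\ref{cor-RP} equals $\sum_{\mathbf{k}\le_{+}\mathbf{l}}P^{+}_{\eta(\mathbf{k}),\eta(\mathbf{l})}(1)$, the double sum over all pairs $\mathbf{k}\le_{+}\mathbf{l}$ in $I_{(1)^{L}}$ of parabolic Kazhdan--Lusztig polynomials evaluated at $1$ (for $m=1$ the index set is all of $\{\pm1\}^{L}$, equivalently all of $\mathcal{P}_{L}$).

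The second move turns this into pure enumeration. By Lemma~\ref{lemma-QP1} we have $P^{+}_{\alpha,\beta}=Q^{I,+}_{\alpha,\beta}$ as functions of their variable, and at $q=1$ (so that the argument $q^{-1}$ equals $1$) every Rule~I weight $\mathrm{wt}^{I}$ becomes $1$; hence $P^{+}_{\alpha,\beta}(1)$ is the number of Rule~I configurations in the skew shape $\lambda(\alpha)/\lambda(\beta)$. Therefore $s_{1,L}|_{q=1}$ is the total number of Rule~I ballot-strip configurations taken over all pairs $\alpha\le\beta$ in $\mathcal{P}_{L}$. Since $c_{n}=2c_{n-1}+(2n-2)c_{n-2}$, the reindexed goal $s_{1,L}=c_{L+1}$ is equivalent to the three-term recurrence $s_{1,L}=2\,s_{1,L-1}+2L\,s_{1,L-2}$, with base cases $s_{1,1}=3$ and $s_{1,2}=10$ read off from Table~\ref{Table-sum}; it then suffices to prove this recurrence for the ballot-strip count.

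To obtain the recurrence I would peel off the leftmost arrow of the underlying diagram (equivalently, the leftmost column of the skew shapes), mirroring the case analysis that drives the explicit formula of Theorem~\ref{thm-generic-psi}. Deleting a boundary up step or an unpaired down step reduces $L$ by one and should account for the two copies of $s_{1,L-1}$, while a boundary arc or dashed arc pairs the new leftmost position with one of the remaining $L-1$ positions and, after summing the weight-$1$ contributions, should produce the factor $2L\,s_{1,L-2}$. Equivalently the bookkeeping can be packaged through the exponential generating function: the recurrence integrates to a closed form governed by $e^{x^{2}+2x}$, whose factors $e^{2x}$ and $e^{x^{2}}$ read off as unpaired arrows (two ``colours'' each) and arc/dashed-arc pairs (weight two each), giving the bijection with symmetric bishop arrangements anticipated in the abstract.

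The main obstacle is the linear coefficient $2L$. Unlike Sum Rule~I, whose Pell recurrence has constant coefficients and follows directly from the two-path graph $\Gamma$ of Lemma~\ref{lemma-psi-1}, here the eigenvalue itself drops from $[L+1]$ to $[L]$ to $[L-1]$ as tensor factors are removed, so the length-$(L-1)$ and length-$(L-2)$ eigenvectors are not literally restrictions of the length-$L$ one; the recursion becomes exact only after the $q=1$ specialization flattens the quantum integers, and one must verify that the position-dependent data $d_{1,A},d_{2,A},d_{3,A}$ of Theorem~\ref{thm-generic-psi} aggregate to the weight $2L$ rather than to a constant. Establishing this matching --- or, equivalently, constructing the explicit bijection between Rule~I ballot-strip configurations and the bishop arrangements counted by A000902 --- is the crux that separates a proof from the computations already in hand.
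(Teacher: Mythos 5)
The statement you are addressing is presented in the paper only as a conjecture: the author offers no proof, just numerical verification up to $L=24$, so there is no argument in the paper to compare yours against. Your reduction steps are sound and go beyond what the paper records: using $\Psi=R\Psi^{0}$ from the proof of Proposition~\ref{prop-pint-psi}, the specialization $\Psi^{0}|_{q=1}=(1,\ldots,1)$, and Corollary~\ref{cor-RP} together with Lemma~\ref{lemma-QP1}, you correctly identify $s_{1,L}|_{q=1}$ with the total number of Rule~I ballot-strip configurations over all ordered pairs of paths in $\mathcal{P}_{L}$, and you correctly translate the target $s_{1,L}=c_{L+1}$ into the recurrence $s_{1,L}=2s_{1,L-1}+2L\,s_{1,L-2}$ with base cases matching Table~\ref{Table-sum}.

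But the proposal stops exactly where the proof would have to begin. The peeling of the leftmost column is described entirely with ``should account for'' and ``should produce'', and you concede in your final sentence that establishing the coefficient $2L$ ``is the crux that separates a proof from the computations already in hand.'' That concession is accurate: as described, pairing the new leftmost position with one of the remaining $L-1$ positions and doubling for arc versus dashed arc yields $2(L-1)\,s_{1,L-2}$, not $2L\,s_{1,L-2}$, so the mechanism you sketch is short by $2s_{1,L-2}$, and the missing contribution (from the starred arrow, or from configurations whose leftmost column does not delete bijectively onto a shorter configuration) must be located and controlled. There is also a model mismatch to resolve: the Rule~I ballot strips live in skew shapes between two lattice paths, while the arcs, dashed arcs and stars belong to the dual-canonical-basis diagrams, and a peeling argument has to be carried out consistently in one of the two pictures rather than switching between them. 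Until the recurrence is actually proved, what you have is a clean combinatorial reformulation of the conjecture, not a proof of it.
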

We have checked the conjecture up to $L=24$.
The sum is conjectured to be equal to the total number of arrangements
of bishops in $2n\times 2n$ with a certain symmetry.
It would be interesting if we can find a combinatorial meaning of each 
component $\Psi_{\mathbf{k}}$.

\appendix
\section{}
\label{Sec:app}
\begin{lemma}
\label{lemma-appendix1}
\begin{eqnarray}
\label{appendix-0}
\sum_{i=1}^{K}
\left[1+\sum_{j=1}^{i}n_j\right][m_i]
\frac{\prod_{j\ge i+1}[1+\sum_{k=1}^{j}(n_k+m_k)]}
{\prod_{j\ge i}[1+\sum_{k=1}^{j}n_k+\sum_{k=1}^{j-1}m_k]}
=\left[\sum_{i=1}^{K}m_i\right]
\end{eqnarray}
\end{lemma}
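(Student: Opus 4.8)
The plan is to prove the identity by induction on $K$, after reformulating the summand so that a single elementary quantum-number identity does all the work. First I would introduce the abbreviations $N_i:=\sum_{j=1}^{i}n_j$ and $M_i:=\sum_{j=1}^{i}m_j$ (with $N_0=M_0=0$), together with
\[
a_j:=1+N_j+M_{j-1},\qquad b_j:=1+N_j+M_j=a_j+m_j.
\]
With these, the general term on the left-hand side of~\eqref{appendix-0} becomes
\[
S_i:=[1+N_i]\,[m_i]\,\frac{\prod_{j=i+1}^{K}[b_j]}{\prod_{j=i}^{K}[a_j]},
\]
and the assertion is $\sum_{i=1}^{K}S_i=[M_K]$. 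Since $M_0=0$ forces $a_1=1+N_1$, the first term collapses to $S_1=[m_1]\prod_{j=2}^{K}[b_j]/[a_j]$, which is the feature I will exploit.

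The engine of the argument is the three-term relation, which I denote $(\ast)$:
\[
[p+r]\,[x]=[p]\,[x+r]+[r]\,[x-p],\qquad p,r,x\in\mathbb{Z}.
\]
I would record this first as a preliminary: it follows immediately by writing each $[\,\cdot\,]$ as $(q^{(\cdot)}-q^{-(\cdot)})/(q-q^{-1})$ and checking that the four powers of $q$ on the two sides coincide.

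The inductive step amounts to \emph{merging the first two blocks}. Applying $(\ast)$ with $p=m_1$, $r=m_2$, $x=a_2$, and using $a_2+m_2=b_2$ together with $a_2-m_1=1+N_2$ (because $M_1=m_1$), gives
\[
[m_1]\,[b_2]+[1+N_2]\,[m_2]=[m_1+m_2]\,[a_2],
\]
so that
\[
S_1+S_2=[m_1+m_2]\,\frac{\prod_{j=3}^{K}[b_j]}{\prod_{j=3}^{K}[a_j]}.
\]
I would then observe that the right-hand side is precisely the first term $S'_1$ of the analogous sum of length $K-1$ attached to the reduced data $m'_1=m_1+m_2$, $n'_1=n_1+n_2$ and $m'_l=m_{l+1}$, $n'_l=n_{l+1}$ for $l\ge2$. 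A short bookkeeping check shows $a'_l=a_{l+1}$ and $b'_l=b_{l+1}$ for all $l$, hence $S'_l=S_{l+1}$ for $l\ge2$, while $M'_{K-1}=M_K$. Combined with $S_1+S_2=S'_1$ this yields $\sum_{i=1}^{K}S_i=\sum_{l=1}^{K-1}S'_l$, which by the induction hypothesis equals $[M'_{K-1}]=[M_K]$; the base case $K=1$ is simply $S_1=[m_1]=[M_1]$.

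The main obstacle here is structural rather than computational: recognizing that the sum telescopes under block-merging and isolating the one identity $(\ast)$ that drives the collapse. Once $(\ast)$ is in hand, the remaining effort is the index bookkeeping certifying that the reduced data is again of the required shape — in particular that $m'_1=m_1+m_2$ and $n'_1=n_1+n_2$ remain non-negative integers, so the induction never leaves the hypotheses of the lemma.
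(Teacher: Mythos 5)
Your proof is correct and follows essentially the same route as the paper's: an induction on $K$ whose inductive step is powered by the three-term relation $[p+r][x]=[p][x+r]+[r][x-p]$ applied to a pair of adjacent blocks. The only difference is cosmetic --- the paper collapses the sum from the right, peeling off the $i=K$ term and factoring the ratio $[1+N_K+M_K]/[1+N_K+M_{K-1}]$ out of the remaining $K-1$ terms, while you collapse from the left by merging the first two blocks into one and re-indexing; both reductions rest on the identical quantum-integer identity.
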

\begin{proof}
We prove Lemma by induction on $K$.
Let $f^K$ be the left hand side of Eqn.(\ref{appendix-0}).
It is obvious that $f^1=[m_1]$.
We assume Eqn.(\ref{appendix-0}) is true for $K-1$.
Then, we have 
\begin{eqnarray*}
f^K&=&f^{K-1}\frac{[1+\sum_{j=1}^{K}(n_j+m_j)]}
{[1+\sum_{j=1}^{K}n_j+\sum_{j=1}^{K-1}m_j]}
+
\frac{[m_K][1+\sum_{j=1}^{K}n_j]}{[1+\sum_{j=1}^{K}n_j+\sum_{j=1}^{K-1}m_j]} \\
&=&\left[\sum_{i=1}^{K}m_i\right].
\end{eqnarray*}
This completes the proof.
\end{proof}

\begin{lemma}
\label{lemma-appendix2}
\begin{eqnarray}
\label{appendix-1}
\sum_{i=1}^{K}
I_i\cdot J_i
+\frac{[2z+2]}{[x+1+\sum_{i=1}^{K}m_i]}I_K
=[1+x]^{-1}\left[2+2z+2\sum_{i=1}^{K}m_i\right],
\end{eqnarray}
where 
\begin{eqnarray*}
I_i&:=&
\prod_{j=1}^{i}\frac{[2+2m_j+2z+2\sum_{k\ge j+1}^{K}m_k]}
{[2+m_j+2z+2\sum_{k=j+1}^{K}m_{k}]}, \\
J_i&:=&\frac{[m_i][x+3+\sum_{j=1}^{i}m_j+2\sum_{j=i+1}^{K}m_j+2z]}
{[1+x+\sum_{j=1}^{i-1}m_j][1+x+\sum_{j=1}^{i}m_j]}.
\end{eqnarray*}
\end{lemma}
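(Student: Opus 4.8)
The plan is to collapse the identity into a telescoping sum governed by a single elementary quantum-integer relation, rather than to induct directly on $K$ (where the shifting of every argument would be a nuisance). First I would rewrite all the data in terms of the head sums $P_i:=\sum_{j=1}^{i}m_j$ (with $P_0=0$), the total $M:=P_K$, and the constant $C:=2+2z+2M$. Using $\sum_{k=j+1}^{K}m_k=M-P_j$ and $P_j=P_{j-1}+m_j$, a short index computation converts the awkward tail sums in the statement into
\begin{equation*}
I_i=\prod_{j=1}^{i}\frac{[C-2P_{j-1}]}{[C-2P_{j-1}-m_j]},\qquad
J_i=\frac{[m_i]\,[x+1+C-P_i]}{[1+x+P_{i-1}]\,[1+x+P_i]},
\end{equation*}
while the right-hand side becomes $[1+x]^{-1}[C]$ and the boundary term equals $[C-2M]\,[x+1+M]^{-1}I_K$ (since $[2z+2]=[C-2M]$). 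The crucial structural feature this exposes is the one-step recursion $I_i=I_{i-1}\,[C-2P_{i-1}]/[C-2P_{i-1}-m_i]$.

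Next I would introduce the sequence
\begin{equation*}
g_i:=\frac{[C-2P_i]}{[1+x+P_i]}\,I_i,\qquad 0\le i\le K,
\end{equation*}
with $I_0=1$, and observe that it matches both endpoints for free: $g_0=[C]/[1+x]$ is exactly the claimed right-hand side, and $g_K=[C-2M]\,[x+1+M]^{-1}I_K$ is exactly the extra boundary term. The whole lemma then reduces to the telescoping relation $I_iJ_i=g_{i-1}-g_i$, for summing over $1\le i\le K$ yields $\sum_i I_iJ_i=g_0-g_K$, which is precisely the assertion after transposing $g_K$.

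To establish $I_iJ_i=g_{i-1}-g_i$ I would cancel the common factor $I_{i-1}$ and substitute the recursion for $I_i$. Writing $a:=C-2P_{i-1}$, $b:=C-2P_i=a-2m_i$, $c:=1+x+P_{i-1}$ and $d:=1+x+P_i=c+m_i$, and using the identities $x+1+C-P_i=b+d$ and $C-2P_{i-1}-m_i=a-m_i=b+m_i$, the relation collapses to the single clean statement
\begin{equation*}
[m_i]\,[b+c+m_i]=[b+m_i]\,[c+m_i]-[b]\,[c].
\end{equation*}
This is a standard quadratic quantum-integer identity: expanding each bracket via $[n]=(q^n-q^{-n})/(q-q^{-1})$ shows both sides equal $\bigl(q^{b+c+2m_i}+q^{-b-c-2m_i}-q^{b+c}-q^{-b-c}\bigr)/(q-q^{-1})^2$, the cross terms $q^{\pm(b-c)}$ cancelling on the right.

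The routine parts are the algebraic rewriting and the final bracket expansion; the step demanding genuine foresight is the choice of the telescoping sequence $g_i$, which I fixed by matching both boundaries simultaneously — the numerator $[C-2P_i]$ is forced by wanting $g_K$ to reproduce the boundary term, and the denominator $[1+x+P_i]$ by wanting $g_0$ to reproduce the right-hand side. I expect the main hazard to be purely bookkeeping: keeping the shifted arguments consistent when passing between the tail-sum form of $I_i,J_i$ in the statement and the head-sum form above. Once the recursion $I_i=I_{i-1}\,[a]/[a-m_i]$ is secured, the telescoping is forced and no separate induction on $K$ is needed.
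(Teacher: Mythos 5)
Your proof is correct, and in substance it coincides with the paper's: the paper argues by induction on $K$ with a shift of the parameter $z$, and the content of its inductive step is exactly your telescoping relation $I_KJ_K=g_{K-1}-g_K$ (the inductive hypothesis evaluated at the shifted $z$ already equals the right-hand side, so the three leftover terms must cancel). Your version simply unrolls that induction into an explicit telescoping sum and isolates the underlying quantum-integer identity $[m]\,[b+c+m]=[b+m]\,[c+m]-[b]\,[c]$, which the paper leaves as an unstated computation; both the rewriting in terms of the partial sums $P_i$ and the verification of that identity check out.
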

\begin{proof}
We prove Lemma by induction on $K$. 
Let $f^{K}(z)$ be the left hand side of Eqn.(\ref{appendix-1}).
We have $f^{1}(z)=[1+x]^{-1}[2+2z+2m_1]$.
We assume that Lemma holds true for $f^{K-1}(z)$.
\begin{eqnarray*}
f^{K}(z)&=&f^{K-1}(z+2m_K)
-\frac{[2z+2m_K+2]}{[x+1+\sum_{i=1}^{K-1}m_i]}
I_{K-1}
+J_KI_K
+\frac{[2z+2]}{[x+1+\sum_{i=1}^{K}m_i]}I_K \\
&=&
[1+x]^{-1}\left[2+2z+2\sum_{i=1}^{K}m_i\right].
\end{eqnarray*}
This completes the proof.
\end{proof}

\bibliographystyle{amsplainhyper} 
\bibliography{biblio}

\end{document}